\titleformat*{\section}{\Large\bfseries}
\titleformat*{\subsection}{\large\sc}
\titleformat*{\subsubsection}{\itshape}
\patchcmd{\epigraph}{\@epitext{#1}}{\itshape\@epitext{#1}}{}{}
\begin{document}

\title{{\bf On incremental deployability}}

\author{\large{Ioannis Avramopoulos}}

\date{}

\maketitle

\thispagestyle{empty} 

\newtheorem{definition}{Definition}
\newtheorem{proposition}{Proposition}
\newtheorem{theorem}{Theorem}
\newtheorem*{theorem*}{Theorem}
\newtheorem{corollary}{Corollary}
\newtheorem{lemma}{Lemma}
\newtheorem{axiom}{Axiom}
\newtheorem{thesis}{Thesis}

\vspace*{-0.2truecm}

\begin{abstract}
Motivated by the difficulty of effecting fundamental change in the architecture of the Internet, in this paper, we study from a theoretical perspective the question of how individuals can join forces toward collective ventures. To that end, we draw on an elementary concept in Internet systems engineering, namely, that of {\em incremental deployability}, which we study mathematically and computationally. For example, we show that incremental deployability is at least as general a concept as the Nash equilibrium (in that the latter can be derived from the former). We then draw on this foundation to design and analyze institutional mechanisms that are not only promising to bootstrap emerging Internet architectures but they also have broader applications in social organization beyond its predominant market (and finance)-based character.
\end{abstract}

\section{Introduction}

\setlength{\epigraphwidth}{0.28\textwidth}
\epigraph{``God is an operating system.''}{--- \textup{Cat Pierce}}

\subsection{Our question: Incentives as a first principle of social organization}

Our main goal in this paper is to contribute to the development of a mathematical theory of {\em social organization}. To a large extent, organization is about individuals joining forces in a collective venture. The output of a venture can be an institution (such as a state organization) offering a service (such as the services a public sector offers) or a technology, which can in turn be either a material product (such as computer hardware) or an information product (such as a computer algorithm or industrial process). We believe that the individual should be a principal entity in the design and operation of organizational ventures, for example, in the fashion the individual player serves as a foundation of the theory of games laid out by von Neumann and Morgenstern or in the fashion Kenneth Arrow studies in his seminal work on {\em social choice} the problem of aggregating individual social preferences into a preference function for society (albeit with negative results).

\subsubsection{Incentives as a first principle of the organization of the Internet}

We believe that organization theory can benefit from the study of the Internet as a global communications service, a service provided outside the scope of any particular industrial or other organizational entity, and that this study (as we attempt in this paper) can shed light on parochial corporate and state organization (such as, for example, a bureaucratic architecture). But the Internet will also benefit from advances in organization and economic theory: The architecture of the Internet suffers from a pressing stalemate (stagnation) whose origins puzzle systems engineers.

The Internet has enabled significant innovation in society at large, however, it has resisted repeated efforts to effect innovation at its core architecture (known as the TCP/IP architecture) consisting of the protocols running at the network and transport layers of the Internet such as IP (Internet Protocol), TCP (Transmission Control Protocol), and BGP (Border Gateway Protocol). This architecture has facilitated significant innovation at the link layer (e.g., the rapid transition to 3G, 4G, and 5G systems in wireless and the rapid transition to optical technologies for fixed access) and the application layer (e.g., Google, Facebook, Twitter), however, it is increasingly being held responsible for stifling the emergence of new applications. In this paper, we aim to both {\em explain} why innovation is failing and to {\em propose a course of action} on how innovation can be effected.

Technology is something engineers design and, therefore, we would reasonably expect to know precisely how our own technological artifacts work. That may be true for some technologies (such as computer processors) but it is certainly not true for all of them. To a large degree the Internet belongs to the class of those technologies of which we are trying to understand the principles of their operation in the same manner that we are trying to understand the principles of operation of natural, biological, or social phenomena. This paper was motivated by an effort to rigorously understand the nature of the Internet (in particular, its technical architecture), how it evolves, whether we should intervene in its evolution, and how we might attempt such an intervention. Our inquiry on the principles at work in the organization of the Internet's architecture is mathematical. Our analysis benefits from mathematical work in economics and can further inform economic science.

In this paper, we stipulate that the success of a technological venture in a social system (for example, whether an emerging technology or architecture will be adopted in the Internet) is a matter that transcends technical challenges and critically relies on the {\em incentives} that shape the adoption environment (as measured by the utilities actions and decisions confer to {\em individual} members of the social system, rather than, for example, on aggregate social utility metrics). But what, if any, incentives theory can facilitate our understanding of the Internet's architectural stalemate and how could that theory illuminate social and technological organization more broadly? To the extent of our knowledge, there are two principal incentives theories in economic science, namely, that concerning the principal-agent model and game theory. Let us examine both theories in turn.

\subsubsection{A theory of incentives based on the principal-agent model}

The dominant at present institution facilitating social organization is the {\em market}. Drawing on the dominance of markets and, in particular, their monetary orientation, every venture that requires coordination among different principals rests on a {\em financial structure} and incentives are typically, if not always, being understood as being tantamount to {\em financial incentives}.\footnote{We should note that there are economists who question the parity between incentives to perform in organizational tasks and financial incentives. For example, economist Dan Ariely in his books {\em Dollars and Sense} and {\em Predictably Irrational} (co-authored with Jeff Kreisler) challenges that financial decisions are rational in fundamental ways.} The elementary primitives atop which financial structures are built are {\em capital} and {\em contracts}. Together these structures are used to implement {\em delegation,} which we may define as the act of assigning executive responsibility for the implementation of a task to one individual or organization on the principles, instruction, or specification of another individual or organization. Delegation is ubiquitous in bureaucratic organization forming the architectural basis of the majority of public sectors but also of corporate entities in the private sector (which are organized around a notion of delegation hierarchy). 

The model of organization of the Internet departs significantly from a bureaucratic hierarchy, and we found that the branch of economics studying delegation, namely, the {\em principal-agent theory} (for example, see \citep{Laffont}), does not shed light on our pursuit to understand the architectural stalemate or to effect change: In the Internet, it is hard (if it is all meaningful) to draw a distinction between agents and principals among the entities whose choices determine the Internet's organizational structure (architecture). Our study of architectural evolution and innovation thus focuses elsewhere but, nevertheless, we should point out that the prominent in principal-agent theory concept of a {\em commitment} will be central in the mechanisms we propose to counteract the Internet's stalemate (albeit our use of commitments is of a different nature from that in principal-agent theory in that favorable outcomes can be induced by the mere possibility of entering into commitment without any individual necessarily committing to any specific strategy).

\subsubsection{A theory of incentives based on mathematical games}

The Internet is a global communications system whose architecture is derived from technological choice in a social system of equipment manufacturers, service providers, content providers, and endusers. The members of this social system are free to adopt whichever technologies they choose: To a first approximation, the Internet is an ``anarchy'' in that there is no authority that shapes the Internet's architecture. Furthermore, decisions made in this system are coupled: The benefit of a decision to adopt any given technology depends not only on an agent's own choice but also on the choices of the other agents. Choice is, therefore, constrained by the {\em incentive structure} of the social and technical environment and it is natural to model this structure after a {\em mathematical game}. 

This is a thought experiment that is not unique to our effort in this paper: Christos Papadimitriou often quotes Scott Shenker as having said that ``The Internet is an equilibrium, we just have to identify the game'' \citep{Algorithmic-Game-Theory}. Papadimitriou is a theoretical computer scientist who could in many ways be deemed as the father of {\em algorithmic game theory,} the branch of theoretical computer science concerned with the computational foundations of game theory. Understanding the Internet as an equilibrium in a mathematical game is a thought experiment bound to stretch one's imagination granted theoretical computer science is the starting point. The centerpiece of game theory, and in many ways of economic thought at large, is the {\em Nash equilibrium}. Unfortunately, the complexity of computing a Nash equilibrium is an open question in mathematics. That a theory of incentives explaining the manifestation of the Internet (and informing organization theory at large) should be based on the Nash equilibrium as its elementary foundation is naturally questionable.

\subsubsection{A theory of incentives drawing on Internet systems engineering}

Scott Shenker is a computer scientist whose main contributions are in Internet systems research. One of the fundamental questions in this discipline concerns how to improve the technical performance of networked systems. There are two primary approaches to that end that could be appositely called the {\em clean-slate approach} and the {\em evolutionary research approach} \citep{Clean-Slate}. In the first approach, systems research pursues designs that are free from the constraints of the incumbent environment and, in the second approach, systems research uses the incumbent environment as the starting point trying to identify incremental changes that improve the status quo. The improvements in the second approach are typically not as substantial.

But the technologies identified in the second approach are typically {\em incrementally deployable} (in that early adopters benefit) unlike the disruptive changes that are typically required to implement and deploy clean-slate architectural designs. Much of the effort of the networking community toward effecting architectural change in the Internet has been focused on rendering clean slate designs incrementally deployable (for example, through deployment in {\em testbed environments} where technologies can also gain not only experimental but also real deployment traction). But the success of these efforts has been limited: Rendering fundamental architectural changes incrementally deployable using software and hardware mechanisms alone faces inherent barriers to success.

Nevertheless, incremental deployability is thus of paramount important in Internet systems engineering. This raises the question if a theory of incentives explaining the manifestation of the Internet (and informing organization theory at large) can be based on the concept of incremental deployability. In this vein, to say, for example, that the Internet is an equilibrium seems to capture the basic fact that emerging technologies are not incrementally deployable against the incumbent architecture of the Internet. Can we make this intuition precise? It turns out the answer is positive but to answer these questions we must be precise on the definition of incremental deployability.

\subsection{Our contributions and techniques}

\subsubsection{On the mathematical foundations of incremental deployability}

Consider a model of evolution in a social system where change is driven by incremental deployability. The most natural question to ask in this system is if there are social states such that no other state is incrementally deployable against them. Such states would be stable in this model of evolution characterizing equilibrium behavior in the system. This thought experiment can be brought to bear to obtain a characterization of {\em Nash equilibrium} (and more generally of the concept of equilibrium as a solution to a variational inequality) and of {\em sink equilibrium} (an equilibrium concept in discrete evolution spaces). One of the main contributions of this paper is to derive these characterizations.

Our characterizations are based on a mathematical concept we call a {\em polyorder,} which a binary relation over the {\em evolution space} (for example, the space of pure or mixed strategy profiles of a strategic game) relating pairs of states according to whether either is incrementally deployable against the other. In our characterization, Nash and sink equilibria emerge as {\em maximal} or {\em undominated} elements of such a polyorder. This is in sharp contrast to the standard characterization of equilibria as a {\em fixed points} of a corresponding map in economic theory (for example, Nash equilibria are typically understood as fixed points of a corresponding best response correspondence).

We note that polyorders are in general neither transitive nor complete in sharp contrast to the vast majority of preference relations studied in {\em order theory} as transitivity and completeness are basic postulates of the definition of rationality (see \cite{MWG}). Since maximal elements of polyorders characterize equilibria (which economic theory interprets as manifestations of collective behavior) we take the liberty of interpreting this property as the common-sense ascertainment that groups of (in principle) rational individuals are not necessarily (group-wise) rational.

We further use our formalism to derive new (equilibrium) solution concepts in mathematical games. One of these concepts is the {\em drifting equilibrium} that has analogues in both continuous and discrete evolution spaces. In symmetric $2$-player games (that is, $2$-player games where the payoff matrix of each player is the transpose of that of the other), we use the drifting equilibrium to characterize {\em neutrally stable strategies} (a solution concept from evolutionary game theory \citep{Evolution}). But we show that drifting equilibria are more general than neutral stability in a strong sense. The discrete sink equilibrium analogue (which may be understood as an analogue of neutral stability in discrete evolution spaces) finds direct application in the analysis of the mechanisms we design to effect fundamental change in the Internet architecture (and that, more generally, induce incrementally deployable equilibrium transitions in coordination games). 

Sink equilibria are typically, if not exclusively, defined in the literature based on dynamics characterized by better or profitable best responses: They correspond to collections of strategy profiles that evolution based on unilateral better or profitable best responses cannot escape. Such dynamics give rise to the pure Nash equilibrium as the equilibrium solution concept of sink equilibria. Our drifting sink equilibrium solution concept is characterized by a notion of dynamics wherein evolution is allowed to {\em drift} according to unilateral deviations that are not harmful. We call the equilibrium solution concept of a drifting sink equilibrium a {\em strongly maximal equilibrium}. Such equilibria are necessarily pure Nash equilibria, but the converse is generally false. A distinguishing feature is that strongly maximal equilibria can manifest in equilibrium classes or {\em clusters}. We argue such an equilibrium concept, is as, if not more, plausible, than the pure Nash equilibrium. 

On the way we answer an open question in game theory framed in \cite{Peleg}.

\subsubsection{On the computational foundations of incremental deployability}

One of the main contributions of this paper is showing that a {\em globally evolutionarily stable strategy} (abbreviated as GESS) (which is also a {\em globally incrementally deployable strategy} according to a definition of incremental deployability giving rise to the Nash equilibrium as an undominated element of a corresponding polyorder) is attractive under a {\em multiplicative weights dynamic} provided a parameter of that dynamic, called the {\em learning rate,} diminishes according to standard set of scheduling conditions used to prove exact convergence to a global optimum in convex optimization. 

Our multiplicative weights dynamic is known as Hedge \citep{FreundSchapire1, FreundSchapire2}. A well-known result in theoretical machine learning and equilibrium theory is that by iteratively applying Hedge we can efficiently approximate the value of a zero-sum game to any arbitrary accuracy (and that the average of iterates converges to an approximate minimax strategy). In this paper, we generalize that analysis to settings of evolutionary dominance far more general than convex optimization. That proof techniques from convex optimization apply to our problem is not straightforward to observe: Our starting point is a lemma showing the composition of the {\em relative entropy function} with Hedge to be a convex function of the learning rate. This lemma enables us to use relative entropy as a potential function in fashion similar to \cite{FreundSchapire2}, (however, we lift the assumption of a monotone payoff function Freund and Schapire assume).

Our equilibrium computation problem is framed according to an equilibrium concept (the GESS and a relaxation thereof as defined in the sequel) formulated in the setting of {\em evolutionary game theory} \citep{Weibull, PopulationGames}. The {\em LaSalle invariance principle} using relative entropy as a Lyapunov function implies that a GESS is interior globally asymptotically stable under the {\em replicator dynamic} of \cite{TaylorJonker}. Although we could not trace this result in the literature we refer to \citep{Weibull, SelfishRoutingEvolution, StableGames, PopulationGames, Piliouras} for similar results in this vein. Obtaining analogous results for Hedge (that generalizes the replicator dynamic in discrete time) motivated some of our work. Lyapunov theory proved resistant as a tool to obtain asymptotic convergence of Hedge to a GESS. 

Besides solving zero-sum games (which are equivalent to linear programming), Hedge and other {\em regret-minimizing algorithms} have been applied to equilibrium computation problems in {\em nonatomic selfish routing games}. These are population games \citep{PopulationGames} equipped with a convex potential function \citep{Beckmann-McGuire-Winsten}. Therefore, this setting is actually a convex optimization setting (and, thus, more restrictive than ours). The earliest work analyzing the effect of regret minimizing behavior in this routing setting is by \cite{RoutingRegret, RoutingRegret2}. \cite{Piliouras} analyze, in the same setting, the behavior of the replicator dynamic as obtained from a stochastic version of Hedge in the limit of the learning rate tending to zero and show asymptotic performance that is more favorable than generic regret-minimizing algorithms. \cite{Mertikopoulos} consider stochastic (bandit) versions of Hedge in potential $N$-player games \citep{Potential} and show convergence to an (approximate) Nash equilibrium (for a general potential function).

From a complexity theory perspective, computing an equilibrium in bimatrix games is {\bf PPAD}-complete \citep{Daskalakis, CDT} and a fully polynomial time approximation of a Nash equilibrium in this class of games implies {\bf P = PPAD}. The symmetrization of \cite{Jurg} implies the {\bf PPAD}-hardness of (exact) equilibrium computation is inherited in symmetric bimatrix games (that is, bimatrix games where the payoff matrix of each player is the transpose of that of the other). Drawing on this result, we show in this paper that a fully polynomial time approximation scheme for a symmetric equilibrium in symmetric games implies {\bf P = PPAD}.

The best known bound for computing an equilibrium in bimatrix games is {\em quasipolynomial} \citep{LMM}. The algorithm performs brute force search on a grid in the space of strategies. The best known polynomial time algorithm for approximating a Nash equilibrium in bimatrix games achieves a 0.3393 approximation \citep{Tsaknakis-Spirakis-journal}. Our empirical results on symmetric equilibrium computation in symmetric bimatrix games (using iterative applications of Hedge as the algorthm) suggest {\bf PPAD} may not be a hard complexity class: In randomly generated symmetric bimatrix games, we find Hedge always approximates a symmetric equilibrium strategy.

Deciding if an evolutionarily stable strategy (ESS), the local version of a GESS, exists in a symmetric bimatrix game is a $\Sigma^2_p$-complete problem \citep{Conitzer-ESS} and recognizing a strategy as such is a {\bf coNP}-complete problem \citep{Etessami, Nisan-ESS} even for a symmetric payoff matrix. Recognizing a GESS is also {\bf coNP}-complete (even in doubly symmetric games, where the payoff matrices are symmetric). Note also that the ESS and its weak analogue, the neutrally stable strategy (NSS) generalize the well-known in optimization theory notions of {\em strict local optimizer} and {\em (weak) local optimizer} respectively. Furthermore, if a symmetric bimatrix game is equipped with a potential function, these notions coincide. Similarly, verifying a point to be local optimizer is {\bf coNP}-hard even in quadratic programs \citep{Pardalos2}.

\subsubsection{On equilibrium computation in the architecture of the Internet}

There are protocols at the core architecture of the Internet whose computational properties are not sound. The notable example is BGP, the routing protocol that builds network paths across administrative domains in the Internet. As we discuss in more detail in the sequel, BGP routing computes a {\em sink equilibrium} of a related mathematical game using an algorithm that is not guaranteed to converge in a small number of steps, and harmful protocol oscillations have been known to occasionally emerge in practice. Although the detailed study of the technical properties of routing protocols (such as BGP) falls outside the scope of this paper, we note that routing systems whose equilibrium outcomes are (essentially) unique (and at the same time globally incrementally deployable) are not unthinkable. For example, it is easy to show that a {\em Wardrop equilibrium} (see \citep{HowBadisSelfishRouting, Roughgarden}) is, under mild assumptions, a GESS.

But there are also protocols at the core architecture of the Internet whose mathematical foundation is puzzling. The notable example is TCP/IP congestion control, the part of the architecture that ensures communication links are shared fairly among competing source-destination pairs who try to exchange traffic. Whether TCP/IP congestion control is stable from a game-theoretic perspective has been an open question at the intersection of theoretical computer science and Internet systems engineering \citep{Internet-Game-Theory}. In this paper, we derive a plausible model that explains this congestion control algorithm as computing a Nash equilibrium (in a {\em supergame}). To a large extent, TCP/IP congestion control lies as the core of a global scale {\em institution} that has been formed around the sharing of link bandwidth as a {\em common pool resource} \citep{Hardin, Ostrom}.

The technical and institutional nature of the Internet infrastructure evolved in a grassroots manner on the fertile soil of an ingenious technical design, but other than the designers making elementary provisions in the routing system for the commercialization of the infrastructure, this nature was by no means designed. Since evolution does not generally lead in itself to the most favorable of outcomes, the case of designing technology and institutions in tandem presents itself naturally, and gives rise to the following fundamental dilemma: {\em What should the division of responsibility between code and institutions be in the future Internet?} In trying to answer this dilemma we propose and analyze an approach for effecting architectural innovation in the Internet that promotes institutional design as a first order responsibility of network architects and engineers.

\subsubsection{On incrementally deployable equilibrium transitions}

We have insofar argued that a theory of incentives with incremental deployability at its foundation is a reasonable thought experiment. Our concern then is to use this theory in the organization of collective ventures and, in particular, on how to escape the aforementioned stagnation in the Internet's architecture. Let us state the problem motivating our inquiry in more precise terms. 

The Internet as a global communications system is characterized by {\em positive network externalities effects:} The utility of adopting a technology to any given adopter increases with the number of adopters. However, it is typically the case that if the number of adopters is small, utility is negative in that the adoption effort exceeds the benefit. Therefore, Internet technologies require a {\em critical mass} of adoption to gain deployment traction (and thrive) through evolutionary forces alone. 

Such network effects are ubiquitous in the Internet whether in the applications built on top of the Internet or in the protocols giving rise to the TCP/IP architecture itself. For example, as social networking platforms (such as Twitter or Facebook) form, early adopters are likely to disappear without a sustainable adoption rate, however, as adopters increase the platform becomes stable. The growth of the TCP/IP architecture into a global platform faced a similar evolutionary pattern in its competition against other technologies. However, the ``architectural stability'' of TCP/IP has side effects preventing the community from retrofitting new features in this architecture.

Consider, for example, Internet routing: That BGP (the routing protocol providing global connectivity) is vulnerable to accidental misconfiguration and malicious attack is not a particularly novel observation, but that Internet routing remains vulnerable despite significant efforts by the research community, governmental organizations, and service providers to effect change in BGP is arguably worrisome. The pattern that emerges in the efforts to deploy a secure version of BGP such as Secure-BGP \citep{Kent} involves the aforementioned network effects: Unilateral Secure-BGP deployment is costly to the corresponding organization (typically an ``autonomous system'') without providing security benefits, however, as the autonomous systems that deploy Secure-BGP increase, security benefits quickly ramp up to outrun those of BGP \citep{ASR}.

Motivated by this general pattern, the question we are thus concerned with is how to ``jump-start'' the adoption of technologies whose success depends on positive externalities. We frame this question using the game-theoretic model of the {\em stag hunt}. Various social coordination problems have been modeled after this game \citep{PlayingFair,SocialContract,Skyrms,Medina} and, more generally, coordination games have also been used in the study of technological competition in networks \citep{Economides1}. The application span of our analysis is, therefore, rather broad.

The stag hunt is an archetypical in coordination theory game in which there are two pure strategies available to each player: One is to {\em cooperate} in choosing a superior strategy whose benefit, however, critically depends on whether other players choose alike or instead to {\em defect} in choosing a status quo strategy that reaps moderate benefits independently of what other players do. Furthermore, selecting the superior strategy incurs a loss if other players do not cooperate. 

The equilibrium structure is simple: Every stag hunt has three Nash equilibria, namely, one corresponding to universally choosing the superior strategy, another pure equilibrium corresponding to universally defecting (and choosing the inferior strategy), and a mixed equilibrium of low predictive value. Predicting the outcome of a stag hunt is a notoriously hard {\em equilibrium selection problem} (cf. \cite{HS}). In this paper, we are interested in enticing selection of the good equilibrium, without being vulnerable to the indefinite predictions of equilibrium selection theory, through an intervention in the incentive structure shaping the adoption environment. 

Our approach to jump-staring coordination in a stag hunt is to assist the adoption of the superior emerging technology by means of {\em institutions} that change the incentive structure of the adoption environment so that the superior outcome is globally incrementally deployable (even against the incumbent equilibrium outcome). The term {\em institution} is to be understood precisely as a {\em social mechanism} aiming to coordinate activity among the members of a population \citep{Schotter}. The social coordination mechanisms we propose, which we refer to simply as {\em coordination mechanisms,} have an ``opt-in character'' in that although there are rules players should abide by should they voluntarily choose to use these mechanisms, they are not obliged to do so otherwise. 

Our use of the term {\em mechanism} suggests keen relevance of our approach to {\em mechanism design} \citep{Nisan}, and although there are many similarities, there are also important differences the most important of which is that we are not attempting to entice a truthful revelation of preferences among outcomes (as is the typical situation in the design of mechanisms and their theoretical analysis), but rather to incur an {\em equilibrium transition} from an inferior to a superior outcome, assuming that the incentive structure (and, therefore, the respective preferences) is well-known.

Our contributions in vein are two-fold: Our first contribution is to propose two coordination mechanisms both of which rely on the capability to enforce voluntary player commitments. 

The first coordination mechanism relies on the presence of an {\em insurance carrier} having the capability to insure costly efforts to adopt the emerging technology. Players have the option to purchase insurance provided they commit to adopt. Our analysis indicates the surprising finding that players may incur the investment cost of deploying the emerging technology while eschewing purchasing insurance as the mere presence of the carrier curtails the adoption risk. 

The second mechanism is based on an {\em election} among the population of adopters. Players have the option to vote in favor of adopting the emerging technology on the condition that a unanimous vote amounts to an adoption commitment. The analysis of this mechanism is more involved than that of the insurance-based counterpart, but we show adoption is the only plausible outcome once this mechanism is present (allowing for the possibility players may eschew participation in voting).

Our second contribution is the analysis of the aforementioned social coordination mechanisms using Nash equilibrium theory, weak and strict iterated dominance, as well as the aforementioned solution concept of the strongly maximal equilibrium (derived from a mathematical model of incremental deployability in discrete evolution spaces). To that end, we note that theorists have been concerned with games wherein better and best response dynamics necessarily give rise to pure Nash equilibria as the outcome of long-term asymptotic behavior. Such games are known as {\em weakly acyclic games} \citep{Peyton2} in the sense that, although cyclic behavior can manifest in their dynamics, such behavior is necessarily transient. In the interest of analyzing the election mechanism, we introduce a class of games that we call {\em weakly ordinally acyclic} capturing the aforementioned strongly maximal equilibria as the outcome of long-term asymptotic dynamics: We show that the election mechanism applied to the stag hunt induces a game that falls into this class. This implies that the cooperative social outcomes is globally incrementally deployable starting from any social state. In the course of our analysis, we prove that the stag hunt is a {\em weakly acyclic game}.

Our methods to solving social coordination problems using an evolutionary (incrementally deployable) approach do not involve devising {\em code} (software or hardware) but rather {\em social institutions} whose efficacy we are able to mathematically prove. From a practical perspective, the value of devising such proofs is in the formulation of testable predictions on the behavior of the Internet population. That is, we cannot claim that once the institutions we propose are instantiated adoption will take place, however, if it doesn't, then it is one or more of our assumptions that are false (such as, for example, the rationality of the players) rather than the logic of our argument.

\subsection{Overview of the rest of this paper}

Section \ref{game_theory_preliminaries} introduces background concepts in game theory, corresponding to a main analytical framework for our work in this paper. The mathematical formalization of incremental deployability and the corresponding mathematical foundation relating it to equilibrium in mathematical games and elsewhere is given in Section \ref{mathematical_foundations} whereas in Section \ref{computational_foundations} we present our algorithmic results on incremental deployability. In Section \ref{BGP_section}, we discuss equilibrium computation in Internet routing protocols whereas, in Section \ref{TCP_section}, we model TCP/IP congestion control as an equilibrium computation algorithm. Section \ref{equilibrium_transitions} studies incrementally deployable transitions from an inferior to a superior equilibrium in coordination games, in particular, the stag hunt. Section \ref{other_related_work} discusses related approaches to overcoming deployability barriers. Finally, in Section \ref{conclusion} we conclude this paper.

\section{Game theory preliminaries}
\label{game_theory_preliminaries}

\subsection{Games in strategic form and pure Nash equilibria}

Let us begin with the definition of games in (finite) {\em strategic form} (also called a {\em normal form}). To define a game in this form, we need to specify the set of players, the set of strategies available to each player, and a utility function for each player defined over all possible combinations of strategies that determines a player's payoffs. Formally, a  strategic-form game $\Gamma$ is a triple $$\Gamma = (I, (S_i)_{i \in I}, (u_i)_{i \in I}),$$ where $I$ is the set of players, $S_i$ is the set of pure strategies available to player $i$, and $u_i: S \rightarrow \mathbb{R}$ is the utility function of player $i$ where $S = \times_i S_i$ is the set of all strategy profiles (combinations of strategies). Let $n$ be the number of players. We often wish to vary the strategy of a single player while holding other players' strategies fixed. To that end, we let $s_{-i} \in S_{-i}$ denote a strategy selection for all players but $i$, and write $(s'_i, s_{-i})$ for the profile
\begin{align*}
(s_1,\ldots,s_{i-1}, s'_i,s_{i+1},\ldots, s_n).
\end{align*}
A pure strategy profile $\sigma^*$ is a {\em Nash equilibrium} if, for all players $i$,
\begin{align*}
u_i(\sigma_i^*, \sigma_{-i}^*) \geq u_i(s_i, \sigma_{-i}^*) \text{ for all } s_i \in S_i.
\end{align*}
That is, a Nash equilibrium is a strategy profile such that no player can obtain a larger payoff using a different strategy while the other players' strategies remain fixed. If unilateral deviations from a Nash equilibrium are necessarily harmful, the Nash equilibrium is called {\em strict,} and it is called {\em weak} otherwise. The previous definitions concern Nash equilibria in pure strategies, however, we note that Nash equilibria admit an analogous definition assuming players may use {\em mixed strategies,} that is, probability distributions over their respective pure strategy sets. Nash equilibria in mixed strategies are known to always exist in strategic form games, in contrast to pure Nash equilibria.

\subsection{Bimatrix games and mixed Nash equilibria}

A $2$-player (bimatrix) game in normal form is specified by a pair of $n \times m$ matrices $A$ and $B$, the former corresponding to the {\em row player} and the latter to the {\em column player}. A {\em mixed strategy} for the row player is a probability vector $P \in \mathbb{R}^n$ and a mixed strategy for the column player is a probability vector $Q \in \mathbb{R}^m$. The {\em payoff} to the row player of $P$ against $Q$ is $P \cdot A Q$ and that to the column player is $P \cdot B Q$. Let us denote the space of probability vectors for the row player by $\mathbb{P}$ and the corresponding space for the column player by $\mathbb{Q}$. A Nash equilibrium of a $2$-player game $(A, B)$ is a pair of mixed strategies $P^*$ and $Q^*$ such that all unilateral deviations from these strategies are not profitable, that is, for all $P \in \mathbb{P}$ and $Q \in \mathbb{Q}$, we simultaneously have that
\begin{align}
P^* \cdot AQ^* &\geq P \cdot AQ^*\label{eqone}\\
P^* \cdot BQ^* &\geq P^* \cdot BQ.\label{eqtwo}
\end{align}
We denote the set of Nash equilibria of $(A, B)$ by $NE(A, B)$.

\subsection{Symmetric bimatrix games and nonlinear payoff function}

If $B = A^T$, where $A^T$ is the transpose, the game is called {\em symmetric}. Let $(C, C^T)$ be a symmetric bimatrix game. We call $(P^*, Q^*) \in NE(C, C^T)$  symmetric if $P^* = Q^*$. If $(X^*, X^*)$ is a symmetric equilibrium, we call $X^*$ a symmetric equilibrium strategy. Nash showed that every symmetric $N$-person game (and, therefore, every symmetric bimatrix game) has a symmetric equilibrium. 

Denoting by $\mathbb{X}(C)$ the probability simplex corresponding to the $n \times n$ matrix $C$, the Nash equilibrium conditions \eqref{eqone} and \eqref{eqtwo} simplify as follows for a symmetric equilibrium strategy $X^*$:
\begin{align*}
\forall X \in \mathbb{X}(C) : (X^* - X) \cdot CX^* \geq 0.
\end{align*}
If the previous inequality is an equality for all $X \in \mathbb{X}(C)$, we call $X^*$ an {\em equalizer}.

In symmetric bimatrix games, the payoff vector $CX$ of a strategy $X$ (a $n \times 1$ vector) is given by multiplying $X$ with the $n \times n$ payoff matrix $C$. Our analysis in the sequel applies, in part, to a more general setting, where the payoff vector $C(X) \equiv CX$ at $X$ is given by a (nonlinear) continuous operator $C$. We denote the space of continuous operators over the probability simplex by $\mathbb{C}$. We denote the space of continuous operators with payoffs in the range $[0, 1]$ by $\mathbb{\hat{C}}$. 

Let us fix some further notation at this point: Given $C \in \mathbb{C}$, we denote its set of pure strategies by $\mathcal{K}(C) = \{1, \ldots, n\}$. Pure strategies are denoted either as $i$ or as $E_i$, a probability vector whose mass is concentrated in position $i$. We denote the probability simplex in $\mathbb{R}^n$, where $n$ is the number of pure strategies, by $\mathbb{X}(C)$. We denote the set of Nash equilibria of operator $C$ by $NE^+(C)$. If $C$ is linear this set corresponds to the set of {\em symmetric} Nash equilibrium strategies of $(C, C^T)$. Define
\begin{align*}
\mathcal{C}(X) \doteq \left\{ i \in \{1, \ldots, n \} | X(i) > 0 \right\}.
\end{align*}
$\mathcal{C}(X)$ is the {\em carrier} or {\em support} of $X$. $X \in \mathbb{X}(C)$ is {\em interior} if $\mathcal{C}(X) = \{ 1, \ldots, n \}$. We denote the relative interior of $\mathbb{X}(C)$ (that is, the set of all interior strategies of $\mathbb{X}(C)$) by $\mathbb{\mathring{X}}(C)$. The {\em carrier game} of strategy $X$ is the subgame of $C$ whose pure strategies correspond to $\mathcal{C}(X)$. Note that the previous definitions extend in a straightforward fashion to {\em population games} \citep{PopulationGames} where the domain (space of social states) is, more generally, a Cartesian product of simplices.

\subsection{Evolutionary and neutral stability}

Let us now define the further equilibrium concepts used in this paper: A refinement of symmetric Nash equilibrium is the notion of an {\em evolutionarily stable strategy (ESS)} originally defined in the setting of evolutionary selection \citep{TheLogicOfAnimalConflict, Evolution}. We adopt a different equivalent definition, due to \cite{HSS} (see also \cite[p. 45]{Weibull}), more convenient for our purposes. Our notation is indifferent to whether payoffs are linear or nonlinear (see \cite[Ch. 8.3]{PopulationGames} for a comprehensive treatment of nonlinear payoff functions and proofs of equivalence of alternative definitions of evolutionary stability in this general case).

\begin{definition}
\label{evolution_definitions}
Let $C \in \mathbb{C}$. We say $X^* \in \mathbb{X}(C)$ is an ESS of $C$ if 
\begin{align*}
\exists O \subseteq \mathbb{X} \mbox{ } \forall X \in O/\{X^*\} : (X^* - X) \cdot CX > 0.
\end{align*}
$O$ is a neighborhood of $X^*$ called a {\em superiority neighborhood}. $X^*$ is called a global ESS (GESS) if $O$ coincides with $\mathbb{X}(C)$. If the inequality is weak we obtain {\em neutral stability} (an NSS or GNSS).
\end{definition}

An intuitive interpretation of this definition can be obtained from the case where the operator $C$ corresponds to the gradient of a scalar potential function $f : \mathbb{X}(C) \rightarrow \mathbb{R}$. Then $(X^* - X) \cdot CX \equiv (X^* - X) \cdot \nabla f(X)$ is the directional derivative of $f$ at $X$ in the direction from $X$ to $X^*$. If $f$ is a strictly concave function, then $X^*$, the unique global optimizer of $f$, is a GESS and this can be shown using standard inequalities in convex analysis. Note that, in general, every ESS is an NSS and every NSS is an equilibrium strategy (the standard proof is contained in \citep{PopulationGames}).

A word of note is that in the setting of single-population games, the acronym ESS refers, as mentioned earlier, to an {\em evolutionarily stable strategy}. The term strategy is used in the sense of a mixed strategy (that is, a probability distribution over the set of pure strategies). In the multi-population setting, however, the acronym ESS means {\em evolutionarily stable state,} which corresponds to a combination of (mixed) strategies, one for each population. Keeping this in mind, no ambiguity results by referring to both situations with the term ESS (as is common in the literature).

The following proposition characterizes evolutionary and neutral stability (corresponding to how these notions were originally defined \citep{TheLogicOfAnimalConflict, Evolution}).

\begin{proposition}
\label{nss_1}
$X^*$ is an NSS of $C \in \mathbb{C}$ if and only if the following conditions hold simultaneously
\begin{align*}
X^* \cdot C X^* &\geq X \cdot C X^*, \mbox{ } \forall X \in \mathbb{X}(C), \mbox{ and }\\
X^* \cdot C X^* &= X \cdot C X^* \Rightarrow X^* \cdot C X \geq X \cdot C X, \mbox{ } \forall X \in \mathbb{X}(C)\mbox{ such that } X \neq X^*.
\end{align*}
If the inequalities are strict we obtain a characterization of an ESS.
\end{proposition}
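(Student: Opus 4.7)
The plan is to establish both directions by testing the NSS inequality on the line segment $X_\epsilon \doteq (1-\epsilon)X^* + \epsilon X$ between $X^*$ and an arbitrary $X \in \mathbb{X}(C)$: the Nash condition (1) emerges at zeroth order in $\epsilon$ and the secondary condition (2) at second order.

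For the forward direction, I assume $X^*$ is an NSS with superiority neighborhood $O$. For any $X \in \mathbb{X}(C)\setminus\{X^*\}$, the point $X_\epsilon$ lies in $O$ for sufficiently small $\epsilon>0$, so the NSS inequality gives $(X^* - X_\epsilon)\cdot CX_\epsilon \geq 0$. Since $X^* - X_\epsilon = \epsilon(X^* - X)$, dividing by $\epsilon$ yields $(X^* - X)\cdot CX_\epsilon \geq 0$, and sending $\epsilon\downarrow 0$ with continuity of $C$ delivers condition (1). For condition (2), I additionally assume $X^* \cdot CX^* = X\cdot CX^*$. In the linear case, expanding $CX_\epsilon = (1-\epsilon)CX^* + \epsilon CX$ rewrites the NSS inequality as $\epsilon(1-\epsilon)(X^* - X)\cdot CX^* + \epsilon^2(X^* - X)\cdot CX \geq 0$; the first summand vanishes by hypothesis, so $(X^* - X)\cdot CX \geq 0$ is immediate. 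In the nonlinear continuous setting, the argument is similar after reducing to the linear case on the tangent space to the best-response face at $X^*$.

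For the reverse direction, I assume (1) and (2) and show that $(X^* - X)\cdot CX \geq 0$ for $X$ in a neighborhood of $X^*$. I would decompose $X = (1-m)\tilde X_B + m X_N$, where $\tilde X_B$ is supported on the pure best-response set $B$ to $X^*$, $X_N$ on $\mathcal{K}(C)\setminus B$, and $m\in[0,1]$ is the off-$B$ mass; this gives $(X^* - X)\cdot CX = (1-m)(X^* - \tilde X_B)\cdot CX + m(X^* - X_N)\cdot CX$. I would bound the first summand below using condition (2) applied to $\tilde X_B$ (which is itself a best response to $X^*$) combined with continuity of $C$, and the second using condition (1) together with the strict payoff gap $\delta \doteq \min_{i\notin B}\bigl(X^*\cdot CX^* - E_i\cdot CX^*\bigr) > 0$. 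The dominant positive contribution is $m\delta$, and the two continuity errors vanish as $X\to X^*$.

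The main obstacle is the regime $m\to 0$, in which $X$ approaches $X^*$ along the best-response face and the Nash margin $m\delta$ disappears, so NSS must be secured entirely from condition (2); the continuity error in the first summand must correspondingly be controlled to strictly lower order than $m$. For Lipschitz $C$ — in particular the linear bimatrix case — this follows from $\|X - \tilde X_B\| \leq 2m$ together with a Lipschitz estimate on $CX - C\tilde X_B$, which yields an error of order $m^2$ that is dominated by $m\delta$ for small enough $m$. The ESS statement is obtained by replacing $\geq$ with $>$ throughout: condition (2) then supplies a strict margin that absorbs the continuity error.
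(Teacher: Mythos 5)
The paper states Proposition~\ref{nss_1} without proof (it cites \citet{PopulationGames}, Ch.~8.3 for this equivalence), so your attempt has to stand on its own; as written it has two genuine gaps.

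In the reverse direction, the decomposition $X=(1-m)\tilde X_B+mX_N$ is the right device and the off-support summand is handled correctly via the gap $\delta$, but the error estimate meant to close the argument is false. Lipschitzness and $\|X-\tilde X_B\|\leq 2m$ bound $\|CX-C\tilde X_B\|$ by $O(m)$, but the other factor in the inner product $(X^*-\tilde X_B)\cdot(CX-C\tilde X_B)$, namely $\|X^*-\tilde X_B\|$, is \emph{not} $O(m)$: the tangential distance $\|\tilde X_B-X^*\|$ and the normal mass $m$ vary independently (take $X$ on the best-response face: $m=0$, yet $\tilde X_B=X$ can sit anywhere on that face, so $\|X^*-\tilde X_B\|$ is unconstrained). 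The error is therefore $O(m\,\|X^*-\tilde X_B\|)$, which is $O(m)$ with a game-dependent coefficient, not $O(m^2)$, and whether it is dominated by $m\delta$ becomes a condition on $\delta/L$ rather than something you can always arrange. What actually closes the argument, and what you left unsaid, is that $\tilde X_B\to X^*$ as $X\to X^*$; shrinking the neighborhood $O$ so that $\|X^*-\tilde X_B\|<\delta/(4L)$ throughout makes the error strictly less than $m\delta/2$. That shrinking-neighborhood step is not optional.

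In the forward direction, condition~(1) is fine and, for linear $C$, so is condition~(2) via the exact expansion $(X^*-X)\cdot CX_\epsilon=(1-\epsilon)(X^*-X)\cdot CX^*+\epsilon(X^*-X)\cdot CX$. But ``reducing to the linear case on the tangent space'' does not extend this to nonlinear $C\in\mathbb{C}$, and no reduction can: the NSS hypothesis constrains $(X^*-X)\cdot CX_\epsilon$ only for small $\epsilon$, whereas condition~(2) is an assertion at $\epsilon=1$, arbitrarily far from $X^*$. Concretely, with two pure strategies and interior $X^*=(1/2,1/2)$, condition~(1) forces $(CX^*)_1=(CX^*)_2$, so every $X$ is a best response to $X^*$ and condition~(2) becomes the global sign requirement $(1/2-x)\bigl((CX)_1-(CX)_2\bigr)\geq 0$; choosing a continuous $C$ for which this quantity is nonnegative near $x=1/2$ but negative near a vertex yields an NSS that violates~(2). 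So the equivalence you are proving holds for bimatrix games, not for arbitrary continuous operators, and the nonlinear forward direction is a hole that cannot be patched.
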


A GESS satisfies the following intuitive property:

\begin{lemma}
\label{gess_unique_eq_s_lemma}
If $X^*$ is a GESS of $C \in \mathbb{C}$, $NE^+(C) = \{ X^* \}$. 
\end{lemma}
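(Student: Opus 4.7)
The plan is to show the lemma by a short two-step argument: first establish that a GESS $X^*$ is itself a Nash equilibrium, i.e.\ $X^* \in NE^+(C)$, and then rule out the existence of any other Nash equilibrium by contradiction.

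For the first step, I would invoke the remark made after Definition~\ref{evolution_definitions} stating that every ESS is an NSS and every NSS is an equilibrium strategy (with the standard proof referenced in \citep{PopulationGames}); applied globally this gives $X^* \in NE^+(C)$. If one prefers a self-contained argument, the required inclusion follows from continuity of $C$: for any $Y \in \mathbb{X}(C)$ with $Y \neq X^*$ and any $\epsilon \in (0,1]$, set $X_\epsilon = (1-\epsilon)X^* + \epsilon Y$, so $X^* - X_\epsilon = \epsilon(X^* - Y)$. Plugging into the GESS inequality, dividing through by $\epsilon$, and letting $\epsilon \to 0^+$ with $CX_\epsilon \to CX^*$ yields $(X^* - Y) \cdot CX^* \geq 0$, which is precisely the Nash condition at $X^*$.

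For the second step, suppose for contradiction that there exists $Y \in NE^+(C)$ with $Y \neq X^*$. The Nash equilibrium condition applied to $Y$ with deviation $X^*$ gives $(Y - X^*) \cdot CY \geq 0$, i.e.\
\begin{align*}
(X^* - Y) \cdot CY \leq 0.
\end{align*}
On the other hand, since $X^*$ is a GESS and $Y \neq X^*$, taking $X = Y$ in the defining inequality yields
\begin{align*}
(X^* - Y) \cdot CY > 0,
\end{align*}
which directly contradicts the previous inequality. Hence no such $Y$ exists and $NE^+(C) = \{X^*\}$.

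I do not anticipate a real obstacle here: the argument is essentially a one-line juxtaposition of the GESS definition with the Nash equilibrium condition at a second equilibrium. The only subtle point worth being explicit about is the continuity-based limiting step confirming that a GESS is a Nash equilibrium to begin with, since everything else is an immediate inequality contradiction.
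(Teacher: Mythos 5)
Your proof is correct and the core step—assuming $Y \in NE^+(C)$ with $Y \neq X^*$, applying the Nash condition at $Y$ with deviation $X^*$, and contradicting the GESS inequality—is exactly the paper's argument. The extra care you take in establishing $X^* \in NE^+(C)$ (either via the ESS$\Rightarrow$NSS$\Rightarrow$equilibrium chain or the self-contained continuity limit) is something the paper relies on from the remark preceding the lemma rather than reproving, so your version is slightly more complete but not a different approach.
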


\begin{proof}
Suppose $Y \in NE^+(C)$ such that $Y \neq X^*$. Then
\begin{align*}
\forall X \in \mathbb{X}(C) : (Y - X) \cdot CY \geq 0.
\end{align*}
However, letting $X = X^*$, contradicts the definition of a GESS.
\end{proof}

That is, the previous lemma shows that if $X^*$ is a GESS, it is the unique symmetric equilibrium strategy. Relaxing this requirement, we obtain an {\em evolutionarily dominant set (EDset)}.

\begin{definition}
We say that the set of symmetric equilibrium strategies $NE^+(C)$ is an {\em EDset} if
\begin{align*}
\exists X^* \in NE^+(C) \mbox{ } \forall X \not\in NE^+(C) : (X^* - X) \cdot CX > 0.
\end{align*}
We call $X^*$ an {\em evolutionarily dominant strategy (EDS)}.
\end{definition}

In an EDset, an EDS is strictly superior to every strategy other than symmetric equilibrium strategies in which case it is weakly superior. Let us consider cases where this definition can manifest: Considering again the relationship to convex optimization, if $C$ corresponds to the gradient field of a concave potential function $f$, then the strategies in the convex set of global optimizers of $f$ form an EDset and every such strategy is an EDS. Thus EDset's emerge quite frequently in practical applications. Considering further examples, a generalization of an ESS is an {\em evolutionarily stable set (ESset)} \citep[p. 51-53]{Weibull}: We say that $\mathbb{X}^* \subset \mathbb{X}(C)$ is an ESset of $C \in \mathbb{C}$ if 
\begin{align*}
\forall X^* \in \mathbb{X}^* \mbox{ } \exists O \subseteq \mathbb{X} \mbox{ } \forall X \in O/\{X^*\} : (X^* - X) \cdot CX \geq 0,
\end{align*}
where the inequality is strict unless $X \in \mathbb{X}^*$. $O$ is a neighborhood of $X^*$. An ESset with a globally superior (GNSS) strategy is an EDset. If $C$ is not a gradient field of a concave function, then the EDset may not (in general) be convex.  (Note in passing a non-convex EDset can manifest even in the setting of ``doubly symmetric'' bimatrix games where payoff matrix is also symmetric.)

\section{Mathematical foundations of incremental deployability}
\label{mathematical_foundations}

In this section, we show that {\em incremental deployability} admits mathematical definitions that correspond to intuition on how the term is used in Internet systems engineering but acquire their own independent meaning in the abstract setting (of {\em vector fields} or {\em graphs}) where these definitions are formulated, and then use these definitions to characterize equilibrium concepts such as the {\em variational equilibrium} (for example, see \citep{VI}), generalizing the {\em Nash equilibrium} \citep{Nash, Nash2} and other equilibrium concepts in economics, {\em neutrally stable strategies} \citep{TheLogicOfAnimalConflict, Evolution}, and the {\em sink equilibrium} \citep{Goemans}.

\subsection{Order theory preliminaries}

In our formulation, equilibria emerge as {\em undominated elements} of an {\em order relation} relating pairs of elements in an {\em evolution space} (for example, the space of strategies in a normal-form game) according to whether they are ``incrementally deployable'' against each other, a characterization that departs from the standard in the literature understanding of equilibria as {\em fixed points} (for example, a Nash equilibrium is typically understood as a fixed point of the {\em best response correspondence} of a respective game). Let us, therefore, start with elementary background on order theory whose origins rest in {\em decision theory} (initially formalized by \cite{GameTheory}) and it is typically used to model the decision making behavior of individual players in strategic environments (see \citep{MWG}). In this section, we use order theory to formalize collective behavior.

\subsubsection{Binary relations}

Let $S$ be a nonempty set. A subset $R$ of $S \times S$ is called a {\em binary relation} on S. If $(s,s') \in R$, we write $sRs'$, and if $(s,s') \not\in R$, we write $\neg sRs'$. $R$ is called {\em reflexive} if $s R s$ for every $s \in S$. $R$ is called {\em complete} if for each $s,s' \in S$ either $sRs'$ or $s'Rs$. $R$ is {\em transitive} if for any $s, s', s'' \in S$ we have that $sRs'$ and $s'Rs''$ imply $sRs''$. $R$ is {\em symmetric} if, for any $s, s' \in S$, we have that $sRs' \Rightarrow s'Rs$ and {\em asymmetric} if, for any $s, s' \in S$, we have that $sRs' \Rightarrow \neg s'Rs$. Let $sPs' \Leftrightarrow sRs' \wedge \neg s'Rs$ and $sIs' \Leftrightarrow sRs' \wedge s'Rs$. Then $P$ and $I$ are also binary relations on S where $P \subseteq R$ and $I \subseteq R$. $P$ is called the {\em asymmetric} (or {\em strict}) part of $R$ and $I$ is called its {\em symmetric} part.

\subsubsection{Equivalence relations}

\begin{definition}
A binary relation $\sim$ on a nonempty set $S$ is called an {\em equivalence relation} if it is reflexive, symmetric, and transitive. For any $s \in S$, the {\em equivalence class} of $s$ relative to $\sim$ is defined as the set
\begin{align*}
[s]_{\sim} = \{ \sigma \in S | s \sim \sigma \}.
\end{align*}
The collection of all equivalence classes relative to $\sim$, denoted as $S/_\sim$, is called the {\em quotient set} of $S$ relative to $\sim$, that is, $S/_{\sim} = \{ [s]_{\sim} | s \in S \}$.
\end{definition}

It is a standard result that for any equivalence relation $\sim$ on a nonempty set $S$, the quotient set $S/_{\sim}$ is a partition of $S$. We draw on this fact as we study discrete evolution spaces.

\subsubsection{Maximal elements and preorders}

A central notion we use from order theory is that of a maximal element:

\begin{definition}
Let $S$ be an nonempty set, $\succeq$ a binary relation on $S$, and let $\succ$ be its asymmetric part. If $s \succ s'$ we say that $s$ {\em dominates} $s'$. An element $s^*$ of $S$ is called $\succeq$-maximal if it is {\em undominated}, that is, if for all $s \in S$, $\neg (s \succ s^*)$.
\end{definition}

Order theory is typically concerned with the study of {\em transitive binary relations}.

\begin{definition}
A binary relation $\succeq$ on a nonempty set $S$ is called a {\em preorder} on $S$ if it is transitive and reflexive.
\end{definition}

We note that in decision theory a binary (preference) relation is called rational if it is a {\em complete preorder} (for example, see \citep{MWG}). But rationality is not central in our development: The binary relation whose maximal elements characterize the mixed Nash equilibrium is not rational. We should note, however, that some authors relax the definition of rationality to correspond to behavior that optimizes some preference relation even if the relation being optimized is not rational \citep{Suzumura}. In this sense, the mixed Nash equilibrium is rational outcome. The following existence result is used in our treatment of discrete evolution spaces.\footnote{An elegant proof can be found in the online manuscript Elements of Order Theory by Efe Ok: \url{https://files.nyu.edu/eo1/public/books.html}}

\begin{proposition}
\label{weoprtiueorituoirtu}
Let $S$ be a nonempty finite set, and let $\succeq$ be a preorder on $S$. Then there exists an element $s^*$ of $S$ such that $s^*$ is $\succeq$-maximal.
\end{proposition}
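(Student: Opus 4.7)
The plan is to reduce the statement to two elementary properties of the strict part $\succ$ of the preorder, namely that $\succ$ is both irreflexive and transitive, and then to argue by contradiction using the finiteness of $S$ to rule out an infinite strictly ascending chain.

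First I would verify the two needed properties of $\succ$. Irreflexivity follows immediately from the definition: if $s \succ s$ then we would simultaneously have $s \succeq s$ and $\neg s \succeq s$, a contradiction. Transitivity of $\succ$ requires a short argument: if $a \succ b$ and $b \succ c$, then $a \succeq b$ and $b \succeq c$ yield $a \succeq c$ by transitivity of $\succeq$; to obtain the asymmetric part, suppose towards a contradiction that $c \succeq a$, then $c \succeq a \succeq b$ gives $c \succeq b$, contradicting the assumption $b \succ c$. Hence $\neg c \succeq a$, so $a \succ c$.

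Next I would argue by contradiction. Suppose no $\succeq$-maximal element exists; then for every $s \in S$ there is some $s' \in S$ with $s' \succ s$. Pick any $s_0 \in S$ and, using the assumption inductively, construct a sequence $s_0, s_1, s_2, \ldots$ in $S$ with $s_{k+1} \succ s_k$ for every $k \geq 0$. By the transitivity of $\succ$ established above, $s_j \succ s_i$ whenever $j > i$, and by irreflexivity this forces all the $s_i$ to be pairwise distinct. This produces an infinite injection $\mathbb{N} \to S$, contradicting the hypothesis that $S$ is finite. Therefore a $\succeq$-maximal element must exist.

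I do not foresee a real obstacle: the only subtlety is remembering that the conclusion is about $\succ$-undominatedness rather than $\succeq$-dominance in either direction, so the reflexivity of $\succeq$ plays no role beyond its use in establishing transitivity of $\succ$, and completeness is (correctly) not assumed. If a more constructive style were preferred, the same idea can be repackaged as induction on $|S|$: let $s^*$ be maximal in $S \setminus \{s\}$ by the inductive hypothesis, and show that either $s^*$ remains maximal in $S$ (when $\neg s \succ s^*$) or else $s$ itself is maximal (otherwise, any $s' \succ s$ would satisfy $s' \succ s^*$ by transitivity of $\succ$, with $s' \neq s^*$ guaranteed by irreflexivity, contradicting maximality of $s^*$ in $S \setminus \{s\}$).
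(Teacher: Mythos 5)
Your proof is correct and complete. Note that the paper does not actually supply its own argument for this proposition; it simply cites Efe Ok's online manuscript \emph{Elements of Order Theory} in a footnote, so there is no internal proof to compare against. Both routes you outline are standard: deduce that the strict part $\succ$ is irreflexive and transitive directly from the definition and from transitivity of $\succeq$, and then use finiteness either to rule out an infinite strictly ascending chain or to run an induction on $|S|$. One tiny observation on your closing remark: reflexivity of $\succeq$ is in fact not used anywhere in the argument --- irreflexivity of $\succ$ is purely definitional (the conjunction $s \succeq s \wedge \neg(s \succeq s)$ is a logical contradiction regardless of whether $s \succeq s$ holds), and the transitivity step for $\succ$ invokes only transitivity of $\succeq$. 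So the statement holds for any transitive relation on a finite nonempty set, which is slightly more general than the preorder hypothesis; this does not affect the correctness of your proof, only its stated scope.
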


\subsection{A general definition of equilibrium in continuous evolution spaces}

\subsubsection{The general formalism}

There is a general definition of equilibrium in continuous domains that captures equilibrium notions in various disciplines such as in physics, in optimization theory, in game theory, and in economic theory. This definition is captured by the variational inequality problem \citep{VI}. 

\begin{definition}
\label{var_in}
Given a nonempty, closed, and convex set $\mathbb{X} \subseteq \mathbb{R}^n$ and a continuous vector field $F: \mathbb{R}^n \rightarrow \mathbb{R}^n$, the finite-dimensional {\em variational inequality problem} is to find an element $X^* \in \mathbb{X}$ such that
\begin{align}
(X^* - X) \cdot F(X^*) \geq 0, \forall X \in \mathbb{X}.\notag
\end{align}
We call such an $X^*$ a {\em critical element or point} of $F$. 
\end{definition}

In optimization, the vector field $F$ corresponds to the gradient of a continuously differentiable objective function $f : \mathbb{X} \rightarrow \mathbb{R}$, and the critical points of $f$ are known to coincide with the solutions of the aforementioned variational inequality \citep{ConvexAnalysis}. In game theory, $F$ is derived out the players' payoff matrices. For example, in a symmetric bimatrix game whose payoff matrix is $C$, symmetric Nash equilibrium strategies coincide with the solutions of $(X^* - X) \cdot CX^* \geq 0$.

\subsubsection{Monotone equilibrium problems}

Much as the Nash equilibrium admits (game-theoretic) refinements, the general notion of equilibrium can also be refined within the variational inequalities framework. The finest of those refinements corresponds to the notion of equilibria in {\em monotone vector fields:} A vector field $F: \mathbb{X} \subseteq \mathbb{R}^n \rightarrow \mathbb{R}^n$ where $\mathbb{X}$ is a nonempty, closed, and convex set is called monotone if
\begin{align*}
(X - Y) \cdot (F(Y) - F(X)) \geq 0, \forall X, Y \in \mathbb{X}.
\end{align*}
Monotone equilibrium problems are not a variational inequality formalism per se, but they are ubiquitous as a modeling tool in science and engineering (as they are deemed to be boundary between between what is algorithmically feasible and what is not). Perhaps the most prominent example of a monotone equilibrium problem is {\em convex optimization} \citep{ConvexOptimization}.

\subsubsection{The Minty variational inequality}

One attempt to generalize monotonicity is to seek solutions of the {\em Minty variational inequality}.

\begin{definition}
\label{m_var_in}
Given a nonempty, closed, and convex set $\mathbb{X} \subseteq \mathbb{R}^n$ and a continuous vector field $F: \mathbb{R}^n \rightarrow \mathbb{R}^n$, the finite-dimensional {\em Minty variational inequality problem} is to find an element $X^* \in \mathbb{X}$ such that
\begin{align}
(X^* - X) \cdot F(X) \geq 0, \forall X \in \mathbb{X}.\notag
\end{align}
We call such an $X^*$ a {\em Minty equilibrium} of $F$. 
\end{definition}

It can be shown that a vector field is monotone if and only if its critical elements coincide with its Minty equilibria \citep{Crespi2}. But, in general, Minty equilibria do not coincide with critical elements, which gives rise to a rather broad spectrum of equilibrium problems. In the setting of population games, a solution to the Minty variational inequality is a GNSS (cf. Definiton \ref{evolution_definitions}). Unlike a GESS, the existence of a GNSS does not eliminate the possibility of other equilibria.

\subsection{Characterizations of variational equilibrium}

We call the formalism we use to characterize equilibrium notions in continuous and discrete domains a {\em polyorder}. A {\em continuous polyorder} consists of an {\em evolution space} that is a continuum (typically a convex set), a (typically continuous) {\em vector field} whose domain is this evolution space, and a {\em pairwise ordering} of the elements of the evolution space (that induces a binary (preference) relation between elements) such that the (incentive) structure of the vector field is what determines whether an element is preferred over another. There is a variety of ways to define meaningful polyorders. The polyorders defined in this paper bear a direct (mathematically precise) analogue to the colloquial notion of {\em incremental deployability}. We consider two polyorders in continuous spaces, namely, the {\em strict linear polyorder} and the {\em drifting polyorder}. We consider the strict linear polyorder first:

\subsubsection{Strict linear polyorders}

\begin{definition}
\label{alsdkjfhdskjfho}
Let $F: \mathbb{X} \subseteq \mathbb{R}^n \rightarrow \mathbb{R}^n$ be a continuous vector field where $\mathbb{X}$ is nonempty, closed, and convex. Let $X, Y \in \mathbb{X}$ and let $\mathcal{Y} : [0,1] \rightarrow \mathbb{X}$ be such that $\mathcal{Y}_\epsilon = \epsilon Y + (1-\epsilon) X$. Furthermore, let $\succeq_S$ be a binary relation on $\mathbb{X}$ such that 
\begin{align*}
X \succeq_S Y \Leftrightarrow \forall \epsilon \in [0,1]: X \cdot F(\mathcal{Y}_\epsilon) > Y \cdot F(\mathcal{Y}_\epsilon)
\end{align*} 
and let $\trianglerighteq_S$ be a binary relation on $\mathbb{X}$ such that
\begin{align*}
X \trianglerighteq_S Y \Leftrightarrow \exists \epsilon \in [0,1]: X \cdot F(\mathcal{Y}_\epsilon) \geq Y \cdot F(\mathcal{Y}_\epsilon).
\end{align*} 
We call the structures that $\succeq_S$ and $\trianglerighteq_S$ induce on $\mathbb{X}$ {\em strict linear polyorders} on $F$.
\end{definition}

Let us discuss the definition and to that end let us assume for simplicity that the evolution space is the standard (probability) simplex. We may then assume that the interaction is shaped by a single population of agents. Every element of the evolution space is both a strategy and a social state. Consider an agent using strategy $X$ (call him the $X$-agent) and another agent using strategy $Y$ (call her the $Y$-agent). Let $\mathcal{Y}_{\epsilon}$ be the social state. Then $X \cdot F(\mathcal{Y}_\epsilon)$ is the payoff the $X$-agent earns under that social state and $Y \cdot F(\mathcal{Y}_\epsilon)$ is that of the $Y$-agent. As parameter $\epsilon$ runs from $0$ to $1$, the social state traces the linear segment between social states $X$ and $Y$. That $X \succeq_S Y$ can then assume the interpretation that $X$ is {\em strictly incrementally deployable} against $Y$ in that as the social state traces the linear segment between $X$ and $Y$, the $X$-agent obtains a (strictly) higher payoff than the $Y$-agent. That $X \trianglerighteq_S Y$ similarly assumes the interpretation that $X$ is {\em not} strictly incrementally deployable against $Y$. It turns out that the maximal elements under both polyorders coincide with the equilibria of the corresponding variational inequality as we prove next.

\subsubsection{Maximal elements characterize variational equilibria}

Recall that a maximal element of a binary relation $\succeq$ is an element $X^*$ such that $\forall X : \neg(X \succ X^*)$. We have the following characterization of $\succeq_S$-maximal elements.

\begin{lemma}
\label{alskjdfjsdkfjdkjf}
Let $X^* \in \mathbb{X}$ and, for any $X \in \mathbb{X}$, let $X_\epsilon = \epsilon X + (1-\epsilon) X^*$. $X^*$ is $\succeq_S$-maximal if and only if 
\begin{align*}
\forall X \in \mathbb{X} (\forall \epsilon \in [0,1]: X^* \cdot F(X_\epsilon) > X \cdot F(X_\epsilon)) \vee (\exists \epsilon \in [0,1]: X^* \cdot F(X_\epsilon) \geq X \cdot F(X_\epsilon)).
\end{align*}
\end{lemma}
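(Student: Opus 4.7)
I would prove this by directly unfolding the definition of $\succeq_S$-maximality together with Definition \ref{alsdkjfhdskjfho}. By the definition of the asymmetric part of a binary relation (given in the order-theory preliminaries), $X^*$ is $\succeq_S$-maximal iff for every $X \in \mathbb{X}$ we have $\neg(X \succ_S X^*)$, which unpacks to the disjunction $X^* \succeq_S X \;\vee\; \neg(X \succeq_S X^*)$. My task is then to show that these two disjuncts correspond respectively to the two alternatives in the displayed statement.

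The first disjunct is the straightforward one. Applying Definition \ref{alsdkjfhdskjfho} to $X^* \succeq_S X$ (with $X^*$ in the role of ``$X$'' and $X$ in the role of ``$Y$''), the parametrization $\mathcal{Y}_\epsilon = \epsilon Y + (1-\epsilon) X$ from the definition coincides exactly with the $X_\epsilon = \epsilon X + (1-\epsilon) X^*$ fixed in the lemma. So $X^* \succeq_S X$ reads literally $\forall \epsilon \in [0,1]: X^* \cdot F(X_\epsilon) > X \cdot F(X_\epsilon)$, matching the first clause verbatim.

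For the second disjunct a reparametrization is needed. Applying Definition \ref{alsdkjfhdskjfho} in the reverse direction, $X \succeq_S X^*$ reads $\forall \epsilon \in [0,1]: X \cdot F(\tilde X_\epsilon) > X^* \cdot F(\tilde X_\epsilon)$, where $\tilde X_\epsilon := \epsilon X^* + (1-\epsilon) X$. Noting that $\tilde X_\epsilon = X_{1-\epsilon}$ and that $\epsilon \mapsto 1-\epsilon$ is a bijection of $[0,1]$, negating this universal statement and reindexing yields precisely $\exists \epsilon \in [0,1]: X^* \cdot F(X_\epsilon) \geq X \cdot F(X_\epsilon)$, i.e., the second clause.

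Putting the pieces together, the disjunction in the lemma is equivalent, for each $X$, to $X^* \succeq_S X \vee \neg(X \succeq_S X^*)$, i.e., to $\neg(X \succ_S X^*)$; quantifying over $X \in \mathbb{X}$ then yields the desired equivalence with $\succeq_S$-maximality. The only point that requires care is the change of variables $\epsilon \leftrightarrow 1-\epsilon$ relating the two parametrizations of the segment between $X$ and $X^*$, which is routine bookkeeping, so I do not anticipate any genuine obstacle.
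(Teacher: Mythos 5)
Your proof is correct and follows essentially the same route as the paper: unfold $\succeq_S$-maximality to $\forall X:\;\neg(X\succ_S X^*)$, rewrite that as $(X^*\succeq_S X)\vee\neg(X\succeq_S X^*)$, and then unfold each disjunct via Definition \ref{alsdkjfhdskjfho}. The one place you are more careful than the paper is the reparametrization: the definition parametrizes the segment for $X\succeq_S X^*$ as $\epsilon X^* + (1-\epsilon)X = X_{1-\epsilon}$, whereas the paper silently writes the second disjunct in terms of $X_\epsilon$ without noting the $\epsilon\leftrightarrow 1-\epsilon$ change of variables; you make that step explicit, which is a small but genuine improvement in rigor over the paper's presentation, not a different argument.
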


\begin{proof}
We take $\succeq$ to mean $\succeq_S$ for brevity. The following statements are all equivalent:
\begin{align*}
&X^* \mbox{ is maximal }\\
&\forall X: \neg(X \succ X^*)\\
&\forall X: \neg((X \succeq X^*) \wedge \neg(X^* \succeq X))\\
&\forall X: (X^* \succeq X) \vee \neg(X \succeq X^*)\\
&\forall X: (\forall \epsilon \in [0,1]: X^* \cdot F(X_\epsilon) > X \cdot F(X_\epsilon)) \vee \neg(\forall \epsilon \in [0,1]: X \cdot F(X_\epsilon) > X^* \cdot F(X_\epsilon))\\
&\forall X: (\forall \epsilon \in [0,1]: X^* \cdot F(X_\epsilon) > X \cdot F(X_\epsilon)) \vee (\exists \epsilon \in [0,1]: X \cdot F(X_\epsilon) \leq X^* \cdot F(X_\epsilon))\\
&\forall X: (\forall \epsilon \in [0,1]: X^* \cdot F(X_\epsilon) > X \cdot F(X_\epsilon)) \vee (\exists \epsilon \in [0,1]: X^* \cdot F(X_\epsilon) \geq X \cdot F(X_\epsilon)),
\end{align*}
as desired.
\end{proof}

\begin{lemma}
\label{alskjdfjsdkfjdkjfffff}
$X^* \in \mathbb{X}$ is $\succeq_S$-maximal if and only if it is $\trianglerighteq_S$-maximal.
\end{lemma}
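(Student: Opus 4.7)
My plan is to reduce both maximality notions to the same underlying statement by computing the asymmetric parts of the two relations and observing that they coincide.

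The first step is to note that the segment $\{\mathcal{Y}_\epsilon : \epsilon \in [0,1]\}$ from Definition \ref{alsdkjfhdskjfho} is symmetric in $X$ and $Y$: if we write $Z_\epsilon = \epsilon Y + (1-\epsilon) X$ and $W_\epsilon = \epsilon X + (1-\epsilon)Y$, then $W_\epsilon = Z_{1-\epsilon}$, so the two parametrizations trace the same set of points. Consequently, $Y \succeq_S X$ is logically equivalent to $\forall \epsilon \in [0,1]: Y \cdot F(Z_\epsilon) > X \cdot F(Z_\epsilon)$, and similarly $Y \trianglerighteq_S X$ is equivalent to $\exists \epsilon \in [0,1]: Y \cdot F(Z_\epsilon) \geq X \cdot F(Z_\epsilon)$. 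This reparametrization is the small but essential technical point that makes the remainder of the argument mechanical.

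Using this, I would next establish three identities among the relations: (i) $X \succeq_S Y$ implies $\neg(Y \succeq_S X)$, because the strict inequality $X \cdot F(Z_\epsilon) > Y \cdot F(Z_\epsilon)$ for every $\epsilon$ directly blocks the reversed inequality at every $\epsilon$; hence the asymmetric part $\succ_S$ of $\succeq_S$ coincides with $\succeq_S$ itself; (ii) $\neg(Y \trianglerighteq_S X)$ is, by taking the negation and using the reparametrization, exactly $X \succeq_S Y$, which in turn implies $X \trianglerighteq_S Y$; hence the asymmetric part $\triangleright_S$ of $\trianglerighteq_S$ equals $\neg(Y \trianglerighteq_S X)$, which equals $X \succeq_S Y$; (iii) combining (i) and (ii) gives $\succ_S \,=\, \triangleright_S \,=\, \succeq_S$.

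With these identities in hand, the conclusion is immediate: $X^*$ is $\succeq_S$-maximal iff $\forall X: \neg(X \succ_S X^*)$ iff $\forall X: \neg(X \succeq_S X^*)$ iff $\forall X: \neg(X \triangleright_S X^*)$ iff $X^*$ is $\trianglerighteq_S$-maximal. The only nontrivial step is the reparametrization argument tying $\neg(Y \trianglerighteq_S X)$ to $X \succeq_S Y$, and I do not anticipate any real obstacle beyond carefully tracking the quantifier swap under negation. An optional remark is that this also yields the simplification that both forms of maximality are equivalent to the succinct condition $\forall X \in \mathbb{X}: X^* \trianglerighteq_S X$, which slightly sharpens the characterization in Lemma \ref{alskjdfjsdkfjdkjf}.
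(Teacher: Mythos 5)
Your proof is correct, and it takes a genuinely different route from the paper's. The paper's proof of Lemma \ref{alskjdfjsdkfjdkjfffff} directly unfolds the definition of $\trianglerighteq_S$-maximality through a chain of logical equivalences (negating the definition of the asymmetric part, applying De Morgan, and swapping quantifiers under negation) and then matches the resulting formula term-by-term against the characterization of $\succeq_S$-maximality already established in Lemma \ref{alskjdfjsdkfjdkjf}. You instead prove the structural fact that the asymmetric parts of the two relations \emph{coincide}---specifically, that $\succeq_S$ is itself asymmetric so $\succ_S = \succeq_S$, and that $\neg(Y \trianglerighteq_S X)$ is logically equivalent (after reparametrizing the segment) to $X \succeq_S Y$, which in turn implies $X \trianglerighteq_S Y$, forcing $\triangleright_S = \succeq_S$ as well---after which the equivalence of the two maximality notions is immediate since maximality depends only on the asymmetric part. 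Your approach buys two things: it makes explicit the reparametrization $\mathcal{Y}_\epsilon \leftrightarrow \mathcal{Y}_{1-\epsilon}$ that the paper uses tacitly when passing between $X \succeq X^*$ and its unfolding in terms of $X_\epsilon$, and it exposes \emph{why} the lemma holds (the two polyorders share a strict part) rather than merely verifying it. Your closing observation that both maximality notions collapse to the single condition $\forall X \in \mathbb{X}: X^* \trianglerighteq_S X$ is a further genuine simplification of Lemma \ref{alskjdfjsdkfjdkjf}, and it makes the ``easy'' direction of Theorem \ref{woeriutyeriutytaert} (every critical element is maximal, by taking $\epsilon = 0$) completely transparent.
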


\begin{proof}
We take $\succeq$ to mean $\trianglerighteq_S$ for brevity. The following statements are all equivalent:
\begin{align*}
&X^* \mbox{ is maximal }\\
&\forall X: \neg(X \succ X^*)\\
&\forall X: \neg((X \succeq X^*) \wedge \neg(X^* \succeq X))\\
&\forall X: (X^* \succeq X) \vee \neg(X \succeq X^*)\\
&\forall X: (\exists \epsilon \in [0,1]: X^* \cdot F(X_\epsilon) \geq X \cdot F(X_\epsilon)) \vee \neg(\exists \epsilon \in [0,1]: X \cdot F(X_\epsilon) \geq X^* \cdot F(X_\epsilon))\\
&\forall X: (\exists \epsilon \in [0,1]: X^* \cdot F(X_\epsilon) \geq X \cdot F(X_\epsilon)) \vee (\forall \epsilon \in [0,1]: X \cdot F(X_\epsilon) < X^* \cdot F(X_\epsilon))\\
&\forall X: (\exists \epsilon \in [0,1]: X^* \cdot F(X_\epsilon) \geq X \cdot F(X_\epsilon)) \vee (\forall \epsilon \in [0,1]: X^* \cdot F(X_\epsilon) > X \cdot F(X_\epsilon)).
\end{align*}
The lemma then follows by Lemma~\ref{alskjdfjsdkfjdkjf}.
\end{proof}

\begin{theorem}
\label{woeriutyeriutytaert}
$X^* \in \mathbb{X}$ is $\succeq_S$-maximal if and only if it is a critical element.
\end{theorem}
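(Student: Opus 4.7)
The plan is to unpack the maximality condition from Lemma~\ref{alskjdfjsdkfjdkjf} into something that directly connects with the variational inequality in Definition~\ref{var_in}. First I would observe that, in the disjunction of that lemma, the left disjunct (strict inequality for all $\epsilon$) implies the right disjunct (weak inequality for some $\epsilon$), since one can take any particular $\epsilon$. Hence $X^*$ is $\succeq_S$-maximal if and only if
\[
\forall X \in \mathbb{X}\ \exists \epsilon \in [0,1] : (X^*-X)\cdot F(X_\epsilon) \geq 0,
\]
where $X_\epsilon=\epsilon X+(1-\epsilon)X^*$. This is the working form I would use for both directions.

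The forward direction (critical $\Rightarrow$ maximal) is essentially free: if $(X^*-X)\cdot F(X^*)\geq 0$ for all $X$, then taking $\epsilon=0$ in the reformulation above (so $X_0=X^*$) produces the required witness for every $X$.

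The reverse direction is where the actual work lies. I would argue by contrapositive: suppose $X^*$ is not critical, so there exists $X\in\mathbb{X}$ with $(X^*-X)\cdot F(X^*)<0$, equivalently $(X-X^*)\cdot F(X^*)>0$. For each small $\delta\in(0,1]$, let $Y_\delta=\delta X+(1-\delta)X^*$, which is in $\mathbb{X}$ by convexity. If $X^*$ were maximal, then applied to $Y_\delta$ the reformulation would supply some $\epsilon_\delta\in[0,1]$ with $(X^*-Y_\delta)\cdot F((Y_\delta)_{\epsilon_\delta})\geq 0$. The key algebraic step is to notice that $(Y_\delta)_{\epsilon_\delta}=\epsilon_\delta\delta X+(1-\epsilon_\delta\delta)X^*=X_{\alpha_\delta}$ with $\alpha_\delta=\epsilon_\delta\delta\in[0,\delta]$, while $X^*-Y_\delta=-\delta(X-X^*)$. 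So maximality at $Y_\delta$ forces
\[
(X-X^*)\cdot F(X_{\alpha_\delta})\leq 0
\]
for some $\alpha_\delta\in[0,\delta]$. Letting $\delta\to 0^+$ gives $\alpha_\delta\to 0$, hence $X_{\alpha_\delta}\to X^*$, and by continuity of $F$ I get $(X-X^*)\cdot F(X^*)\leq 0$, contradicting the choice of $X$.

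The main obstacle, and the only nonroutine part of the argument, is precisely the reverse direction: without the trick of perturbing along the segment (using $Y_\delta$ rather than $X$ itself and exploiting the convex combination identity $(Y_\delta)_{\epsilon_\delta}=X_{\epsilon_\delta\delta}$), the maximality hypothesis only delivers a witness $\epsilon_X$ somewhere in $[0,1]$, which need not localize near $0$. The perturb-and-pass-to-the-limit step is what converts the quantifier structure ``$\exists \epsilon$'' (possibly far from $X^*$) into the pointwise variational inequality at $X^*$, and continuity of $F$ is the only analytic input required.
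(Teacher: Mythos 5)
Your proof is correct, and the forward direction (take $\epsilon=0$) is identical to the paper's. The reverse direction, however, takes a genuinely different route. The paper argues directly: from $(X-X^*)\cdot F(X^*)>0$, continuity of $F$ lets one pick a single convex combination $Y$ of $X^*$ and $X$ close enough to $X^*$ that $(Y-X^*)\cdot F$ is strictly positive on the entire segment between $X^*$ and $Y$, i.e.\ $Y \succ_S X^*$, contradicting maximality in one step. You instead run the contrapositive: for each small $\delta$ you extract a witness $\epsilon_\delta$ from the assumed maximality applied at the perturbed point $Y_\delta$, use the reparametrization identity $(Y_\delta)_{\epsilon_\delta}=X_{\epsilon_\delta\delta}$ (with $\epsilon_\delta\delta\in[0,\delta]$) to force the evaluation point into a shrinking neighbourhood of $X^*$, and then pass to the limit $\delta\to 0^+$. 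The reparametrization is an elegant way to tame the ``$\exists\epsilon$'' quantifier, which you rightly identify as the crux, but it trades the paper's single-witness construction for a sequence of witnesses and a limit. Both arguments use the same analytic input (continuity of $F$), and neither is more general than the other; the paper's is shorter because it exhibits one dominating $Y$ rather than arguing at a continuum of perturbed points.
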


\begin{proof}
That every critical element is $\succeq_S$-maximal is a straightforward implication of Lemma~\ref{alskjdfjsdkfjdkjfffff} (take $\epsilon = 0$ in the definition of $\trianglerighteq_S$-maximality). Suppose now $X^* \in \mathbb{X}$ is $\succeq_S$-maximal and suppose further, for the sake of contradiction, that there exists $X \in \mathbb{X}$ such that $X^* \cdot F(X^*) < X \cdot F(X^*)$. Then, by continuity of $(X^* - X) \cdot F(X_\epsilon)$, where $X_\epsilon = \epsilon X + (1-\epsilon) X^*$, there exists a convex combination of $X^*$ and $X$, say $Y$, such that $Y \succeq_S X^*$, which implies that $\neg (X^* \succeq_S Y)$, and, therefore, that $T \succ_S X^*$, which contradicts the assumption that $X^*$ is $\succeq_S$-maximal. 
\end{proof}

\subsection{Drifting equilibria and neutral stability}

In the rest of this section, we use our formalism to define a new solution concept in continuous vector fields called a {\em drifting equilibrium,} which we show to be an ``intermediate'' concept between critical points and Minty solutions in that drifting equilibria are necessarily critical points and Minty solutions are necessarily drifting equilibria. But we further show that drifting equilibria are an intermediate concept in a strong sense. Let us give the precise definition first.

\begin{definition}
\label{alsdkjfhdskjfhooo}
Let $F: \mathbb{X} \subseteq \mathbb{R}^n \rightarrow \mathbb{R}^n$ be a continuous vector field where $\mathbb{X}$ is nonempty, closed, and convex, $X, Y \in \mathbb{X}$, and $\mathcal{Y} : [0,1] \rightarrow \mathbb{X}$ such that $\mathcal{Y}_\epsilon = \epsilon Y + (1-\epsilon) X$. Furthermore, let $\succeq$ be a binary relation on $X$ such that 
\begin{align*}
X \succeq Y \Leftrightarrow \forall \epsilon \in [0,1]: X \cdot F(\mathcal{Y}_\epsilon) \geq Y \cdot F(\mathcal{Y}_\epsilon)
\end{align*} 
and $\trianglerighteq$ be a binary relation on $\mathbb{X}$ such that
\begin{align*}
X \trianglerighteq Y \Leftrightarrow \exists \epsilon \in [0,1]: X \cdot F(\mathcal{Y}_\epsilon) > Y \cdot F(\mathcal{Y}_\epsilon).
\end{align*} 
We call the structures that $\succeq$ and $\trianglerighteq$ induce on $(F, X)$ {\em drifting linear polyorders}.
\end{definition}

Referring to our previous discussion after the definition of strict polyorders, we note that drifting polyorders capture the notion of {\em weak incremental deployability,} which gives rise (through maximality) to a stronger notion of equilibrium (in both continuous and discrete evolution spaces).

\begin{lemma}
\label{alskjdfjsdkfjdkjff}
Let $X^* \in \mathbb{X}$ and, for any $X \in \mathbb{X}$, let $X_\epsilon = \epsilon X + (1-\epsilon) X^*$. $X^*$ is $\succeq$-maximal if and only if 
\begin{align*}
\forall X \in \mathbb{X} : (\forall \epsilon \in [0,1]: X^* \cdot F(X_\epsilon) \geq X \cdot F(X_\epsilon)) \vee (\exists \epsilon \in [0,1]: X^* \cdot F(X_\epsilon) > X \cdot F(X_\epsilon)).
\end{align*}
\end{lemma}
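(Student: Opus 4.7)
The statement is the drifting analogue of Lemma~\ref{alskjdfjsdkfjdkjf}, so the plan is to mimic the template the author has already used twice: reduce the characterization to a pure unfolding of definitions and De~Morgan manipulations, with no analytical input required. Throughout I will abbreviate $\succeq$ for the drifting relation of Definition~\ref{alsdkjfhdskjfhooo} and $\succ$ for its asymmetric part $X \succ Y \Leftrightarrow (X \succeq Y) \wedge \neg(Y \succeq X)$.

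First I would write down the defining chain of equivalences starting from $\succeq$-maximality. By definition $X^*$ is maximal iff $\forall X : \neg(X \succ X^*)$, which by the definition of the asymmetric part becomes $\forall X : \neg((X \succeq X^*) \wedge \neg(X^* \succeq X))$, and this in turn is equivalent (by De~Morgan) to
\begin{align*}
\forall X \in \mathbb{X} : (X^* \succeq X) \vee \neg(X \succeq X^*).
\end{align*}
At this point the proof reduces to substituting the drifting relation into each disjunct. For the first disjunct, the definition directly gives $X^* \succeq X \Leftrightarrow \forall \epsilon \in [0,1] : X^* \cdot F(X_\epsilon) \geq X \cdot F(X_\epsilon)$, where I use that the parametrization $\epsilon \mapsto \epsilon X + (1-\epsilon) X^*$ traces the same segment as $\epsilon \mapsto \epsilon X^* + (1-\epsilon) X$, so the universal quantifier over $\epsilon$ is unchanged. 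For the second disjunct, $X \succeq X^* \Leftrightarrow \forall \epsilon : X \cdot F(X_\epsilon) \geq X^* \cdot F(X_\epsilon)$, and negating the universal quantifier yields $\exists \epsilon : X \cdot F(X_\epsilon) < X^* \cdot F(X_\epsilon)$, which is the desired $\exists \epsilon : X^* \cdot F(X_\epsilon) > X \cdot F(X_\epsilon)$.

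Stringing these substitutions together yields the target equivalent form word for word, so the proof is effectively a three-line chain of biconditionals identical in structure to the one displayed inside the proof of Lemma~\ref{alskjdfjsdkfjdkjf}. I do not anticipate any real obstacle: the only subtlety is the silent identification of the two parametrizations $\epsilon X + (1-\epsilon)X^*$ and $\epsilon X^* + (1-\epsilon)X$ of the segment between $X$ and $X^*$, which is harmless because both ``$\forall \epsilon \in [0,1]$'' and ``$\exists \epsilon \in [0,1]$'' are invariant under the reparametrization $\epsilon \mapsto 1-\epsilon$. No continuity, convexity, or existence argument is needed (unlike in Theorem~\ref{woeriutyeriutytaert}), because the claim is a purely logical reformulation of maximality, not an identification with critical elements.
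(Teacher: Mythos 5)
Your proposal is correct and matches the paper's proof exactly: the paper's argument is the same chain of biconditionals obtained by unfolding $\succ$ into $\succeq \wedge \neg\succeq$, applying De~Morgan, and negating the universal quantifier on $\epsilon$. Your explicit note about the reparametrization $\epsilon \mapsto 1-\epsilon$ being harmless is a small improvement in rigor, since the paper silently uses the same symbol $X_\epsilon$ for both orientations of the segment when translating $X \succeq X^*$ versus $X^* \succeq X$.
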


\begin{proof}
The following statements are all equivalent:
\begin{align*}
&X^* \mbox{ is maximal }\\
&\forall X: \neg(X \succ X^*)\\
&\forall X: \neg((X \succeq X^*) \wedge \neg(X^* \succeq X))\\
&\forall X: (X^* \succeq X) \vee \neg(X \succeq X^*)\\
&\forall X: (\forall \epsilon \in [0,1]: X^* \cdot F(X_\epsilon) \geq X \cdot F(X_\epsilon)) \vee \neg(\forall \epsilon \in [0,1]: X \cdot F(X_\epsilon) \geq X^* \cdot F(X_\epsilon))\\
&\forall X: (\forall \epsilon \in [0,1]: X^* \cdot F(X_\epsilon) \geq X \cdot F(X_\epsilon)) \vee (\exists \epsilon \in [0,1]: X \cdot F(X_\epsilon) < X^* \cdot F(X_\epsilon))\\
&\forall X: (\forall \epsilon \in [0,1]: X^* \cdot F(X_\epsilon) \geq X \cdot F(X_\epsilon)) \vee (\exists \epsilon \in [0,1]: X^* \cdot F(X_\epsilon) > X \cdot F(X_\epsilon)),
\end{align*}
as desired.
\end{proof}

\begin{lemma}
\label{alskjdfjsdkfjdkjffff}
$X^* \in \mathbb{X}$ is $\succeq$-maximal if and only if it is $\trianglerighteq$-maximal.
\end{lemma}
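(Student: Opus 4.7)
The plan is to mirror exactly the chain of logical equivalences used in Lemma~\ref{alskjdfjsdkfjdkjfffff}, which handled the strict polyorder case; the drifting polyorder case is structurally identical, only with the strict and weak inequalities swapping roles between $\succeq$ and $\trianglerighteq$. So I would begin by unfolding the definition of $\trianglerighteq$-maximality. Writing $X^*$ is $\trianglerighteq$-maximal as $\forall X : \neg(X \triangleright X^*)$ and expanding the strict asymmetric part, this unpacks to $\forall X : (X^* \trianglerighteq X) \vee \neg(X \trianglerighteq X^*)$.

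The next step is to substitute Definition~\ref{alsdkjfhdskjfhooo} for $\trianglerighteq$ in both clauses: the first becomes $\exists \epsilon \in [0,1] : X^* \cdot F(X_\epsilon) > X \cdot F(X_\epsilon)$, and negating the second (which involves an $\exists$ with strict inequality) yields $\forall \epsilon \in [0,1] : X \cdot F(X_\epsilon) \leq X^* \cdot F(X_\epsilon)$, i.e., $\forall \epsilon \in [0,1] : X^* \cdot F(X_\epsilon) \geq X \cdot F(X_\epsilon)$. So $\trianglerighteq$-maximality of $X^*$ is equivalent to
\begin{align*}
\forall X \in \mathbb{X} : (\exists \epsilon \in [0,1]: X^* \cdot F(X_\epsilon) > X \cdot F(X_\epsilon)) \vee (\forall \epsilon \in [0,1]: X^* \cdot F(X_\epsilon) \geq X \cdot F(X_\epsilon)).
\end{align*}
This is precisely the condition characterizing $\succeq$-maximality established in Lemma~\ref{alskjdfjsdkfjdkjff} (the disjuncts are merely reordered, and $\vee$ is commutative), so the equivalence follows immediately by invoking that lemma.

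There is no real obstacle here: everything is propositional calculus together with textbook negation of quantifier-inequality statements, with no appeal to continuity, convexity, or properties of $F$. The only point deserving a touch of care is the negation $\neg(\exists \epsilon : X \cdot F(X_\epsilon) > X^* \cdot F(X_\epsilon)) \equiv \forall \epsilon : X^* \cdot F(X_\epsilon) \geq X \cdot F(X_\epsilon)$, which I would spell out in one display line to make the symmetry with Lemma~\ref{alskjdfjsdkfjdkjfffff} transparent. I would then close by citing Lemma~\ref{alskjdfjsdkfjdkjff} to conclude the two maximality notions coincide.
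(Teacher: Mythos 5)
Your proof is correct and follows essentially the same approach as the paper: unfold the definition of $\trianglerighteq$-maximality via propositional and quantifier manipulation, observe that the resulting condition matches the characterization of $\succeq$-maximality from Lemma~\ref{alskjdfjsdkfjdkjff}, and conclude. The paper writes out each step of the equivalence chain in a vertical display while you compress the routine steps into prose, but the logical content is identical.
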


\begin{proof}
We take $\succeq$ to mean $\trianglerighteq$ for brevity. The following statements are all equivalent:
\begin{align*}
&X^* \mbox{ is maximal }\\
&\forall X: \neg(X \succ X^*)\\
&\forall X: \neg((X \succeq X^*) \wedge \neg(X^* \succeq X))\\
&\forall X: (X^* \succeq X) \vee \neg(X \succeq X^*)\\
&\forall X: (\exists \epsilon \in [0,1]: X^* \cdot F(X_\epsilon) > X \cdot F(X_\epsilon)) \vee \neg(\exists \epsilon \in [0,1]: X \cdot F(X_\epsilon) > X^* \cdot F(X_\epsilon))\\
&\forall X: (\exists \epsilon \in [0,1]: X^* \cdot F(X_\epsilon) > X \cdot F(X_\epsilon)) \vee (\forall \epsilon \in [0,1]: X \cdot F(X_\epsilon) \leq X^* \cdot F(X_\epsilon))\\
&\forall X: (\exists \epsilon \in [0,1]: X^* \cdot F(X_\epsilon) > X \cdot F(X_\epsilon)) \vee (\forall \epsilon \in [0,1]: X^* \cdot F(X_\epsilon) \geq X \cdot F(X_\epsilon)).
\end{align*}
The lemma then follows by Lemma~\ref{alskjdfjsdkfjdkjff}.
\end{proof}

The previous lemma shows that the maximal elements of both drifting linear polyorders coincide. Our new solution concept is given in the following definition.

\begin{definition}
$X^* \in \mathbb{X}$ is a {\em drifting equilibrium} of $F: \mathbb{X} \subseteq \mathbb{R}^n \rightarrow \mathbb{R}^n$, where $F$ is continuous and $\mathbb{X}$ is nonempty, closed, and convex, if it is maximal element of a drifting linear polyorder.  
\end{definition}

The following theorems are a preliminary step in the interest of placing drifting equilibria in the broader literature on variational inequalities.

\begin{theorem}
If $X^*$ is a drifting equilibrium of $F$, then it is a critical element of $F$.
\end{theorem}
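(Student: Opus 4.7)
The plan is to argue by contradiction using the drifting-maximality characterization from Lemma~\ref{alskjdfjsdkfjdkjff}. Assume $X^* \in \mathbb{X}$ is $\succeq$-maximal but not a critical element of $F$. Then there exists some $X \in \mathbb{X}$ with $X \cdot F(X^*) > X^* \cdot F(X^*)$, i.e.\ the strict inequality fails at $\epsilon = 0$ along the segment $X_\epsilon = \epsilon X + (1-\epsilon)X^*$. The goal is to produce a point $Y$ on that segment which strictly dominates $X^*$ under $\succeq$, contradicting maximality.

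The key step is a continuity argument in the scalar parameter $\epsilon$. Since $F$ is continuous and the map $\epsilon \mapsto (X - X^*) \cdot F(X_\epsilon)$ is a continuous scalar function that is strictly positive at $\epsilon = 0$, there is some $\epsilon_0 \in (0,1]$ such that $(X - X^*) \cdot F(X_\epsilon) > 0$ for every $\epsilon \in [0, \epsilon_0]$. Equivalently, $X \cdot F(X_\epsilon) > X^* \cdot F(X_\epsilon)$ throughout this initial sub-segment. Now set $Y = X_{\epsilon_0} = \epsilon_0 X + (1-\epsilon_0) X^*$ and parametrize the segment from $X^*$ to $Y$ by $\mathcal{Y}_t = tY + (1-t) X^*$ for $t \in [0,1]$. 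A direct substitution shows $\mathcal{Y}_t = X_{t \epsilon_0}$, so $\mathcal{Y}_t$ lies entirely inside $[0,\epsilon_0]$-range of the original segment.

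The rest is an easy computation: for every $t \in [0,1]$,
\begin{align*}
Y \cdot F(\mathcal{Y}_t) - X^* \cdot F(\mathcal{Y}_t) = \epsilon_0 \, (X - X^*) \cdot F(X_{t \epsilon_0}) > 0,
\end{align*}
so $Y \cdot F(\mathcal{Y}_t) > X^* \cdot F(\mathcal{Y}_t)$ strictly for all $t \in [0,1]$. Reading this through Definition~\ref{alsdkjfhdskjfhooo}, this simultaneously certifies $Y \succeq X^*$ (the weak inequality holds for every $t$) and refutes $X^* \succeq Y$ (the weak inequality in the reverse direction fails at every $t$). Hence $Y \succ X^*$, contradicting the $\succeq$-maximality of $X^*$.

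I do not anticipate a serious obstacle: the argument is essentially the same continuity/convex-combination trick used in the proof of Theorem~\ref{woeriutyeriutytaert} for the strict polyorder, only now we must manufacture \emph{strict} inequality uniformly along a whole sub-segment (rather than at a single point) so that both clauses in the drifting-maximality characterization of Lemma~\ref{alskjdfjsdkfjdkjff} are violated by the same witness $Y$. The mild subtlety is choosing $Y$ as an interior point of the $\epsilon$-segment rather than as $X$ itself, so that the strict-versus-weak asymmetry between $\succeq$ and $\trianglerighteq$ works in our favor; once the correct witness is identified, the contradiction is immediate from the characterization lemma.
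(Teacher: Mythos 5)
Your proof is correct and follows essentially the same route as the paper: both argue by contradiction, use continuity of $\epsilon \mapsto (X-X^*)\cdot F(X_\epsilon)$ to locate a sub-segment near $X^*$ on which the strict inequality persists, and then exhibit a point on that sub-segment that strictly dominates $X^*$. You merely spell out the final ``dominating element'' step more explicitly by naming the witness $Y = X_{\epsilon_0}$ and verifying both clauses of Lemma~\ref{alskjdfjsdkfjdkjff} fail simultaneously, which is a worthwhile clarification of the paper's terse conclusion.
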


\begin{proof}
Let us assume for the sake of contradiction that $X^*$ is not a critical element of $F$. Then there exists $X \in \mathbb{X}$, where $\mathbb{X}$ is the evolution space of $F$, such that
\begin{align*}
(X^* - X) \cdot F(X^*) < 0
\end{align*}
or, rearranging,
\begin{align*}
(X - X^*) \cdot F(X^*) > 0.
\end{align*}
Let $X_{\epsilon} = \epsilon X^* + (1-\epsilon) X$, $\epsilon \in [0, 1]$. The continuity of $(X - X^*) \cdot F(X_{\epsilon})$ implies that there exists $\epsilon' \in [0,1]$ such that, $\forall \epsilon \in [\epsilon', 1]$, $X \cdot F(X_\epsilon) > X^* \cdot F(X_\epsilon)$. That is, there exists an element in the convex hull of $X^*$ and $X$ dominating $X^*$, contradicting that $X^*$ is a drifting equilibrium.
\end{proof}

\begin{theorem}
If $X^*$ is a Minty solution of $F$, then it is a drifting equilibrium of $F$.
\end{theorem}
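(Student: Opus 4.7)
The plan is to unfold both definitions and show that the Minty condition, applied along the segment $\mathcal{Y}_\epsilon = \epsilon X + (1-\epsilon)X^*$ connecting $X^*$ to an arbitrary $X$, is strong enough to directly produce the first disjunct in the maximality characterization of Lemma~\ref{alskjdfjsdkfjdkjff}. No contradiction argument or continuity manipulation should be needed; the work is purely algebraic once we substitute $X_\epsilon$ into the Minty inequality.

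More concretely, I would fix $X \in \mathbb{X}$ and note that, since $\mathbb{X}$ is convex, $X_\epsilon \in \mathbb{X}$ for every $\epsilon \in [0,1]$. The Minty condition instantiated at $X_\epsilon$ reads
\begin{align*}
(X^* - X_\epsilon) \cdot F(X_\epsilon) \geq 0.
\end{align*}
A direct computation gives $X^* - X_\epsilon = \epsilon(X^* - X)$, so this rearranges to
\begin{align*}
\epsilon \, (X^* - X) \cdot F(X_\epsilon) \geq 0.
\end{align*}
For $\epsilon \in (0,1]$ we may divide by $\epsilon$ to obtain $X^* \cdot F(X_\epsilon) \geq X \cdot F(X_\epsilon)$, and for $\epsilon = 0$ the inequality is trivial because $X_0 = X^*$. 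Hence the first disjunct in Lemma~\ref{alskjdfjsdkfjdkjff} holds for every $X \in \mathbb{X}$, so $X^*$ is $\succeq$-maximal and therefore a drifting equilibrium.

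There is essentially no obstacle to overcome here: the only subtlety worth flagging explicitly is that dividing by $\epsilon$ is legitimate only when $\epsilon > 0$, which is why the $\epsilon = 0$ case must be handled separately (and trivially). The conceptual content of the proof is simply that a Minty solution satisfies the incremental deployability inequality \emph{uniformly} along the entire segment to any competitor $X$, which is a strictly stronger property than maximality under the drifting polyorder demands.
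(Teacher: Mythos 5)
Your proof takes the same route as the paper's: instantiate the Minty inequality at $X_\epsilon$, rearrange using $X^* - X_\epsilon = \epsilon(X^* - X)$, and invoke Lemma~\ref{alskjdfjsdkfjdkjff}. The one flaw is the way you dispatch the $\epsilon = 0$ endpoint. At $\epsilon = 0$ the first disjunct requires $X^* \cdot F(X^*) \geq X \cdot F(X^*)$, that is, $(X^* - X) \cdot F(X^*) \geq 0$, where $X$ is the arbitrary competitor, not $X_0$. This is the (Stampacchia) variational inequality at $X^*$ tested against $X$, and it is not a triviality: the fact that $X_0 = X^*$ does not make the two sides of that inequality equal. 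The correct way to close the gap is by continuity --- you have $(X^* - X) \cdot F(X_\epsilon) \geq 0$ for all $\epsilon \in (0,1]$, and since $F$ is continuous (a standing hypothesis in Definition~\ref{alsdkjfhdskjfhooo}) the map $\epsilon \mapsto (X^* - X) \cdot F(X_\epsilon)$ is continuous, so nonnegativity extends to $\epsilon = 0$ in the limit $\epsilon \to 0^+$; equivalently, one may cite the standard fact that a Minty solution is a critical element of $F$. To be fair, the paper's own statement that the two displayed inequalities are ``equivalent \dots\ by straight algebra'' glosses over exactly the same $\epsilon = 0$ point, so your write-up is no less rigorous than the original, but ``trivial because $X_0 = X^*$'' is the wrong justification and should be replaced by the continuity (or Minty-implies-critical) argument.
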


\begin{proof}
Let $X_\epsilon = \epsilon X + (1-\epsilon) X^*$ where $\epsilon \in [0, 1]$ and $X$ is arbitrary. The definition of a Minty solution implies that, for all $X \in \mathbb{X}$, the evolution space of $F$,
\begin{align*}
(X^* - X_\epsilon) \cdot F(X_\epsilon) \geq 0,
\end{align*}
which, by straight algebra, is equivalent to
\begin{align*}
(X^* - X) \cdot F(X_\epsilon) \geq 0.
\end{align*}
The theorem is then implied by Lemma \ref{alskjdfjsdkfjdkjff}.
\end{proof}

The previous theorems imply that drifting equilibria are a refinement of critical points and that Minty equilibria are a refinement of drifting equilibria. A question of natural interest concerns the precise nature of the solution concept we have defined. To that end, we consider ESsets first.

\subsubsection{Evolutionarily stable sets and drifting equilibria}

\begin{theorem}
\label{qpwoeiruwoeiurur}
Every element of an ESset is a drifting equilibrium.
\end{theorem}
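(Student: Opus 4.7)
Plan: My plan is to apply the characterization of drifting equilibria in Lemma~\ref{alskjdfjsdkfjdkjff}. Fix $X^* \in \mathbb{X}^*$ and an arbitrary $X \in \mathbb{X}$ with $X \neq X^*$ (the case $X = X^*$ is trivial), and set $h(\epsilon) = (X^* - X) \cdot C X_\epsilon$ where $X_\epsilon = \epsilon X + (1-\epsilon) X^*$. I must show that either (a) $h(\epsilon) \geq 0$ for all $\epsilon \in [0,1]$, or (b) $h(\epsilon) > 0$ for some $\epsilon \in [0,1]$.

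The first step is to extract a local inequality from the ESset property at $X^*$. There is a neighborhood $O$ of $X^*$ such that for every $Y \in O \setminus \{X^*\}$, $(X^* - Y) \cdot CY \geq 0$, with strict inequality unless $Y \in \mathbb{X}^*$. Substituting $Y = X_\epsilon$ for small $\epsilon > 0$ (so that $X_\epsilon \in O$) and using $(X^* - X_\epsilon) = \epsilon (X^* - X)$, I obtain $h(\epsilon) \geq 0$ on an interval $(0, \epsilon_0]$, strict unless $X_\epsilon \in \mathbb{X}^*$. A dichotomy now arises: if $h(\epsilon) > 0$ for some $\epsilon \in (0, \epsilon_0]$, then (b) holds and we are done; otherwise $h \equiv 0$ on $[0, \epsilon_0]$ and, by the strict-unless clause, $X_\epsilon \in \mathbb{X}^*$ for every $\epsilon \in (0, \epsilon_0]$.

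In the second case I would extend the region where $h \equiv 0$ by a connectedness argument. The same calculation, applied at any $X_{\bar\epsilon} \in \mathbb{X}^*$ using the ESset neighborhood around $X_{\bar\epsilon}$ and the perturbation $Y = X_{\bar\epsilon + \delta}$, yields $h(\bar\epsilon + \delta) \geq 0$ for small $\delta > 0$; combined with the hypothesis ``otherwise'' (i.e., $h$ does not attain a positive value) one concludes $h \equiv 0$ on a slightly larger interval. Running a supremum argument from $\epsilon_0$ forward extends this to all of $[0,1]$, so $h \equiv 0$ there and condition (a) holds.

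The main obstacle is making the extension step rigorous at the supremum $\epsilon^* = \sup\{\bar\epsilon : X_\epsilon \in \mathbb{X}^* \text{ for all } \epsilon \in (0, \bar\epsilon]\}$: the bootstrapping step requires $X_{\epsilon^*}$ itself to lie in $\mathbb{X}^*$ so that the ESset property can be applied there. This is immediate if $\mathbb{X}^*$ is taken to be closed (as in the standard Weibull treatment cited in the excerpt). Alternatively, one can bypass the issue by working with the set $F = \{\epsilon \in [0,1] : h(\epsilon) = 0\}$, which is automatically closed by continuity of $C$, and showing that local application of the ESset at any $X_{\bar\epsilon}$ with $\bar\epsilon \in F \cap (0,1)$ and $X_{\bar\epsilon} \in \mathbb{X}^*$ forces a neighborhood of $\bar\epsilon$ into $F$; connectedness of $[0,1]$ then yields $F = [0,1]$.
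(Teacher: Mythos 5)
Your proof follows the same strategy as the paper's: apply the characterization in Lemma~\ref{alskjdfjsdkfjdkjff}, extract $(X^*-X)\cdot CX_\epsilon \geq 0$ on a short initial segment from the ESset superiority neighborhood of $X^*$, and then, in the case where no strict inequality is ever attained, propagate the equality $h \equiv 0$ along the segment from $X^*$ to $X$ by repeatedly invoking the ESset property at the successive boundary points $X_{\epsilon'}, X_{\epsilon''},\ldots \in \mathbb{X}^*$. The supremum/closedness concern you raise is precisely the step the paper glosses over with ``Continuing in this way'' (the increasing sequence of boundary parameters could a priori converge short of $1$), and you are right that closedness of $\mathbb{X}^*$ --- which is part of the standard Weibull definition of an ESset that the paper tacitly relies on --- is what lets the supremum argument close.
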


\begin{figure}[tb]
\centering
\includegraphics[width=6cm]{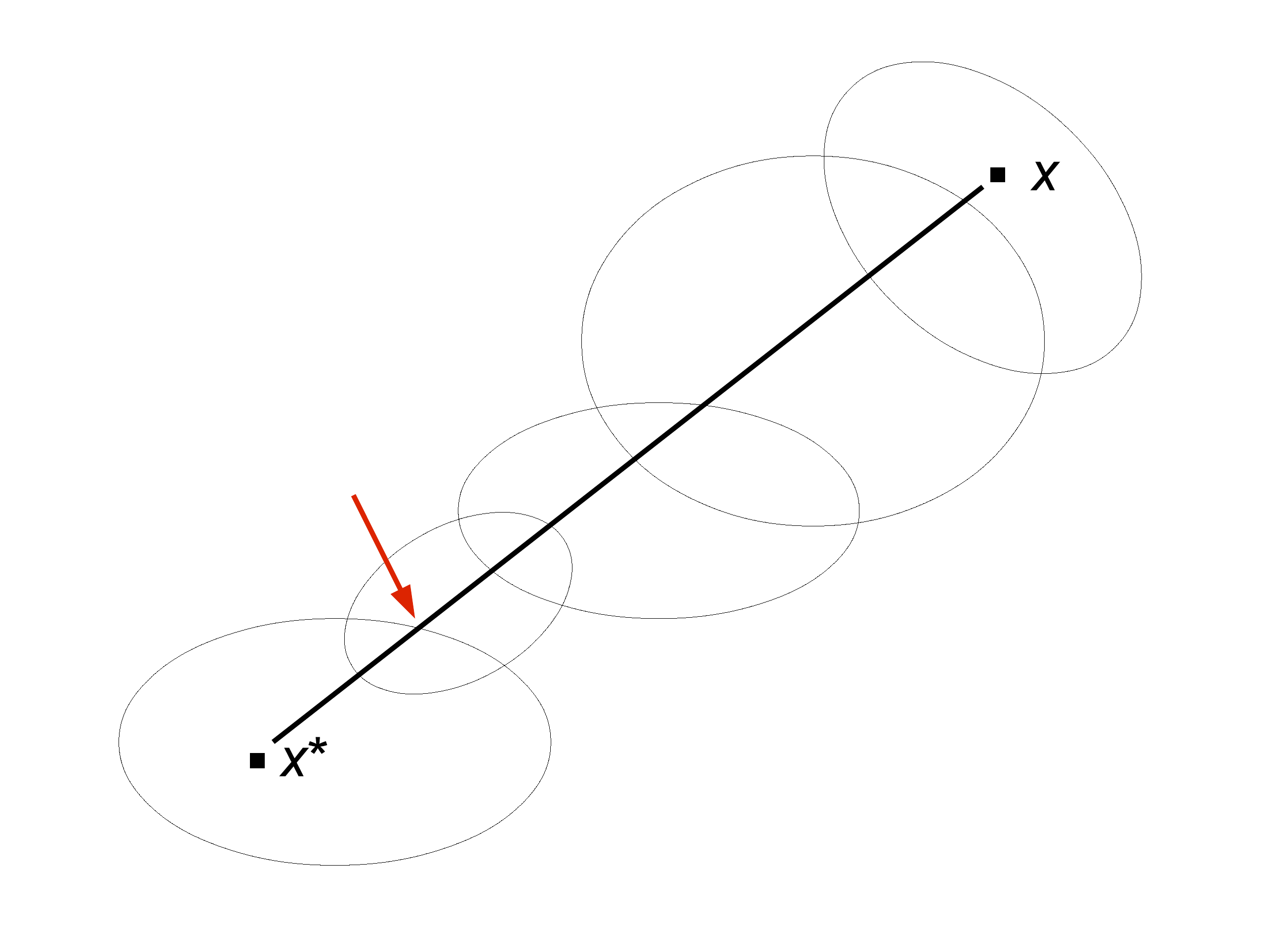}
\caption{\label{qpwoeirxwoexururx}
Used in the proof of Theorem~\ref{qpwoeiruwoeiurur}.}
\end{figure}

\begin{proof}
Let $F : \mathbb{X} \rightarrow \mathbb{R}^M$ be a population game and $\mathbf{X}^*$ an ES set of $F$. By Lemma \ref{alskjdfjsdkfjdkjff}, it suffices to show that
\begin{align*}
\forall X \in \mathbb{X}: (\forall \epsilon \in [0,1]: X^* \cdot F(X_\epsilon) \geq X \cdot F(X_\epsilon)) \vee (\exists \epsilon \in [0,1]: X^* \cdot F(X_\epsilon) > X \cdot F(X_\epsilon)).
\end{align*}
Since $X^* \in \mathbf{X}^*$, by the definition of an ESset, there exists a neighborhood $O(X^*)$ such that, $\forall X \in O(X^*)$, $(X^* - X) \cdot F(X) \geq 0$ (with strict inequality unless $X \in \mathbf{X}^*$). Since $X$ is an arbitrary element of $O(X^*)$ we have that
\begin{align*}
\forall X \in O(X^*) \forall \epsilon \in [0,1]: X^* \cdot F(X_\epsilon) \geq X \cdot F(X_\epsilon)
\end{align*}
and, therefore, trivially, that
\begin{align*}
\forall X \in O(X^*): (\forall \epsilon \in [0,1]: X^* \cdot F(X_\epsilon) \geq X \cdot F(X_\epsilon)) \vee (\exists \epsilon \in [0,1]: X^* \cdot F(X_\epsilon) > X \cdot F(X_\epsilon)).
\end{align*}
We need to show this holds for all $X \in \mathbb{X}$. Therefore, take any $X \in \mathbb{X}/O(X^*)$, and let $X_\epsilon = \epsilon X + (1-\epsilon) X^*$, $\epsilon \in [0,1]$. Let $X_{\epsilon'}$ be the point where $X_\epsilon$ intersects the boundary of $O(X^*)$ (pointed at by the arrow in Figure~\ref{qpwoeirxwoexururx}), and note that to prove our claim we only need to consider the case that, for all $\epsilon \in [0, \epsilon']$, $X^* \cdot F(X_\epsilon) = X_{\epsilon'} \cdot F(X_\epsilon)$. Note further that in this case $X_{\epsilon'} \in \mathbf{X^*}$ and consider $O(X_{\epsilon'})$. Let $X_{\epsilon''}$ be the point where $X_\epsilon$ intersects the boundary of $O(X_{\epsilon'})$ (in the direction toward $X$). Observe now that if there exists $\zeta \in [\epsilon', \epsilon'']$ such that $X_{\epsilon'} \cdot F(X_\zeta) > X_{\epsilon''} \cdot F(X_\zeta)$, then $X^* \cdot F(X_\zeta) > X \cdot F(X_\zeta)$, and that if no such $\zeta$ exists we may similarly consider the boundary point $X_{\epsilon''}$ of $O(X_{\epsilon'})$ noting that then, for all $\zeta \in [\epsilon', \epsilon'']$, $X_{\epsilon'} \cdot F(X_\zeta) = X_{\epsilon''} \cdot F(X_\zeta)$, and, therefore, that $X_{\epsilon''} \in \mathbf{X^*}$. Continuing in this way, we either obtain an $\epsilon \in [0,1]$ such that $X^* \cdot F(X_\epsilon) > X \cdot F(X_\epsilon)$ or, for all $\epsilon \in [0,1]$, $X^* \cdot F(X_\epsilon) = X \cdot F(X_\epsilon)$. Since $X$ is arbitrary, the theorem is proven.
\end{proof}

\subsubsection{Drifting equilibria are more general than neutrally stable strategies}

The saddle points in optimization theory are not drifting equilibria. Therefore, the class of critical points is more general than the class of drifting equilibria. It is natural to wonder if drifting equilibria characterize neutrally stable strategies. The answer is no in general and an example of a drifting equilibrium that is not a neutrally stable strategy is the center of the axes in Figure \ref{saldkjfnalxdkhjfg} plotting function $f(x) = x \sin(1/x)$. Note that the critical element $x=0$ is {\em both} maximal and {\em minimal} (in a manner analogous to our definition of maximal elements, we may define {\em minimal elements} as those elements of the evolution space that do not dominate any other element) {\em without} being a local maximum or minimum. We note this type of equilibrium behavior has been discovered before as a counterexample to the existence of a converse Lyapunov theorem \cite{Hahn}. \citet{Absil} study similar behavior in the setting of gradient dynamical systems. But within the setting of symmetric bimatrix games drifting equilibria characterize neutrally stable strategies.

\begin{figure}[tb]
\centering
\includegraphics[width=6cm]{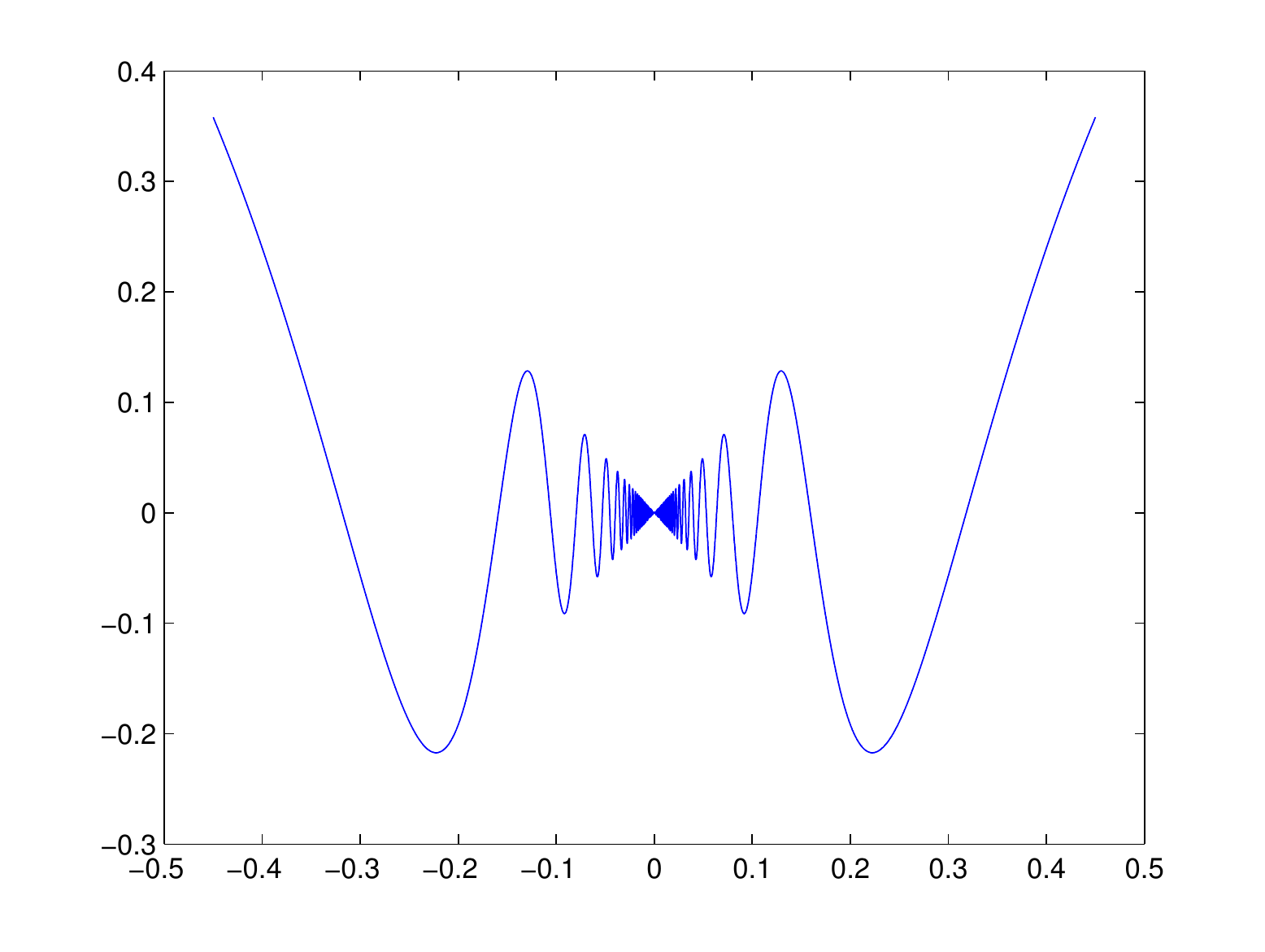}
\caption{\label{saldkjfnalxdkhjfg}
An example of a drifting equilibrium that is not neutrally stable.}
\end{figure}

\subsubsection{A characterization of neutrally stable strategies in symmetric bimatrix games}

Let us show now that in the setting of symmetric bimatrix games, drifting equilibria characterize neutrally stable strategies. Given a symmetric bimatrix game $(C, C^T)$, we use $(C, \succeq)$ to denote its drifting linear polyorder. Although we have defined two relevant polyorders, since we are only concerned with their maximal elements, in light of the previous lemma, there is no ambiguity. 

\begin{theorem}
\label{NSS_characterization}
Let $(C, C^T)$ be a symmetric bimatrix game. Then $X^* \in \mathbb{X}(C)$ is a maximal element of the drifting polyorder $(C, \succeq)$ if and only if $X^*$ is an NSS of $(C, C^T)$.
\end{theorem}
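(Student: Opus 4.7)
My plan is to exploit the bilinearity of payoffs in a symmetric bimatrix game. With $F(X) = CX$ and $X_\epsilon = \epsilon X + (1-\epsilon) X^*$, the key observation is that
\[
g(\epsilon) \;:=\; (X^* - X)\cdot CX_\epsilon \;=\; (1-\epsilon)\,a + \epsilon\, b,
\]
where $a := (X^* - X)\cdot CX^*$ and $b := (X^* - X)\cdot CX$, is affine in $\epsilon$. This reduces the drifting-maximality criterion of Lemma~\ref{alskjdfjsdkfjdkjff} --- namely ``$g \geq 0$ on $[0,1]$'' or ``$g > 0$ somewhere on $[0,1]$'' --- to a purely pointwise statement about the pair $(a,b)$: maximality at $X$ fails precisely when $a \leq 0$, $b \leq 0$, and $(a,b) \neq (0,0)$. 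Matching this against Proposition~\ref{nss_1}, which characterizes NSS by ``$a \geq 0$ for all $X$'' together with ``$a = 0$ and $X \neq X^*$ imply $b \geq 0$,'' is the core of the argument.

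For the $(\Leftarrow)$ direction, assuming $X^*$ is an NSS, I would fix an arbitrary $X$ and split on $a$. If $a > 0$, the second disjunct of Lemma~\ref{alskjdfjsdkfjdkjff} holds already at $\epsilon = 0$. If $X = X^*$, then $a = b = 0$ trivially. If $a = 0$ with $X \neq X^*$, then NSS forces $b \geq 0$, so $g(\epsilon) = \epsilon b \geq 0$ on $[0,1]$, satisfying the first disjunct. Since no other case arises under NSS, $X^*$ is drifting-maximal.

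For the $(\Rightarrow)$ direction, assume $X^*$ is drifting-maximal. The Nash inequality $a \geq 0$ for all $X$ is immediate from the previously established theorem that every drifting equilibrium is a critical element, since the critical elements of $F(X) = CX$ on $\mathbb{X}(C)$ are exactly the symmetric Nash equilibrium strategies of $(C, C^T)$. For the weak-superiority clause of NSS, suppose for contradiction there were some $X \neq X^*$ with $a = 0$ but $b < 0$; then $(a,b)$ would sit in the forbidden region (both $\leq 0$, not both zero), so maximality would fail at this $X$, contradicting our hypothesis.

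I do not anticipate any genuine obstacle. The only step requiring care is translating the nested quantifier structure of Lemma~\ref{alskjdfjsdkfjdkjff} into the endpoint condition on $(a,b)$; once the affineness of $g$ is in hand, the rest is a clean case analysis that pairs off disjunct-by-disjunct with the two clauses of Proposition~\ref{nss_1}.
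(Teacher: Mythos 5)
Your proof is correct and takes a genuinely cleaner route than the paper's. The paper factors the argument through Lemma~\ref{cool_basic_lemma}, which for each $X$ constructs an intermediate point $Y$ on the segment $[X^*, X]$ satisfying the weak superiority condition, and the proof of that lemma proceeds by a case analysis (either the affine quantity $(X^* - X)\cdot CX_\epsilon$ is nonnegative throughout, or it crosses zero at some convex combination $Y$, at which point the bilinearity is used to propagate the Nash inequality along the segment $[X^*, Y]$). Your observation that $g(\epsilon) = (X^* - X)\cdot CX_\epsilon = (1-\epsilon)a + \epsilon b$ is \emph{affine} in $\epsilon$ short-circuits all of this: the two quantified disjuncts in Lemma~\ref{alskjdfjsdkfjdkjff} collapse to a condition on the endpoint values $a = g(0)$ and $b = g(1)$, and failure of maximality becomes the purely finite-dimensional condition $a \leq 0$, $b \leq 0$, $(a,b)\neq(0,0)$, which lines up disjunct-by-disjunct with the two clauses of Proposition~\ref{nss_1}. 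Both arguments rest on bilinearity, but you use it once, up front, whereas the paper uses it repeatedly inside the intermediate-point construction. One small remark: your invocation of the earlier theorem (drifting equilibrium implies critical element) to get $a \geq 0$ is fine, but it is not even needed here — if $a < 0$ for some $X$, affineness forces $b > 0$ (else maximality fails outright), so $g$ has a zero $\epsilon_0 \in (0,1)$, and the point $X_{\epsilon_0}$ has $a' = \epsilon_0 a < 0$ and $b' = \epsilon_0 g(\epsilon_0) = 0$, already a witness to non-maximality; so the entire theorem is a self-contained consequence of your $(a,b)$ reduction.
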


We prove this theorem in a series of lemmas.

\begin{lemma}
\label{cool_basic_lemma}
Let $(C, C^T)$ be a symmetric bimatrix game, let $(C, \succeq)$ be its drifting linear polyorder, and let $X^*$ be a maximal element of $(C, \succeq)$. Then, for all $X \in \mathbb{X}(C)$, we have that there exists $Y$ as a convex combination of $X^*$ and $X$ such that
\begin{align*}
\forall \epsilon \in [0,1] : X^* \cdot C\mathcal{Y}_\epsilon \geq Y \cdot C\mathcal{Y}_\epsilon
\end{align*}
where $\mathcal{Y}_\epsilon = \epsilon Y + (1-\epsilon) X^*$.
\end{lemma}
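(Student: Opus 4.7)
The plan is to parameterize candidate $Y$'s as $Y = \alpha X + (1-\alpha) X^{*}$ for some $\alpha \in [0,1]$ and reduce the desired weak-dominance condition to a sign condition on a single scalar function. A direct computation yields
\begin{align*}
\mathcal{Y}_{\epsilon} \;=\; \epsilon Y + (1-\epsilon) X^{*} \;=\; \alpha\epsilon X + (1-\alpha\epsilon) X^{*} \;=\; X_{\alpha\epsilon},
\end{align*}
where $X_{\beta} := \beta X + (1-\beta) X^{*}$, and $X^{*} - Y = \alpha(X^{*}-X)$, whence
\begin{align*}
X^{*} \cdot C\mathcal{Y}_{\epsilon} \;-\; Y \cdot C\mathcal{Y}_{\epsilon} \;=\; \alpha\, g(\alpha\epsilon), \qquad g(\beta) := (X^{*}-X)\cdot CX_{\beta}.
\end{align*}
Because $C$ is a linear operator, $g$ is an \emph{affine} function of $\beta$, namely $g(\beta) = (1-\beta)g(0) + \beta g(1)$. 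The task therefore reduces to exhibiting an $\alpha \in [0,1]$ for which $g \geq 0$ on $[0,\alpha]$.

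Next I would extract two facts from the maximality of $X^{*}$. First, since $X^{*}$, being $\succeq$-maximal, is a drifting equilibrium, the preceding ``drifting equilibrium $\Rightarrow$ critical element'' theorem gives $g(0) = (X^{*}-X) \cdot CX^{*} \geq 0$. Second, Lemma~\ref{alskjdfjsdkfjdkjff} supplies the disjunction
\begin{align*}
\bigl(\forall\, \epsilon \in [0,1] : g(\epsilon) \geq 0\bigr) \;\vee\; \bigl(\exists\, \epsilon \in [0,1] : g(\epsilon) > 0\bigr).
\end{align*}

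I would then split on the sign of $g(1)$. If $g(1) \geq 0$, affineness of $g$ combined with $g(0) \geq 0$ gives $g \geq 0$ on all of $[0,1]$, so $\alpha = 1$ (that is, $Y = X$) does the job. If $g(1) < 0$, then the first disjunct above fails, so the second must hold; combined with $g$ affine and $g(1) < 0$, this rules out $g(0) = 0$ (for otherwise $g(\beta) = \beta g(1) \leq 0$ for all $\beta$, contradicting the existence of an $\epsilon$ with $g(\epsilon) > 0$), and hence $g(0) > 0$. Setting $\alpha^{*} = g(0)/(g(0) - g(1)) \in (0,1)$, the unique zero of $g$ in $(0,1)$, I obtain $g \geq 0$ on $[0,\alpha^{*}]$, so $Y = \alpha^{*} X + (1-\alpha^{*}) X^{*}$ satisfies $\alpha^{*}\, g(\alpha^{*}\epsilon) \geq 0$ for every $\epsilon \in [0,1]$, as required.

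The step I expect to require the most care is the borderline subcase $g(0) = 0$ with $g(1) < 0$; it is precisely the affineness of $g$ (which stems from the bilinearity of payoffs in the symmetric bimatrix setting) together with the disjunction of Lemma~\ref{alskjdfjsdkfjdkjff} that excludes this subcase. For general continuous vector fields the analogous statement need not hold, in line with the $x\sin(1/x)$ counterexample to an NSS characterization discussed earlier in the section.
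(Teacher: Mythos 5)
Your proof is correct and follows essentially the same strategy as the paper's: both invoke the maximality disjunction from Lemma~\ref{alskjdfjsdkfjdkjff}, both use that $X^*$ is a critical element (so $g(0)=(X^*-X)\cdot CX^*\geq 0$), and both, in the nontrivial case, select $Y$ on the segment from $X^*$ to $X$ at the point where $(X^*-X)\cdot CY=0$ (your $\alpha^*$). Your reduction to the affine scalar function $g(\beta)=(1-\beta)g(0)+\beta g(1)$ makes the bilinearity explicit and streamlines the case analysis, whereas the paper performs the equivalent algebra directly on $(X^*-Y)\cdot C\mathcal{Y}_\epsilon$; the content is the same.
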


\begin{proof}
Let $X$ be arbitrary, and let $X_\epsilon = \epsilon X + (1-\epsilon) X^*$. Lemma \ref{alskjdfjsdkfjdkjff} implies then that either, for all $\epsilon \in [0,1]$, $X^* \cdot C X_\epsilon \geq X \cdot C X_\epsilon$ or that there exists $\epsilon \in [0,1]$ such that $X^* \cdot C X_\epsilon > X \cdot C X_\epsilon$. If the first possibility of the lemma manifests for all $X$, the theorem trivially holds. It, therefore, suffices to only consider elements $X$ such that the second possibility manifests. Let us, therefore, assume that $X$ is such. We may now consider two cases: Either for all $\epsilon \in [0,1]$ we have that $X^* \cdot C X_\epsilon > X \cdot C X_\epsilon$ or there exists $Y$ as a convex combination of $X^*$ and $X$ such that $X^* \cdot CY = X \cdot CY$. Let us, therefore, consider the second case. Since $Y$ is a convex combination of $X^*$ and $X$ and $X^* \cdot CY = X \cdot CY$, straight algebra implies that $X^* \cdot CY = Y \cdot CY$. Let $\mathcal{Y}_\epsilon  = \epsilon Y + (1-\epsilon) X^*$. We then have that
\begin{align*}
X^* \cdot C\mathcal{Y}_\epsilon  - \mathcal{Y}_\epsilon  \cdot C\mathcal{Y}_\epsilon  = X^* \cdot C\mathcal{Y}_\epsilon  - (\epsilon Y + (1-\epsilon) X^*) \cdot C\mathcal{Y}_\epsilon  = \epsilon (X^* - Y) \cdot C\mathcal{Y}_\epsilon 
\end{align*}
Furthermore,
\begin{align*}
(X^* - Y) \cdot C\mathcal{Y}_\epsilon  &= (X^*-Y) \cdot C(\epsilon Y + (1-\epsilon) X^*)\\
 &= \epsilon (X^*-Y) \cdot CY + (1-\epsilon) (X^*-Y) \cdot CX_*\\
 &= (1-\epsilon) (X^* - Y) \cdot CX^*,
\end{align*}
where we have used that $X^* \cdot CY = X \cdot CY$. But $X^*$ is a Nash equilibrium, and, therefore,
\begin{align*}
(X^* - Y) \cdot CX^* \geq 0.
\end{align*}
Therefore, 
\begin{align*}
(X^* - Y) \cdot C\mathcal{Y}_\epsilon  \geq 0.
\end{align*}
But since $\mathcal{Y}_\epsilon$ is an arbitrary convex combination of $X^*$ and $Y$, the lemma follows.
\end{proof}

\begin{lemma}
\label{near_final_1}
Let $(C, C^T)$ be a symmetric bimatrix game, and let $X^* \in \mathbb{X}$ be an NSS of $(C, C^T)$. Then $X^*$ is a maximal element of the drifting polyorder $(C, \succeq)$.
\end{lemma}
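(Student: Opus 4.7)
The plan is to verify the maximality condition in Lemma~\ref{alskjdfjsdkfjdkjff} for every $X \in \mathbb{X}(C)$. Fix such an $X$ and set $X_\epsilon = \epsilon X + (1-\epsilon) X^*$. The key observation is that the scalar function $g(\epsilon) := (X^* - X) \cdot C X_\epsilon$ is affine in $\epsilon$, because $C$ is a linear operator; expanding gives
\begin{align*}
g(\epsilon) = \epsilon \, (X^* - X) \cdot CX + (1-\epsilon) \, (X^* - X) \cdot CX^*.
\end{align*}
So everything reduces to understanding $g$ at the two endpoints.

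At $\epsilon = 0$, $g(0) = (X^* - X)\cdot CX^*$, which is nonnegative because the NSS $X^*$ is in particular a symmetric Nash equilibrium of $(C, C^T)$. Split into two cases. If $g(0) > 0$, then by continuity $g(\epsilon) > 0$ on a neighborhood of $0$, i.e.\ there exists $\epsilon \in [0,1]$ with $X^* \cdot CX_\epsilon > X \cdot CX_\epsilon$; this is the second disjunct in Lemma~\ref{alskjdfjsdkfjdkjff}. If $g(0) = 0$, then $X \cdot CX^* = X^* \cdot CX^*$, so the NSS characterization of Proposition~\ref{nss_1} (which is stated globally, not only in a neighborhood) gives $X^* \cdot CX \geq X \cdot CX$, i.e.\ $g(1) \geq 0$. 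Combined with $g(0) = 0$ and affineness of $g$, this yields $g(\epsilon) = \epsilon\, g(1) \geq 0$ for every $\epsilon \in [0,1]$, which is the first disjunct in Lemma~\ref{alskjdfjsdkfjdkjff}.

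Since $X$ was arbitrary, the disjunction required by Lemma~\ref{alskjdfjsdkfjdkjff} holds at every $X \in \mathbb{X}(C)$, so $X^*$ is $\succeq$-maximal in $(C, \succeq)$.

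The main thing one has to notice is that the globally stated second clause of Proposition~\ref{nss_1} is essential: Definition~\ref{evolution_definitions} only guarantees the NSS inequality on a neighborhood $O$ of $X^*$, which would be insufficient to rule out domination by a distant $X$. Once the global NSS characterization is invoked, the affineness of $\epsilon \mapsto (X^* - X)\cdot CX_\epsilon$ (a feature of the bimatrix, i.e.\ linear-payoff, setting) makes the two endpoint values $g(0)$ and $g(1)$ decide the entire segment, so the argument is essentially a one-line case split.
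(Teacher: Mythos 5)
Your proof is correct and follows essentially the same strategy as the paper's: case-split on whether $(X^*-X)\cdot CX^*$ is strictly positive (in which case the second disjunct of Lemma~\ref{alskjdfjsdkfjdkjff} holds at $\epsilon=0$) or zero (in which case the global second clause of Proposition~\ref{nss_1} is invoked), with linearity of the payoff carrying the conclusion along the whole segment. The one cosmetic difference is that you apply Proposition~\ref{nss_1} once at the endpoint $X$ and then propagate via the affineness $g(\epsilon)=\epsilon\,g(1)$, whereas the paper applies it at each $X_\epsilon$; your packaging is a little tighter, but the underlying argument is the same.
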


\begin{proof}
Let $X \in \mathbb{X}$ be arbitrary, and let $X_\epsilon = \epsilon X + (1-\epsilon) X^*$. If $(X^* - X) \cdot C X^* > 0$, then the condition in Lemma \ref{alskjdfjsdkfjdkjff} is trivially satisfied. Suppose now that $(X^* - X) \cdot C X^* = 0$. Then, for all $\epsilon \in [0,1]$,
\begin{align*}
(X^* - X_\epsilon) \cdot C X^* &= (X^* - \epsilon X - (1-\epsilon) X^*) \cdot CX^*\\
 &= \epsilon (X^* - X) \cdot CX^*\\
 &= 0.
\end{align*}
Therefore, Proposition \ref{nss_1} implies that, for all $\epsilon \in [0,1]$, $X^* \cdot CX \geq X \cdot CX$, and, thus, the condition in Lemma \ref{alskjdfjsdkfjdkjff} is also satisfied. This completes the proof.
\end{proof}

\begin{lemma}
\label{near_final_2}
Let $(C, C^T)$ be a symmetric bimatrix game, and let $X^* \in \mathbb{X}$ be a maximal element of the drifting polyorder $(C, \succeq)$. Then $X^*$ is an NSS of $(C, C^T)$.
\end{lemma}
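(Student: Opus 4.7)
The plan is to prove Lemma \ref{near_final_2} by verifying the two characterizing conditions of an NSS given in Proposition \ref{nss_1}, starting from the assumption that $X^*$ is a maximal element of the drifting polyorder $(C, \succeq)$.

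First, I would establish that $X^*$ is a symmetric Nash equilibrium, which gives the first condition $X^* \cdot CX^* \geq X \cdot CX^*$ for all $X \in \mathbb{X}(C)$. This is essentially free: by the earlier theorem every drifting equilibrium is a critical element of the vector field, and for a symmetric bimatrix game $(C, C^T)$ the critical elements of $F \equiv C(\cdot)$ are precisely the symmetric Nash equilibria. So there is nothing new to prove for the first NSS condition.

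Next, I would verify the second NSS condition: for any $X \in \mathbb{X}(C)$ with $X \neq X^*$ satisfying $X^* \cdot CX^* = X \cdot CX^*$, we have $X^* \cdot CX \geq X \cdot CX$. I would argue by contradiction: assume there is such an $X$ with $X \cdot CX > X^* \cdot CX$, and derive that $X^*$ is not $\succeq$-maximal. Let $X_\epsilon = \epsilon X + (1-\epsilon) X^*$ and exploit the bilinearity of the payoffs to expand
\begin{align*}
(X^* - X) \cdot C X_\epsilon = \epsilon (X^* - X) \cdot CX + (1-\epsilon)(X^* - X) \cdot CX^*.
\end{align*}
The second term vanishes by the alternative-best-response hypothesis, and the first term equals $\epsilon(X^* \cdot CX - X \cdot CX) < 0$ for every $\epsilon > 0$ by the contradiction assumption. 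Therefore $X^* \cdot CX_\epsilon < X \cdot CX_\epsilon$ for all $\epsilon \in (0,1]$ while equality holds at $\epsilon = 0$.

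Finally, I would invoke the characterization of maximality in Lemma \ref{alskjdfjsdkfjdkjff}: both disjuncts fail for this $X$, since there is no $\epsilon$ at which $X^* \cdot CX_\epsilon > X \cdot CX_\epsilon$ (strict inequality runs in the wrong direction for $\epsilon > 0$) and there is some $\epsilon$ at which $X^* \cdot CX_\epsilon < X \cdot CX_\epsilon$ (so the universal inequality fails). This contradicts the maximality of $X^*$, completing the proof. I do not anticipate a real obstacle here: the argument is a short bilinearity computation, and the only subtlety is being careful that the strict-inequality disjunct in Lemma \ref{alskjdfjsdkfjdkjff} also fails (not merely the universal one), which is handled by the observation that the difference $(X^* - X) \cdot CX_\epsilon$ is affine in $\epsilon$ and hence cannot flip sign once we know it is zero at $\epsilon = 0$ and negative somewhere on $(0,1]$.
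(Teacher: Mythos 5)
Your proof is correct and verifies the two Maynard Smith conditions of Proposition \ref{nss_1} directly from the maximality characterization in Lemma \ref{alskjdfjsdkfjdkjff}: the first condition (Nash) follows from drifting equilibria being critical elements, and for the second you exploit bilinearity to collapse $(X^*-X)\cdot CX_\epsilon$ to $\epsilon\,(X^*-X)\cdot CX$ when $X$ is an alternative best response to $X^*$, so that a violation of $(X^*-X)\cdot CX \geq 0$ would make the affine map $\epsilon \mapsto (X^*-X)\cdot CX_\epsilon$ vanish at $\epsilon=0$ and be strictly negative on $(0,1]$, causing both disjuncts of Lemma \ref{alskjdfjsdkfjdkjff} to fail. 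The paper instead routes through Lemma \ref{cool_basic_lemma}, which packages a somewhat stronger segment-invariance property (for every $X$ some convex combination $Y$ of $X^*$ and $X$ satisfies $X^*\cdot C\mathcal{Y}_\epsilon \geq Y\cdot C\mathcal{Y}_\epsilon$ for all $\epsilon\in[0,1]$); the bilinearity algebra embedded in the proof of that lemma is the same algebra you perform. Your version is slightly leaner: it eliminates the intermediate lemma, whose conclusion as stated does not directly hand you the alternative-best-response inequality at $X$ itself without noticing that in the relevant case (where $(X^*-X)\cdot CX^*=0$) one must have $Y=X$, and it names precisely which characterization of an NSS is being matched.
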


\begin{proof}
Straightforward implication of Lemma \ref{cool_basic_lemma} and the definition of an NSS.
\end{proof}

\begin{proof}[Proof of Theorem \ref{NSS_characterization}]
Straightforward implication of Lemmas \ref{near_final_1} and \ref{near_final_2}.
\end{proof}

\subsection{Mathematical foundations in discrete evolution spaces}

Given a game $\Gamma$ in strategic form, we define its {\em deployment graph} as the directed graph $G$ whose vertices correspond to $\Gamma$'s strategy profiles and whose arcs correspond to a notion of ``feasible'' unilateral deviations. Therefore, the set of vertices of $G$ is the direct product $\times_i S_i$. The notion of feasibility may admit a variety of plausible definitions. In this paper, we explore two such definitions, namely, we either call a unilateral deviation feasible if it is (strictly) profitable for the corresponding player that deviates (switches strategies) or if it is not harmful. For example, the deployment graph of a stag hunt shown on the left in Figure \ref{dgsh} is shown on the right.

\begin{figure}[tb]
\centering
\includegraphics[width=8cm]{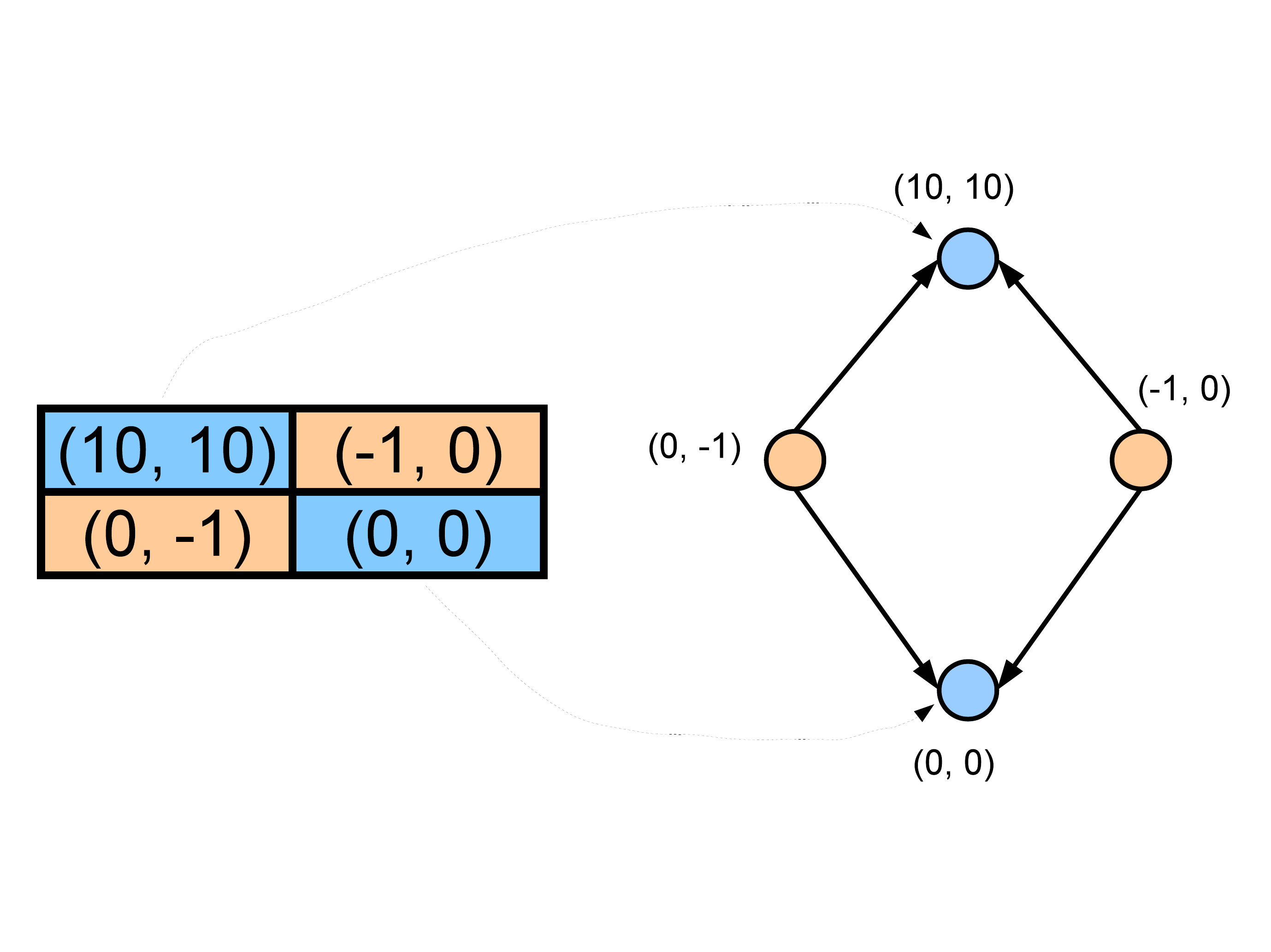}
\caption{\label{dgsh}
Example of deployment graph of a (two player) stag hunt.}
\end{figure}

We are now ready to provide a rigorous definition of incremental deployability: We say that a strategy profile (state) $\sigma$ of $\Gamma$ is {\em incrementally deployable} against strategy profile (state) $\sigma'$ if there exists a directed path from $\sigma'$ to $\sigma$ in $\Gamma$'s deployment graph $G$. To understand this definition in the particular setting of architectural innovation we consider, we may think of $\Gamma$ as capturing the incentive structure of the adoption environment where possibly multiple technologies (corresponding in this particular setting to strategies) compete for adoption. Then a state (a mix of technologies) is incrementally deployable against another state, if there exists a sequence of beneficial ``moves'' players of the adoption environment can take unilaterally, where a move corresponds to a player abandoning one technology in favor of another, to effect such change in the technology mix.

The following question seems to be in order given the aforementioned definitions: In an environment where evolution is driven by the process of incremental steps we have defined, which states, if any, would seem natural to emerge in the ``steady state'' as evolutionary forces have settled down? Although we do not specify a particular mechanism by which evolution should take place, it is natural to assume that if a state $s$ is incrementally deployable against another state $s'$ but $s'$ is not incrementally deployable against $s$, $s'$ cannot manifest other than as a transient state, an assumption in agreement with the colloquial notion of incremental deployability as that is used in the networking community. The following definitions rigorously capture this intuition.

\subsubsection{Maximal states as a game-theoretic solution concept}

Let us first define deployment graphs in more rigorous terms. Recall that there are two distinct notions of such graphs, one where arcs correspond to (strictly) profitable unilateral deviations and one where are arcs correspond to unilateral deviations that are not harmful, which we will refer to as {\em strict deployment graphs} and {\em ordinal deployment graphs} respectively.

\begin{definition}[Strict deployment graph]
Let $S$ be the profile space of a normal form game $\Gamma$. We define the {\em improvement graph} $G(S, A)$ of $\Gamma$ as the directed graph that represents the players' (strictly) profitable unilateral deviations, that is, 
\begin{align*}
(s, s') \in A \Leftrightarrow \exists i : (s'_i, s_{-i}) = s' \mbox{ and } u_i(s') - u_i(s) > 0.
\end{align*}
\end{definition}

\begin{definition}[Ordinal deployment graph]
Let $\Gamma$ be a normal form game, and let $S$ be its profile space. We define the {\em advancement graph} $G(S, A)$ of $\Gamma$ as the directed graph that represents the players' weakly profitable unilateral deviations, that is, 
\begin{align*}
(s, s') \in A \Leftrightarrow \exists i : (s'_i, s_{-i}) = s' \mbox{ and } u_i(s') - u_i(s) \geq 0.
\end{align*}
If the above inequality is strict, we call the arc {\em positive}, and otherwise we call the arc {\em neutral}. Note that neutral arcs always come in {\em pairs} (that is, if $(s, s')$ is neutral, then $(s',s)$ is a neutral arc). 
\end{definition}

We are now ready to define our solution concept in two distinct variants according to the notion of a deployment graph that is appropriate in the modeling environment that is of interest.

\begin{definition}[Weakly maximal states]
Let $\Gamma$ be a normal form game, let $S$ be its profile space, let $G$ be its strict deployment graph, and order the vertices of this graph as follows:
\begin{align*}
s \succeq s' \Leftrightarrow \mbox{There exists a path in $G$ from $s'$ to $s$}.
\end{align*}
We call $\succeq$, the strict preference relation on $S$. We call the maximal elements of $(S, \succeq)$ the {\em weakly maximal states} of $\Gamma$.
\end{definition}

\begin{definition}[Strongly maximal states]
Let $\Gamma$ be a normal form game, let $S$ be its profile space, let $G$ be its ordinal deployment graph, and order the vertices of this graph as follows:
\begin{align*}
s \succeq s' \Leftrightarrow \mbox{There exists a path in $G$ from $s'$ to $s$}.
\end{align*}
We call $\succeq$, the drifting preference relation on $S$. We call the maximal elements of $(S, \succeq)$ the {\em strongly maximal states} of $\Gamma$.
\end{definition}

Both the strict and drifting preference relations are transitive, and since reflexivity is trivial to satisfy by a matter of definition, both types of preference relations are preorders. Proposition \ref{weoprtiueorituoirtu} then implies the following existence theorem on (weakly and strongly) maximal states.

\begin{theorem}
\label{existence}
Every strategic-form game is equipped with both weakly and strongly maximal states.
\end{theorem}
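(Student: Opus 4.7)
The plan is to reduce the theorem to Proposition~\ref{weoprtiueorituoirtu}, the existence of a maximal element for a preorder on a finite set. Since strategic-form games have been defined in Section~\ref{game_theory_preliminaries} as finite (the player set is finite and each $S_i$ is finite), the profile space $S = \times_i S_i$ is finite, so only two verifications remain: both the strict preference relation $\succeq$ and the drifting preference relation $\succeq$ on $S$ are preorders, i.e., reflexive and transitive.

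For reflexivity, I would invoke the convention (mentioned in the definitions) that reflexivity holds by definition, which amounts to admitting the length-zero path from any vertex to itself. For transitivity, suppose $s \succeq s'$ and $s' \succeq s''$. By definition there is a path in the corresponding deployment graph $G$ from $s'$ to $s$ and a path from $s''$ to $s'$. Concatenating these two paths yields a directed walk in $G$ from $s''$ to $s$, and extracting a simple subpath (or just noting that a walk suffices) gives a path from $s''$ to $s$, hence $s \succeq s''$. This argument is identical for the strict and ordinal deployment graphs, since the construction of $\succeq$ from $G$ only uses the existence of paths and not the interpretation of the arcs.

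Once both relations are shown to be preorders on the finite set $S$, Proposition~\ref{weoprtiueorituoirtu} applied in each case yields a $\succeq$-maximal element, which by definition is, respectively, a weakly maximal state (for the strict deployment graph) and a strongly maximal state (for the ordinal deployment graph). The theorem follows.

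There is essentially no obstacle here: the argument is a direct bookkeeping exercise once one has Proposition~\ref{weoprtiueorituoirtu} in hand. The only mildly substantive point is making sure that the notion of ``path'' used in the definitions of the two preference relations is closed under concatenation, which is immediate in a directed graph. If anything, the subtlety is purely notational, in that the same symbol $\succeq$ is overloaded across the two definitions, so one should be explicit that the argument is carried out separately for each relation on its corresponding deployment graph.
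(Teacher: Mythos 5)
Your argument matches the paper's own proof exactly: the paper likewise observes that the strict and drifting preference relations are reflexive (by convention) and transitive, hence preorders on the finite profile space, and then invokes Proposition~\ref{weoprtiueorituoirtu}. You merely fill in the routine details (path concatenation for transitivity, zero-length paths for reflexivity) that the paper leaves implicit, and these are correct.
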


Therefore, as game-theoretic solution concepts in pure strategies, weakly and strongly maximal states satisfy existence, a property on par with the Nash equilibrium in mixed strategies.

We note that an alternative proof of existence of (weak and strong) maximal states can be couched in graph-theoretic terms using the well-known property that a ``directed acyclic graph'' (using the standard notion of acyclicity) is guaranteed to have a {\em sink} (i.e., a vertex without outbound arcs). To that end, we consider the quotient sets of the strict and drifting preference relations of a normal form game. Then, if we contract vertices belonging to the same member of the quotient set, we obtain a directed acyclic graph that, as mentioned above, is always equipped with a sink, which corresponds to a class of maximal elements (either in the strict or drifting preference relation).

\subsubsection{Maximal states and sink equilibria}

Maximal states are a game-theoretic solution concept defined in a ``static'' manner. In the sequel, we provide a dynamic interpretation of maximal states in a model of evolution using the well-studied concept of {\em Markov chains}. Let us first introduce some relevant background definitions.

A Markov chain is a sequence of discrete random variables taking values in some countable set, say $S$, called the {\em state space}. If the cardinality of $S$ is finite, we are referring to a {\em finite state} Markov chain. A Markov chain is called {\em homogeneous} is its behavior can be described in terms of {\em transition probabilities} between states: The probability of transitioning from one state, say $s \in S$, to another, say $s' \in S$, depends neither on history nor the time step but only on $s$ and $s'$. 

The ``solution concept'' in homogeneous finite state Markov chains is defined based on a notion of ``recurrency.'' In such a Markov chain, states are classified as being either {\em transient} or {\em recurrent}. A state is  recurrent if, starting from that state, the probability of returning to it is one, and it is transient otherwise. Finite state homogeneous Markov chains admit a graph-theoretic representation (cf. \cite{Gallager2}) in which states correspond to vertices and feasible transitions, i.e., transitions between states that may occur with positive probability, correspond to arcs. (Such a graph is typically referred to in the literature as the {\em transition graph}.) It is possible to show that recurrent states are independent on the precise values of transition probabilities (for as long as arcs correspond to transitions that may occur with strictly positive probabilities). That recurrent states are ensured to exist and that, starting from any state, a chain will eventually transition to a recurrent class of states (and remain there, thereafter) are well-known results in this theory.

With this background in mind, it seems natural to define a game-theoretic solution concept based on the notion of homogeneous finite state Markov chains known in the literature on algorithmic game theory as the {\em sink equilibrium} \citep{Goemans}. Sink equilibria are classes of recurrent states in a Markov chain where states correspond to the strategy profiles of a game in strategic form and where transition probabilities are defined based on better and best responses.  

It can be shown using standard characterizations of recurrent states (as can be found, for example, in \cite{Gallager2}) that the notions of maximal states and sink equilibria coincide (as long as the deployment and transition graphs coincide). This implies that maximality, as we earlier defined it using the static concept of incremental deployability, admits a rather appealing evolutionary interpretation in which players take turns and switch strategies according to the transition (deployment) graph. The outcome of this process can be either a pure Nash equilibrium, a collection (class) of equilibrium states, or more complicated cyclical behavioral patterns, as discussed in the sequel. In closing, we note that the solution concept of the sink equilibrium is used exclusively, to the extent of our knowledge, in the literature as being synonymous with the notion of weak maximality. In the sequel, we attempt a more thorough comparison between the notions of weak and strong maximality that we draw on later in the analysis of coordination mechanisms.

\subsection{Weak versus strong maximality}
\label{on_maximality}

Weakly and strongly maximal states are rather different solution concepts. It is easy to see that weakly maximal equilibria coincide with pure Nash equilibria, however, although strongly maximal equilibria are necessarily pure Nash equilibria, there exist pure Nash equilibria that are not strongly maximal states. (We will come across this rather fine point again in the analysis of the election mechanism.) In this section, we argue that the strongly maximal equilibrium is an as plausible equilibrium solution concept in strategic games as its well-established weakly maximal counterpart (the pure Nash equilibrium). Our argument is couched in the setting of {\em ordinal potential games}.

\subsubsection{Potential games}

Potential games were introduced by \cite{Potential} as strategic form games whose incentive structure is captured by a scalar potential function $P: S \rightarrow \mathbb{R}$ where $S$ is the space of strategy profiles of a game in strategic form. Although there are many variants, two important classes of potential games are the {\em ordinal potential games} and the {\em generalized ordinal potential games}. Let us, therefore, introduce these notions more formally: A function $P: S \rightarrow \mathbb{R}$ is an {\em ordinal potential} for a (finite) strategic form game $\Gamma$, if
\begin{align*}
u_i(s, \sigma_{-i}) - u_i(s', \sigma_{-i}) > 0 \Leftrightarrow P(s, \sigma_{-i}) - P(s', \sigma_{-i}) > 0, \forall i \in I, \forall s, s' \in S_i, \forall \sigma_{-i} \in S_{-i}.
\end{align*}
$\Gamma$ is called an {\em ordinal potential game} if it admits an ordinal potential. A function $P: S \rightarrow \mathbb{R}$ is a {\em generalized ordinal potential} for a strategic form game $\Gamma$, if
\begin{align*}
u_i(s, \sigma_{-i}) - u_i(s', \sigma_{-i}) > 0 \Rightarrow P(s, \sigma_{-i}) - P(s', \sigma_{-i}) > 0, \forall i \in I, \forall s, s' \in S_i, \forall \sigma_{-i} \in S_{-i}.
\end{align*}
$\Gamma$ is called an {\em generalized ordinal potential game} if it admits a generalized ordinal potential. These classes of potential games are distinct from each other: Ordinal potential games are necessarily generalized ordinal potential games, as is apparent from the definition, but the converse is not necessarily true. Monderer and Shapley show that the class of generalized ordinal potential games coincides with the class of (finite) games in strategic form having the {\em finite improvement property,} which is typically abbreviated as the FIP. A game in strategic form has the FIP if ``improvement paths,'' that is, sequences of unilateral better responses, are necessarily finite. They further show that games having the FIP are necessarily equipped with a pure Nash equilibrium.

\subsubsection{Strongly maximal equilibria}

In the sequel, we introduce an equilibrium solution in strategic form games in pure strategies, which we refer to as a {\em strongly maximal equilibrium,} that, in general, differs from the pure Nash equilibrium: Strongly maximal equilibria are necessarily pure Nash equilibria, but there exist pure Nash equilibria that are not strongly maximal. To introduce the strongly maximal equilibrium more formally, let us consider the ordinal deployment graph (as defined earlier). In such a graph, we say that two states (strategy profiles), say $s$ and $s'$, {\em communicate} if deployment paths exist in both directions, that is, from $s$ to $s'$ as well as from $s'$ to $s$. We say that a set of states is a {\em strongly maximal equilibrium class} if all states in the set are pure Nash equilibria that pairwise communicate. We call the states in a strongly maximal equilibrium class {\em strongly maximal equilibria}. In the sequel, we prove that maximal states in ordinal potential games are necessarily strongly maximal equilibria. Since maximal states are guaranteed to exist in any game in strategic form, strongly maximal equilibria are guaranteed to exist in ordinal potential games. We also give an example of a generalized ordinal potential game that is not equipped with a strongly maximal equilibrium.

\begin{definition}
Let $\Gamma$ be a game in strategic form, let $S$ be its space of strategy profiles (states), and let $G$ be its ordinal deployment graph. Let us further consider the quotient set $S/_\sim$. We say that $G$ is {\em ordinally acyclic} if all arcs within each member of $S/_\sim$ are neutral.
\end{definition}

In the following theorem, we call a cycle {\em positive} if it contains at least one positive arc.

\begin{theorem}
\label{opg}
A game $\Gamma$ in strategic form is an ordinal potential game if and only if $\Gamma$'s ordinal deployment graph $G$ is ordinally acyclic.
\end{theorem}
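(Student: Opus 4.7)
My plan is to prove each direction separately, using the simple graph-theoretic observation that an ordinal potential provides a real-valued function that is non-decreasing along every arc of $G$ and strictly increasing along every positive arc.

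For the forward direction (necessity), I assume $\Gamma$ admits an ordinal potential $P : S \to \mathbb{R}$. First I verify that $P$ is strictly increasing along positive arcs (immediate from the definition) and constant along neutral arcs: if $s \to s'$ is neutral then $u_i(s') = u_i(s)$, which means that neither $u_i(s') > u_i(s)$ nor $u_i(s) > u_i(s')$ holds, so applying the ordinal potential property in both directions forces neither $P(s') > P(s)$ nor $P(s) > P(s')$, i.e.\ $P(s) = P(s')$. Along any directed cycle in $G$, then, $P$ is weakly monotone and returns to its starting value, so it must be constant on the cycle, which rules out positive arcs on any cycle. Finally, any arc $s \to s'$ lying inside an equivalence class of $\sim$ can be completed by a return path $s' \rightsquigarrow s$ (guaranteed by the SCC property) into a cycle, so that arc must be neutral. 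This gives ordinal acyclicity.

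For the backward direction (sufficiency), I would build $P$ explicitly by contracting strongly connected components. The first observation is that every arc between distinct members of $S/_\sim$ is positive: a neutral arc always comes in a pair $(s,s'), (s',s)$, which would make $s$ and $s'$ communicate and hence lie in the same class, contradicting distinctness. Consequently, contracting each $\sim$-class to a single vertex yields a DAG $\tilde{G}$, and I pick any strict linear extension $\tau : S/_\sim \to \mathbb{N}$ of $\tilde{G}$ (a topological order) and set $P(s) = \tau([s]_\sim)$. The ordinal potential property is then checked by three cases on a unilateral deviation $s \to s'$ by player $i$: if $u_i(s') > u_i(s)$, the positive arc $s \to s'$ crosses distinct SCCs with $[s]_\sim \to [s']_\sim$ in $\tilde{G}$, so $P(s) < P(s')$; if $u_i(s') = u_i(s)$, the neutral-pair gives $s \sim s'$ and so $P(s) = P(s')$; and the case $u_i(s') < u_i(s)$ is symmetric to the first. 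Hence $u_i$ strictly improves iff $P$ strictly improves.

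The argument is essentially structural rather than technical; the main subtlety to get right is the interaction between neutral arcs (which always come in pairs) and the equivalence $\sim$. The critical step is the observation used at the start of the backward direction: if inter-SCC arcs could be neutral, the quotient graph would not be acyclic and the topological sort would be unavailable. Keeping the three-case verification clean at the end, and being explicit that the ordinal potential condition, read in both orientations, forces $P$ to be constant on neutral arcs, are the two places where the proof could go wrong if done carelessly.
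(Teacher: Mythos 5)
Your proof is correct and follows essentially the same route as the paper's: the forward direction derives a contradiction from a positive cycle via monotonicity of the potential, and the backward direction contracts the $\sim$-classes to a DAG, takes a topological order, and assigns potentials accordingly. The main difference is cosmetic — you carry out explicitly the three-case verification of the ordinal potential property that the paper leaves as ``easy to verify,'' and you spell out why inter-class arcs must be positive (so the contracted graph is a DAG with the right orientation), both of which are sound and welcome additions of rigor.
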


\begin{proof}
Let us assume first that $\Gamma$ is an ordinal potential game, let $S$ be $\Gamma$'s space of strategy profiles (states), and let us further assume, for the sake of contradiction, that $\Gamma$'s ordinal deployment graph $G$ has a positive cycle, say consisting of vertices (states) $s_1, \ldots, s_k$, which, by the definition of the quotient set $S/_\sim$, belong to the same member of $S/_\sim$. Then, by the definition of ordinal potential games, it must hold that $P(s_k) \geq \cdots \geq P(s_1) \geq P(s_k)$, where $P(\cdot)$ is $\Gamma$'s ordinal potential function, and at least one of the previous inequalities is strict. But this is an impossibility since potential functions are single-valued. Therefore, all arcs within a member of the quotient set $S/_\sim$ are neutral, and, thus, ordinal potential games are necessarily ordinally acyclic.

Conversely, suppose $\Gamma$ is ordinally acyclic (that is, all arcs in each member of the quotient set $S/_\sim$ are neutral). We may then construct an ordinal potential function $P : S \rightarrow \mathbb{R}$ as follows: Consider the graph that results once we contract (using the standard graph-theoretic notion of a contraction in directed graphs) all neutral arcs in each member of $S/_\sim$. Call this the strict ordinal deployment graph, which is easy to prove that it is acyclic. Consider further a topological ordering of the strict ordinal deployment graph, say $(v_1, \ldots, v_K)$, and assign potential values to these vertices such that $P(v_K) > \cdots > P(v_1)$. Noting that each vertex $v_i, i =1, \ldots, K$ corresponds to a set of vertices of the ordinal deployment graph $G$, and assigning to each such vertex the potential value that has been assigned to the corresponding vertex of the strict ordinal deployment graph, it can be easily verified the potential function we have defined is an ordinal potential function.
\end{proof}

The previous theorem implies that all maximal states of an ordinal potential game are strongly maximal equilibria. Theorems \ref{existence} and \ref{opg} imply the following corollary.

\begin{corollary}
\label{existence_strong}
Every ordinal potential game is equipped with a strongly maximal equilibrium.
\end{corollary}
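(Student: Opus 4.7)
The plan is to combine Theorems \ref{existence} and \ref{opg} by exhibiting an explicit strongly maximal equilibrium class inside an ordinal potential game $\Gamma$. By Theorem \ref{existence} applied to $\Gamma$'s ordinal deployment graph $G$, there is a strongly maximal state $s^*$, that is, a maximal element of the drifting preference relation $\succeq$. Let $\sim$ be the symmetric part of $\succeq$ (so $s \sim s'$ precisely when deployment paths exist in both directions between $s$ and $s'$), and consider the equivalence class $[s^*]_\sim$. By construction the members of $[s^*]_\sim$ pairwise communicate, so to conclude that $[s^*]_\sim$ is a strongly maximal equilibrium class it suffices to show that every element of $[s^*]_\sim$ is a pure Nash equilibrium.

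I would argue this by contradiction. Suppose some $s \in [s^*]_\sim$ is not a pure Nash equilibrium; then some player has a strictly profitable unilateral deviation from $s$ to a state $s'$, which by definition produces a \emph{positive} arc $(s, s')$ of $G$. I would split into two cases. First, if $s' \in [s^*]_\sim$, then $G$ contains a positive arc \emph{inside} a member of the quotient set $S/_\sim$; but Theorem \ref{opg} guarantees that $G$ is ordinally acyclic whenever $\Gamma$ is an ordinal potential game, so all arcs within each member of $S/_\sim$ are neutral, yielding a contradiction. Second, if $s' \notin [s^*]_\sim$, then I would use transitivity of $\succeq$: from $s' \succeq s$ and $s \succeq s^*$ we get $s' \succeq s^*$, while $s^* \succeq s'$ would force $s' \sim s^*$ and hence $s' \in [s^*]_\sim$, a contradiction; therefore $s' \succ s^*$, contradicting the maximality of $s^*$. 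Either way we reach an impossibility, so every element of $[s^*]_\sim$ is a pure Nash equilibrium and $[s^*]_\sim$ is a strongly maximal equilibrium class, as desired.

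The main subtlety, rather than a true obstacle, is case~2: one has to be careful that communication between $s$ and $s^*$ together with a new positive arc out of $s$ really propagates to \emph{strict} dominance of $s'$ over $s^*$ in the drifting preorder, and this uses the precise definition of $\sim$ as the symmetric part of $\succeq$ rather than any finer notion. The remainder is essentially bookkeeping: invoke Theorem \ref{existence} for the existence of $s^*$, invoke Theorem \ref{opg} to rule out positive arcs inside $[s^*]_\sim$, and package the result in the language of strongly maximal equilibria.
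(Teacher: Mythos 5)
Your proof is correct and fills in exactly the reasoning the paper leaves implicit: the paper simply asserts that Theorem~\ref{opg} (ordinal acyclicity) together with Theorem~\ref{existence} yields the corollary, while you spell out the step that a maximal state's communication class cannot contain a positive arc (your Case~1) nor emit a positive arc to a state outside the class without contradicting maximality (your Case~2). The two cases could be merged---since $s^*$ is maximal, any arc out of $[s^*]_\sim$ from a member $s$ lands back in $[s^*]_\sim$ by transitivity and maximality, so Case~2 is vacuous and one always lands in Case~1---but the argument as written is valid and matches the paper's intended route.
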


Let us finally give an example of a generalized ordinal potential game that is not equipped with a strongly maximal equilibrium. Consider, to that end, the following game:
\begin{align*}
\left(\begin{array}{c c}
(1, 0) & (2, 0)\\
(2, 0) & (0, 1)\\
\end{array}\right).
\end{align*}
\cite{Potential} show that the previous game is not an ordinal potential game but it is a generalized ordinal potential game. Observe that the strategy profile $(2, 0)$ is the game's unique pure Nash equilibrium, however, it is easy to see that it is not strongly maximal.

\subsubsection{Plausibility of the strongly maximal equilibrium}

The Nash equilibrium in mixed strategies admits a characterization rendering it quite appealing in strategic environments where mixed strategies are a plausible modeling artifact. \cite{Norde} show that the mixed Nash equilibrium is characterized by three properties referred to as {\em existence}, {\em one-person rationality,} and {\em consistency}. It will become evident from the definition of one-person rationality and consistency that these are intuitive properties that {\em any} equilibrium concept should satisfy. (To some extent, the property of existence of an equilibrium, although desirable and pleasing, does not, in my opinion, affect the predictive value of a solution concept as argued by the abundance of non-equilibrium behavior in nature, biology, and society.) However, although the characterization of Norde et al. implies that any Nash equilibrium refinement must violate one of these properties within the entire class of normal form games, within narrower classes of games the set of mixed Nash equilibria is not necessarily {\em minimal} with respect to these properties. In particular, within the class of ordinal potential games, the mixed Nash equilibrium is known not to be minimal \citep{Peleg}. In the sequel, we show that in this class of games strongly maximal equilibria satisfy one-person rationality and consistency (existence is implied by Corollary \ref{existence_strong}).\\

Let $\mathcal{G}$ be a class of normal form games. A {\em solution concept} (or {\em solution rule}) in $\mathcal{G}$, $\phi : \mathcal{G} \rightarrow 2^{S}$, assigns to each normal form game $\Gamma \in \mathcal{G}$ a possibly empty subset of the profiles $S$ of $\Gamma$. A solution rule $\phi$ is said to have the {\em existence} property within the class $\mathcal{G}$ if, for all games $\Gamma \in \mathcal{G}$, there exists at least one profile selected by the rule, that is, $\phi(\Gamma) \neq \emptyset$. Let $\mathcal{G}_1$ be the class of one-person games within $\mathcal{G}$. A solution rule $\phi$ is said to have the {\em one-person rationality} property within the (broader) class $\mathcal{G}$ if, for all one-person games in $\mathcal{G}_1$, $\phi$ selects profiles whose payoff is maximum.

Let $\Gamma = (I, (S_i)_{i \in I}, (u_i)_{i \in I})$ be a normal form game within a class $\mathcal{G}$ of normal form games, let $s \in S$ where $S$ is the profile space of $\Gamma$, and let $J$ be a {\em proper subcoalition} of the player set $I$ of $\Gamma$, in that $J$ is a nonempty proper subset of the player set. The {\em reduced game} $\Gamma^{J, s} = (J, (S_j)_{j \in J}, (w_j)_{j \in J})$ of $\Gamma$ with respect to $J$ and $s$ is a normal form game whose player set is $J$. Each player $j \in J$ has the same strategy space $S_j$ as in $\Gamma$. Let $\mathcal{S} = \times_{j \in J} S_j$ be the profile space of the reduced game. The reduced game's payoffs are defined such that, for each profile $\sigma \in \mathcal{S}$, $w_j(\sigma) = u_j(\sigma, s_{-J})$. A solution rule is said to be {\em consistent} within a class $\mathcal{G}$ of normal form games, if, for all games $\Gamma \in \mathcal{G}$, all proper subcoalitions $J \subset I$ and all solutions $s^* \in \phi(\Gamma)$, we have that 
\begin{description}

\item[(i)] If players outside the subcoalition $J$ select strategies corresponding to $s^*$, then the resulting reduced game $\Gamma^{J, s^*}$ falls within the class $\mathcal{G}$.

\item[(ii)] The profile $s^*_J$ of the reduced game where players of the subcoalition $J$ also select strategies corresponding to $s^*$ is a solution of the reduced game  $\Gamma^{J, s^*}$.

\end{description}

In the proof of the following theorem, we need a lemma in whose statement and proof some definitions are in order. Given a (simple) path in the ordinal deployment graph, let us call the first vertex in the path that path's {\em tail} and the last vertex its {\em head}. Let us call a path in the ordinal deployment graph an {\em advancement path} if at least one of its respective arcs is a positive arc.

\begin{lemma}
\label{help_lemma}
Let $\Gamma$ be an ordinal potential game, let $S$ be its profile space, and let $\mathcal{W}$ be the set of all paths in $\Gamma$'s ordinal deployment graph. Then $s^* \in S$ is maximal if and only if there is no advancement path in $\mathcal{W}$ whose tail is $s^*$.
\end{lemma}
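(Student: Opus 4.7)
The plan is to prove both directions by invoking Theorem \ref{opg}, which identifies ordinal potential games with ordinal acyclicity of the deployment graph, together with the observation that neutral arcs come in pairs (from the definition of the ordinal deployment graph). The key reformulation to keep in mind is that $s \succeq s'$ in the drifting preorder exactly when there is a path from $s'$ to $s$ in the ordinal deployment graph, so $s^*$ fails to be maximal if and only if some $s$ admits a path from $s^*$ to $s$ while no path returns from $s$ to $s^*$.

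For the forward direction, I would argue by contraposition. Suppose there is an advancement path from $s^*$ to some state $s$, so $s \succeq s^*$. If $s^* \succeq s$ also held, then composing with the returning path would yield a cycle containing at least one positive arc, and all the vertices on this cycle would belong to the same member of the quotient set $S/_\sim$. By Theorem \ref{opg}, every arc within a member of $S/_\sim$ must be neutral, contradicting the presence of the positive arc. Hence $\neg(s^* \succeq s)$, so $s \succ s^*$, and $s^*$ is not maximal.

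For the reverse direction, suppose no advancement path starts at $s^*$, and assume for contradiction that $s^*$ is not maximal. Then there exists $s$ with $s \succ s^*$, meaning a path exists from $s^*$ to $s$ while no path exists from $s$ to $s^*$. Since the $s^* \to s$ path carries no positive arc by hypothesis, every arc along it is neutral; using that neutral arcs come in pairs (i.e., if $(v,w)$ is neutral then so is $(w,v)$), we can reverse the path arc by arc to construct a path from $s$ back to $s^*$. This yields $s^* \succeq s$, contradicting $s \succ s^*$, so $s^*$ must be maximal after all.

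The only delicate point is ensuring the cycle produced in the forward direction really lies in a single equivalence class of $S/_\sim$: this follows because every vertex on a directed cycle in the ordinal deployment graph communicates with every other vertex of the cycle, so all of them belong to the same member of the quotient set. Once that observation is in place, Theorem \ref{opg} closes the argument cleanly. The reverse direction depends crucially on the pairing property of neutral arcs, which is purely definitional but must be invoked explicitly; everything else is bookkeeping on paths in $G$.
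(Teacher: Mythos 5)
Your proof is correct and takes essentially the same route as the paper's: both directions hinge on exactly the two observations you isolate, namely that concatenating an advancement path with a returning path would produce a positive cycle (impossible in an ordinal potential game), and that a purely neutral path can be reversed arc by arc because neutral arcs come in pairs. You invoke Theorem~\ref{opg} explicitly and spell out why the cycle sits in a single equivalence class, while the paper argues these points more tersely from the potential function, but the underlying argument is the same.
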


\begin{proof}
Suppose that there exists such a path whose tail is $s^*$ and let $s$ be its head. Then clearly $s \succeq s^*$. We claim then that, in fact, $s \succ s^*$ (where $\succ$ is the strict part of $\succeq$), for if any path exists whose head is $s^*$ and tail is $s$, then concatenating the advancement path from $s^*$ to $s$ and the path from $s$ to $s^*$ gives a positive cycle, contradicting the assumption $\Gamma$ is an ordinal potential game.

Suppose now that $s^*$ is not maximal. Then there exists $s$ such that $s \succ s^*$. We claim then that, in fact, the path from $s^*$ to $s$ is an advancement path. For if it is not, then it must be a neutral path (that is, a path all of whose arcs are neutral), which would contradict that $s \succ s^*$.
\end{proof}

\begin{theorem}
\label{qwerporeiroierurux}
Let $O(\mathcal{G})$ be the class of ordinal potential games. Then the solution rule $\phi : O(\mathcal{G}) \rightarrow 2^{S}$ that selects for each $\Gamma \in O(\mathcal{G})$, the set of strongly maximal equilibria of $\Gamma$ satisfies existence, one person rationality, and consistency.
\end{theorem}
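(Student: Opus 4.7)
The plan is to verify the three properties in turn, leaning on Lemma~\ref{help_lemma} and Theorem~\ref{opg} throughout. Existence is immediate: Corollary~\ref{existence_strong} already asserts that every ordinal potential game carries a strongly maximal equilibrium, so $\phi(\Gamma) \neq \emptyset$ for all $\Gamma \in O(\mathcal{G})$.

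For one-person rationality, I would first observe that any one-person game trivially lies in $O(\mathcal{G})$, with its utility function $u$ serving as an ordinal potential. In such a game, the ordinal deployment graph has an arc $(s, s')$ precisely when $u(s') \geq u(s)$, so the members of the quotient set $S/_\sim$ are exactly the payoff-level sets, linearly ordered by payoff. The unique maximal member therefore consists of the payoff-maximizing strategies, and since, in an ordinal potential game, maximal states are strongly maximal equilibria, every element of $\phi(\Gamma)$ is payoff-maximizing, as required.

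For consistency, I would proceed in two steps. First, I would show that the reduced game $\Gamma^{J, s^*}$ remains ordinal potential: if $P : S \rightarrow \mathbb{R}$ is an ordinal potential for $\Gamma$, then $Q(\sigma) := P(\sigma, s^*_{-J})$ is an ordinal potential for $\Gamma^{J, s^*}$, because for every $j \in J$ the payoff difference $w_j(\sigma_j, \sigma_{-j}) - w_j(\sigma'_j, \sigma_{-j})$ in the reduced game agrees by definition with $u_j(\sigma_j, \sigma_{-j}, s^*_{-J}) - u_j(\sigma'_j, \sigma_{-j}, s^*_{-J})$ in $\Gamma$, and the ``if and only if'' relation defining an ordinal potential transfers. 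Second, I would argue that $s^*_J$ is a strongly maximal equilibrium of $\Gamma^{J, s^*}$ by a lifting argument: any path $(v_1, \ldots, v_k)$ in the reduced game's ordinal deployment graph beginning at $v_1 = s^*_J$ lifts to a path $(V_1, \ldots, V_k)$ with $V_i := (v_i, s^*_{-J})$ in $\Gamma$'s ordinal deployment graph, with $V_1 = s^*$ and with the sign of each arc (positive versus neutral) preserved. Were the original path an advancement path, so would be its lift, contradicting Lemma~\ref{help_lemma} applied to $s^*$ in $\Gamma$. Hence no advancement path in $\Gamma^{J, s^*}$ begins at $s^*_J$, which by Lemma~\ref{help_lemma} (now applied in the reduced game, which by the first step is also ordinal potential) makes $s^*_J$ maximal, and therefore a strongly maximal equilibrium of $\Gamma^{J, s^*}$.

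The main obstacle is essentially bookkeeping rather than conceptual: one must carefully justify that arcs of $\Gamma^{J, s^*}$ correspond faithfully to arcs of $\Gamma$ under the embedding $v \mapsto (v, s^*_{-J})$, preserving both the existence of the arc and its positivity or neutrality; this uses that deviations in the reduced game are, by construction, the unilateral deviations of players in $J \subset I$ with the complementary profile held at $s^*_{-J}$. Once this lifting is in place, Lemma~\ref{help_lemma} provides the equivalence between maximality and absence of an outgoing advancement path in either game, and the argument closes by transferring non-existence of such a path from $\Gamma$ to $\Gamma^{J, s^*}$.
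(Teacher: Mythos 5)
Your proof is correct and follows essentially the same strategy as the paper's: cite Corollary~\ref{existence_strong} for existence, argue one-person rationality from the structure of the drifting preference relation in a one-player game, and for consistency transfer the absence of an outgoing advancement path from $\Gamma$ to the reduced game via Lemma~\ref{help_lemma}. One small improvement worth noting is your treatment of why $\Gamma^{J,s^*}$ remains an ordinal potential game: the paper detours through Theorem~\ref{opg} (no positive cycles in a subgraph of an ordinally acyclic graph), whereas you simply restrict the ordinal potential $P$ to $Q(\sigma) := P(\sigma, s^*_{-J})$ and verify the defining biconditional directly, which is cleaner and avoids invoking the graph-theoretic characterization; your explicit lifting $v \mapsto (v, s^*_{-J})$ also makes precise the identification the paper elides when it calls the reduced game's deployment graph a ``subgraph'' of $\Gamma$'s.
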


\begin{proof}
Corollary \ref{existence_strong} implies existence. One-person rationality follows by the fact that the drifting preference relation (used in the definition of strongly maximal states) is {\em rational} in one-player games (that is, reflexive, transitive, and complete). To show consistency we follow a graph-theoretic argument the main idea of which is that since ordinal potential games are ordinally acyclic, they remain so if players outside of any subcoalition are restricted to a strongly maximal equilibrium.

More precisely, let $\Gamma \in O(\mathcal{G})$, let $J$ be a proper subcoalition of the player set of $\Gamma$, and let $s^* \in \phi(\Gamma)$. By Theorem~\ref{opg}, $\Gamma$'s ordinal deployment graph contains no positive cycles. Furthermore, the ordinal deployment graph of $\Gamma^{J, s^*}$ is a subgraph of $\Gamma$'s ordinal deployment graph. Therefore, $\Gamma^{J, s^*}$'s ordinal deployment graph contains no positive cycles either, which implies, also by Theorem~\ref{opg}, that $\Gamma^{J, s^*}$ is an ordinal potential game (that is, $\Gamma^{J, s^*} \in O(\mathcal{G})$). 

It remains to show that if $s^*$ is a strongly maximal equilibrium of $\Gamma$, then, for any proper subcoalition $J$ of players, $s^*_J$ is a strongly maximal equilibrium of $\Gamma^{J, s^*}$. Since $s^*$ is maximal in $\Gamma$ there is no advancement path in $\Gamma$'s ordinal deployment graph whose tail is $s^*$. But since the advancement graph of $\Gamma^{J, s^*}$ is a subgraph (using the standard definition as a subset of vertices and their incident arcs) of $\Gamma$'s advancement graph, the set of paths in $\Gamma^{J, s^*}$ is a subset of the set of paths in $\Gamma$. Therefore, there is no advancement path in $\Gamma^{J, s^*}$'s ordinal deployment graph whose tail is $s^*$ either, and, thus, Lemma \ref{help_lemma} implies $s^*_J$ is a strongly maximal equilibrium of $\Gamma^{J, s^*}$.
\end{proof}

\subsubsection{Weak acyclicity and weak ordinal acyclicity}

The definition of ``weak acyclicity'' (formulated in a study on the evolution of social conventions by \cite{Peyton2}) precedes the definition of the sink equilibrium, however, it can also be couched in sink-equilibrium (and, therefore, also evolutionary maximality) terms: A game in strategic form is weakly acyclic if and only if the only recurrent states of the Markov chain whose transition graph has arcs corresponding to strictly profitable unilateral deviations are pure Nash equilibria \citep{Fabrikant2}. Rephrasing this definition, a game is weakly acyclic if and only if weakly maximal states are pure Nash equilibria. Weak acyclicity bears keen relevance to a notion of {\em evolutionary dynamics}: In a weakly acyclic game, {\em better response dynamics,} that is, dynamics wherein players respond to the current state by unilaterally taking, in randomly selected turns, better responses, are ensured to converge to a weakly maximal equilibrium state (that is, a pure Nash equilibrium). The importance of weak acyclicity is further argued by related, in this vein, results on dynamics in strategic-form games, namely, that in weakly acyclic games {\em no-regret dynamics} are also guaranteed to converge to a pure Nash equilibrium \citep{Peyton2, MYAS}. In a manner analogous to the definition of weak acyclicity, we obtain the following definition that proves to be important in the subsequent analysis of coordination mechanisms:

\begin{definition}
A game in strategic form is called {\em weakly ordinally acyclic} if its strongly maximal states are necessarily strongly maximal equilibria.
\end{definition}

\section{Computational foundations of incremental deployability}
\label{computational_foundations}

We showed in the previous section that the variational (Nash) and sink equilibrium can be derived from the concept of incremental deployability and, thus, the latter is at least as general. The Nash equilibrium has received severe criticism as a foundation of strategic economic behavior owing to results in theoretical computer science suggesting that the problem of computing a Nash equilibrium is hard \citep{Daskalakis, CDT}. In this section, we ask: How can incremental deployability inform the debate concerning the complexity of equilibrium computation?

\subsection{Complexity of equilibrium computation in symmetric bimatrix games}

We focus in this section on continuous evolution spaces wherein equilibrium computation is {\bf PPAD}-hard. (Discrete evolution spaces face a higher barrier to cope with as a variety of computational tasks related to sink equilibrium computation become {\bf PSPACE}-hard \citep{Mirrokni}. We discuss discrete evolution spaces in more detail in the next section alongside with Internet routing.) \cite{CDT} show that finding a Nash equilibrium is a {\bf PPAD}-complete problem even in $2$-person (bimatrix) games and that an equilibrium fully polynomial time approximation scheme (FPTAS) in these games (under either of ``additive" or ``multiplicative" notions of payoff approximation for a Nash equilibrium) implies {\bf P = PPAD}. Drawing on this result, in Appendix \ref{symm_equilibrium_appendix}, we use a symmetrization method of \cite{Jurg} to show that an equilibrium FPTAS for an {\em symmetric equilibrium} in a symmetric bimatrix game continues to imply {\bf P=PPAD}.\footnote{An anonymous reviewer of a related paper that was submitted at the International Journal of Game Theory (IJGT), in a review that was communicated to me by co-editor Professor Bernhard von Stengel, in rejecting my submitted paper argues that a reduction from an $n \times n$ bimatrix game to a $2n \times 2n$ symmetric bimatrix game is a folklore result that implies the {\bf PPAD}-hardness of equilibrium approximation in symmetric bimatrix games, but says he or she was not able to trace this result in the literature (other than citing a related text authored by Constantinos Daskalakis, which doesn't prove the reviewer's assertion on the existence of such a reduction). Following \citep{PopulationGames}, consider a bimatrix game $(A, B)$ and write the pair of equilibrium conditions as
\begin{align*}
\left[ \begin{array}{c}
 P^* - P\\
 Q^* - Q\\
\end{array} \right] \left[ \begin{array}{cc}
 0  & A\\
 B^T  & 0\\
\end{array} \right]
\left[ \begin{array}{c}
 P^*\\
 Q^*\\
\end{array} \right] \geq
\left[ \begin{array}{c}
 0\\
 0\\
\end{array} \right].
\end{align*}
This is a linear variational inequality whose vector field is generated by a $2n \times 2n$ matrix, but the domain of this inequality is a Cartesian product of simplices and a symmetric bimatrix game cannot be obtained in this fashion.} This setting sets the stage for an important positive result on equilibrium computation we derive below.

\subsection{Computing a globally incrementally deployable strategy}

Toward the end of informing the debate on the complexity of computing a Nash equilibrium, the most natural question that naturally emerges in our framing is whether a {\em globally incrementally deployable Nash equilibrium} can be computed efficiently. To that end, note that we already showed that the Nash equilibrium is an {\em undominated} element of an order relation wherein a pair of states are related (in this order relation) if either state is incrementally deployable against the other. The notion of incremental deployability we used in this characterization is that of {\em evolutionary dominance}. That is, we showed that a state is a Nash equilibrium if and only if there is no other state that evolutionarily dominates it. We were, thus, able to characterize Nash equilibria through a (testable) property that they {\em lack}. In the sequel, we consider states that are {\em dominant} in that they evolutionarily dominate every other state. Such states are also equilibrium states, however, elementary intuition suggests that they may be easier to compute as they are defined by a testable property that is {\em conspicuously present} throughout the evolution space. To that end, we give the positive result that a {\em multiplicative weights dynamic} whose {\em learning rate parameter} follows a standard in convex optimization diminishing schedule converges to an evolutionarily dominant state (strategy) of a population game defined over the standard (probability) simplex.

\subsection{Multiplicative weights preliminaries}

Our equilibrium computation setting is, more generally, that of a population game over the probability simplex. Our multiplicative weights algorithm is Hedge \citep{FreundSchapire1, FreundSchapire2} that induces the following map in this setting:
\begin{align}
T_i(X) = X(i) \cdot \frac{\exp\left\{ \alpha E_i \cdot CX \right\}}{ \sum_{j=1}^n X(j) \exp \left\{ \alpha E_j \cdot CX \right\} }, \quad i = 1, \ldots, n.\label{main_exp}
\end{align}
$\alpha$ is called the {\em learning rate}. Let us summarize a few properties here proven in Appendix \ref{preliminary_properties}: First, $T$ never escapes the space of mixed strategies. Actually, if $T$ is started in the relative interior $\mathbb{\mathring{X}}(C)$, it cannot escape $\mathbb{\mathring{X}}(C)$ in a finite number of steps. Second, we characterize the fixed points of $T$ (that is, strategies $X$ such that $T(X) = X$, denoted by $FX(C)$) as strategies that are equalizers of their respective carrier games. This implies that all symmetric equilibrium strategies are fixed points. We also show that fixed points survive inversion of incentives (that is, the fixed points of $C$ coincide with those of $-C$) in contrast to equilibrium strategies that do not survive such inversion. Finally, we show Hedge is a {\em better response dynamic:} Unless $X$ is a fixed point, $(T(X) - X) \cdot CX > 0$. If $C$ is the gradient field of a potential, the better response property implies Hedge is for a small enough learning rate a {\em ascent algorithm} for the potential function (cf. \citep[Chapter 1.2]{Bertsekas}).

\subsection{The convexity lemma of multiplicative weights}

We first prove an important technical lemma used to obtain a wealth of convergence and divergence results in evolution under multiplicative weights. We refer to this result as the ``multiplicative weights convexity lemma'' and note its proof holds under an arbitrary nonlinear operator $C$. A different approach to analyze the dynamics of Hedge is to use {\em the Kantorovich inequality} (together with relative entropy as a potential), but our convexity lemma yields a tighter analysis.

\subsubsection{Relative entropy (or Kullback-Leibler divergence)}

Our analysis of Hedges relies on the relative entropy function between probability distributions (also called {\em Kullback-Leibler divergence}). The relative entropy between the $n \times 1$ probability vectors $P > 0$ (that is, for all $i = 1, \ldots, n$, $P(i) > 0$) and $Q > 0$ is given by 
\begin{align*}
RE(P, Q) \doteq \sum_{i=1}^n P(i) \ln \frac{P(i)}{Q(i)}.
\end{align*}
However, this definition can be relaxed: The relative entropy between $n \times 1$ probability vectors $P$ and $Q$ such that, given $P$, for all $Q \in \{ \mathcal{Q} \in \mathbb{X} | \mathcal{C}(P) \subset \mathcal{C}(\mathcal{Q}) \}$, is
\begin{align*}
RE(P, Q) \doteq \sum_{i \in \mathcal{C}(P)} P(i) \ln \frac{P(i)}{Q(i)}.
\end{align*}
We note the well-known properties of the relative entropy \cite[p.96]{Weibull} that {\em (i)} $RE(P, Q) \geq 0$, {\em (ii)} $RE(P, Q) \geq \| P - Q \|^2$, where $\| \cdot \|$ is the Euclidean distance, {\em (iii)} $RE(P, P) = 0$, and {\em (iv)} $RE(P, Q) = 0$ iff $P = Q$. Note {\em (i)} follows from {\em (ii)} and {\em (iv)} follows from {\em (ii)} and {\em (iii)}.

\subsubsection{The convexity lemma}

The following lemma generalizes \cite[Lemma 2]{FreundSchapire2}.

\begin{lemma}
\label{convexity_lemma}
Let $T$ be as in \eqref{main_exp}. Then
\begin{align*}
\forall X \in \mathbb{\mathring{X}}(C) \mbox{ } \forall Y \in \mathbb{X}(C) : RE(Y, T(X)) \mbox{ is a convex function of }\alpha.
\end{align*}
Furthermore, unless $X$ is a fixed point, $RE(Y, T(X))$ is a strictly convex function of $\alpha$.
\end{lemma}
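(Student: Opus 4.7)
The plan is to expand $RE(Y,T(X))$ explicitly using the definition of $T$ and isolate the dependence on $\alpha$, at which point convexity reduces to the standard convexity of a log-sum-exp.

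First, I would write
\begin{align*}
\ln T_i(X) \;=\; \ln X(i) \;+\; \alpha\, E_i\cdot CX \;-\; \ln\!\Bigl(\sum_{j=1}^n X(j)\exp\{\alpha\, E_j\cdot CX\}\Bigr),
\end{align*}
which is well-defined when $X \in \mathbb{\mathring X}(C)$ because then $X(i)>0$ for every $i$. Substituting into $RE(Y,T(X)) = \sum_i Y(i)\ln Y(i) - \sum_i Y(i)\ln T_i(X)$ and using $\sum_i Y(i)=1$, I would obtain
\begin{align*}
RE(Y,T(X)) \;=\; \underbrace{\sum_i Y(i)\ln\frac{Y(i)}{X(i)}}_{\text{independent of }\alpha} \;-\; \alpha\, (Y\cdot CX) \;+\; \Phi(\alpha),
\end{align*}
where $\Phi(\alpha) := \ln\sum_{j=1}^n X(j)\exp\{\alpha\, E_j\cdot CX\}$. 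Since the first two summands are affine in $\alpha$, convexity in $\alpha$ reduces to convexity of $\Phi$.

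Next I would verify convexity of $\Phi$ by direct differentiation. A short calculation gives
\begin{align*}
\Phi'(\alpha) \;=\; \sum_{j=1}^n T_j(X)\,(E_j\cdot CX) \;=\; T(X)\cdot CX,
\end{align*}
and
\begin{align*}
\Phi''(\alpha) \;=\; \sum_{j=1}^n T_j(X)\,(E_j\cdot CX)^2 \;-\; \Bigl(\sum_{j=1}^n T_j(X)\,(E_j\cdot CX)\Bigr)^{\!2},
\end{align*}
which is the variance of the random variable $j\mapsto E_j\cdot CX$ under the probability distribution $T(X)$, hence $\Phi''(\alpha)\ge 0$ by Jensen's inequality. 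This already proves convexity.

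For the strict version, I would argue that since $X$ is interior, $T(X)$ has full support (as established in the preliminary properties of $T$), so the above variance vanishes if and only if all the quantities $E_j\cdot CX$, for $j=1,\dots,n$, are equal, i.e., $CX$ is a constant vector on the support of $X$. By the characterization of $FX(C)$ cited in the preliminaries (fixed points are exactly the equalizers of their carrier games), this is precisely the condition that $X$ be a fixed point of $T$. Hence whenever $X$ is interior and not a fixed point, $\Phi''(\alpha)>0$ for every $\alpha$, giving strict convexity of $RE(Y,T(X))$ in $\alpha$. I do not anticipate a real obstacle here; the only subtlety is being careful that interiority of $X$ is used both to ensure $\ln X(i)$ is finite and to equate ``variance zero'' with ``equalizer on the full pure-strategy set'', thereby matching the fixed-point characterization.
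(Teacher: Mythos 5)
Your proof is correct and arrives at the same essential calculation as the paper's, but you organize it more transparently. The paper's proof (Appendix C) directly differentiates $RE(Y,T(X))$ twice, obtaining a ratio-of-sums expression for the second derivative and then invoking Jensen's inequality to show it is nonnegative. You instead peel off the $\alpha$-affine part $\sum_i Y(i)\ln\frac{Y(i)}{X(i)} - \alpha\,(Y\cdot CX)$ first, reducing the claim to convexity of the log-sum-exp $\Phi(\alpha)=\ln\sum_j X(j)\exp\{\alpha\,(CX)_j\}$, i.e.\ the cumulant generating function of the payoff $(CX)_j$ under the base measure $X$. That immediately gives $\Phi''(\alpha)$ as the variance of the payoff under the tilted distribution $T(X)$ --- and one checks that this variance is algebraically identical to the second-derivative expression the paper writes out in raw form. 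What your route buys is clarity: convexity becomes a standard fact about log-sum-exp, and the strict-convexity criterion (variance vanishes iff $CX$ is constant on $\mathcal{C}(X)$, which for interior $X$ is the fixed-point characterization of Lemma~\ref{fixed_points_Hedge}) is handled cleanly. What the paper's more pedestrian derivation buys is that it produces the explicit formula~\eqref{valentine} for the first derivative, which is reused downstream (e.g.\ in the proofs of Lemmas~\ref{diamonds} and~\ref{cool_hedge_2} and in the secant-inequality argument for Lemma~\ref{convexity_lemma_2}); your $\Phi'(\alpha)=T(X)\cdot CX$ gives the same thing once you add back the $-Y\cdot CX$ from the affine term, so nothing is lost, but it is worth being aware that the paper's presentation is shaped by these later uses.
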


\begin{proof}
The proof uses calculus---see Appendix \ref{convexity_lemma_proofs}.
\end{proof}

We first use the convexity lemma to prove stability and instability lemmas (bearing relevance to {\em Lyapunov theory}) that although are rather tangential to the main thread of this paper, proving them was instrumental to eventually obtain our results. The crucial to our thread result in this section is Lemma \ref{convexity_lemma_2} as it directly facilitates analyzing Hedge using techniques from convex analysis.

\subsubsection{The stability lemmas}

The next lemmas render Hedge amenable to analysis using Lyapunov theory. These results looked promising at the early stages of our work, but note that our efforts to directly apply Lyapunov's stability theorem or the LaSalle invariance principle to show asymptotic stability of an ESS or GESS in evolution under multiplicative weights (in a fashion analogous to how corresponding results are obtained for the aforementioned replicator dynamic) proved resistant to obtain convergence.

\begin{lemma}
\label{diamonds}
Let $Y, X \in \mathbb{X}(C)$ such that $\mathcal{C}(Y) \subset \mathcal{C}(X)$ and such that $(Y - X) \cdot CX > 0$. Assume $X \not\in FX(C)$. Then there exists $\bar{\alpha} > 0$ (which may depend on $Y$ and $X$) such that
\begin{align*}
\forall \alpha \in (0, \bar{\alpha}) : RE(Y, T(X)) < RE(Y, X).
\end{align*}
\end{lemma}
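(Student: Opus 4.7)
Let $g(\alpha) \doteq RE(Y, T(X))$, viewed as a function of the learning rate $\alpha \ge 0$. Since $T_i(X) = X(i)\,e^{\alpha E_i \cdot CX}/Z(\alpha)$ with $Z(\alpha) = \sum_j X(j) e^{\alpha E_j \cdot CX}$, the carrier of $T(X)$ equals $\mathcal{C}(X)$ for every $\alpha$, so the hypothesis $\mathcal{C}(Y) \subset \mathcal{C}(X)$ guarantees $g$ is well-defined and smooth in $\alpha$. At $\alpha = 0$ one has $T(X) = X$ and hence $g(0) = RE(Y,X)$. The strategy is to show $g'(0) < 0$ and then conclude by continuity of $g'$ that $g(\alpha) < g(0)$ throughout some interval $(0, \bar\alpha)$.

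To compute $g'(0)$, expand $\ln T_i(X) = \ln X(i) + \alpha E_i \cdot CX - \ln Z(\alpha)$ and sum over $i \in \mathcal{C}(Y)$, yielding
\begin{align*}
g(\alpha) \;=\; RE(Y,X) \;-\; \alpha\, (Y \cdot CX) \;+\; \ln Z(\alpha).
\end{align*}
Since $Z(0) = 1$ and $Z'(0) = \sum_j X(j)\, E_j \cdot CX = X \cdot CX$, differentiation gives
\begin{align*}
g'(0) \;=\; -Y \cdot CX + X \cdot CX \;=\; -(Y - X) \cdot CX.
\end{align*}
By the hypothesis $(Y - X) \cdot CX > 0$, this is strictly negative.

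Since $g$ is a smooth function of $\alpha$, its derivative is continuous, so there exists $\bar\alpha > 0$ such that $g'(\alpha) < 0$ for every $\alpha \in [0, \bar\alpha)$, and therefore $g$ is strictly decreasing on $(0, \bar\alpha)$. This produces the desired conclusion $RE(Y, T(X)) = g(\alpha) < g(0) = RE(Y, X)$ for all $\alpha \in (0, \bar\alpha)$. The hypothesis $X \not\in FX(C)$ is in fact automatic here, because at a fixed point $X$ is an equalizer of its own carrier game, giving $E_i \cdot CX = X \cdot CX$ for every $i \in \mathcal{C}(X) \supseteq \mathcal{C}(Y)$ and hence $(Y - X) \cdot CX = 0$, contradicting the strict inequality assumed. (One could alternatively invoke Lemma~\ref{convexity_lemma} to obtain strict convexity of $g$ in $\alpha$ and conclude from $g'(0) < 0$ together with strict convexity, but the first-order argument above is sufficient and more direct.)

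\textbf{Main obstacle.} The only nontrivial point is bookkeeping around supports: one must verify that $RE(Y, T(X))$ stays well-defined as $\alpha$ varies and that the derivative of the partition-function term $\ln Z(\alpha)$ interchanges correctly with the sum restricted to $\mathcal{C}(Y)$. Both are immediate from the multiplicative form of Hedge, so the proof reduces to the short derivative computation outlined above.
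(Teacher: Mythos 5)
Your proof is correct, and it takes a slightly different route from the paper's. Both arguments compute the derivative of $g(\alpha) = RE(Y,T(X))$ at $\alpha = 0$ and obtain $g'(0) = (X-Y)\cdot CX < 0$; you do this by the clean decomposition $g(\alpha) = RE(Y,X) - \alpha\,Y\cdot CX + \ln Z(\alpha)$, while the paper quotes the derivative formula established in the proof of Lemma~\ref{convexity_lemma}. The difference is in how the negative initial slope is propagated: you invoke only continuity of $g'$ to obtain $g' < 0$ on a short interval $[0,\bar\alpha)$, hence strict monotone decrease of $g$ there; the paper instead uses the convexity of $g$ (Lemma~\ref{convexity_lemma}) to argue that either $g < g(0)$ for all $\alpha > 0$, or $g$ re-crosses the level $g(0)$ at some $\bar\alpha$ and is strictly below it on $(0,\bar\alpha)$. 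Your first-order argument is more elementary and does not need the convexity lemma at all; the paper's convexity-based argument buys a sharper dichotomy (the maximal $\bar\alpha$ is explicitly the first, and only, crossing point, after which the inequality reverses for good). You also add a nice observation the paper omits: the hypothesis $X \notin FX(C)$ is actually implied by $(Y-X)\cdot CX > 0$ together with $\mathcal{C}(Y)\subset\mathcal{C}(X)$, by the equalizer characterization of fixed points in Lemma~\ref{fixed_points_Hedge}.
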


Under an additional assumption we can show an even stronger property:

\begin{lemma}
\label{cool_hedge_1}
Let $X \in \mathbb{\mathring{X}}(C)$, assume $X \not\in FX(C)$ and let $Y$ be a best response to $X$. Then
\begin{align*}
\forall \alpha > 0 : RE(Y, T(X)) < RE(Y, X).
\end{align*}
\end{lemma}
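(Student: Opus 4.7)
The plan is to bypass the convexity lemma and compute $RE(Y,T(X)) - RE(Y,X)$ explicitly, then exploit the best-response structure to obtain a strict bound that is valid for every $\alpha>0$. First I would rewrite the Hedge update in log form: for each $i$,
\begin{align*}
\ln T_i(X) \;=\; \ln X(i) + \alpha\,E_i\!\cdot\!CX - \ln Z(\alpha),
\qquad Z(\alpha) \;\doteq\; \sum_{j=1}^n X(j)\,\exp\{\alpha\,E_j\!\cdot\!CX\}.
\end{align*}
Substituting this into the definition of relative entropy and using $\sum_i Y(i)=1$ gives the clean identity
\begin{align*}
RE(Y,T(X)) - RE(Y,X) \;=\; -\alpha\,Y\!\cdot\!CX + \ln Z(\alpha).
\end{align*}
So the lemma reduces to showing $\ln Z(\alpha) < \alpha\,Y\!\cdot\!CX$ for every $\alpha>0$.

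Second, I would invoke the best-response hypothesis. Setting $v^\star \doteq \max_{j} E_j\!\cdot\!CX$, any best response $Y$ to $X$ satisfies $Y\!\cdot\!CX = v^\star$ and is supported on pure strategies that achieve $v^\star$; all other pure strategies achieve at most $v^\star$. Consequently each summand of $Z(\alpha)$ is bounded by $X(j)\exp\{\alpha v^\star\}$, and summing against $\sum_j X(j)=1$ yields $Z(\alpha)\le \exp\{\alpha v^\star\}$.

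Third, I would promote this to a strict inequality using the two hypotheses $X\in\mathbb{\mathring{X}}(C)$ and $X\notin FX(C)$. The preliminaries characterize fixed points of Hedge as equalizers of their carrier game; since $X$ is interior its carrier is the whole game, so $X$ failing to be a fixed point means $X$ is not an equalizer. Hence there exists a pure strategy $j_0$ with $E_{j_0}\!\cdot\!CX < v^\star$, and because $X$ is interior we have $X(j_0)>0$. The $j_0$-term of $Z(\alpha)$ is then strictly smaller than $X(j_0)\exp\{\alpha v^\star\}$ for every $\alpha>0$, so $Z(\alpha) < \exp\{\alpha v^\star\}$. Taking logs gives $\ln Z(\alpha) < \alpha v^\star = \alpha\,Y\!\cdot\!CX$, which is exactly the desired strict inequality.

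The only subtle point (and the one I would double-check carefully) is the passage from ``$Y$ is a best response'' to ``$Y$ is supported on pure best responses with $Y\!\cdot\!CX=v^\star$''; this is standard but essential for the per-term bound. An alternative route would use Lemma~\ref{convexity_lemma} together with the computation $g'(0) = X\!\cdot\!CX - Y\!\cdot\!CX < 0$ and the finite limit $\lim_{\alpha\to\infty} g(\alpha) = RE(Y,X) + \ln\sum_{j\in B}X(j) < RE(Y,X)$ (where $B$ is the set of pure best responses), after which strict convexity forces $g$ to be strictly decreasing on $[0,\infty)$; but the direct calculation above is sharper and sidesteps a convex-analysis argument about limits of convex functions.
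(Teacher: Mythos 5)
Your proof is correct and takes essentially the same route as the paper: both derive the identity $RE(Y,T(X))-RE(Y,X)=\ln Z(\alpha)-\alpha\,Y\!\cdot\!CX$, use the best-response condition to identify $Y\!\cdot\!CX$ with $\max_j (CX)_j$, and then exploit interiority plus $X\notin FX(C)$ to make one summand of $Z(\alpha)$ strictly smaller. The only (cosmetic) difference is that the paper inserts the extra bound $\ln y\le y-1$ before obtaining strictness, whereas you simply take the logarithm of $Z(\alpha)<\exp\{\alpha\,Y\!\cdot\!CX\}$ directly, which is slightly more economical.
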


The proofs of Lemmas \ref{diamonds} and \ref{cool_hedge_1} are in Appendix \ref{convexity_lemma_proofs}.

\subsubsection{An instability lemma}

The following lemma is crucial in deriving divergence results on multiplicative weights in general. We can prove it in two ways, one invoking the aforementioned convexity lemma and the other by simply invoking Jensen's inequality. In the appendix, we show both proofs.

\begin{lemma}
\label{cool_hedge_2}
Let $Y, X \in \mathbb{X}(C)$ such that $\mathcal{C}(Y) \subseteq \mathcal{C}(X)$ and such that $Y \neq X$. If $X$ is not a fixed point, then
\begin{align*}
X \cdot CX - Y \cdot CX \geq 0 \Rightarrow \forall \alpha > 0 : RE(Y, T(X)) - RE(Y, X) > 0.
\end{align*}
\end{lemma}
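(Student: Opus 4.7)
My plan is to obtain the lemma directly by rewriting $RE(Y, T(X)) - RE(Y, X)$ in closed form using the exponential structure of Hedge, and then extracting strict positivity from Jensen's inequality applied to the exponential. The support hypothesis $\mathcal{C}(Y) \subseteq \mathcal{C}(X)$ is needed only to make both relative entropies well-defined (note $\mathcal{C}(T(X)) = \mathcal{C}(X)$, since $T$ preserves the support of interior coordinates).

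The first step is algebraic: taking logarithms in \eqref{main_exp}, one has $\ln T_i(X) = \ln X(i) + \alpha\, E_i \cdot CX - \ln Z(\alpha)$, where $Z(\alpha) = \sum_{j \in \mathcal{C}(X)} X(j)\exp(\alpha E_j \cdot CX)$. Summing against $Y$ on $\mathcal{C}(Y)$ and using $\sum_i Y(i) = 1$ yields the compact identity
\begin{align*}
\Delta(\alpha) \;\doteq\; RE(Y, T(X)) - RE(Y, X) \;=\; -\alpha\, Y \cdot CX + \ln Z(\alpha).
\end{align*}
In particular, $\Delta(0) = 0$, so it is enough to show $\Delta(\alpha) > 0$ for every $\alpha > 0$.

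The second step is Jensen's inequality for the strictly convex function $\exp$ applied under the probability distribution $X$ restricted to its carrier:
\begin{align*}
Z(\alpha) \;=\; \sum_{j \in \mathcal{C}(X)} X(j) \exp(\alpha E_j \cdot CX) \;\geq\; \exp\!\left(\alpha \sum_{j \in \mathcal{C}(X)} X(j) E_j \cdot CX\right) \;=\; \exp(\alpha X \cdot CX),
\end{align*}
so that $\ln Z(\alpha) \geq \alpha X \cdot CX$. Combined with the hypothesis $X \cdot CX - Y \cdot CX \geq 0$, this already gives $\Delta(\alpha) \geq 0$; the remaining task is to upgrade at least one of these inequalities to a strict one for $\alpha > 0$.

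This is where the hypothesis that $X$ is not a fixed point enters, and it is the only delicate point. Jensen's inequality is strict precisely when the values $E_j \cdot CX$ are not all equal as $j$ ranges over $\mathcal{C}(X)$. But equality across the carrier is exactly the condition that $X$ is an equalizer of its carrier game, which by the preliminary characterization of $FX(C)$ (referenced in the discussion preceding the lemma and proved in Appendix \ref{preliminary_properties}) is equivalent to $X \in FX(C)$. Since by assumption $X \notin FX(C)$, Jensen is strict for every $\alpha > 0$, so $\ln Z(\alpha) > \alpha X \cdot CX$, and consequently $\Delta(\alpha) > \alpha(X \cdot CX - Y \cdot CX) \geq 0$, which is the desired conclusion. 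An equivalent route, which I would mention as the alternative proof promised in the statement, is to invoke Lemma \ref{convexity_lemma} directly: $\Delta$ is strictly convex in $\alpha$ (again using $X \notin FX(C)$), its derivative at $0$ equals $X \cdot CX - Y \cdot CX \geq 0$, and a strictly convex function with nonnegative derivative at the origin is strictly increasing on $[0,\infty)$, so $\Delta(\alpha) > \Delta(0) = 0$ for all $\alpha > 0$.
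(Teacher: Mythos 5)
Your proof is correct and follows essentially the same two routes the paper itself gives: the main argument reproduces the paper's first proof (reduce $\Delta(\alpha)$ to $\ln Z(\alpha) - \alpha\,Y\cdot CX$, apply Jensen to the strictly convex exponential under the distribution $X$ on its carrier, and use the fixed-point characterization to get strict inequality), and your alternative is precisely the paper's second proof via Lemma~\ref{convexity_lemma} and the derivative at $\alpha = 0$. No gaps.
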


\begin{proof}
See Appendix \ref{convexity_lemma_proofs}.
\end{proof}

Using the previous lemma, we can show using Lyapunov (in)stability theory that the uniform strategy in rock-paper-scissors is {\em repelling} under Hedge. However, using the techniques we develop in this paper, we can show that the empirical average of the strategies generated by Hedge converges to the equilibrium of rock-paper-scissors (and every symmetric zero-sum game).

\subsection{A version of the convexity lemma}

We use the following ``secant inequality'' for a convex function $F(\cdot)$ and its derivative $F'(\cdot)$:
\begin{align*}
\forall \mbox{ } b > a : F'(a) \leq \frac{F(b) - F(a)}{b - a} \leq F'(b).
\end{align*}

The following lemma is an analogue of \cite[Lemma 8.2.1, p. 471]{ConvexAnalysis}.

\begin{lemma}
\label{convexity_lemma_2}
Let $C \in \mathbb{\hat{C}}$. Then, for all $Y \in \mathbb{X}(C)$ and for all $X \in \mathbb{\mathring{X}}(C)$, we have that
\begin{align*}
\forall \alpha > 0 : RE(Y, T(X)) \leq RE(Y, X) - \alpha (Y-X) \cdot CX + \alpha (\exp\{\alpha\} - 1) \bar{C},
\end{align*}
where $\bar{C} > 0$ can be chosen independent of $X$ and $Y$.
\end{lemma}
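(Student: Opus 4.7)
The plan is to parametrize the quantity $g(\alpha) \doteq RE(Y, T(X))$ as a function of the learning rate and then exploit its convexity (Lemma~\ref{convexity_lemma}) via the stated secant inequality, with $a=0$ and $b=\alpha$. Since at $\alpha=0$ the map $T$ reduces to the identity, we have $g(0) = RE(Y,X)$, and the secant inequality gives
\begin{align*}
g(\alpha) - g(0) \leq \alpha\, g'(\alpha),
\end{align*}
so it suffices to control $g'(\alpha)$ from above by $-(Y-X)\cdot CX + (e^{\alpha}-1)\bar C$.

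First I would compute $g(\alpha)$ explicitly. Writing $Z(\alpha) = \sum_j X(j)\exp\{\alpha E_j\cdot CX\}$ and using $\ln T_i(X) = \ln X(i) + \alpha\,E_i\cdot CX - \ln Z(\alpha)$, one obtains
\begin{align*}
g(\alpha) \;=\; RE(Y,X) \;-\; \alpha\, Y\cdot CX \;+\; \ln Z(\alpha).
\end{align*}
Differentiating,
\begin{align*}
g'(\alpha) \;=\; -\,Y\cdot CX \;+\; \frac{Z'(\alpha)}{Z(\alpha)},\qquad Z'(\alpha)=\sum_j X(j)\,(E_j\cdot CX)\,\exp\{\alpha E_j\cdot CX\}.
\end{align*}

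Next, invoking $C\in\mathbb{\hat C}$, every pure payoff satisfies $0\le E_j\cdot CX\le 1$, so $\exp\{\alpha E_j\cdot CX\}\le e^{\alpha}$ and $Z(\alpha)\ge 1$. These together yield
\begin{align*}
\frac{Z'(\alpha)}{Z(\alpha)} \;\le\; e^{\alpha}\, X\cdot CX,
\end{align*}
and hence $g'(\alpha) \le -(Y-X)\cdot CX + (e^{\alpha}-1)\,X\cdot CX$. Using once more that $X\cdot CX \le 1$, we may absorb the last factor into a constant $\bar C$ (one can simply take $\bar C = 1$, or more generally any uniform bound on $|X\cdot CX|$ over $\mathbb{X}(C)$, which is available since $C$ is continuous on a compact set). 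Combining with the secant inequality,
\begin{align*}
RE(Y,T(X)) \;\le\; RE(Y,X) \;-\; \alpha\,(Y-X)\cdot CX \;+\; \alpha(e^{\alpha}-1)\bar C,
\end{align*}
as required.

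The only genuinely delicate step is engineering the right-hand side to have the exact form $\alpha(e^{\alpha}-1)\bar C$ rather than the slightly tighter quantity $(e^\alpha-1-\alpha)\bar C$ that a naive Taylor expansion of $\ln Z(\alpha)$ would produce; routing the argument through the secant inequality applied to $g$ and estimating $g'(\alpha)$ directly (rather than estimating $\ln Z$) is what produces the prefactor $\alpha$ naturally. Everything else—the computation of $g(\alpha)$ and $g'(\alpha)$, the convexity bound $\exp\{\alpha c\}\le e^{\alpha}$ for $c\in[0,1]$, and the uniform bound on $X\cdot CX$—is routine, with the convexity of $g$ (guaranteed by Lemma~\ref{convexity_lemma}) doing the essential work.
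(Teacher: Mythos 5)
Your proof is correct, and the skeleton is the same as the paper's: both arguments apply the secant inequality to the (by Lemma~\ref{convexity_lemma}) convex map $\alpha \mapsto RE(Y, T(X))$ and then bound the derivative $\frac{d}{d\alpha}RE(Y, T(X)) = Z'(\alpha)/Z(\alpha) - Y\cdot CX$ from above. Where you diverge is in how that derivative is bounded. The paper first applies Jensen's inequality to replace $Z(\alpha)$ in the denominator by $\exp\{\alpha X\cdot CX\}$, then applies the Freund--Schapire inequality $\exp\{\alpha x\}\le 1 + (e^\alpha - 1)x$, $x\in[0,1]$, to the factors $\exp\{\alpha (CX)_j\}$ in $Z'(\alpha)$, arriving at $\bar C = \max\bigl\{\sum_j X(j)(CX)_j^2\bigr\}$. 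You instead use the cruder but more elementary pair of bounds $\exp\{\alpha(CX)_j\}\le e^\alpha$ in the numerator and $Z(\alpha)\ge 1$ in the denominator, both immediate from $C\in\hat{\mathbb{C}}$, which gives $Z'(\alpha)/Z(\alpha)\le e^\alpha\, X\cdot CX$ and hence $\bar C = 1$. Since $\sum_j X(j)(CX)_j^2 \le X\cdot CX \le 1$, the paper's constant is at least as small as yours, so its bound is tighter, but the lemma only asks for \emph{some} $\bar C>0$ independent of $X$ and $Y$, so both choices are valid. What you buy with your route is brevity: you avoid Jensen entirely and need none of the machinery of the Freund--Schapire inequality. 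Your closing remark about why $(e^\alpha - 1 - \alpha)\bar C$ would arise from bounding $\ln Z(\alpha)$ directly is an accurate observation, though tangential; both forms would serve equally well in the downstream convergence argument of Theorem~\ref{yellow_gess}.
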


\begin{proof}
Since, by Lemma \ref{convexity_lemma}, $RE(Y, T(X)) - RE(Y, X)$ is a convex function of $\alpha$, we have by the aforementioned secant inequality that, for $\alpha > 0$,
\begin{align}
RE(Y, T(X)) - RE(Y, X) \leq \alpha \left( RE(Y, T(X)) - RE(Y, X) \right)' = \alpha \cdot \frac{d}{d \alpha} RE(Y, T(X)).\label{ooone}
\end{align}
Straight calculus (cf. Lemma \ref{convexity_lemma}) implies that
\begin{align*}
\frac{d}{d \alpha} RE(Y, T(X)) = \frac{\sum_{j = 1}^n X(j) (CX)_j \exp\{ \alpha (CX)_j \}}{\sum_{j = 1}^n X(j) \exp\{ \alpha (CX)_j \}} - Y \cdot CX.
\end{align*}
Using Jensen's inequality in the previous expression, we obtain 
\begin{align}
\frac{d}{d \alpha} RE(Y, T(X)) \leq \frac{\sum_{j = 1}^n X(j) (CX)_j \exp\{ \alpha (CX)_j \}}{\exp\{ \alpha X \cdot CX \}} - Y \cdot CX.\label{vbvbvb}
\end{align}
Note now that
\begin{align}
\exp\{ \alpha x \} \leq 1 + (\exp\{ \alpha \} - 1) x, x \in [0, 1],\label{freund_schapire_in}
\end{align}
an inequality used in \cite[Lemma 2]{FreundSchapire2}. Using $C \in \mathbb{\hat{C}}$, \eqref{vbvbvb} and \eqref{freund_schapire_in} imply that
\begin{align*}
\frac{d}{d \alpha} RE(Y, T(X)) \leq \frac{X \cdot CX}{\exp\{ \alpha X \cdot CX \}} - Y \cdot CX + \left( \exp\{ \alpha \} - 1 \right) \frac{\sum_{j=1}^n X(j) (CX)_j^2}{\exp\{ \alpha X \cdot CX \}}
\end{align*}
and since $\exp\{\alpha X \cdot CX\} \geq 1$ (again by the assumption that $C \in \mathbb{\hat{C}}$), we have
\begin{align*}
\frac{d}{d \alpha} RE(Y, T(X)) \leq X \cdot CX - Y \cdot CX + \left( \exp\{ \alpha \} - 1 \right) \sum_{j=1}^n X(j) (CX)_j^2.
\end{align*}
Choosing $\bar{C} = \max \left\{\sum X(j) (CX)_j^2 \right\}$ and combining with \eqref{ooone} yields the lemma.
\end{proof}

\subsection{Asymptotic convergence to an evolutionarily dominant equilibrium}

In the next theorem we show that under the technical conditions on the learning rate given below, Hedge asymptotically converges to the EDset (in a single epoch), as shown in the following theorem. We note $C$ is a general (continuous) nonlinear operator.

\begin{theorem}
\label{yellow_gess}
Let $C \in \mathbb{\hat{C}}$ and assume $C$ is equipped with an EDset, say, $\mathbb{X}^*$. Furthermore, let $X^k \equiv T^k(X^0)$, where $X^0 \in \mathbb{\mathring{X}}(C)$ and assume the learning rate $\alpha_k > 0$ is chosen such that
\begin{align}
\lim_{k \rightarrow \infty} \alpha_k = 0 \mbox{ and } \sum_{k = 0}^{\infty} \alpha_k = +\infty.\label{lrate_assumption}
\end{align}
If every strategy in $\mathbb{X}^*$ is an EDS (for example, if the EDset consists of a singleton GESS or if the operator $C$ is monotone), then the sequence of iterates $\{ X^k \}$ converges to an EDS.
\end{theorem}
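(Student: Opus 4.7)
The plan is to take relative entropy $RE(\hat X, \cdot)$ to any fixed $\hat X \in \mathbb X^*$ as a Lyapunov potential, with Lemma~\ref{convexity_lemma_2} supplying the one-step bound
\[
RE(\hat X, X^{k+1}) \leq RE(\hat X, X^k) \;-\; \alpha_k (\hat X - X^k)\cdot CX^k \;+\; \alpha_k\bigl(\exp\{\alpha_k\}-1\bigr)\bar C.
\]
Because $\hat X$ is an EDS, the drift term $(\hat X - X^k)\cdot CX^k$ is strictly positive whenever $X^k \notin \mathbb X^* = NE^+(C)$; at symmetric equilibria the Nash inequality forces $(\hat X - X^k)\cdot CX^k \leq 0$, while the weak superiority of an EDS over other symmetric equilibria forces $\geq 0$, so this term vanishes there. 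In particular the drift is always non-negative, and the error term is of order $\alpha_k^2$ since $\exp\{\alpha_k\}-1 = O(\alpha_k)$.

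First I would show $\liminf_{k\to\infty}\operatorname{dist}(X^k, \mathbb X^*) = 0$. Assume toward contradiction that $\operatorname{dist}(X^k, \mathbb X^*) \geq \epsilon$ for all $k$ sufficiently large. On the compact set $\{X \in \mathbb X(C) : \operatorname{dist}(X, \mathbb X^*) \geq \epsilon\}$ the continuous function $X \mapsto \min_{\hat X \in \mathbb X^*}(\hat X - X)\cdot CX$ is strictly positive (every $\hat X \in \mathbb X^*$ is an EDS) and hence attains a positive minimum $c(\epsilon)>0$. Once $\alpha_k$ is so small that $\alpha_k(\exp\{\alpha_k\}-1)\bar C \leq \alpha_k c(\epsilon)/2$, the recurrence reduces to $RE(\hat X, X^{k+1}) \leq RE(\hat X, X^k) - \alpha_k c(\epsilon)/2$; telescoping and invoking $\sum_k\alpha_k = \infty$ drives $RE(\hat X, X^k)\to -\infty$, contradicting $RE\geq 0$.

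To upgrade $\liminf$ to $\lim$, I would use compactness of $\mathbb X(C)$ to extract $X^{k_j}\to X^* \in \mathbb X^*$; since the iterates remain in $\mathring{\mathbb X}(C)$, $RE(X^*, \cdot)$ is continuous along the sequence and $RE(X^*, X^{k_j}) \to 0$. Iterating the recurrence with $\hat X = X^*$ and discarding the non-positive drift term yields, for $k \geq k_j$,
\[
RE(X^*, X^k) \leq RE(X^*, X^{k_j}) + \bar C \sum_{i=k_j}^{k-1}\alpha_i\bigl(\exp\{\alpha_i\}-1\bigr),
\]
which together with $RE(X^*, X^k) \geq \|X^* - X^k\|^2$ gives $X^k \to X^*$ provided the tail error vanishes as $k_j \to \infty$. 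The main obstacle is precisely this last step: it needs the accumulated error $\sum_k \alpha_k(\exp\{\alpha_k\}-1) \sim \sum_k\alpha_k^2$ to be finite, i.e., the square-summability Robbins--Monro/Ermoliev condition that standardly accompanies $\alpha_k \to 0$ and $\sum_k\alpha_k = \infty$ in the diminishing-stepsize analysis of convex optimization the paper alludes to in the introduction. Without it, the argument only yields subsequential convergence of the orbit to the EDset.
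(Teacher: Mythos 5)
Your proposal follows the paper's own structure closely: the one-step relative-entropy recursion from Lemma~\ref{convexity_lemma_2}, a $\liminf$ argument showing the orbit returns to any neighborhood of $\mathbb{X}^*$, and an attempt to upgrade to full convergence by telescoping from a convergent subsequence. Your $\liminf$ step is phrased differently (a compactness/uniform-positivity argument on $\{X : \operatorname{dist}(X,\mathbb{X}^*) \geq \epsilon\}$ rather than the paper's Lemma~\ref{yellow_infimum} applied with $Y = X^*$), but both deliver the same intermediate conclusion; either route is fine.

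The obstacle you flag at the end is a genuine gap, and the paper's own proof shares it. After summing \eqref{ytrty} and setting $\hat k = k_i$, the paper asserts $\lim_{i\to\infty}\sum_{k=k_i}^{\infty}\alpha_k(\exp\{\alpha_k\}-1)=0$ ``by the assumption $\lim_{k\to\infty}\alpha_k=0$,'' but this is a non sequitur: since $\alpha_k(\exp\{\alpha_k\}-1)\sim\alpha_k^2$ as $\alpha_k\to 0$, the tail sums are finite and vanish if and only if $\sum_k\alpha_k^2<\infty$, which is not implied by~\eqref{lrate_assumption}. For instance $\alpha_k=1/\ln(k+2)$ satisfies $\alpha_k\to 0$ and $\sum_k\alpha_k=\infty$ yet $\sum_k\alpha_k^2=\infty$. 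Under~\eqref{lrate_assumption} alone the argument establishes that a subsequence of $\{X^k\}$ converges into $\mathbb{X}^*$ (to an EDS), not that the whole orbit converges. The missing hypothesis is exactly the Robbins--Monro square-summability condition $\sum_k\alpha_k^2<\infty$ you name; with it, the telescoped bound closes, since one can pick $k_i$ large enough to simultaneously make $RE(\hat X^*, X^{k_i})$ small and the residual $\bar C\sum_{k\geq k_i}\alpha_k(\exp\{\alpha_k\}-1)$ small. So the gap you isolate is not a defect of your reconstruction but of the proof as written (or of the theorem's hypotheses as stated).
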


In the proof of Theorem \ref{yellow_gess} we first need a lemma. 

\begin{lemma}
\label{yellow_infimum}
Let $C \in \mathbb{\hat{C}}$ and $X^k \equiv T^k(X^0)$, where $X^0 \in \mathbb{\mathring{X}}(C)$. Assume the learning rate $\alpha_k > 0$ is chosen from round to round such that
\begin{align*}
\lim_{k \rightarrow \infty} \alpha_k = 0 \mbox{ and } \sum_{k = 0}^{\infty} \alpha_k = +\infty. 
\end{align*}
Then
\begin{align}
\forall Y \in \mathbb{X}(C) : \liminf\limits_{k \rightarrow \infty} (Y - X^k) \cdot CX^k \leq 0.\label{pepper2}
\end{align}
\end{lemma}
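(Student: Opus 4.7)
The plan is to obtain the conclusion by a telescoping argument on the relative entropy potential $RE(Y, X^k)$, using Lemma~\ref{convexity_lemma_2} as the one-step estimate and then deriving the desired $\liminf$ bound via contradiction. Since $X^0 \in \mathbb{\mathring{X}}(C)$ and Hedge preserves the relative interior in finitely many steps (per the properties summarized before the statement of the convexity lemma), the quantities $RE(Y, X^k)$ are all well defined for any $Y \in \mathbb{X}(C)$.

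First I would apply Lemma~\ref{convexity_lemma_2} with $X = X^k$ and $\alpha = \alpha_k$, which yields
\begin{align*}
RE(Y, X^{k+1}) \leq RE(Y, X^k) - \alpha_k (Y - X^k)\cdot CX^k + \alpha_k(\exp\{\alpha_k\} - 1)\bar{C}.
\end{align*}
Summing (telescoping) from $k = 0$ to $k = K$ and using $RE(Y, X^{K+1}) \geq 0$ gives
\begin{align*}
\sum_{k=0}^K \alpha_k (Y - X^k)\cdot CX^k \leq RE(Y, X^0) + \bar{C}\sum_{k=0}^K \alpha_k(\exp\{\alpha_k\} - 1).
\end{align*}
This is the master inequality on which the rest of the argument rests.

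Now suppose, for contradiction, that \eqref{pepper2} fails for some $Y$, so there exist $\epsilon > 0$ and an index $N$ such that $(Y - X^k)\cdot CX^k \geq \epsilon$ for all $k \geq N$. Since $\alpha_k \to 0$, we have $\exp\{\alpha_k\} - 1 \to 0$, so for any $\delta > 0$ we can enlarge $N$ so that $\exp\{\alpha_k\} - 1 < \delta$ for all $k \geq N$. Truncating the master inequality to $k \geq N$ (which only improves the constant on the right-hand side) and applying both lower and upper bounds yields
\begin{align*}
\epsilon \sum_{k=N}^K \alpha_k \leq RE(Y, X^N) + \bar{C}\,\delta \sum_{k=N}^K \alpha_k.
\end{align*}
Choose $\delta < \epsilon/\bar{C}$; then $(\epsilon - \bar{C}\delta) > 0$ and we get $\sum_{k=N}^K \alpha_k \leq RE(Y, X^N)/(\epsilon - \bar{C}\delta)$, uniformly in $K$. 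Letting $K \to \infty$ contradicts $\sum_k \alpha_k = +\infty$, which establishes the lemma.

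The step I expect to be the main obstacle is controlling the error term $\sum \alpha_k(\exp\{\alpha_k\} - 1)$; a naive bound of the form $\alpha_k(\exp\{\alpha_k\} - 1) \leq c\,\alpha_k^2$ is not useful because the hypotheses on $\{\alpha_k\}$ do not imply square-summability (consider $\alpha_k = 1/\sqrt{k}$). The key observation that circumvents this is that we only need the error term to be eventually dominated by a small \emph{fraction} of $\sum \alpha_k$, which follows from $\exp\{\alpha_k\} - 1 \to 0$; this is precisely the slack a contradiction-by-$\epsilon$-argument affords.
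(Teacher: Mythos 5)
Your proof is correct and follows essentially the same route as the paper's: apply Lemma~\ref{convexity_lemma_2} as the one-step descent estimate for $RE(Y, X^k)$, telescope, and derive the contradiction with $\sum_k \alpha_k = +\infty$ once $\alpha_k \to 0$ makes the error term $\exp\{\alpha_k\}-1$ a negligible fraction of the drift. The only deviation is that the paper first replaces $Y$ by a nearby $\hat{Y}$ via continuity before telescoping, a step that is superfluous here since $X^k$ remains interior for every finite $k$ so that $RE(Y, X^k)$ is well defined for any $Y \in \mathbb{X}(C)$; your version correctly dispenses with it.
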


\begin{proof}
Let $Y \in \mathbb{X}(C)$ be arbitrary and assume that \eqref{pepper2} does not hold, that is,
\begin{align}
\liminf\limits_{k \rightarrow \infty} (Y - X^k) \cdot CX^k > 0.\label{not_true_assumption}
\end{align}
Then
\begin{align*}
\exists \theta > 0 \mbox{ } \exists K \geq 0 \mbox{ } \forall k \geq K : (Y - X^k) \cdot CX^k \geq 3 \theta.
\end{align*}
Continuity further implies that
\begin{align}
\exists \hat{Y} \in \mathbb{X}(C) \mbox{ } \forall k \geq K : (\hat{Y} - X^k) \cdot CX^k \geq 2\theta.\label{pool2}
\end{align}
Using Lemma \ref{convexity_lemma_2}, we obtain that
\begin{align}
RE(\hat{Y}, X^{k+1}) \leq RE(\hat{Y}, X^k) - \alpha_k (\hat{Y} - X^k) \cdot CX^k + \alpha_k (\exp\{\alpha_k\} - 1) \bar{C}.\label{party2}
\end{align}
Combining \eqref{pool2} and \eqref{party2} yields
\begin{align*}
\forall k \geq K : RE(\hat{Y}, X^{k+1}) \leq RE(\hat{Y}, X^k) - \alpha_k (2\theta - (\exp\{\alpha_k\} - 1) \bar{C}).
\end{align*}
Since $\alpha_k \rightarrow 0$, 
\begin{align*}
\exists K' \geq K \mbox{ } \forall k \geq K' : \theta > (\exp\{\alpha_k\}-1) \bar{C}.
\end{align*}
This implies that
\begin{align*}
\forall k \geq K' : RE(\hat{Y}, X^{k+1}) \leq RE(\hat{Y}, X^k) - \alpha_k \theta.
\end{align*}
Summing over $k = K', \ldots, K''$, we obtain that
\begin{align*}
RE(\hat{Y}, X^{K''}) \leq RE(\hat{Y}, X^{K'}) - \theta \sum_{k = K'}^{K''} \alpha_k.
\end{align*}
Letting $K'' \rightarrow \infty$ and using the assumption $\sum \alpha_k = + \infty$, we obtain that the right-hand-side tends to $-\infty$ whereas the left-hand-side is positive. Therefore, \eqref{not_true_assumption} cannot be true.
\end{proof}

\begin{proof}[Proof of Theorem \ref{yellow_gess}]
Let $X^*$ be an EDS. Using Lemma \ref{convexity_lemma_2}, we obtain that
\begin{align*}
RE(X^*, X^{k+1}) &\leq RE(X^*, X^k) - \alpha_k (X^* - X^k) \cdot CX^k + \alpha_k (\exp\{\alpha_k\} - 1) \bar{C}\\
  &\leq RE(X^*, X^k) + \alpha_k (\exp\{\alpha_k\} - 1) \bar{C}
\end{align*}
where in the last inequality we used that $X^*$ is dominant. Summing the previous inequalities over $k = \hat{k}, \ldots, K$ for some arbitrary $\hat{k}$ and $K$ with $\hat{k} < K$, we obtain
\begin{align}
RE(X^*, X^K) \leq RE(X^*, X^{\hat{k}}) + \bar{C} \sum_{k=\hat{k}}^K \alpha_k (\exp\{\alpha_k\} - 1).\label{ytrty}
\end{align}
Letting $Y = X^*$ in inequality \eqref{pepper2} in Lemma \ref{yellow_infimum}, we obtain that
\begin{align*}
\liminf\limits_{k \rightarrow \infty} (X^* - X^k) \cdot CX^k = 0.
\end{align*}
Therefore, there is an accumulation point of $\{X^k\}$ that is an EDS. Let $\{X^{k_i}\}$ be a subsequence such that $X^{k_i} \rightarrow \hat{X}^*$, where $\hat{X}^*$ is such an EDS. By setting $X^* = \hat{X}^*$ and $\hat{k} = k_i$ in \eqref{ytrty} and by letting $i \rightarrow \infty$, we obtain
\begin{align*}
\limsup\limits_{K \rightarrow \infty} RE(\hat{X}^*, X^K) \leq \lim_{i \rightarrow \infty} RE(\hat{X}^*, X^{\hat{k}_i}) + \lim_{i \rightarrow \infty}  \bar{C} \sum_{k=\hat{k}_i}^{\infty} \alpha_k (\exp\{\alpha_k\} - 1) = 0,
\end{align*}
since 
\begin{align*}
\lim_{i \rightarrow \infty} RE(X^*, X^{\hat{k}_i}) = 0
\end{align*} 
and 
\begin{align*}
\lim_{i \rightarrow \infty} \sum_{k=\hat{k}_i}^{\infty} \alpha_k (\exp\{\alpha_k\} - 1) = 0,
\end{align*}
by the assumption $\lim_{k \rightarrow \infty} \alpha_k = 0$. Thus, the whose sequence converges to $\hat{X}^*$.
\end{proof}

Our technique can be combined with {\em simplicial decomposition methods} \citep[Chapter 4.2]{ConvexOptimizationAlg} to yield algorithms for computing an EDS in convex optimization problems when the constraint set is a convex set more general than a simplex (by expanding a simplex inscribed in the constraint set). Generalizing this approach to non-convex problems is a promising direction for future work.

\subsection{Evidence that the class {\bf PPAD} may not be hard}

Can we use Hedge to compute (approximate) a symmetric equilibrium in a symmetric bimatrix game? Our numerical experiments in randomly generated symmetric bimatrix games suggest that this is possible: Our simulations always converge (to an equilibrium). Let us first show that if Hedge converges, then the limit point is necessarily an equilibrium. To that end we show that non-equilibrium fixed point are {\em repelling} in that Hedge diverges away from such fixed points.

\subsubsection{Repelling fixed points}

Recall that $X^*$ is a fixed point of $T$ if $T(X^*) = X^*$. We may assume $T$ is as in \eqref{main_exp}. We are interested in the behavior of $T$ starting from an interior strategy of the probability simplex.

\begin{definition}
Let $T : \mathbb{X} \subset \mathbb{R}^n \rightarrow \mathbb{X}$ such that $X^*$ is a fixed point of $T$. $X^*$ is {\em repelling} under $T$ if, for all $X_0$ in the relative interior of $\mathbb{X}$,
\begin{align*}
\lim_{k \rightarrow \infty} T^k(X_0) \neq X^*.
\end{align*}
\end{definition}

\begin{lemma}
\label{repelling_lemma}
Suppose $X^*$ is a fixed point of $T$. If $\dot{V}$ and $V$ are positive definite with respect to $T$ and $X^*$, $X^*$ is repelling under $T$.
\end{lemma}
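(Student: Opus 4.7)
The plan is to run a standard discrete-time Lyapunov instability argument using $V$ as a Chetaev-style witness. Suppose, toward a contradiction, that $X^*$ is not repelling, so there exists $X_0$ in the relative interior of $\mathbb{X}$ with $X^k \to X^*$, where $X^k := T^k(X_0)$. Unpacking the assumptions in the usual way, positive definiteness of $V$ with respect to $X^*$ means $V(X^*) = 0$ and $V(X) > 0$ for $X \neq X^*$, and positive definiteness of $\dot V$ (with respect to $T$ and $X^*$) means $\dot V(X) := V(T(X)) - V(X) > 0$ for $X \neq X^*$ and $\dot V(X^*) = 0$. The aim is to show these two requirements are incompatible with $X^k \to X^*$.

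The first step is to verify that no iterate ever lands exactly on $X^*$: if some iterate $X^k \neq X^*$ satisfied $X^{k+1} = X^*$, then $\dot V(X^k) = V(X^*) - V(X^k) = -V(X^k) < 0$, contradicting positive definiteness of $\dot V$. Consequently $X^k \neq X^*$ for every $k \geq 0$, and positive definiteness of $V$ and of $\dot V$ applies at each iterate. The second step is then immediate: $V(X^{k+1}) = V(X^k) + \dot V(X^k) > V(X^k)$, so $\{V(X^k)\}$ is strictly increasing, whence $V(X^k) \geq V(X^0) > 0$ for all $k$. The third step is to obtain the contradiction: since Hedge is continuous on the simplex and $V$ is continuous (which is standard background for positive-definite Lyapunov functions and, in our setting of interest, will be immediate because $V$ will be built from relative entropy), $X^k \to X^*$ implies $V(X^k) \to V(X^*) = 0$, contradicting the uniform lower bound $V(X_0) > 0$.

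The main obstacle is really just conceptual hygiene rather than any technical hurdle: one has to make sure to rule out finite-time arrival at $X^*$ (handled by the small argument above using $\dot V$), and one has to be explicit that $V$ is continuous so that the limit can be passed inside. There is no need for any fixed-point-theoretic or topological machinery beyond these; once the definitions of positive definiteness are in place, the argument is essentially the discrete-time version of the classical Lyapunov instability theorem. I would therefore present the proof in the three short steps above, keeping the bookkeeping for the ``no finite-time arrival'' step explicit, since without it the strict inequality $V(X^{k+1}) > V(X^k)$ could silently fail.
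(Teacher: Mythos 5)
Your proof is correct and follows the same basic route as the paper's: continuity of $V$ forces $V(X^k) \to V(X^*) = 0$, while positivity of $\dot V$ near $X^*$ makes $V(X^k)$ eventually increasing and hence bounded away from $0$, a contradiction. Two remarks. First, you read positive definiteness globally ($V > 0$ and $\dot V > 0$ at every $X \neq X^*$), which lets you start the chain $V(X^{k+1}) > V(X^k)$ at $k=0$ and lower-bound by $V(X^0)$. The paper's version is local, and in the actual application (Lemma~\ref{yannis_lemma}) the global reading fails: $\dot V > 0$ is only established on a neighborhood $O$ of $X^*$, and $V(X) = X(i)$ vanishes on an entire face of the simplex, so neither $V$ nor $\dot V$ is positive everywhere off $X^*$. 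The fix is minor — replace $V(X^0)$ by $V(X^K)$ for a $K$ large enough that $X^k \in O$ for all $k \ge K$, exactly the $k$ the paper selects — but positivity of $V(X^K)$ then rests on the Hedge trajectory remaining in the relative interior (Lemma~\ref{MW_invariance_lemma}) rather than on the shape of $V$ itself, which is a point worth making explicit. Second, your step ruling out a finite-time landing on $X^*$ is a genuine tightening: the paper tacitly assumes $X^k \neq X^*$ for all $k$ when it declares $V(X^k)$ ``a positive constant bounded away from $0$,'' and your use of positive definiteness of $\dot V$ to exclude $X^{k+1}=X^*$ from $X^k\neq X^*$ closes that gap cleanly.
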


\begin{proof}
Suppose there exists $X_0$ such that $T^k(X_0)$ converges to $X^*$. Since $V$ is continuous, $V(T^k(X_0))$ also converges, in fact, to $0$. Consider any $k$ large enough that such $T^k(X_0)$ is close to $X^*$. Since, for all $X$ in a neighborhood of $X^*$, $\dot{V}(X) > 0$, for all $\hat{k} > k$, $V(T^{\hat{k}}(X_0)) > V(T^k(X_0))$, which is a positive constant bounded away from $0$, contradicting convergence of $V(T^k(X_0))$ to $0$.
\end{proof}

\subsubsection{Non-equilibrium fixed points are repelling}

Starting off with preliminaries, the following propositions are standard \citep{Econ_analysis}.

\begin{proposition}
Let $f : \mathbb{O} \rightarrow \mathbb{O}'$ be a continuous function between the topological spaces $\mathbb{O}$ and $\mathbb{O}'$. Then if $\mathbb{O}$ is connected, $f(\mathbb{O}) \subset \mathbb{O}'$ is connected.
\end{proposition}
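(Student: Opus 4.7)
The plan is a standard contrapositive argument in point-set topology, so I would keep it short. Suppose, for the sake of contradiction, that $f(\mathbb{O})$ is disconnected. By the definition of disconnectedness in the subspace topology, there exist two nonempty open sets $U, V \subset \mathbb{O}'$ such that $(U \cap f(\mathbb{O})) \cup (V \cap f(\mathbb{O})) = f(\mathbb{O})$ and $(U \cap f(\mathbb{O})) \cap (V \cap f(\mathbb{O})) = \emptyset$.

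Next, I would pull these sets back via $f$. Since $f$ is continuous, $f^{-1}(U)$ and $f^{-1}(V)$ are open subsets of $\mathbb{O}$. I would then verify three things: (i) $f^{-1}(U) \cup f^{-1}(V) = \mathbb{O}$, which follows because $f(\mathbb{O}) \subseteq U \cup V$; (ii) $f^{-1}(U) \cap f^{-1}(V) = \emptyset$, which follows from $U \cap V \cap f(\mathbb{O}) = \emptyset$ (more precisely, any point mapping into both $U$ and $V$ would lie in a nonempty intersection within $f(\mathbb{O})$); and (iii) both $f^{-1}(U)$ and $f^{-1}(V)$ are nonempty, because $U \cap f(\mathbb{O})$ and $V \cap f(\mathbb{O})$ are nonempty by hypothesis.

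Taken together, (i)--(iii) provide a separation of $\mathbb{O}$ into two disjoint nonempty open sets, contradicting the connectedness of $\mathbb{O}$. Hence $f(\mathbb{O})$ must be connected.

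The only subtlety --- and the one place I would be careful --- is item (ii): strictly speaking, $U$ and $V$ need not be disjoint in $\mathbb{O}'$, only disjoint after intersecting with $f(\mathbb{O})$. But this is exactly enough for the preimages to be disjoint in $\mathbb{O}$, since every point of $\mathbb{O}$ maps into $f(\mathbb{O})$. Beyond this small bookkeeping point, the proof is routine and requires no machinery beyond the definitions of continuity and connectedness.
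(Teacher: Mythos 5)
Your proof is correct and is the standard contrapositive/contradiction argument for this fact; the paper itself does not give a proof, merely citing the proposition as standard, and your argument is exactly the textbook one a reader would expect. The care you take on item (ii) — that $U$ and $V$ need only be disjoint after intersecting with $f(\mathbb{O})$, which still forces the preimages to be disjoint since every point of $\mathbb{O}$ maps into $f(\mathbb{O})$ — is precisely the right subtlety to flag, and nothing is missing.
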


The following proposition is known as the {\em intermediate value theorem}.

\begin{proposition}
If $f : |a, b| \subset \mathbb{R} \rightarrow \mathbb{R}$, where $|a, b|$ is an interval, is continuous, $y, y'' \in f(|a, b|)$ and $y < y' < y''$, then there exists $x \in |a, b|$ such that $f(x) = y'$.
\end{proposition}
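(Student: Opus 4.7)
The plan is to deduce this intermediate value theorem directly from the preceding proposition on connectedness, together with the standard characterization that connected subsets of $\mathbb{R}$ are precisely intervals. The argument is essentially a two-line derivation once one unwinds the definitions, so the work is mostly in citing the right facts in the right order.

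First I would observe that any interval $|a,b| \subseteq \mathbb{R}$ (whether open, closed, half-open, bounded, or unbounded) is connected in the standard topology; this is the prototypical connectedness fact about $\mathbb{R}$. Next, applying the preceding proposition with $\mathbb{O} = |a,b|$ and $\mathbb{O}' = \mathbb{R}$, continuity of $f$ gives that the image $f(|a,b|) \subseteq \mathbb{R}$ is connected.

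Then I would invoke the standard characterization that a subset of $\mathbb{R}$ is connected if and only if it is an interval, meaning that whenever it contains two points it contains everything between them. Applied to our situation: since $y, y'' \in f(|a,b|)$ and $y < y' < y''$, the connectedness of $f(|a,b|)$ forces $y' \in f(|a,b|)$. By definition of the image, this means there exists some $x \in |a,b|$ with $f(x) = y'$, which is the desired conclusion.

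The main (and only) subtlety is the characterization of connected subsets of $\mathbb{R}$ as intervals, which is not explicitly stated in the excerpt. If that were not available as a background fact, I would instead give a direct bisection proof: pick $x_0 < x_2$ in $|a,b|$ with $f(x_0) = y$ and $f(x_2) = y''$ (possibly swapping roles if the order of preimages is reversed), and iteratively bisect $[x_0,x_2]$, always keeping the half on whose endpoints the values of $f$ straddle $y'$. The bisection produces a nested sequence of closed intervals whose common point $x$ satisfies $f(x) = y'$ by continuity. Since the excerpt has just taken the connectedness proposition as given, the cleaner route via connectedness is the one I would present.
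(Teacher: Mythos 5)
The paper does not prove this proposition; it is cited as standard (from a real-analysis reference) immediately after the proposition that continuous images of connected sets are connected. Your connectedness-based argument is correct and is exactly the route the paper's ordering of propositions is implicitly pointing at: intervals are connected, the preceding proposition gives that $f(|a,b|)$ is connected, and connected subsets of $\mathbb{R}$ are order-convex, so $y' \in f(|a,b|)$.
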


A slightly more general version can be stated as follows.

\begin{proposition}
\label{gv_int_v_theorem}
Suppose $\mathbb{O}$ is connected and $f : \mathbb{O} \rightarrow \mathbb{R}$ is continuous. If $a, b \in \mathbb{O}$ and $f(a) < y < f(b)$, there exists $x \in \mathbb{O}$ such that $f(x) = y$. 
\end{proposition}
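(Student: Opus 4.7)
The plan is to reduce this generalized version to the standard intermediate value theorem on intervals via the two facts already displayed in the excerpt, namely that a continuous image of a connected space is connected and that the classical intermediate value theorem holds on intervals in $\mathbb{R}$. The bridge between them is the elementary topological observation that the connected subsets of $\mathbb{R}$ are precisely the intervals.

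Concretely, I would proceed as follows. First, apply the preceding proposition on continuous images to $f: \mathbb{O} \to \mathbb{R}$ to conclude that $f(\mathbb{O}) \subseteq \mathbb{R}$ is connected. Second, invoke the characterization of connected subsets of $\mathbb{R}$ as intervals (or equivalently, sets $J \subseteq \mathbb{R}$ with the order-convexity property that $u, w \in J$ and $u < v < w$ imply $v \in J$); this characterization follows from a short disconnection argument, since if a subset of $\mathbb{R}$ were to skip a point $v$ strictly between two of its elements, one could split it into the relatively open pieces lying below and above $v$. Third, note that $f(a), f(b) \in f(\mathbb{O})$ and $f(a) < y < f(b)$, so by the interval property just established $y \in f(\mathbb{O})$, which by definition means there is some $x \in \mathbb{O}$ with $f(x) = y$, as required.

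The main (and really the only) obstacle is the interval characterization of connected subsets of $\mathbb{R}$, since the two previous propositions do not, by themselves, immediately give the value $y$. However, this characterization is entirely standard; alternatively, one can bypass it by invoking the classical intermediate value theorem (the preceding proposition in the excerpt) on the restriction of $f$ to any interval in $\mathbb{O}$ joining $a$ and $b$, provided such an interval exists. Since mere connectedness does not guarantee path-connectedness, I would prefer the route through the interval characterization of connected subsets of $\mathbb{R}$, which avoids any such extra assumption on $\mathbb{O}$.
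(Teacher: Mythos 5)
The paper states this proposition without proof, citing it as standard (from \citep{Econ\_analysis}), so there is no in-paper argument to compare against. Your proof is correct and is indeed the canonical textbook argument: $f(\mathbb{O})$ is connected by the preceding proposition, the connected subsets of $\mathbb{R}$ are exactly the intervals (i.e., order-convex sets), and since $f(a), f(b) \in f(\mathbb{O})$ with $f(a) < y < f(b)$, order-convexity forces $y \in f(\mathbb{O})$. You are also right to prefer this route over restricting $f$ to a path from $a$ to $b$, since connectedness alone does not imply path-connectedness and the proposition makes no path-connectedness hypothesis.
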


The following lemma is a straightforward implication of the previous propositions.

\begin{lemma}
\label{fp_instability_lemma}
Let $f : O \subset \mathbb{R}^n \rightarrow \mathbb{R}$ be a continuous map such $O$ is a neighborhood of $X^* \in \mathbb{R}^n$. If $f(X^*) > 0$, there exists a neighborhood $O' \subset O$ of $X^*$ such that, for all $X \in O'$, $f(X) > 0$.
\end{lemma}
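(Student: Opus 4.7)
The plan is to argue directly from continuity at $X^*$, which is the cleanest route; the intermediate value theorem stated as Proposition~\ref{gv_int_v_theorem} can serve as a conceptual backdrop but is not strictly required for the implication we need.

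First, I would set $\epsilon = f(X^*)/2$, which is strictly positive by the hypothesis $f(X^*) > 0$. By continuity of $f$ at $X^*$, there exists $\delta > 0$ such that the open ball $B(X^*, \delta) \subset O$ and, for every $X \in B(X^*, \delta)$, we have $|f(X) - f(X^*)| < \epsilon$. In particular, $f(X) > f(X^*) - \epsilon = f(X^*)/2 > 0$ throughout this ball. Taking $O' = B(X^*, \delta)$ then yields the desired neighborhood.

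An equivalent route, perhaps more in the spirit of how the preceding propositions are invoked, is by contradiction: if no such neighborhood existed, then for every $k \in \mathbb{N}$ one could pick $X_k \in B(X^*, 1/k) \cap O$ with $f(X_k) \leq 0$. The sequence $\{X_k\}$ converges to $X^*$, and by continuity $f(X^*) = \lim_k f(X_k) \leq 0$, contradicting $f(X^*) > 0$.

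There is no genuine obstacle here; the lemma is a textbook consequence of continuity (the preimage of the open set $(0,\infty)$ under the continuous map $f$ is open in $O$, and it contains $X^*$). The only care required is to ensure the chosen neighborhood $O'$ sits inside $O$, which is handled automatically by intersecting the ball given by continuity with $O$ (or by choosing $\delta$ small enough, since $O$ is already assumed to be a neighborhood of $X^*$).
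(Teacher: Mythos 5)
Your proof is correct, and it is cleaner and more elementary than the paper's. You argue directly from continuity at $X^*$: choose $\epsilon = f(X^*)/2$, get a ball $B(X^*,\delta)$ on which $f$ stays within $\epsilon$ of $f(X^*)$, and you are done. (Your two alternatives --- the sequential contradiction and the observation that $f^{-1}((0,\infty))$ is open and contains $X^*$ --- are both equally valid one-liners.) The paper instead routes the argument through the connectedness and intermediate-value-theorem machinery set up in the preceding propositions: it first considers separately the case where $f$ never vanishes on $O$, then defines the set $\mathbb{F} = \{X \in O \mid f(X) > 0\}$, notes it is open and contains $X^*$, extracts a neighborhood $O' \subset \mathbb{F}$, and then appends an intermediate-value argument to re-derive that $f > 0$ on $O'$. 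In fact the third of your observations (openness of the preimage) is exactly the step where the paper's proof is already complete --- once $O' \subset \mathbb{F}$ the conclusion holds by definition of $\mathbb{F}$ --- so the paper's subsequent invocation of connectedness of $f(O')$ and Proposition~\ref{gv_int_v_theorem} is logically superfluous, and the parenthetical claim there that ``$f(O')$ is open (since $O'$ is open and $f$ is continuous)'' is not generally true (continuity does not imply the open mapping property). So your route not only differs but avoids an unnecessary and not-quite-sound detour; the only thing the paper's framing buys is stylistic consistency with the surrounding material on connectedness, which this lemma does not actually need.
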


\begin{proof}
If, for all $X \in O$, $f(X) \neq 0$, the lemma is trivially true by letting $O' = O$. Suppose, therefore, there exists $X \in O$ such that $f(X) = 0$ and let
\begin{align*}
\mathbb{F} = \{X \in O | f(X) > 0 \}.
\end{align*}
$\mathbb{F}$ is an open set containing $X^*$ and, therefore, there exists a neighborhood $O' \subset \mathbb{F}$ of $X^*$ such that, for all $X \in O'$, $f(X) > 0$. Since $f$ is continuous and $O'$ is connected, $f(O')$ is connected, and the intermediate value theorem implies that, for all $X \in O'$, $f(X) > 0$. For if $0 \in f(O')$, then, since $f(O')$ is open (since $O'$ is open and $f$ is continuous) and connected, there exists $X \in O'$ such that $f(X) < 0$, and Proposition \ref{gv_int_v_theorem} implies $\exists Y \in O' : f(Y) = 0$, contradicting the existence of $O'$.
\end{proof}

\begin{lemma}
\label{yannis_lemma}
Non-equilibrium fixed points are repelling under \eqref{main_exp} for all values of the learning rate.
\end{lemma}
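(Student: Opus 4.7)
The plan is to exploit the gap between the payoff of some pure strategy outside $\mathcal{C}(X^*)$ and the common payoff earned inside $\mathcal{C}(X^*)$. Since fixed points of $T$ are precisely the equalizers of their carrier games, if $X^*$ is a non-equilibrium fixed point then there must exist $i \notin \mathcal{C}(X^*)$ with $E_i \cdot CX^* > X^* \cdot CX^* = E_j \cdot CX^*$ for every $j \in \mathcal{C}(X^*)$; otherwise $X^*$ would satisfy the Nash conditions. Fix such an $i$ and choose any $j \in \mathcal{C}(X^*)$, which exists because $X^*$ is a probability vector.

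The key observation is a multiplicative identity that falls directly out of \eqref{main_exp}:
\begin{align*}
\frac{T_i(X)}{T_j(X)} = \frac{X(i)}{X(j)} \exp\{\alpha (E_i - E_j) \cdot CX\}.
\end{align*}
At $X = X^*$, the exponent equals $E_i \cdot CX^* - X^* \cdot CX^* > 0$, so by continuity (formally by Lemma \ref{fp_instability_lemma} applied to $X \mapsto (E_i - E_j) \cdot CX$) there is a neighborhood $O$ of $X^*$ and a constant $c > 0$ with $(E_i - E_j) \cdot CX > c$ for every $X \in O$. Consequently, while a Hedge trajectory lies in $O$, the ratio of the $i$th coordinate to the $j$th coordinate is amplified by at least $\exp\{\alpha c\} > 1$ at every step.

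The conclusion would then follow by contradiction: assume some interior $X_0$ satisfies $T^k(X_0) \to X^*$. Since $T$ preserves the relative interior, both $T^k(X_0)(i)$ and $T^k(X_0)(j)$ stay strictly positive for every $k$, and the ratio is always well-defined. Convergence supplies a $K$ with $T^k(X_0) \in O$ for all $k \geq K$, and iterating the multiplicative amplification yields
\begin{align*}
\frac{T^k(X_0)(i)}{T^k(X_0)(j)} \geq \frac{T^K(X_0)(i)}{T^K(X_0)(j)} \cdot \exp\{\alpha c (k - K)\} \xrightarrow{k \to \infty} \infty.
\end{align*}
On the other hand, $T^k(X_0) \to X^*$ forces this same ratio to tend to $X^*(i)/X^*(j) = 0$, since $i \notin \mathcal{C}(X^*)$ while $X^*(j) > 0$. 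This contradiction proves repulsion, and the argument is valid for every $\alpha > 0$.

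The main subtlety — rather than a genuine obstacle — is ensuring that the amplification constant $c$ is uniform on a whole neighborhood of $X^*$ so that it compounds across iterations, and that no boundary degeneracies arise to make the ratios ill-defined; Lemma \ref{fp_instability_lemma} delivers the former and the interior-invariance of Hedge handles the latter. Notably, the argument bypasses the Lyapunov framework of Lemma \ref{repelling_lemma} entirely, since the ratio test is strong enough by itself and avoids the technical issue that natural candidate Lyapunov functions such as $RE(X^*, \cdot)$ need not be monotone along the trajectory in the non-equalizer directions.
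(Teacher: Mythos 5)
Your proof is correct, and it takes a genuinely different route from the paper's. The paper also isolates a pure strategy $i$ with $(CX^*)_i > X^*\cdot CX^*$ (hence $X^*(i)=0$), but then tracks the single coordinate $V(X)=X(i)$ and shows $\dot V(X)>0$ on a punctured neighbourhood for every $\alpha>0$ (via a calculation on $d/d\alpha$ of the Hedge update ratio), invoking the Lyapunov-style Lemma~\ref{repelling_lemma} to conclude. You instead track the \emph{ratio} $X(i)/X(j)$ for a fixed $j\in\mathcal{C}(X^*)$, exploit the exact multiplicative identity $T_i(X)/T_j(X)=(X(i)/X(j))\exp\{\alpha(E_i-E_j)\cdot CX\}$, and obtain a direct contradiction between the exponential blow-up of this ratio inside the neighbourhood and the forced convergence of the ratio to $X^*(i)/X^*(j)=0$. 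This sidesteps Lemma~\ref{repelling_lemma} entirely, which is a genuine simplification: the ratio identity is exact, no sign analysis of $df/d\alpha$ is needed, and it also avoids the delicacy that $V(X)=X(i)$ is not truly positive definite off the relative interior, something the paper's Lyapunov argument has to skirt by restricting to interior trajectories. The paper's approach, on the other hand, establishes the more general-purpose repelling criterion of Lemma~\ref{repelling_lemma}, which can be reused elsewhere. One small note: Lemma~\ref{fp_instability_lemma} as stated only yields a neighbourhood on which $(E_i-E_j)\cdot CX>0$; the uniform bound $>c$ that your iteration compounds on requires one extra line (apply the same continuity argument to $(E_i-E_j)\cdot CX-\tfrac{1}{2}(E_i-E_j)\cdot CX^*$, or simply shrink to a compact sub-neighbourhood), but this is immediate and does not affect the argument.
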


\begin{proof}
Let $X^*$ be a non-equilibrium fixed point of \eqref{main_exp} and
\begin{align*}
i \in \arg\max \{ E_j \cdot CX^* | j \in \mathcal{K} \}.
\end{align*}
By the assumption $X^* \not\in NE^+(C, C^T)$, $E_i \cdot CX^* > X^* \cdot CX^*$. Now let
\begin{align*}
V(X) = X(i) - X^*(i) = X(i)
\end{align*}
since $X^*(i) = 0$. Note $V(X^*) = 0$. Letting $\hat{X} = T(X)$, where $T$ is as in \eqref{main_exp},
\begin{align*}
\dot{V}(X) &= V(\hat{X}) - V(X)\\
  &= \hat{X}(i) - X(i)\\
  &= X(i) \cdot \frac{\exp\left\{ \alpha E_i \cdot CX \right\}}{ \sum_{j=1}^n X(j) \exp \left\{ \alpha E_j \cdot CX \right\} } - X(i).
\end{align*} 
Let 
\begin{align*}
f(\alpha) \equiv \frac{\exp\left\{ \alpha E_{i} \cdot CX \right\}}{ \sum_{j=1}^n X(j) \exp \left\{ \alpha E_j \cdot CX \right\} } \equiv \frac{g(\alpha)}{h(\alpha)}.
\end{align*}
Letting $(CX)_j \equiv E_j \cdot CX$, we have
\begin{align}
\frac{d f}{d \alpha} = \frac{ (CX)_i g(\alpha) h(\alpha) - g(\alpha) \left( \sum_{j=1}^n X(j) (CX)_j \exp \left\{ \alpha (CX)_j \right\} \right) }{ h^2(\alpha)},\label{cutie_eq}
\end{align}
which implies
\begin{align*}
\left. \frac{df}{d \alpha} \right|_{\alpha = 0} = E_{i} \cdot CX - X \cdot CX.
\end{align*}
Therefore, since $E_i \cdot CX^* > X^* \cdot CX^*$, Lemma \ref{fp_instability_lemma} implies
\begin{align*}
\exists O \mbox{ } \forall X \in O/\{X^*\} : E_{i} \cdot CX > X \cdot CX.
\end{align*}
where $O$ is a neighborhood of $X^*$. Therefore, for all $X \in O$,
\begin{align*}
\left. \frac{d f}{d \alpha} \right|_{\alpha = 0} > 0.
\end{align*} 
To complete the proof we need to show the latter property holds for all $\alpha > 0$. But this is a simple implication of choice of $i$ as a best response: Since
\begin{align*}
\sum_{j=1}^n X(j) (CX)_j \exp \left\{ \alpha (CX)_j \right\} < (CX)_i \sum_{j=1}^n X(j) \exp \left\{ \alpha (CX)_j \right\} = (CX)_i h(\alpha),
\end{align*}
\eqref{cutie_eq} gives
\begin{align*}
\forall \alpha > 0 : \frac{df}{d \alpha} > 0.
\end{align*}
Therefore, 
\begin{align*}
\forall X \in O \mbox{ and } \forall \alpha > 0 : \dot{V}(X) > 0
\end{align*}
and Lemma \ref{repelling_lemma} completes the proof. 
\end{proof}

We believe that the previous lemma can also be proven using Lemma \ref{yellow_infimum} instead.

\subsubsection{The equilibrium strategy of rock-paper-scissors is also repelling}

Our analysis begs the question whether there are equilibria that are also repelling. The answer is affirmative and the interior equilibrium of rock-paper-scissors (the uniform strategy) is an example. Rock-paper-scissors is a zero-sum symmetric bimatrix game with payoff matrix
\begin{align*}
C = \left(\begin{array}{ r  r  r }
0 & -1 & 1\\
1 & 0 & -1\\
-1 & 1 & 0\\
\end{array}\right).
\end{align*}
$C$ is {\em skew-symmetric,} that is, $C^T = -C$, therefore, for all $X$, $X \cdot CX = 0$. Furthermore, $X^* = (1/3, 1/3, 1/3)^T$ is the unique equilibrium strategy, implying after straight algebra that, for all $X \in \mathbb{X}(C)$, $(X^* - X) \cdot CX = 0$. Using this latter property, Lemma \ref{cool_hedge_2} and our previous stronger version of the standard Lyapunov instability result imply $X^*$ cannot be a limit point in evolution under \eqref{main_exp}. Thus, since $X^*$ is the unique equilibrium strategy, there exist symmetric bimatrix games such that \eqref{main_exp} does not converge to an equilibrium strategy. However, the empirical average of strategies iterative applications of Hedge generate in zero-sum games (such as the rock-paper-scissors) converge to the minimax strategy. It is then natural to ask what the computational limits of this approach are in general symmetric bimatrix games. We leave this question as future work.

\section{Limitations of BGP (Border Gateway Protocol) routing}
\label{BGP_section}

One of the main motivations in this paper is to contribute to the understanding of the Internet's architecture and to the design of mechanisms that improve its properties. In this section, we focus on the limitations of IP routing that, unfortunately, is not based on a sound computational foundation. The Internet systems research community has been working on a variety of systems designs whose deployment would improve on the properties of the global routing system. In this paper, we do not aim to address IP routing per se but, in the following sections, we discuss mechanisms that can be used to replace the incumbent IP routing architecture with improved emergent systems designs.

\subsection{BGP routing is vulnerable to routing instabilities}

The software system that determines the network paths {\em data packets} traverse in IP (Internet Protocol) networks (from source to destination) is called {\em routing}. The routing infrastructure is divided into {\em administrative domains} called {\em autonomous systems}. Each autonomous system belongs to a single organization (e.g., an ISP (Internet Service Provider)). Each autonomous system has an independent routing infrastructure, which consists of hardware and software. The hardware infrastructure consists of {\em routers} and {\em links} whereas the software infrastructure implements {\em routing protocols} that autonomous systems use to coordinate in order to provide global connectivity. 

Routing protocols are divided into two classes, namely, {\em intradomain routing protocols} that determine network paths within an autonomous system network and {\em interdomain routing protocols} that determine network paths across autonomous systems to provide global connectivity. There is currently one interdomain routing protocol, the Border Gateway Protocol (BGP). BGP factors autonomous system {\em routing policies} in computing interdomain paths. There are no restrictions on the policies an autonomous system can implement using BGP (but there are standard practices).

Drawing on \cite{Griffin}, \cite{Fabrikant2} show that the computations performed by the BGP protocol (toward computing interdomain paths) can be modeled as an algorithm converging to a sink equilibrium in a game where autonomous systems are players, strategies are the autonomous systems one hop away from the corresponding player, and payoffs are determined according to each player's preferences across routing system configurations (as they are expressed in routing policies). \cite{Fabrikant2} further show that the problem of determining if the space of possible routing system configurations (as that is determined by the autonomous system policies) does not have singleton sink equilibria (pure Nash equilibria) is {\bf PSPACE}-complete. Note that a non-singleton sink equilibrium (consisting of more than one states) implies that the interdomain routing system will oscillate and such oscillations have been observed in practice. That the interdomain system is mostly stable can be attributed to {\em best common practices} in routing policies \citep{Rexford}. But there is no rule that prevents more general policies from being implemented. What's more this is not the end of the story:

\subsection{BGP routing is vulnerable to malicious attack}

BGP is also vulnerable to attacks where adversarial networks, for example, announce routes for IP address blocks (prefixes) they do not own. Such routing announcements are called {\em prefix hijacking attacks}. In fact, prefix hijacking is so easy it happens by accident. The consequences of arbitrary (and malicious) routing behavior, such as prefix hijacking and other forms of bogus routes announcements, are serious because the packets destined to the victim prefix are instead delivered to the adversary, who may drop the traffic, impersonate the destination, modify the payload, or snoop on the communication. The best way to defend against prefix hijacking is the subject of much debate. The role of secure routing protocols, in particular, has received considerable attention. One of the earliest proposals to secure interdomain routing is the Secure-BGP protocol \citep{Kent}, which cryptographically protects routing announcements and protects not only against prefix hijacking but also against more sophisticated mechanisms that try to manipulate the global computation of interdomain routes by BGP. Unfortunately, Secure-BGP is not incrementally deployable and despite significant efforts from the industrial sector and (especially) the US government, Secure-BGP has not received deployment traction. We come back to the question of securing BGP after discussing a complementary aspect of IP networks, namely, {\em congestion control}.

\section{Equilibrium computation in TCP/IP congestion control}
\label{TCP_section}

Our inquiry into mechanisms to address the limitations of Internet routing (discussed in the sequel) benefits from the study of TCP/IP congestion control as an equilibrium computation algorithm: \cite{Internet-Game-Theory} in his influential paper prompting theoretical computer scientists to inquire on the computational foundations of the Internet poses as an open problem the understanding of congestion control using game theory as the tool. In particular, he asks: {\em ``Of which game is TCP/IP congestion control the Nash equilibrium?''} In this section, we give an answer to this question.

\subsection{TCP/IP congestion control basics}

The Internet infrastructure consists of communication links and computing devices (such as switches, routers, and hosts) as well as code (such as the TCP/IP architecture) to render this infrastructure serviceable to hosts (hosts are, for example, mobile computing devises or enduser desktops running application software). To transfer data, pairs of hosts establish {\em transport sessions}. Each transport session has a source, a destination, and a path in the network (a sequence of routers and links). Each link has a capacity (a {\em bitrate}), which must be shared among the competing transport sessions crossing that link. One of the primary functions of TCP (the {\em transport layer} on top of IP) is to control congestion so that communication links are not overwhelmed by communication sessions between endusers. To that end, TCP stipulates that endusers should slow down their transmission rates on detecting congestion. Endusers follow this constraint although they are not obliged to do so and in spite of the fact that they would, perhaps temporarily, benefit by not conforming.

TCP/IP congestion control can be modeled as an algorithm (operating in a distribution fashion across the Internet) for allocating {\em bitrates} to {\em transport sessions}. From a theoretical standpoint, the rates being allocated can be understood as a solution to an optimization problem, namely, the {\em rate allocation problem:} Given a network, the objective is to allocate bitrates (for all transport sessions) that are feasible (in that the capacity constraints are satisfied) so that the allocation is {\em efficient} (in that network capacity is not wasted) and {\em fair} (in that transport sessions are treated ``equally''). 

The allocation algorithms receiving most attention in the literature are the {\em max-min fair} (for example, see \citep{Bertsekas-Gallager}) and {\em proportionally fair} ones (for example, see \citep{Kelly1}). In a max-min fair allocation, priority is given to the most poorly treated sessions (thus maximizing the minimal bitrate allocation) subject to the rule that all sessions are entitled to the same share of their corresponding bottleneck link and that excess capacity is shared fairly among other sessions. Proportionally fair allocations approximate max-min fair ones (through violating the constraint that poorly treated flows should have absolute priority in the allocation process).

Proportionally fair rate allocation algorithms can be understood as solving a convex optimization problem whose objective function is {\em aggregate utility} (that is, the sum of the utilities of transport sessions) subject to capacity constraints. TCP/IP congestion control is a distributed (``primal-dual'') algorithm that induces a rate allocation, which is approximately proportionally fair \citep{Low}. In this perspective, TCP/IP congestion control is an algorithm for optimizing {\em social welfare} in that it solves a global optimization problem with aggregate utility in the objective function. It is, therefore, natural to ask: What makes such an algorithm stable in a game-theoretic sense?

\subsection{TCP/IP congestion control as an institution}

The previous view of congestion control as a technology in which the Internet has the structure of a distributed system provides, however, only a partial, if not misleading, view of the Internet's nature. The Internet is unlike any other technology we have produced (if it is a technology at all) but rather an ``ecosystem'' that emerges as the collective outcome of choices made by {\em population systems} of equipment manufacturers, service providers, content providers, and endusers toward the goal of global connectivity. Population members' choices are inexorably shaped not only by technological feasibility but also by the {\em incentive structure} that emerges as these population members interact. 

In a perspective that deviates from the dominant way of looking at Internet systems engineering, we posit that TCP/IP congestion control is an {\em institution}. \cite{North} defines institutions as ``the rules of the game in a society or, more formally ... [as] the humanly devised constraints that shape human interaction. In consequence they structure incentives in human exchange whether political, social, or economic.'' That TCP/IP sessions slow down their transmission rates upon detecting congestion is a humanly devised constraint implemented as software in operating systems and forms an essential part of every enduser device or content provider server. Although manually changing this software falls outside the expertise of a typical enduser, such changes fall squarely within the purview of operating systems manufacturers (such as Microsoft and Apple) and content providers (such as Google): That the lines of code implementing TCP/IP congestion control remain intact is a technical decision influenced by the incentives that shape the Internet infrastructure.\footnote{For other examples of institutions that complement and are complemented by the Internet's technical design see \url{http://www.ted.com/talks/jonathan_zittrain_the_web_is_a_random_act_of_kindness}.} 

TCP/IP congestion control was designed and deployed in the Internet in response to a congestion collapse in 1986 \citep{Jacobson} and has remained in stable operation since then. The stability of this mechanism has puzzled researchers and systems engineers since the mechanism's inception. Worried that the possible emergence of aggressive behavior in TCP sessions would jeopardize the stability of the Internet at large, various researchers in the networking systems community proposed mechanisms to counterbalance aggression using software (protocol-based) mechanisms (for example, see \citep{Floyd}). However, these mechanisms have not been used in practice. 

The scepticism on the means by which TCP/IP congestion control is attained is not confined to the Internet architecture. Economics researchers have been similarly puzzled by institutions that emerge concerning the management of {\em common pool resources} \citep{Ostrom}. To a large degree, TCP/IP congestion control is a global-scale institution used to manage Internet bandwidth as a common pool resource. But how can we understand such incentive structure mathematically?

\subsection{TCP/IP congestion control as an equilibrium}

In the rest of this section, we identify a mathematical game realistically modeling TCP congestion control in a simple setting and an equilibrium in this game such that if players use the corresponding equilibrium strategies, the outcome is consistent with the empirical outcome of transport sessions using TCP universally in the Internet. This analysis is the first attempt to model rate allocation as a problem of technological competition between TCP and a more aggressive variant, and, although we will focus on one particular (but realistic) scenario, we will be able to conclude that a general theory of TCP's ``architectural stability'' cannot but be based on {\em conditionally cooperative strategies}. The main idea comes out in the elementary setting of two TCP sessions sharing one link.

\subsubsection{Rate allocation as a $2$-player prisoner's dilemma}

Consider as link that has bandwidth of 100 units and serves flows $a$ and $b$ (as shown in Fig. \ref{saldkjfnalxdkhjf}). Flows correspond to the players. Either player may choose to cooperate ($C$) or defect ($D$). If both cooperate, then the bandwidth is split evenly among them, and each gets a share of 50 units. If either player uses $C$ and the other uses $D$, the former receives 90 units of bandwidth whereas the latter only 10, and if both use $D$, because of congestion collapse, each receives 15 units. This interaction can be represented as the two-person prisoner's dilemma game in Fig. \ref{saldkjfnalxdkhjf}. The unique equilibrium in this game is for both players to choose $D$, that is, to obviate congestion control. To sustain the cooperative equilibrium in this model we need to consider repeated interactions in an {\em infinite} time horizon giving rise to an incentive structure that is called a {\em supergame}. (It is well-known that cooperation cannot emerge under an a priori determined finite time horizon.)

\begin{figure}[tb]
\centering
\includegraphics[width=7.5cm]{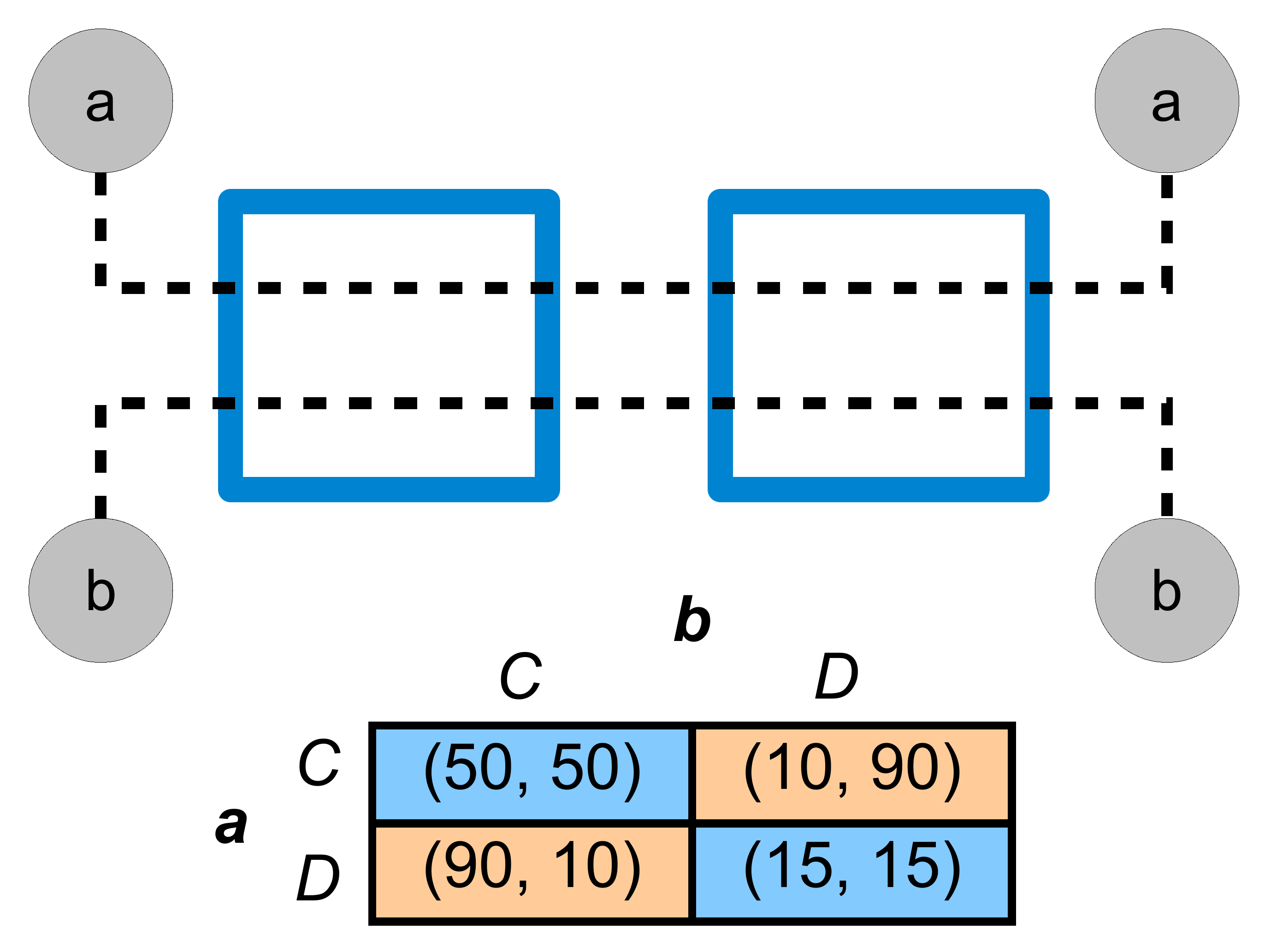}
\caption{\label{saldkjfnalxdkhjf}
Two flows crossing a link and one stage of the corresponding supergame.}
\end{figure}

\subsubsection{Conditional cooperation as an equilibrium in repeated interactions}

Supergames extend normal-form games from a single to multiple (repeated) interactions. Players interact in stages that correspond to identical (with respect to the incentive structure) repetitions of a basis game in normal form, called a {\em stage game}, under an {\em infinite time horizon} (that is, the stages extend to infinity). The strategies of a stage game are called {\em actions} in the supergame. Then a strategy (of the supergame) becomes an infinite sequence of actions (one for each stage of the game). Each action can depend on the sequence of play observed in previous time periods. In a supergame, the payoffs that combinations of strategies yield to each corresponding player are obtained as {\em discounted infinite sums} of the payoffs that actions yield (as determined by the stage game). The discount factor weighs in the significance of future payoffs to the present action.

Drawing on the analysis of prisoner's dilemma supergames by \citep{Axelrod}, we note that the incentive structure of the $2$-player prisoner's dilemma corresponds to a game-theoretic version of the {\em rate allocation problem} where a pair of similar (in the content they transfer) transport sessions of unknown duration simultaneously compete for the bandwidth of a single (bottleneck) link. These sessions correspond to the players in our supergame. At each stage game, each player has a choice of either using TCP/IP congestion control (action $C$) or a custom transport protocol that obviates congestion control (action $D$). A strategy of our supergame is a sequence of actions where each action is a function mapping the history of play onto the action space $\{ C, D \}$ of the stage game. We denote the strategy of always cooperating by $C^{\infty}$ and that of always defecting by $D^{\infty}$.

The empirical outcome of play in the Internet can be approximated as {\em universal cooperation} in that all players cooperate in all stage games. However, it is easy to show that universal {\em unconditional} cooperation corresponding to the strategy profile $(C^{\infty}, \ldots, C^{\infty})$ is {\em not} an equilibrium. To explain the empirical outcome we must, therefore, consider {\em conditionally cooperative strategies}, that is, strategies where cooperation is contingent upon previous outcomes. The most prominent of these strategies is {\em tit-for-tat,} which is to cooperate in the first stage game and to choose the opponent's previous strategy in succeeding games. \cite{Axelrod} shows that tit-for-tat is an equilibrium in the $2$-player prisoner's dilemma supergame provided the discount factor is large enough. Our toy theoretical construct is, thus, able to explain TCP/IP congestion control being a game-theoretic equilibrium: Each player is content being cooperative provided the other player cooperates.

\subsubsection{Rate allocation as a multi-player prisoner's dilemma}

The conclusions of the previous analysis carry over to the setting of {\em symmetric prisoner's dilemma supergames} \citep{Taylor} which we can use as a model of the interaction of multiple transport sessions sharing a single (bottleneck) link. (The assumption that sessions compete for the bandwidth of one bottleneck link is standard in the analysis of TCP/IP congestion control protocols.) 

The supergame is symmetric in that at each stage game a player's payoff depends only on the player's own choice and the number of other players choosing $C$. To assume that payoffs are symmetric in the profile that all players cooperate is justified by the assumption of proportional fairness (which, as mentioned earlier, is a property TCP/IP congestion control satisfies). To assume that payoffs are symmetric among defecting players is justifiable owing to all such players using the same transport protocol (and there is no reason to believe that congestion collapse would favor one defecting player over another). Let $f_k$ be a player's payoff if the player chooses $C$ and $k$ others choose $C$ and let $g_k$ be the corresponding payoff if the player chooses $D$ and $k$ others choose $C$. It is assumed that: (1) $\forall k \geq 0, g_k > f_k$, (2) $f_{N-1} > g_0$, and (3) $\forall k > 0, g_k > g_0$. The first assumption captures that the choice to defect (i.e., to obviate congestion control) dominates cooperation. Indeed the first player to defect seizes a disproportionate fraction of the bandwidth, which he has to share with other defecting players at a much lower efficiency level than if everyone cooperated; that the efficiency level of a system that obviates congestion control is lower is attributed to the ensuing congestion collapse and is captured in the second assumption. The third assumption captures the fact that the payoff of using the aggressive transport protocol decreases as more players defect.

A natural generalization of tit-for-tat in corresponding $N$-person games is to cooperate in the first game, and then cooperate if and only if at least $n$ players cooperated in the preceding game. Call this strategy $B_n$. If every player uses $B_n$, then the supergame's outcome is indeed universal cooperation, however, is $(B_n, \ldots, B_n)$ an equilibrium? \cite{Taylor} answers this question in the affirmative provided that $n = N-1$ and the discount factor exceeds a certain lower bound. This analysis implies that {\em conditional} use of TCP/IP congestion control by all players is an equilibrium. Cooperation is conditional in that players are eager to obviate congestion control to penalize defectors, implying defection does not pay off in the long run. Since {\em unconditional cooperation} can be exploited by defectors and is, therefore, not an equilibrium, we may conclude that TCP?IP congestion control's architectural stability is imputable to players' having a {\em theory of mind}.

What's more, tit-for-tat is, to the extent of our knowledge, the {\em simplest} of all strategies that can sustain cooperation as an equilibrium, which implies that the Occam's razor singles out this strategy as the most likely explanation of the behavioral phenomenon we are trying to understand. We, may, thus, postulate that Internet players, and, in particular, content providers, do not defect from TCP/IP congestion control {\em thinking} that defection will be counteracted with retaliation from other content providers. Papadimitrou was, therefore, in part, correct in predicting that ``If we see Internet congestion control as a game, we can be sure that its equilibrium is not achieved by rational contemplation, but by interaction and adaptation \ldots'' as tit-for-tat is indeed an adaptive strategy, however, we saw that it is rational contemplation averting players from defecting.

\subsubsection{Conditional cooperation as a focal equilibrium}

The conditionally cooperative equilibrium that emerges in the prisoner's dilemma supergame is not unique; for example, {\em unconditional defection} corresponding to the strategy profile $(D^{\infty}, \ldots, D^{\infty})$ is also an equilibrium. It is then natural to ask why the Internet population selects the cooperative outcome. We believe TCP/IP congestion control is a {\em focal equilibrium} \citep{Myerson, Schelling}, that is, an equilibrium conspicuously distinctive from other equilibria and, therefore, one that everyone expects to be the outcome, and, hence fulfill. There are many reasons for this: 
\begin{itemize}

\item TCP has been in stable operation for over 30 years, and it is known to work. 

\item TCP is efficient, maximizing aggregate social utility, and {\em equitable}, providing a level playing field: Content providers would have abandoned it had it not facilitated fair competition. 

\item The equilibrium of universal defection leads to congestion collapse, making it conspicuously unattractive; the Internet population would prefer to steer away from such an equilibrium.

\end{itemize}

\section{Evolutionary approaches for revolutionary change}
\label{equilibrium_transitions}

Roughly fifteen years ago, Jennifer Rexford sketched an influential to my thinking blueprint for effecting revolutionary change in the Internet architecture using an evolutionary approach.\footnote{http://www.cs.princeton.edu/~jrex/position/disrupt.txt} She stipulated that to that end we should think about the {\em target end state} and devise a {\em backward compatible transition plan} consisting of {\em incentive compatible steps}. Since then clean slate design has been a vibrant research topic, incremental deployability has maintained a strong foothold in the networking literature as a highly sought after property in systems design, and there are ongoing research and investment efforts to render the Internet infrastructure more easily programmable. In this section, we discuss architectural evolution drawing on this blueprint's insightfulness.

%The bottleneck in evolution through incremental deployability

To that end, we are concerned with social systems free to change their state using incremental deployability as the evolutionary force. Our characterization of equilibrium as a population state such that no other state is incrementally deployable against it implies that (possibly undesirable) equilibria may emerge as {\em bottlenecks} in an evolutionary trajectory (toward a target). To drive then a population into a target equilibrium, mechanisms must, therefore, be in place to drive that population out of undesirable equilibrium states. This situation emerges, for example, in the process of changing Internet protocols from parochial incumbent versions toward emerging versions that are technically superior but their benefits do not manifest unless several adopters upgrade (but similar phenomena can manifest in situations other than the Internet). In this section, we capture these phenomena in an (archetypical in game theory) abstract mathematical model and design mechanisms that ensure equilibrium transitions from socially inferior to superior equilibria.

\subsection{On the division of responsibility between code and institutions}

Engineering is concerned with the analysis and design of {\em technologies} whereas economics is concerned with the analysis and design of {\em institutions} (such as {\em markets}). We believe there is common ground to explore between these disciplines. To that end, our characterization of (Nash) equilibrium based on the maximality of preference relations defined in terms of incremental deployability is theoretical evidence that engineering science can inform our understanding of economic theory (where equilibrium lies at its elementary foundation). Furthermore, our formulation of TCP/IP congestion control as an equilibrium computation algorithm is evidence in the other direction, namely, that economic theory can inform our understanding of the Internet architecture. 

But there is also more direct practical evidence that engineering and economics can cross-pollinate. Consider, for example, cryptocurrencies such as Bitcoin. These currencies draw their ability to carry economic (exchange) value from their technical design (that engineers scarcity using cryptography as the elementary primitive). Furthermore, the economic (financial) properties of cryptocurrencies are bound to align to engineering decisions concerning the technical design. 

In the rest of this paper, we exploit this {\em duality} between technologies and institutions to design mechanisms that benefit both engineering and economic ventures. First we discuss at length the complementary character that technologies and institutions play in the design of social systems.

\subsubsection{The capacity limits of software computation and evolution}

Theoretical computer science in its branch called {\em computability theory} studies {\em the limits of computation} using the {\em Turing machine} as its computational model. Computability theory asks which tasks admit implementations in a Turing machine, and it is well-known that there are simple tasks Turing machines cannot perform. But computation in software has other limits (beyond the current scope of computability theory). We discuss the limits that ensue as software technologies are integrated in social systems (for example, in the social system that shapes the architecture of the Internet) first, followed by a discussion of the limits of software computation as these relate to the applications of {\em computational complexity theory} in designing and building distributed systems.

% The ossification of the Internet

One of the main motivations for the research in this paper is a phenomenon that manifests at the core architecture of the Internet (that is, the TCP/IP architecture), namely, that despite significant attempts from the networking community (in the academic, industrial, and governmental sectors) to effect change, the TCP/IP architecture is defiant of these efforts. Quoting \cite{RSM}: ``In the early days of the commercial Internet (mid 1990's) there was great faith in Internet evolution. \ldots The remarkable success of the Internet surpassed our wildest imagination, but our optimism about Internet evolution proved to be unfounded. \ldots Thus the ISPs which where once thought the agents of architectural change are now seen as the cause of the Internet impasse \ldots'' The core architecture of the Internet has facilitated significant innovation at the link layer (e.g., the rapid transition to 3G, 4G, and 5G systems in wireless access and the rapid transition to optical technologies for fixed access) as well as the application layer (e.g., Google, Facebook, Twitter), however, it is increasingly being held responsible for stifling the emergence of new applications.\footnote{\url{http://www.guardian.co.uk/technology/2012/apr/29/internet-innovation-failure-patent-control}}

The Internet started out as a research experiment whose stunning success led to an expansion of unprecedented scale. The Internet grew precipitously without giving room to the community for introspection on the design choices that had been made; brilliant though as these design choices were in many respects but confined to research experimentation in others, the designers could not do more than observe a research experiment escape the lab to weave into our societal fabric.

Users and operators are calling for significant change toward improving performance, quality of service, availability, and security and the impossibility of effecting change generates frustration even in the networking research community. Quoting Jennifer Rexford\footnote{http://www.cs.princeton.edu/~jrex/position/disrupt.txt}: ``In the past several years, the networking research community has grown increasingly frustrated with the difficulty of effecting substantive change in the Internet architecture.  This frustration is responsible, in part, for a shift in focus to research in ``green-field" environments  such as overlays, peer-to-peer, sensor networks, and wireless networks.  Although these topics are compelling and important in their own right, I would argue that the underlying Internet infrastructure, IP protocols, and operational practices are plagued with serious problems that warrant  significant research attention from our community.''

Jennifer Rexford then goes on to argue in the same position statement that: ``However, having revolutionary impact on the Internet architecture is
extremely challenging because of the commercialization of the network
and the large installed base of legacy equipment running the existing
protocols.  As a result, many studies of the Internet's ``underlay"
focus on collecting and analyzing measurement data \ldots  These
kinds of studies are certainly interesting and important \ldots Yet, we
need to find ways to effect more fundamental change in the years ahead.''

The networking community's efforts toward effecting fundamental change in the Internet have been channeled toward two distinct directions. Computer scientists have become accustomed to attributing the innovation slump, on one hand, to the lack of appropriate technical characteristics in the existing technologies and attempt to counterbalance such limitations through devising technologies that are easier to implement, have better performance, and have better reliability and security, and that are incrementally deployable. However, these efforts have been to no avail. On the other hand, the community is creating large scale experimentation environments capable of even attracting a real user base such as the Global Environment for Network Innovations (GENI), an effort driven by the belief that adoption of proposals to effect change is hindered by the inevitable lack of deployment experience for untested in the field ideas.\footnote{An argument for building such large scale testbeds was articulated by \cite{Impasse}.} Deployment experience is certainly a necessary condition for widespread adoption to take place, but it is by no means sufficient.

Much like deployment experience is a necessary condition for adoption of an emerging technology to take place, {\em agreement} on the benefits of an emerging technology seems to also be a necessary condition for adoption, but it is by no means sufficient. As \cite{Olson} lucidly argues in a generally acceptable (and widely celebrated) viewpoint (that has its critics however), groups aiming to take collective action do not form through mere agreement of their members on the possible benefits of furthering their objectives or their members' likemindedness; incentives play a significant role in the formation of such groups. Olson's perspective forms the basis of this paper's thesis.

Our thesis in this paper is that to find plausible explanations for the architectural stalemate (and to effect change) one must look beyond software specifications into the incentive structure that shapes the adoption of technologies in the Internet. The networking community is not oblivious of this fact: In an influential paper, \cite{Impasse} note that ``In addition to requiring changes in routers and host software, the Internet's multiprovider nature also requires that ISPs jointly agree on any architectural change. [changing paragraph] The need for consensus is doubly damning: Not only is reaching agreement among the many providers difficult to achieve, attempting to do so also removes any competitive advantage from architectural innovation.''

The (rather standard) picture of the Internet (painted in academic texts) portrays an engineering artifact being {\em run by code} and humans merely as {\em operators} with the ability to influence the behavior of hosts and routers through configuration commands. In reality, however, there is nothing fundamental preventing the same humans from making wholesale changes to the architecture other than the fact that one individual's or organization's choices depend on the choices other individuals or organizations (from equipment manufacturers and service providers to content providers and endusers) make. The Internet is the collective outcome of choices made by these population systems, and thinking of the Internet as being run by code is, therefore, in part, misleading: The Internet is run by the {\em incentive structure} that emerges from (social) population interactions.

An explanation for the stalemate in the Internet architecture can be obtained from an assumption that incremental deployability is the primitive that drives change in the Internet environment: The Internet as a global communications platform is characterized by positive network externalities effects in that the utility of adopting a technology to any given adopter increases with the number of adopters. However, it is typically the case that if the number of adopters is small, utility is negative in that the adoption effort exceeds the benefit. Therefore, Internet technologies require a {\em critical mass} of adoption to thrive, and this critical mass creates an {\em incremental deployability barrier}. But note that such barriers are not unique to the architecture of the Internet but are rather characteristic of a number of Internet technologies. For example, the adoption of cryptocurrencies (such as Bitcoin) also faces a similar incentive structure. We stipulate these barriers can be circumvented by intervening in the adoption environment through institutional mechanisms. 

But such institutional intervention is further warranted by a related inherent uncertainty that distributed systems such as Bitcoin face in relying on {\em cryptography} to draw their value. Internet cryptography is based on computational assumptions that do not have a mathematically sound basis. The dominant algorithm for digital signatures used in the Internet is RSA (after the names of the inventors of this cryptographic algorithm) that draws its security from the computational difficulty of {\em factoring}. But that factoring is a computationally hard algorithmic process is only a conjecture. In a similar fashion, cryptocurrencies use digital signature algorithms that are not inherently safe. In my opinion, the ensuing ambiguity present in the economies based on cryptocurrencies can be managed through institutional mechanisms such as, for example, {\em insuring} monetary value, a value that could be eliminated through, a possibly catastrophic for cryptosystems (but fortuitous in other ventures), algorithmic progress toward solving {\bf NP}-hard problems.\footnote{For example, advances in neural networks, which have been successful at solving a variety of hard computational tasks, could in practice break asymmetric cryptographic systems without directly rendering a proof that {\bf P = NP}.}

\subsubsection{The capacity limits of using markets to incentivize production}

We have insofar explored the capacity limits of technological evolution advocating that software technologies be complemented by institutional mechanisms, for example, to drive technological change in social systems (such as that of the Internet) or to protect the value of financial cryptography. But there are limitations to the institutional mechanisms we have at our disposal in order to solve the problems that motivate our efforts in this paper: One of the most prominent institutional mechanisms at work in our societies is the {\em market}. Markets incentivize production in the economy through the financial gains being earned in market-based exchanges of goods and services, however, economic theory (in its branch called {\em microeconomics}) is primarily concerned with mechanisms that allocate {\em scarce resources,} namely, the output of industrial production activity. In a sense, microeconomic theory is concerned with the consumption side of market-based production whereas the production process per se is poorly understood (for example, see \citep{BMN} for an excellent theoretical treatment of production economies). Effecting architectural innovation in the Internet is in a sense an {\em information production problem} (for example, see \citep{Benkler}) in that the innovation needed corresponds to an upgrade of software and hardware algorithms. Therefore, the microeconomic perspective of understanding markets as allocating scarce resources can shed little light (in the present standing of microeconomic theory) on architectural innovation.

\subsection{The stag hunt as a model of coordination}

To a large extent, the most elementary of social activities humans engage in whenever they form groups  are {\em cooperation} (which emerges as humans put synergistic efforts) and {\em conflict} (which emerges as humans put antagonistic efforts). Of all social activities, the formation of an {\em economy} is essential in most, if not all, societies. Viewing the economy from the aforementioned perspective, breaking down economic activity as a spectrum of situations consisting of cooperation opportunities and conflict possibilities (as well as their combination) regarding the production and consumption of goods and services, would perhaps closely approximate most significant economic enterprises. 

Such a breakdown of economic activity is justifiable on the terms that cooperation {\em creates value} and conflict {\em allocates that value}. In a sense, groups form to create and share value for their members. Our goal in this paper has been to investigate methods of organization (that groups can employ to create, through production mechanisms, and share, through allocation mechanisms, wealth in a manner that advances their well-being and prosperity as they form complex societies). In the sequel, we focus on the cooperation and coordination side of organization: In starting out this paper, we posited the thesis that the Internet is an equilibrium. To a first approximation, the Internet is a {\em stag hunt} (and the problem of evolving the Internet architecture can be captured in this model). \cite{Skyrms} defines the stag hunt as follows: ``Let us suppose that the hunters [in a group] each have just the choice of hunting hare or hunting deer. The chances of getting a hare are independent of what others do. There is no chance of bagging a deer by oneself, but the chances of a successful deer hunt go up sharply with the number of hunters. A deer is much more valuable than a hare. Then we have the kind of interaction that is now generally known as the stag hunt.''

In the setting of architectural evolution, deer is the emerging technology, hare the incumbent, and the hunters are the potential adopters. That the chances of getting a hare are independent of what others do reflects the often realistic assumption that the incumbent neither benefits nor suffers from adoption of the emerging technology. That there is no chance of bagging a deer by oneself reflects that the emerging technology is not incrementally deployable, that the chances of a successful deer hunt go up sharply with the number of hunters reflects positive externalities, and that a deer is much more valuable than a hare reflects the superiority of the emerging technology. 

To provide another illustration of this game-theoretic model, let us give an example on the problem of effecting architectural innovation in the Internet infrastructure (organized as mentioned earlier in autonomous systems, administratively independent organizations offering Internet connectivity to endusers). To provide this function autonomous systems interconnect in a rather general topology (which satisfies certain statistical properties such as the power law distribution). We may represent this topology as a graph, which at the moment has tens of thousand of vertices.

In a typical adoption environment (in particular, for {\em network-layer} innovation), players are the administrative authorities of these autonomous systems---we use $i$ as the running index of a player. Each player has a choice of either adopting (strategy $A$) or defecting (strategy $D$). The payoff of an adopter depends on the connected component of adopters in the graph to which the adopter's autonomous system belongs, and this payoff goes up sharply with the size of this component. The payoff of a player who defects is zero. If the size of an adopter's component is small enough, the adopter's payoff may be negative (meaning that the deployment effort costs more than the benefit).

Denoting a combination of the players' choices (strategy profile) by $s$, writing this from the perspective of player $i$ as $(s_i, s_{-i})$ where $s_{-i}$ is a combination of strategies for all players except $i$, and denoting the payoff function of player $i$ as $u_i(\cdot)$, the previous discussion can be captured in the following simple formula:
\[
  u_i(s_i, s_{-i}) = \left\{ 
  \begin{array}{l l}
    \beta_i(s) - \gamma_i, & \text{$s_i=A$}\\
    0, & \text{$s_i=D$,}\\
  \end{array} \right.
\]
where $\beta_i(s) - \gamma_i$ is the adoption benefit (which may be negative) and $\gamma_i > 0$ is the deployment cost.

\subsubsection{A formal definition of the stag hunt}

We provide below a formal definition of the stag hunt that we will use in the analysis in following sections; to the extent of our knowledge this is the first formal definition of this game (although it closely reflects the aforementioned description of \cite{Skyrms}).

\begin{definition}[Stag hunt]
We say that $\Gamma = (I = \{1, \ldots, n\}, (S_i)_{i \in I}, (u_i)_{i \in I})$ is a {\em stag hunt} if, for all $i \in I$, $S_i = \{A, D\}$ (that is, all players have a common pair of strategies) and the payoff of choosing $A$ (hunting deer) to each player that hunts deer is an increasing function of the players that hunt deer, whereas the payoff of choosing $D$ (defecting from hunting deer and hunting hare) is a constant, say, $c$. Furthermore, letting $A^n$ denote the profile in which every player hunts deer, we have that, for all $i \in I$, $u_i(A^n) > c$. Finally, all unilateral deviations from $D^n$ (that is, from the profile where everyone defects and hunts hare) are harmful.
\end{definition}

The last requirement that unilateral deviations from $D^n$ are harmful can be understood as follows: To hunt deer a player expends resources (such as his or her energy) and unless the hunt is successful these resources are wasted. In the setting of network evolution, the resources being wasted would typically correspond to an investment that does not pay off. An example of a two-player stag hunt is the following game:
\begin{align*}
\left(\begin{array}{c c}
(10, 10) & (-1, 0)\\
(0, -1) & (0, 0)\\
\end{array}\right).
\end{align*}
If both players cooperate in choosing the superior strategy each earns a payoff of $10$ whereas the uncooperative strategy yields a payoff of $0$ irrespective of what the other player does. If a player chooses to cooperate but the other player doesn't, the former earns a negative payoff.

\subsubsection{Stag hunts are weakly acyclic games}

The stag hunt has two Nash equilibria in pure strategies, namely, universally adopting the superior cooperative strategy and universally defecting from using that strategy. The manifestation of the latter equilibrium is known as a {\em coordination failure}. The stag hung also has an inferior mixed Nash equilibrium providing that is deemed unstable and it is generally ignored in the literature.

Predicting the outcome of a stag hunt is a notoriously hard problem that can be formalized as an {\em equilibrium selection problem} \citep{HS} provided we accept the Nash equilibrium as the starting point of the analysis (a rather common assumption in game theory). A fundamental assumption in most attempts to provide a rigorous foundation for game theory since its inception is that players attempt to {\em maximize utility}. In the stag hunt there is a unique outcome where utility is maximized uniformly across all players that corresponds to the selection of the superior equilibrium. However, experience suggests that the inferior equilibrium may also manifest in practice, an observation that is supported by empirical evidence in experimental settings. 

The manifestation of the inferior equilibrium agrees with Nash equilibrium theory, however, solving the equilibrium selection problem amounts to more. Interestingly, as discussed in a related Wikipedia entry,\footnote{\url{http://en.wikipedia.org/wiki/Risk_dominance}} in their attempt to provide a solution to this latter problem, \cite{HS} identified selection of the superior equilibrium as corresponding to a ``rational decision'' but later Harsanyi retracted this position and identified selection of the inferior equilibrium as the rational choice. We note, however, that what constitutes rational choice in such environment is an unresolved problem that warrants further investigation. The mechanisms we present in this paper are designed to eschew ambiguities that arise owing to multiplicity of equilibria.

The stag hunt has three Nash equilibria, the aforementioned pure equilibria corresponding to universal defection and universal adoption as well as one Nash equilibrium in mixed strategies, which (as mentioned earlier) is generally ignored in the literature. To a large extent this adds support to an argument that mixed strategies are not well-suited to serve as an analytical framework for the class of coordination games that are of interest in this paper, which motivated our focus on alternative analytical techniques based on pure-strategy dynamics that formalize the empirical notion of incremental deployability capturing the essence of how the term is used by the networking community. It is natural then to ask if the stag hunt has maximal states other than the Nash equilibria of universal adoption and universal defection. In this section, we show that it doesn't. Figure \ref{saldkjfnalxdkhjff} demonstrates by means of an example of a stag hunt wherein a transient cycle appears, that the stag hunt does not have the FIP, implying it is not a generalized ordinal potential game. However, in the following theorem, we show that cycles in stag hunts are necessarily transient.

\begin{figure}[tb]
\centering
\includegraphics[width=5.5cm]{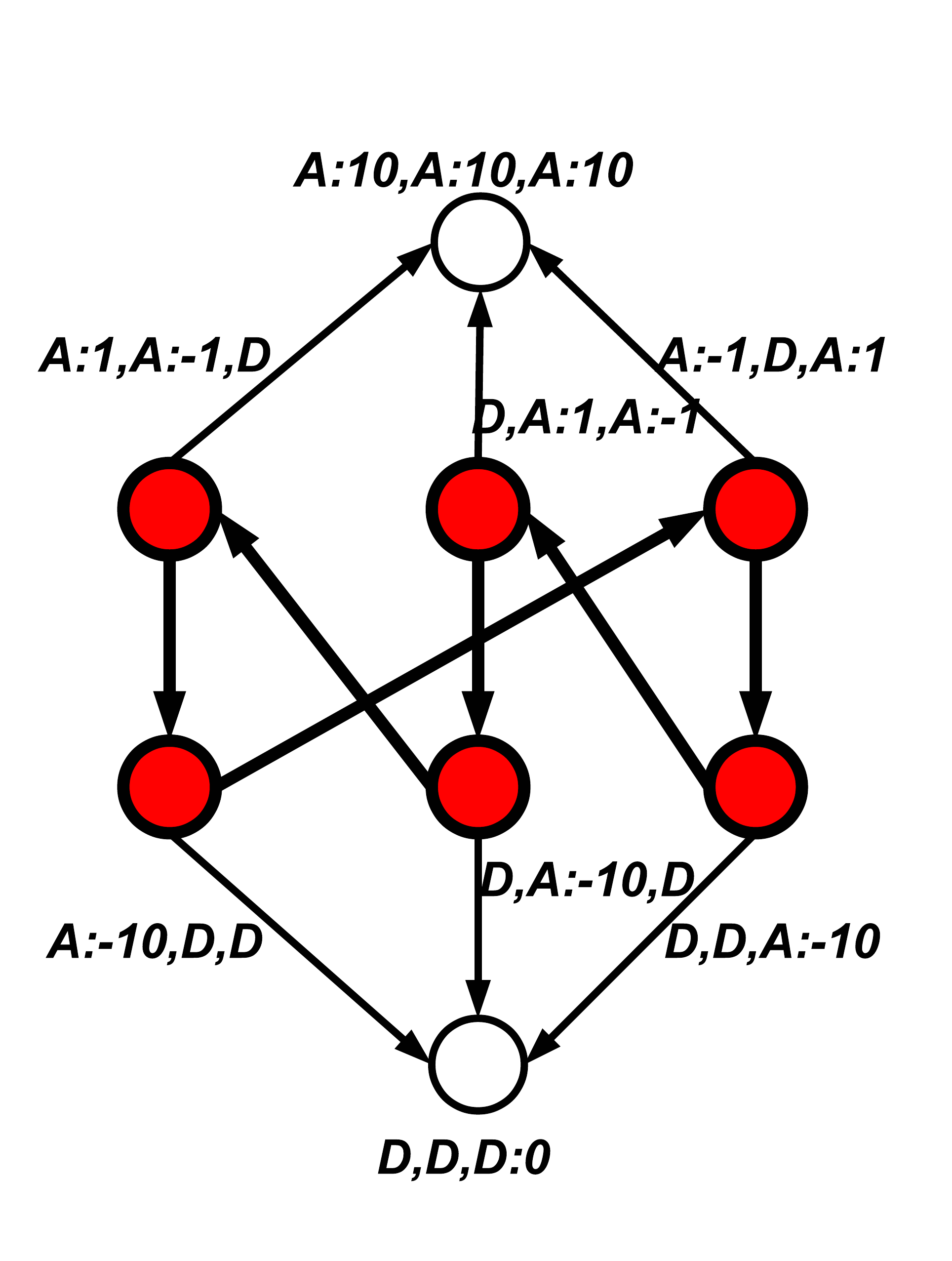}
\caption{\label{saldkjfnalxdkhjff}
Example of a stag hunt with a cycle. The figure shows the deployment graph of the game as well as the payoffs corresponding to each profile. There are three players. The payoff to a player that uses $D$ is $0$ irrespective of what other players do. The payoff of playing $A$ is shown in the figure for each profile in the game.}
\end{figure}

\begin{theorem}
\label{asldkfklfkjjasdp}
Stag hunts are weakly acyclic games.
\end{theorem}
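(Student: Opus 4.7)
The plan is to prove weak acyclicity directly from its characterization: starting from every profile $s$, I will exhibit a finite sequence of strictly profitable unilateral deviations ending at a pure Nash equilibrium. Two candidate equilibria are already available from the definition. The profile $D^n$ is a pure Nash equilibrium because the definition stipulates that all unilateral deviations from $D^n$ are harmful. The profile $A^n$ is also a pure Nash equilibrium because for every $i$ we have $u_i(A^n) > c$, so a deviation to $D$ (yielding $c$) is not profitable. The job is then to route an arbitrary intermediate profile to one of these (or to some other pure Nash equilibrium).

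Given an arbitrary starting profile $s$, I propose a two-phase construction. In \emph{Phase 1}, while there is an adopter $i$ with $u_i(s) < c$, switch $i$ from $A$ to $D$; this is strictly profitable and strictly decreases the number of adopters, so Phase 1 halts in at most $n$ steps either at $D^n$ (done) or at a profile $s^{\dagger}$ in which every adopter has payoff at least $c$. In \emph{Phase 2}, starting from $s^{\dagger}$, while there is a non-adopter $j$ with $u_j(A, s_{-j}) > c$, switch $j$ from $D$ to $A$; this is strictly profitable and strictly increases the number of adopters, so Phase 2 halts in at most $n$ steps either at $A^n$ (done) or at a profile $s^{\ddagger}$ in which no non-adopter has a strictly profitable switch to $A$. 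Combined with the invariant (maintained throughout Phase 2) that every adopter still has payoff at least $c$, the terminal profile $s^{\ddagger}$ is itself a pure Nash equilibrium, and the concatenation of Phase 1 and Phase 2 moves gives the required path in the strict deployment graph.

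The main step to verify, and the essential use of the stag hunt structure, is the Phase 2 invariant. When a non-adopter $j$ switches to $A$, the payoff of each other adopter $i$ is evaluated at a state with one additional adopter; by the hypothesis that the payoff of choosing $A$ is an increasing function of the set of players that hunt deer, $i$'s payoff weakly rises, so ``every adopter has payoff $\geq c$'' is preserved and no $A \to D$ deviation becomes profitable later in Phase 2. The anticipated obstacle is precisely this monotonicity bookkeeping: Figure~\ref{saldkjfnalxdkhjff} exhibits a cycle in the deployment graph of a stag hunt, showing that choosing strictly profitable deviations carelessly can loop. The point of the two-phase construction is to sequence deviations so that the number of adopters first decreases monotonically and then increases monotonically, which rules out cycles and forces termination at a pure Nash equilibrium from every starting profile.
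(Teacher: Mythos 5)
Your proof is correct and follows essentially the same two-phase strategy as the paper: first retire adopters whose payoff is below $c$, then invite defectors in, using the monotonicity of the adoption payoff to certify that no adopter relapses during Phase 2. There is one worthwhile difference in how the two arguments handle Phase 2's termination: the paper asserts without proof that $A^n$ and $D^n$ are the \emph{only} pure Nash equilibria, and uses this to argue that after Phase 1 some defector must want to switch, so that Phase 2 necessarily ends at $A^n$. Your argument does not rely on that claim; you simply observe that if Phase 2 stalls before $A^n$, the stall state $s^{\ddagger}$ is itself a pure Nash equilibrium (defectors have no profitable switch by the stopping rule, and adopters have none by the invariant), which is all that weak acyclicity requires. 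This is actually a cleaner route, since Definition~8 of the stag hunt does not in general rule out intermediate pure Nash equilibria when the per-player adoption payoffs $u_i(A,\cdot)$ differ across players (one can build a four-player example with two adopters and two defectors that is a pure Nash equilibrium yet is neither $A^n$ nor $D^n$); your version of Phase 2 remains valid in such cases, while the paper's ``a $D$-player must exist that prefers $A$'' step would fail as stated, even though both arguments correctly conclude weak acyclicity.
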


\begin{proof}
We show that starting from any state we are ensured, by following profitable unilateral devisions, to reach either $A^n$ or $D^n$, where $n$ is the number of players. So let us start from an arbitrary state other than $A^n$ and $D^n$ and let us iteratively switch to $D$ all $A$-players that prefer $D$ to $A$. This iterative process may lead to $D^n$. If it does not, then a state is reached in which all $A$-players prefer $A$. Since the only pure Nash equilibria are $A^n$ and $D^n$, a $D$-player must exist that prefers $A$. Let us switch such a player to $A$. Then, by the definition of the stag hunt, all $A$-players in this new state prefer $A$ to $D$. Unless this new state is $A^n$, there must again exist an $D$-player that prefers $A$. Continuing in this fashion, the state $A^n$ is ensured to be reached.
\end{proof}

We note that, using our aforementioned definition of the stag hunt, the distinction between weakly and strongly maximal states vanishes as all unilateral deviations are, by definition, either (strictly) profitable or harmful, and, therefore, the stag hunt is also weakly ordinally acyclic. We believe it is also worth considering variants of the stag hunt where the aforementioned assumption is lifted (in the sense that if a player switches from the noncooperative to the cooperative strategy not all players having adopted the cooperative strategy (strictly) benefit), but we leave the corresponding analysis of such (plausible in some settings) variants as future work.

\subsection{Incrementally deployable coordination using commitments}

In the rest of this section, we are concerned with the evolution of cooperation as an incrementally deployable social outcome. Equilibrium is a universal phenomenon in nature at large and society in particular. To a large extent, systems design, whether of social or engineering systems, is concerned with {\em equilibrium transitions} that bring a system from an {\em incumbent} to an {\em emerging} equilibrium. Consider, for example, the introduction to the economy of a new digital currency. At the incipient stages corresponding to limited use in market exchanges, few to no parties find valuable to adopt such currency for their transactions, but as the number of adopters increases, and the currency's value also increases, the new currency gradually becomes an attractive medium of exchange.

To facilitate such equilibrium transitions we introduce one of the main ideas in this paper, namely, that of a ``coordination mechanism,'' first in abstract terms, which we then instantiate using particular examples. In the setting of Internet architecture, these mechanisms could be employed to overcome the deployability barriers that innovative architectures whose success depends on positive externalities face. In this setting, coordination mechanisms correspond to a radical departure from ongoing innovation efforts in Internet systems engineering. However, we should note that the application span of such mechanisms extends beyond the Internet to other endeavors facing a similar incentive structure. In fact, as argued in the sequel, the notion of a coordination mechanism is, in a sense, already being used in practical applications such as {\em crowdfunding} with great success (although viewing crowdfunding as a coordination mechanism in the sense the term is used in this paper is a novel, to the extent of our knowledge, perspective in the corresponding literature).

\subsubsection{Definition of a coordination mechanism}

We define a {\em coordination mechanism} to be a {\em social institution} that induces the superior cooperative outcome in a stag hunt (called the {\em basis game}) by means of merely adding strategies in the basis game, without affecting the (incumbent) payoff structure of that game. That is, to qualify as a coordination mechanism, a social institution must not only induce cooperation but it must also be based on the voluntary participation of the players (who may choose to opt out if they so desire). Of course players opting in have to abide by the rules of the coordination mechanism. Note that whether an institution qualifies to be a coordination mechanism depends on the analytical method that is used to predict the social outcome in the corresponding adoption environment. Our main analytical tool is the incremental deployability theory we developed but we also present the analysis of other approaches to predicting the outcome of a game (such as {\em dominance solvability}).

\subsubsection{Examples of plausible coordination mechanisms}

In the rest of this section, we provide two examples of mechanisms that seem to be plausible candidate institutions meant to induce the cooperative outcome in a stag hunt. In subsequent sections, we prove that these mechanisms qualify as coordination mechanisms.\\

\noindent
The mechanisms' efficacy rests on the following, common to both mechanisms, assumptions: 
\begin{itemize}

\item[A1.] It is possible to enforce {\em commitments}. 

\item[A2.] Using the cooperative strategy requires an {\em investment,} and a third party can measure the cost of individual investment. 

\item[A3.] A third party can verify ex post whether a particular player used the cooperative strategy.

\end{itemize}

In Internet architecture, an investment could correspond to the purchase of equipment and the implementation of technological standards, for example. Enforcing commitments is feasible across national borders through international commerce and contract law, while whether the second and third assumptions are satisfied depends on the nature of the emerging technology. Other mechanisms can also be employed toward the end of enforcing commitments however. For example, instead of relying on international contract law, a particularly compelling alternative to that end is to rely on nothing more than the institutional and organizational structure and the technological capabilities of the global Internet including the service providers, their business relationships, and the capabilities of their networks. For example, detecting infringement of commitments is possible through {\em peer review} and to punish infringement a coalition of service providers could {\em rate limit} those who infringe. The benefit of such design would be to decouple the Internet's evolution from international politics, and meeting the ensuing challenges is a fruitful direction for future work.

\subsubsection{An insurance mechanism}

The first mechanism is based on observing that adoption of the cooperative strategy in a stag hunt is a {\em risky undertaking} (as the investment involved in adopting such a strategy will not reap benefits unless others cooperate) and consists of instituting a carrier that {\em insures} these investments. There exist various possibilities on the carrier's governance structure as well as the specific insurance policies that can be offered to potential adopters. Perhaps the simplest of all insurance mechanisms is to offer policies that cover amounts slightly exceeding the corresponding potential adopters' deployment costs, and it is precisely this case that we analyze in the sequel. Observe that by offering such insurance policies, we do not modify the payoff structure of the basis game, and therefore the insurance mechanism seems quite a plausible candidate to qualify as a coordination mechanism. In our analysis in later sections, we show that this mechanism is indeed a coordination mechanism and that universal cooperation can, in principle, manifest without any of the potential adopters purchasing insurance, a rather surprising finding. We note that the idea of using insurance to induce the cooperative outcome in a stag hunt has been previously proposed by \cite{Autto}, but our analysis is more thorough and, in part, relies on a different analytical framework.

\subsubsection{An election mechanism}

Our second mechanism is based on the intuitive observation that {\em communication helps coordination,} a fact supported by ample experimental evidence (e.g.,~\cite{coord-comm}). However, in a group as large as the group of autonomous systems in the Internet, pairwise communication is untenable. But this does not preclude the possibility that indirect (more economical) communication can be as effective. For example, an interesting idea is to give players the opportunity to vote for or against adoption of the emerging technology as a means of either stimulating confidence that adoption is viable or renouncing adoption ahead of the investments. Such an idea is amenable to empirical testing, but, lacking precise models of the influence of signals (such as the outcome of the vote) on human behavior, predicting this mechanism's effect on adoption is a hard if not intractable problem. Our solution draws, however, on this idea and on the aforementioned feasibility assumptions.

We only consider in this proposal a simple election mechanism, the simplest we can think of that can be analytically shown to avert a coordination failure. According to this mechanism players have the option to vote for adoption, and once voting is complete, the mechanism outputs `1' if all players voted and `0' otherwise; the mechanism does not provision for negative votes and it does not disclose the identity of the players who voted. Such a mechanism being in place, players can condition their adoption decision on their own vote and on the outcome of the election. However, to have a predictable effect on adoption, payoffs must depend on the election outcome, and, to that effect, we introduce the following rule: Voting becomes a {\em commitment} to adopt if all players vote in which case defectors must suffer a penalty not smaller than their investment cost. The aforementioned feasibility assumptions ensure that such a mechanism can be implemented.  

We should note that, from a practical perspective, in adoption environments consisting of thousands of players the assumption of requiring that all players vote in favor of adoption to enforce a commitment is perhaps not realistic; for example, even if we assume that all players are rational, human errors, software errors (for example, in the software system that implements the election mechanism), or combinations thereof are not unlikely to manifest. However, commitments can be reasonably enforced as long as a large enough subset of players vote favorably without affecting the mechanism's efficacy: If a large enough subset of players adopt, then evolutionary forces alone inherent in the adoption environment are likely to induce themselves manifestation of the superior equilibrium of universal adoption (throughout the entire player population).

Our election mechanism is not a mere theoretical curiosity; it is akin to a {\em funding mechanism} (called {\em crowdfunding}) various platforms such as Kickstarter use to fund creative projects with great success. In this mechanism, possible contributors interested in seeing the idea of a project proposal come to reality can {\em pledge} an amount to a particular project proposal and if the proposal meets a target amount, the amounts pledged are contributed to the project; such a mechanism is evidently akin to voting in favor of an emerging technology and committing to adopt if the election outcome is positive; as noted in Kickstarter's homepage: ``All-or-nothing funding might seem scary, but it's amazingly effective in creating momentum and rallying people around an idea.'' Although it is not immediately clear whether the incentive structure of the environment crowdfunding platforms face can be modeled after a stag hunt, it is certainly a problem of {\em collective action} (a class of problems that, in a certain broad definition, encompasses stag hunts). Coordination mechanisms in these related environments warrant further independent scrutiny and we believe the analytical ideas introduced in this paper can make an important contribution to that end.

\subsubsection{Prelude to the analysis of the election mechanism}

Let us briefly discuss the approach we take in analyzing the election mechanism. Observe to that end that, as mentioned earlier, without the election mechanism universal defection is a strict equilibrium. However, once this mechanism is instituted it is not harmful for any player to vote in favor of adoption, assuming, for example, players keep in mind the reasonable strategy of defecting if the outcome of the vote is negative. Observe further that if all players think in this way and decide to vote, then, by the design of the mechanism, all players commit to adopt the cooperative strategy (implying emergence of the cooperative outcome). This rather reasonable intuitive analysis does not admit a straightforward game-theoretic justification; in fact, even if the election mechanism is instituted, universal defection remains a pure Nash equilibrium, albeit weak (and, therefore, of less predictive value than its strict counterpart). In the sequel, one of our main goals is to place squarely the aforementioned informal intuitive observations as of the election mechanism's efficacy in the game-theoretic apparatus, and many of our results are obtained in this vein.

\subsection{Analysis of the insurance and election mechanisms}
\label{Analysis}

In this section, we analyze the insurance and election mechanisms from the perspective of Nash equilibrium, dominance solvability, and maximality. In our analysis, we find that the insurance mechanism provides somewhat stronger analytical guarantees, however, in practice it may be somewhat less attractive than the election mechanism (largely due to the requirement of an insurance carrier committing, if not spending, a potentially significant budget to insure deployment efforts).

\subsubsection{Dominance solvability analysis}

Let us first introduce some elementary background on dominance solvability: A basic rationality postulate in game theory is that a player will not play a (strictly) ``dominated'' strategy. There are two notions of dominance, namely, {\em strict} (also called {\em strong}) dominance and {\em weak} dominance:

\begin{definition}[Strict dominance]
Let $\Gamma = (I, (S_i)_{i \in I}, (u_i)_{i \in I})$ be a game in strategic form, and consider player $i \in I$. We say that pure strategy $s_i$ {\em strictly dominates} the pure strategy $s'_i$ if if for all $\sigma_{-i} \in S_{-i}$, we have that
\begin{align*}
u(s_i, \sigma_{-i}) > u(s'_i, \sigma_{-i}).
\end{align*}
\end{definition} 

\begin{definition}[Weak dominance]
Let $\Gamma = (I, (S_i)_{i \in I}, (u_i)_{i \in I})$ be a game in strategic form, and consider player $i \in I$. We say that pure strategy $s_i$ {\em weakly dominates} the pure strategy $s'_i$ if if for all $\sigma_{-i} \in S_{-i}$, we have that
\begin{align*}
u(s_i, \sigma_{-i}) \geq u(s'_i, \sigma_{-i})
\end{align*}
with the inequality being strict for at least one $\sigma_{-i}$.
\end{definition} 

Strict dominance is, in a sense, compatible with the Nash equilibrium as it is possible to show that a strictly dominated strategy will never appear in a Nash equilibrium. Therefore, eliminating strictly dominated strategies from a game is a means of simplifying a game without running the risk of eliminating Nash equilibria unlike eliminating weakly dominated strategies.

The idea of eliminating dominated strategies from a game gives rise to the solution concept of {\em iterated dominance,} which appears in the literature in two variations, one {\em strict} and one {\em weak} according to the elimination criterion being used. Iterated dominance proceeds in rounds where in each round dominated strategies are eliminated until no strategies that can be eliminated remain. If the outcome of this process is a unique strategy for each player, the respective game is called {\em dominance solvable}. If a game is dominance solvable by strict iterated dominance, the respective strategy profile can be shown to be the unique (pure) Nash equilibrium of the game. The outcome of iterated weak dominance depends, unlike its strict counterpart, on the order by which weakly dominated strategies are eliminated. Despite this limitation weak iterated dominance is generally acknowledged as a credible means of drawing conclusions on a game's possible outcome.

\subsubsection{Analysis of the insurance mechanism}

The following analysis of the insurance mechanism is based on the assumption that corresponding contracts (between the carrier and potential adopters) are designed such that, in the event of a coordination failure, the insurance carrier offers a reimbursement amount to an insured player that slightly exceeds that player's deployment cost. Observe that this mechanism induces a game where each player has three strategies (instead of two), namely, to adopt without purchasing insurance (strategy $A$), to defect without purchasing insurance (strategy $D$), and to purchase insurance (strategy $X$). Recall that once a player purchases insurance she is bound by the respective contract to bear the deployment cost. Under the previous assumptions, we have the following theorem.

\begin{theorem}
\label{sid_result}
The insurance mechanism induces a game solvable by strict iterated dominance whose solution is universal adoption without purchasing insurance.
\end{theorem}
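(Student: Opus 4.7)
The plan is to prove Theorem \ref{sid_result} by carrying out two successive rounds of strict iterated dominance. First I would show that, for every player, the insured strategy $X$ strictly dominates the defection strategy $D$ regardless of what the other players do. Then, in the reduced game where $D$ has been eliminated for every player, I would show that $A$ strictly dominates $X$. Since only $A$ survives for each player, the unique iterated-dominance solution is the profile $A^n$ of universal adoption without insurance.

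For the first round, fix a player $i$ and any $s_{-i} \in S_{-i}$ of the other players' strategies in the induced three-strategy game. Playing $D$ yields payoff $0$ by the definition of the stag hunt. Playing $X$ commits player $i$ to bear the investment cost $\gamma_i$, while also entitling her to the insurance reimbursement, which is contractually set slightly above $\gamma_i$ whenever coordination fails. If the resulting basis-game payoff $\beta_i(A_i, s_{-i}) - \gamma_i$ is nonnegative, then playing $X$ yields at least this nonnegative value (since the payout is not triggered) and is in fact strictly positive because $\beta_i \geq 0$ and $\gamma_i > 0$ can be arranged by the stag-hunt parameters, or at worst strictly positive once the payout kicks in; if the basis-game payoff is negative, then the coverage slightly exceeding $\gamma_i$ leaves a strictly positive net payoff. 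In either case the payoff to $X$ is strictly positive while the payoff to $D$ is $0$, so $X$ strictly dominates $D$.

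For the second round, after $D$ has been eliminated from every player's strategy set, every player chooses either $A$ or $X$, so every player adopts. By the stag-hunt axiom $u_i(A^n) > c = 0$, the basis-game payoff $\beta_i(A^n) - \gamma_i$ is strictly positive for every player, and coordination never fails on any surviving profile. Consequently the insurance payout never triggers, so the insured strategy $X$ yields exactly the basis-game payoff minus any cost of purchasing the policy, while $A$ yields the basis-game payoff without that cost; hence $A$ strictly dominates $X$ for each player on the reduced domain. Since the only remaining strategy after two rounds of strict elimination is $A$ for every player, the induced game is solvable by strict iterated dominance with solution $A^n$.

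The main obstacle in making this rigorous is the second step: in the reduced $\{A,X\}$ subgame the insurance payout is dormant, and unless purchasing a policy carries some strictly positive price to the insured (a premium, administrative cost, or analogous frictional charge), $A$ only \emph{weakly} dominates $X$, which would demote the conclusion from strict to weak iterated dominance. I would therefore make explicit the natural and, I believe, implicit assumption that the carrier's policy has a (possibly arbitrarily small but) strictly positive cost to the purchaser; with this in hand, the strict-dominance elimination goes through as above. The first step is by contrast essentially mechanical given the stag-hunt structure and the design of the coverage amount.
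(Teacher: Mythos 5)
Your proof is correct and takes essentially the same approach as the paper: one round establishing that $X$ strictly dominates $D$ via the reimbursement design, followed by a second round establishing that $A$ strictly dominates $X$ on the surviving $\{A,X\}^n$ profiles. Your explicit flagging of the need for a strictly positive premium is well taken — the paper's own proof invokes ``the insurance premium from the players to the insurance carrier'' in the second elimination round without having stated it as a modeling assumption up front, so you have correctly identified the implicit assumption the argument in fact relies on.
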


\begin{proof}
Observe that strategy $X$ strictly dominates strategy $D$ as it yields a higher payoff irrespective of the strategies of other players (by the assumption that reimbursement under coordination failure slightly exceeds the deployment cost). Once $D$ is eliminated, strategy $A$ strictly dominates strategy $X$, as strategy $X$ yields a payoff lower than that corresponding to universal adoption due to the insurance premium from the players to the insurance carrier. The theorem, therefore, follows.
\end{proof}

Observe in the previous proof that universal adoption is not a dominant strategy but rather two rounds of elimination are required to obtain universal adoption as the solution. Let us now analyze the insurance mechanism from the perspective of maximality. We have the following theorem.

\begin{theorem}
The insurance mechanism induces a game wherein the only (strongly and weakly) maximal state is universal adoption without purchasing insurance.
\end{theorem}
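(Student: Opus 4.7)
The plan is to reduce both the weak and strong maximality claims to a single fact: that the insurance-induced game has exactly one pure Nash equilibrium, namely the profile $A^n$ of universal adoption without insurance. By the identification of weakly maximal states with pure Nash equilibria noted earlier in the paper, this immediately yields the weak case. For the strong case, I would invoke Theorem \ref{existence} (existence of strongly maximal states) together with the fact that every strongly maximal state is a pure Nash equilibrium; combined, these force the unique strongly maximal state to coincide with the unique pure Nash, namely $A^n$.

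To show that $A^n$ is a Nash equilibrium, I would verify that every unilateral deviation is strictly harmful. Switching from $A$ to $D$ drops a player's payoff from $u_i(A^n)$ to $c$, which is smaller by the stag hunt definition; switching from $A$ to $X$ subtracts the positive insurance premium from $u_i(A^n)$, since under the resulting state cooperation still succeeds and no reimbursement is collected. Hence $A^n$ is in fact a \emph{strict} Nash equilibrium, and so trivially maximal under both the strict and ordinal deployment graphs (it has no outgoing arcs in either).

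For the uniqueness of $A^n$ as a pure Nash equilibrium, I would reuse the two dominance relations established in the proof of Theorem \ref{sid_result}. Take any state $s \neq A^n$: it must contain either a $D$-player or an $X$-player. In the first case, $X$ strictly dominates $D$ (by design of the contract, the reimbursement exceeds the deployment cost irrespective of what the other players do), so that player can strictly improve by switching from $D$ to $X$; hence $s$ is not a Nash equilibrium. In the second case, where $s$ has no $D$-player but at least one $X$-player, all players are cooperating, so coordination succeeds, no reimbursement is due, and the $X$-player can strictly improve by switching to $A$ and saving the premium; again $s$ is not a Nash equilibrium. Since these two cases are exhaustive, $A^n$ is the unique pure Nash equilibrium, and the theorem follows.

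The main delicacy is not in the logical structure but in making explicit two implicit features of the mechanism that are used above: the insurance premium is strictly positive (so that an $X$-player facing full cooperation strictly prefers $A$), and $X$ is a genuine commitment to adopt (so that no $X$-player is ever a ``defector'' triggering the reimbursement regime when cooperation is otherwise complete). Both are consequences of the informal specification preceding Theorem \ref{sid_result}, but they ought to be stated explicitly so that the dominance chain $D \prec X \prec A$ used here is applied without ambiguity.
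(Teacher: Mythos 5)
There is a genuine gap in your argument. The central move---reducing both maximality claims to the uniqueness of the pure Nash equilibrium---rests on a misreading of the paper's taxonomy. The paper does \emph{not} identify weakly maximal \emph{states} with pure Nash equilibria, nor does it say that strongly maximal \emph{states} are always pure Nash equilibria. What it says is that weakly maximal \emph{equilibria} coincide with pure Nash equilibria, and that strongly maximal \emph{equilibria} (states in a pairwise-communicating class of pure Nash equilibria) are by definition pure Nash. In a general strategic-form game, a weakly maximal state need not be a pure Nash equilibrium at all: a non-singleton sink of the strict deployment graph (a cycle that cannot be escaped by better responses) consists entirely of weakly maximal states, none of which is a Nash equilibrium. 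The same holds mutatis mutandis for strongly maximal states and the ordinal deployment graph. These identifications hold precisely in weakly acyclic (respectively, weakly ordinally acyclic) games, and that property is exactly what you have not verified. Knowing that $A^n$ is the unique pure Nash equilibrium does not by itself preclude a non-equilibrium sink that the iterated-dominance argument never touches.

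To close the gap, you must additionally show that from every state $s$ there is a strictly improving path to $A^n$ in the (strict) deployment graph. This is easy here: first switch every $D$-player to $X$ one at a time---each switch is strictly profitable unconditionally by the design of the insurance contract---and once the profile contains only $A$- and $X$-players, adoption is universal and so each $X$-player strictly gains by switching to $A$ to avoid the premium. This establishes that $A^n$ dominates every $s \neq A^n$, and since $A^n$ is a strict Nash equilibrium (no outgoing arcs in either graph), it is the unique weakly and strongly maximal state. The paper's own proof includes precisely this step (``It remains to show that the induced game is weakly acyclic \dots''). Your observations about the premium being strictly positive and $X$ being a binding commitment are correct and relevant, but they support the acyclicity argument you omitted rather than the equilibrium-uniqueness argument you gave.
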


\begin{proof}
As mentioned earlier, in a game that is solvable by strict iterated dominance, the (necessarily) unique outcome of iterated elimination is the unique Nash equilibrium of the game. Therefore, Theorem \ref{sid_result} implies, $A^n$, where $n$ is the number of players, is the unique Nash equilibrium of the induced game. Therefore, universal adoption is a weakly maximal state. Furthermore, universal adoption is also strongly maximal as any unilateral deviation from $A^n$ is harmful. (Unilaterally deviating from $A$ to $X$ is harmful due to the insurance premium whereas unilaterally deviating from $A$ to $D$ is harmful by the definition of the stag hunt.) It remains to show that the induced game is weakly acyclic. To that end, we show that starting from any state, say $s$, $A^n$ is incrementally deployable against $s$. But this is a straightforward implication of the fact that switching from $D$ to $X$ is (strictly) profitable for the corresponding player, and, once all players have switched to $X$, it is also (strictly) profitable for each player to switch from $X$ to $A$. This completes the proof.
\end{proof}

\subsubsection{Analysis of the election mechanism}

Let us now analytically demonstrate the election mechanism's efficacy. This mechanism induces a game with four strategies, namely, to adopt without voting (strategy $A$), to defect without voting (strategy $D$), to vote and adopt if the outcome of the vote is positive defecting otherwise (strategy $X$), and to vote and adopt irrespective of the outcome of the vote (strategy $Y$). In fact, there are two additional possible strategies that involve the possibility of breaking the commitment that, by design of the mechanism, a positive vote implies; assuming a large enough penalty in the event of breaking the commitment, we may safely ignore these strategies. 

From the perspective of dominance solvability, we have the following theorem:

\begin{theorem}
The election mechanism induces a game that is dominance solvable by weak iterated dominance whose solution corresponds to the strategy profile wherein all players adopt a combination of strategies $X$, that is, to vote and adopt if the outcome of the vote is positive defecting otherwise, and $Y$, that is, to vote and adopt irrespective of the outcome of the vote. 
\end{theorem}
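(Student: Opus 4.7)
The plan is to identify, for each player, two weak dominance relations that apply independently of the other players' strategies, apply one round of elimination to remove $A$ and $D$ for every player simultaneously, and then observe that in the reduced game the two surviving strategies are payoff-equivalent so that every residual profile induces universal adoption.

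First I would check that strategy $Y$ (vote and adopt unconditionally) weakly dominates $A$ (adopt without voting). Fix any profile $s_{-i}$ for the other players and split into two cases. If at least one other player does not vote, then whether $i$ votes or not the election fails, so $i$ adopts unconditionally either way and the set of adopters is identical; the payoffs from $A$ and $Y$ coincide. If every other player votes (plays $X$ or $Y$), then playing $A$ leaves the election failing (only the $Y$-players among the others adopt, along with $i$) while playing $Y$ makes the election pass (all $X$- and $Y$-players adopt, along with $i$); since the adoption payoff is strictly increasing in the number of adopters and the $X$-count can only add adopters, $Y$ yields at least as high a payoff, strictly higher whenever there is at least one other $X$-player. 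Hence $Y$ weakly dominates $A$.

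Next I would check that $X$ (vote and adopt iff election passes) weakly dominates $D$ (defect without voting) by the analogous case split. If some other player fails to vote, then under both $X$ and $D$ the election fails and player $i$ defects, earning the constant $c$; the payoffs match. If all other players vote, playing $D$ leaves the election failing and $i$ earns $c$, whereas playing $X$ causes the election to pass, committing every $X$- and $Y$-player to adopt, so $i$ earns $u_i(A^n)$; by the stag hunt definition $u_i(A^n)>c$, giving a strict gain. Hence $X$ weakly dominates $D$.

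I would then perform one simultaneous round of weak elimination removing $A$ and $D$ for every player, leaving each player with $\{X,Y\}$. In this reduced game every player votes, so the election always passes, the commitment clause triggers universal adoption, and the payoff of any profile is $u_i(A^n)$ regardless of whether a given player plays $X$ or $Y$; thus $X$ and $Y$ are payoff-equivalent and no further dominance elimination is possible, so the induced game is dominance solvable in the sense claimed, with solution set $\{X,Y\}^n$ corresponding to universal adoption. The main obstacle is a bookkeeping one, namely verifying the case in which exactly one other player abstains from voting so that the marginal voter's pivotality is correctly accounted for; once that case is handled, and once we confirm via the feasibility assumption on the penalty that the two excluded ``break-commitment'' strategies are strictly dominated and can be ignored throughout, the argument goes through cleanly.
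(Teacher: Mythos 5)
Your proof is correct and rests on the same two weak dominance relations as the paper's proof ($X$ weakly dominates $D$, $Y$ weakly dominates $A$); the only difference is procedural, namely that you verify both dominances directly in the full four-strategy game and eliminate $A$ and $D$ in a single simultaneous round, whereas the paper eliminates $D$ first (via $X \succeq D$) and only then argues $Y \succeq A$ in the reduced game. Your case split (all other players vote vs.\ at least one abstains, with strict gain exactly when all others vote and at least one plays $X$) is stated a bit more carefully than the paper's characterization of when $Y$ strictly outperforms $A$, but the argument is essentially the same.
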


\begin{proof}
Note first that strategy $X$ weakly dominates strategy $D$: Since $X$ entails voting and defecting if the outcome of the vote is negative, switching from $D$ to $X$ implies that the respective player cannot lose, but also (strictly) benefits if all players choose $X$. We may, therefore, eliminate strategy $D$. The remaining strategies are $A$, $X$, and $Y$. Note now that strategy $Y$ weakly dominates strategy $A$: Selecting strategy $Y$ provides the same payoff to the respective player as $A$ except in the event that some other player has chosen strategy $X$, in which case strategy $Y$ provides a (strictly) higher payoff that $A$ since the outcome of the vote becomes negative implying the players having chosen $X$ defect, therefore, also implying a lower payoff to the adopters by the definition of the stag hunt. We may, therefore, also eliminate $A$. The strategies that remain are $X$ and $Y$.
\end{proof}

Note that if all players adopt a combination of $X$ and $Y$, the outcome implies universal adoption. Note further that by the narrow definition of dominance solvability (that a unique strategy remains for each player) the election mechanism does not qualify to be called as such, however, the previous theorem predicts manifestation of the desirable outcome in the basis stag hunt game.

Let us start the analysis of the election mechanism with respect to maximality with an example in a stag hunt with two players. The incentive structure of the induced game as well as the strongly maximal solutions are illustrated in Figure \ref{exem}. As shown in the figure, all strongly maximal states entail adoption of the cooperative strategy, however, universal defection remains a (weak) pure Nash equilibrium. This point is discussed further at the end of this section. In general stag hunt games, the efficacy of the election mechanism is argued by the following theorem.

\begin{figure}[tb]
\centering
\includegraphics[width=9cm]{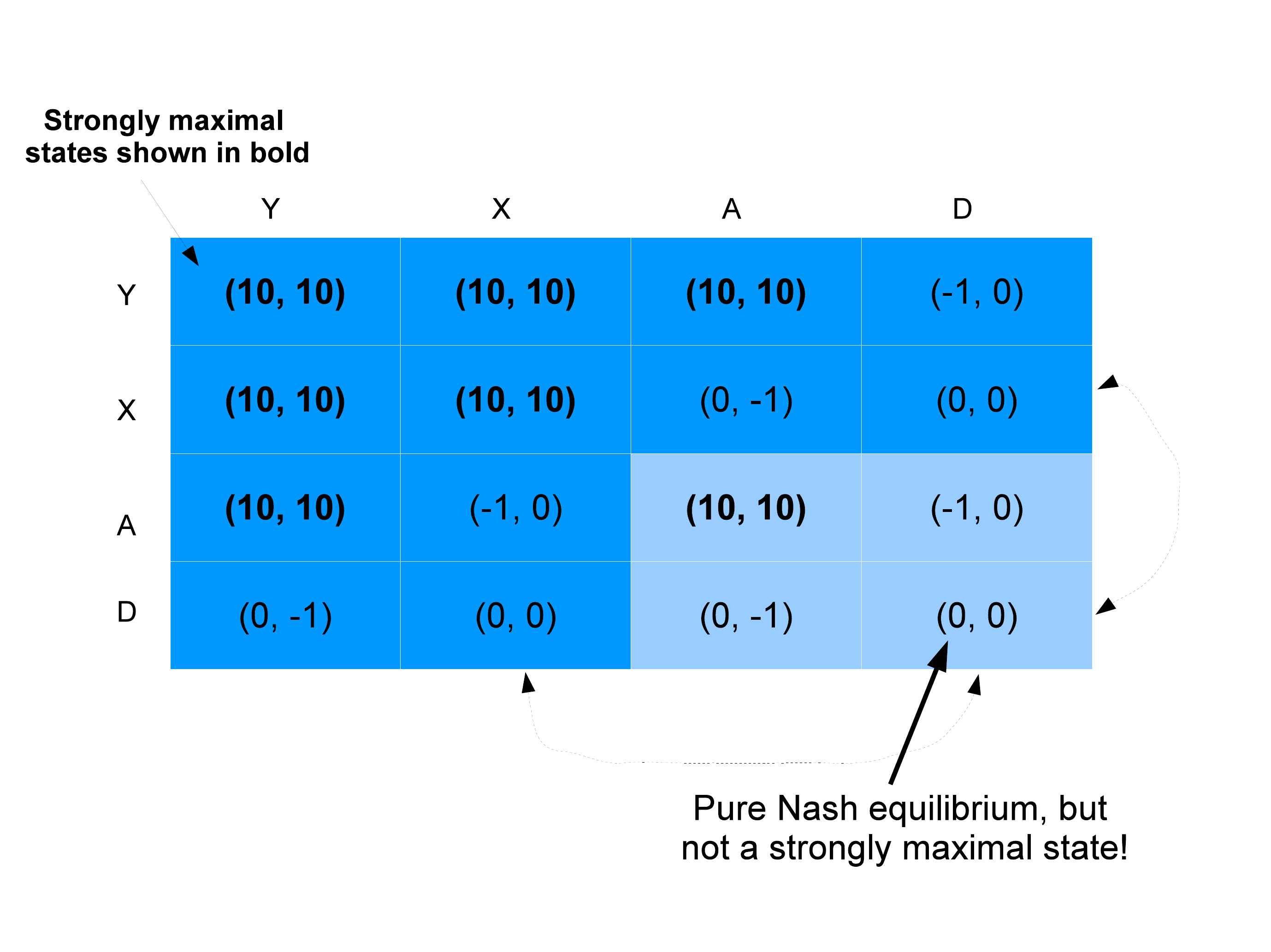}
\caption{\label{exem}
Example of incentive structure the election mechanism induces in a (two player) stag hunt.}
\end{figure}

\begin{theorem}
The election mechanism induces a game that is weakly ordinally acyclic all of whose strongly maximal equilibria imply manifestation of the superior outcome of the basis stag hunt game.
\end{theorem}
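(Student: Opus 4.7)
The plan is to identify the maximal states of the induced game's ordinal deployment graph $G$ and show they are exactly the pure Nash equilibria that effect universal adoption. Write $\mathcal{U} = \{A,Y\}^n \cup \{X,Y\}^n$ for the \emph{universal-adoption (UA) states}: in any state outside $\mathcal{U}$ either some player holds $D$ and defects, or an $A$-player is present so the vote fails and every $X$-player defects. Inside $\mathcal{U}$ the outcome is universal adoption (in $\{A,Y\}^n$ because $A$ and $Y$ adopt unconditionally, in $\{X,Y\}^n$ because the vote is unanimous and triggers the commitment), so every player earns the stag hunt's maximum payoff $u_i(A^n)$.

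The first step is to verify that $\mathcal{U}$ is absorbing in $G$. Within $\{A,Y\}^n$ the only weakly profitable unilateral deviations are the neutral swaps $A\to Y$ and $Y\to A$; switching $A\to X$ or $Y\to X$ either breaks unanimity (yielding the deviator $c < u_i(A^n)$) or leaves the outcome in $\mathcal{U}$ unchanged, and switching anywhere to $D$ is strictly harmful. The analogous analysis in $\{X,Y\}^n$ leaves only the neutral swaps $X\to Y$ and $Y\to X$. The bridging arcs $(A,Y^{n-1})\to Y^n$ and $Y^n\to(X,Y^{n-1})$ are neutral and glue the two pieces together. Hence no arc of $G$ escapes $\mathcal{U}$, so every element of $\mathcal{U}$ is a strongly maximal state.

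The second step is to show that from every state outside $\mathcal{U}$ there is an advancement path (a path containing at least one positive arc) terminating in $\mathcal{U}$. The construction uses two phases: switch each $D$-player to $X$ one at a time, then switch each remaining $A$-player to $Y$ one at a time. Each intermediate switch is a neutral arc, because while any non-voter remains the vote is non-unanimous, so every $X$-player still defects and the adopting coalition is unchanged. The final switch---of the last $D$ (if no $A$-players were present initially) or of the last $A$ (otherwise, with at least one $X$ present, since else we were already in $\{A,Y\}^n$)---tips the vote into unanimity and raises the switcher's payoff from $c$ or $u_i(a+y) < u_i(n)$ up to $u_i(A^n)$, and is therefore positive. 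Combined with the absorption of $\mathcal{U}$, this yields $s^{**}\succ s^*$ for some $s^{**}\in\mathcal{U}$ and every non-UA $s^*$, so the states in $\mathcal{U}$ are the only maximal ones.

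To conclude, every state in $\mathcal{U}$ is a (weak) pure Nash equilibrium, since each player already obtains the maximum payoff $u_i(A^n)$ and no unilateral deviation is strictly profitable; moreover any two states in $\mathcal{U}$ communicate via the chain of single-coordinate swaps linking $A^n$, $Y^n$, and $X^n$ along neutral arcs. Hence $\mathcal{U}$ is a single strongly maximal equilibrium class, the induced game is weakly ordinally acyclic, and every strongly maximal equilibrium effects universal adoption---the superior outcome of the basis stag hunt. The main obstacle is the combined case where $A$-players and $D$-players coexist: in that regime no single switch is immediately positive, so one must carefully order the deviations---$D$'s converted to $X$ first, $A$'s converted to $Y$ afterwards---so that every intermediate step sits in the deployment graph and at least one step is a strict improvement.
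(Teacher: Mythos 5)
Your proof is correct, and it takes a route that is in one respect more robust than the paper's own. Both arguments share the skeleton: identify the set of universal-adoption states as a single communicating class, show it is absorbing, and show every other state has an advancement path into it. The divergence is in the second phase of the path construction. The paper eliminates the remaining $A$-players by switching them to $X$, asserting "they never lose"; you instead switch them to $Y$. Your choice is the safe one: an $A\to Y$ switch keeps the deviator in the adopting coalition, so her payoff either stays fixed (vote still non-unanimous) or jumps to $u_i(A^n)$ (last non-voter converted), making the arc always weak-ordinal. By contrast $A\to X$ ejects the deviator into defection whenever other non-voters remain, so it is an arc only when $c$ exceeds the current adopters' payoff $f(a+y)$; under the paper's own stag-hunt definition nothing forces $f(k)\le c$ for intermediate $k$, so there are stag hunts (e.g.\ $n=3$, $f(1)=-1$, $f(2)=5$, $f(3)=10$, $c=0$) in which the paper's step is not a valid unilateral deviation while yours is. Your argument also treats uniformly the case with no $D$-players but coexisting $A$ and $X$ strategies, a case the paper's proof leaves implicit by only considering states containing at least one $D$. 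Finally, your explicit decomposition---$\mathcal{U}$ is absorbing, its members pairwise communicate via neutral arcs, and every non-$\mathcal{U}$ state has an advancement path into $\mathcal{U}$---packages the maximality, Nash-equilibrium, and weak-ordinal-acyclicity claims more transparently than the paper's somewhat compressed narrative.
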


\begin{proof}
Consider the ordinal deployment graph and a state, say $s$, such that one or more players select strategy $D$. Observe that $D$-players never lose by switching from $D$ to $X$, and, therefore, switching all $D$-players in this way implies an outcome, say $s'$, that is incrementally deployable against $s$. Let us now focus attention to the players in $s'$ that select $A$. Since such players do not vote, players that select $X$ defect. However, switching $A$-players to $X$ implies that they never lose and in fact benefit once all players have been switched from $A$ to $X$. Call this latter state $s''$, and note that $s''$ is incrementally deployable against $s'$ whereas $s'$ is not so against $s''$. Since the ordinal preference relation is transitive, this implies that $s''$ is incrementally deployable against $s$ and, it is easy to observe, that $s$ is not so against $s''$. Therefore, strategy $D$ never manifests in any state that is (strongly) maximal. Note finally that $A^n$ is also a strongly maximal state: Switching players from $X^n$ to $Y^n$ implies no loss in payoff, and, therefore, that $Y^n$ is incrementally deployable against $X^n$. Furthermore, switching players from $Y^n$ to $A^n$ also implies no loss in payoff, and therefore that $A^n$ is incrementally deployable against $Y^n$ and by transitive also against $X^n$. Thus, $A^n$, combinations of $A$ and $Y$, and combinations of $X$ and $Y$ are strongly maximal equilibria belonging to the same equilibrium class. Finally observe that as noted above no state that includes $D$ is maximal.
\end{proof}

We note that universal defection remains a pure Nash equilibrium in the game that the election mechanism induces and, therefore, universal defection is also a weakly maximal state. We also note, however, that universal defection is a {\em weak} Nash equilibrium after the election mechanism is introduced in contrast to being a {\em strict} Nash equilibrium in the basis game (the stag hunt on which the election mechanism is applied), and naturally the predictive value of weak Nash equilibria is (in general) weaker than those of their strict counterparts. In the particular setting we consider, the ``evolutionary path'' from universal defection to universal adoption is one where no player loses during the transition while all players benefit after the transition has taken place. 

We finally note that although weak iterated dominance predicts that all players will vote in favor of adoption (implying that the mechanism will commit them to do so), the analysis based on strong maximality is inconclusive on whether players will vote; the outcome of universal adoption without participating in the voting procedure is as plausible an outcome as being committed to adoption by a universal vote. However, both solution concepts (namely, weak iterated dominance and strong maximality) agree on the final outcome that universal adoption will manifest (whether by means of the commitment mechanism or not). Empirical observation of the outcome in experimental settings is, therefore, of natural interest, and an interesting direction for future work.

\section{Other related work}
\label{other_related_work}

Let us, finally, discuss the related literature on overcoming incremental deployability barriers.

\subsection{On software methods for overcoming deployability barriers}

How to effect architectural innovation has been a long-standing question in the networking research community (for example, see \cite{Impasse, Feldmann} and references therein). In contrast to our approach of effecting innovation through institutional mechanisms, computer scientists have instead attempted to facilitate adoption of innovative technologies through {\em (i)} increasing the adoption benefits under ubiquitous deployment or {\em (ii)} lowering (to the extent possible) the barrier required to render the respective innovative technologies incrementally deployable. 

Both types of efforts improve deployability in different ways: Under the first strategy, equilibrium selection theories (e.g., \citep{HS}) predict an increased likelihood of adoption whereas under the second strategy the early stage adoption rate may become sustainable to drive growth. However, under both approaches, we cannot make accurate predictions on the outcome of the adoption process even in theory as the incentive structure remains a stag hunt and, as mentioned earlier, there is a lack of precise game-theoretic tools to predict which equilibrium is selected.

For example, \cite{Gill} show that the deployment of Secure-BGP can be facilitated by a pair of mechanisms, namely, {\em (i)} routing policies that prefer Internet paths safeguarded by Secure-BGP in partial deployment and {\em (ii)} offloading cryptography from {\em autonomous systems} without a customer (known as {\em stub} autonomous systems) to their providers. These techniques improve the incremental deployability of Secure-BGP, however, the incentive structure these techniques induce remains akin to the stag hunt, and, in fact, Gill {\em et al.} propose concentrating peer pressure and regulatory efforts to a small fraction of core autonomous systems for their techniques to be effective in driving the growth of Secure-BGP. Our point, therefore, that, although such line of effort is certainly helpful to the adoption of emerging technologies it is not in general conclusive, remains.

A question that has sparked interest (for example, see \citep{FII, Ghodsi, XIA}) is to design a technical architecture that supports innovation as a first order desideratum and to concentrate efforts on deploying that architecture first. The question of transitioning the Internet to such an evolvable architecture remains unresolved. Furthermore, in my opinion, a purely software-technology-based supporting foundation (as the aforementioned approaches advocate) would limit both the functionality and efficiency of architectural innovation. 

Consider, for example, TCP/IP congestion control that has managed to remain an important and stable element of the Internet without any auxiliary mechanisms, such as traffic policing, to sustain its stability, as early research had thought necessary, if not mandatory, to avoid collapse \citep{Floyd}. It would not be an overstatement to say that TCP/IP congestion control is the largest scale self-policing institution our societies have experienced to this date, and it could well serve as an example of what is achievable once we factor institutions in architectural design.

\subsection{On the diffusion of innovations in social systems}

The problem of diffusing innovation in social systems has been the subject of extensive scrutiny primarily by social scientists. The study of diffusion of technology falls squarely in the ballpark of social science: Quoting \cite{Technology}, ``The people who have thought hardest about the general questions of technology have mostly been social scientists and philosophers, and understandably they have tended to view technology from the outside as stand-alone objects. \ldots \hspace{0.2 mm} Seeing technology this way, from the outside, works well enough if we want to know how technologies enter the economy and spread within it.'' But computer scientists have also been involved with the diffusion of innovation from an algorithmic perspective. The term {\em innovation} assumes various interpretations in the literature from agricultural practices, social norms (such as which hand to extent in a handshake), medical drugs, commercial products (such as fax machines and cellphones), to networking technologies such as secure versions of BGP and quality-of-service capabilities in IP networks.

In a variety of diffusion models, potential adopters are assumed to interact according to network structures broadly referred to as social networks (whether networks based on kinship or friendship or online social networks). \cite{Diffusion} tracks the history of research in this area.

Various researchers, especially computer scientists, have recently been concerned with an algorithmic problem related to the diffusion of innovation typically referred to as {\em influence maximization} \citep{Domingos,Richardson}. Influence maximization refers to selecting an initial set of adopters in an adoption environment characterized by network structure such that the corresponding innovation will spread by forces of evolution to an as large fraction of the population as possible (possibly the entire population). (The diffusion model typically assumed in this line of work is based on a notion of ``value'' the innovative technology or behavior yields that, for each potential adopter, is dependent on the state of other potential adopters.) Kempe {\em et al.} \cite{Spread} as well as Goldberg and Liu \cite{Goldberg2} pose this problem as a combinatorial optimization problem, shown to be {\bf NP}-hard, and devise approximation algorithms in obtaining efficient solutions (that have more recently been improved in followup studies). This line of research leaves open, however, the issue of how initial adopters are enticed to adopt.

In contrast to these earlier works, our model of evolution based on the stag hunt, although motivated by the diffusion of Internet-based technologies, does not make specific assumptions on the structure of the adoption environment other than adopters benefitting from the decisions of others. In spite of its generality, we show that coordination mechanisms incite widespread diffusion while at the same time escaping the fundamental question of kickstarting the diffusion process, which we achieve by the mere act of instituting coordination mechanisms in these environments.

\subsection{On inciting collective action}

Related to the problem of diffusing innovation is that of inciting {\em collective action} (in the sense that the decision to act in a particular fashion diffuses in a social system) a problem whose study was initiated in the seminal work of \cite{Olson}. The study of collective action was later taken up by several authors such as, for example, in the form of {\em critical mass theories} (bearing relevance to nuclear fission explosions in physics) \citep{OMT,Markus}. \cite{Medina} motivates the study of collective action in settings bearing a political nature related to citizen oppression in rogue states, however, models of collective action range from {\em crowdsourcing} in the Internet \citep{Shirky} to enabling countermeasures against climate change (such as controlling carbon emissions). Although various authors discuss the problem of inciting collective action through incentives mechanisms, the idea of creating institutions to that effect (as we do in this paper) and proving their efficacy has not, to the extent of our knowledge, been considered before by these earlier works. (An exception is \citep{Autto} where the analysis is at an elementary stage.)

\subsection{On the evolution of cooperation}

The archetypical model to study the evolution of cooperation in society is the prisoner's dilemma. The stag hunt is the archetypical model for the study of coordination. But as \cite{Skyrms} notes the models are related: Iterated prisoner's dilemma with an infinite number of stages can assume the form of a stag hunt. The literature on the game-theoretic study of cooperation and coordination (starting with \citep{Axelrod}) has been concerned with the emergence of cooperative phenomena without a centralized entity to facilitate their manifestation. For example, \cite{SocialContract} studies how {\em signaling systems} can facilitate coordination. In this paper, we study the emergence of cooperation with the assistance of an exterior to the players entity that is able to enforce commitments but our analysis suggests that players may coordination while altogether eschewing commitments.

\section{Concluding remarks}
\label{conclusion}

The thesis of this paper has been that economic ventures (such as the design and organization of markets) and engineering ventures (such as the design and organization of the Internet as a global communications service) would benefit by a common ground set of mathematical theory and design methodologies that should over time be developed by economists and engineers coming together in joint intellectual efforts informing each other's domains. To the extent of my knowledge, coordinated efforts to bring together researchers from these communities have been in vain. 

I tried to support this thesis with mathematical results, engineering mechanisms, and their rigorous analysis using techniques from game theory and the mathematical formalization of incremental deployability. Our work has been motivated by the problem of {\em effecting fundamental innovation in the Internet architecture} but its scope extends far beyond Internet architecture to predicaments that extend from the {\em financial crisis} to {\em climate change}. The question of taking corrective action against these seemingly disparate plights that challenge present-day science and political leadership, once treated in a sufficiently abstract unifying setting, reveals striking commonalities between strategies to attack them that would remain hidden had mathematical rigor were to be left aside. 

For example, the organization supervising the Greek austerity program demands from our government to implement changes (through regulation) in the public sector (and society at large) that are not incrementally deployable and as a result unreasonably strain the Greek economy. 

The goal of our work has been to reveal these commonalities in order to formulate strategies aimed at facilitating enterprises as diverse as effecting architectural innovation, financing growth, and reverting climate change (where global coordination is necessary) in a unifying manner.

Contrary to the inertia that manifests in, for example, remedying the aforementioned social plights, the advent of the 21st century has been hallmarked by the emergence of socially beneficial peer-production technologies whose success also depends on positive externalities such as Wikipedia, YouTube, Facebook, Twitter, etc. that the expansion of the Internet has made possible, a trend that is expected to continue especially as we learn to harness their potential. 

The adoption and proliferation of technologies falling into the pattern of externalities-driven production of goods and services can create, I believe, a significant enough global momentum to drive economic growth in the 21st century. But to reap the benefits we must devise the means by which to make the adoption and proliferation of these technologies happen. My approach to that end is through designing and deploying appropriate {\em institutions} that facilitate such enterprises.

One of the main challenges to fostering adoption of emerging technologies whose success depends on positive externalities is {\em bootstrapping,} that is, how to induce early adoption, and one of our main goals has been to investigate means by which this bootstrapping problem can be solved. The approach has been to intervene in the adoption environment through appropriate institutions whose goal is to drive equilibrium transitions in the social systems forming the corresponding adoption environments. In a broad sense, institutions are not but behavioral constraints that facilitate the solution of coordination problems in the sense that they enable population systems to leverage {\em coordination opportunities (synergies)}. I dub these efforts {\em institutional engineering}.

Institutional engineering bears close kinship, but is not to be confused with {\em mechanism design}. An example of a {\em mechanism,} as the term is typically understood in economic theory, is an {\em auction}. Given an item and a set of players/bidders not all of which value the item identically, the goal is to allocate the item to the player who values it the most. An auction is a {\em set of rules} for performing this allocation assuming the players' valuations of the item are private, and assuming that players bid strategically. In this example, Vickrey's {\em second price auction,} according to which the item is allocated to the highest bidder who pays the second highest bid, achieves the desired objective in that it elicits bids equal to the players' true valuations (assuming the bids are strategic).

Institutional engineering is also concerned with the design of mechanisms, albeit of a different nature than the typical mechanisms studied in economic theory where the objective is to elicit the players' private preferences. Instead institutional engineering (of the flavor studied in this paper) assumes, to a first approximation, that the players' {\em preferences} are common knowledge, however, their {\em choices} are limited by the incentive structure of the environment where their interaction takes place. Its objective is to design mechanisms that will intervene in this environment so as to transition the player population from one set of choices, for example, an incumbent technology, to a more favorable one, for example, a superior emerging technology. Institutional engineering and mechanism design are intended to work in tandem to both leverage coordination opportunities as well as share/allocate the benefits that the pursuit of these coordination opportunities yield.

One of this paper's contributions is to theoretically explain (using game theory as the tool) the manifestation of a global scale institution ``beating at the heart'' of the Internet architecture, namely, TCP/IP congestion control. This congestion control algorithm can be understood as an (institutional) mechanism that manages Internet bandwidth as a common pool resource. We note we are not the first to identify this {\em duality} between technologies and institutions: In arguing that {\em code is law,} \cite{Lessig1, Lessig2} has essentially done the same. Building on his perspective \cite{Schewick} studies how technological structures and architectural principles in the Internet affect the economy that is organized on top of them. What is different in our perspective as compared to these earlier works is that the basic architecture of the Internet itself is an, in part, a {\em social phenomenon,} a thesis that, to the extent of our knowledge, we are the first to advocate for.

The (standard) picture of the Internet academic texts portray depicts an engineering artifact being {\em run by code} and humans merely as {\em operators} with the ability to influence the behavior of network equipment (hosts and routers) through configuration commands. In this vein, Bob Kahn identifies the Internet as ``a set of protocols and procedures for connecting lots of different components.''\footnote{ \url{http://www.princeton.edu/main/news/archive/S39/51/52I22/index.xml?section=topstories}} Such technology-centric perspective to understanding the Internet is appositely, we believe, summarized in a quote attributed to Dave Clarke: ``We reject kings, presidents and voting. We believe in rough consensus and running code.''\footnote{\url{http://www.ietf.org/tao.html}} This quote is a motto of the Internet Engineering Task Force (IETF), the organization that supervises the publication of ``official'' documents on the Internet architecture. Dave Clarke's quote places emphasis on the technical aspects of the architecture, and deemphasizes (but arguably hardly eliminates) its social character.

To the technology-centric perspective, we juxtaposed in this paper a {\em human-centric perspective} to understanding the Internet architecture, one captured in a quote attributed to Scott Shenker: ``The Internet is an equilibrium, we just have to identify the game.'' In this latter perspective, we viewed the Internet as a society of actors who choose among competing technologies and the Internet architecture as the collective outcome of choices these agents make. Like in any society where interactions among individuals are shaped not only by technological capabilities but also by institutions that evolve to complement and be complemented by the technical infrastructure, we stipulated that the Internet can be prolifically understood in this broader social context. 

Extrapolating on the duality between technologies and institutions (viewing them as manifestations of a more general, yet to be defined, concept) presents us with the possibility of institutionally intervening in the efforts to effect innovation, and, by extension, also presents us with the following fundamental dilemma: {\em How should we divide responsibility between code and institutions in the future Internet?} Narrowing down the scope of this question to architectural innovation, we asked: {\em Should we solely rely on technological advancements to drive innovation in the Internet or should we deem appropriate the institutional intervention in those efforts, and to what degree?}

Our answer to this question involved drawing on a fundamental notion in Internet systems engineering, namely, that to receive deployment traction in the Internet an emerging technology should be incrementally deployable against the corresponding incumbent environment. The term {\em incremental deployability} is used by the systems community in a colloquial fashion, but we demonstrated that this notion can be made mathematically precise and used this mathematical formalism to derive from a more elementary foundation (using order theory) the concept of variational equilibrium (including the Nash equilibrium used in the theory of games). This formalization helped us devise mechanisms to change the Internet architecture in an incrementally deployable fashion (using institutions as the bridge between different equilibria in the Internet architecture) and gave us the mathematical tools that were necessary to mathematically analyze these mechanisms.

From a mathematical perspective, we showed that a ``globally incrementally deployable equilibrium'' (for example, a globally evolutionarily stable strategy (GESS)) can be computed efficiently in a simplex evolution space using multiplicative weights as the algorithmic basis (and analytical techniques from convex optimization theory) noting that the presence of a global incrementally deployable equilibrium corresponds to an incentive structure that generalizes a convex incentive structure rather significantly. We note that the computational tractability of global incrementally deployable states can serve as a basis for the design of the future Internet architectures. We believe there are many applications, for example, in interdomain routing (currently plagued with oscillations and vulnerability to malicious attacks), but we leave these applications as future work.

We also showed that equililibrium approximation in a simplex environment captures the entire complexity of class {\bf PPAD}. Our work suggests that {\bf PPAD} may not be a hard class.

\section*{Acknowledgments}

My account on YouTube has kept me company, through recommendations (for example, on music to listen to or on technical material to study) and through positive ``clockwork orange'' experiments (that, for example, helped challenge my thinking or make me relax), throughout the research in this paper. I am thankful to the people who participate in the management of this account. I am also thankful to The Pierces and Cat Pierce, in particular, whose music made my journey more manageable (for example, emotionally) and provided a source of ideas for the work in this paper.

\bibliographystyle{abbrvnat}
\bibliography{real}

\appendix

\newpage

\section{The complexity of symmetric equilibrium approximation}
\label{symm_equilibrium_appendix}

\subsection{Best responses in bimatrix games}

Propositions \ref{nash_equil_characterization_b_responses}, \ref{b_response_condition_A_B}, and \ref{Jurg_proof_help} are standard in the literature. We give their proofs for completeness.

\begin{definition}
Let $(A, B)$ be a bimatrix game and $\mathbb{P} \times \mathbb{Q}$ be the space of strategy profiles of $(A, B)$. Let $Q \in \mathbb{Q}$ be an arbitrary column-player strategy. We say $P^*(Q) \in \mathbb{P}$ is a best response to Q if
\begin{align*}
P^*(Q) \in \arg\max \left\{ P \cdot AQ | P \in \mathbb{P} \right\}.
\end{align*}
Best responses of the column player against row-player strategies are defined similarly.
\end{definition}

We have the following characterization of Nash equilibria in bimatrix games.

\begin{proposition}
\label{nash_equil_characterization_b_responses}
$(X^*, Y^*) \in \mathbb{P} \times \mathbb{Q}$ is a Nash equilibrium of $(A, B)$ if and only if $X^* \in \mathbb{P}$ is a best response to $Y^* \in \mathbb{Q}$ and, similarly, $Y^*$ is a best response to $X^*$.
\end{proposition}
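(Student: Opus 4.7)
The plan is to prove both directions by simply unfolding the definitions of Nash equilibrium and best response, observing that the two notions literally coincide on each player's half of the equilibrium conditions.

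First I would write out explicitly the two defining inequalities of a Nash equilibrium $(X^*, Y^*)$: namely \eqref{eqone} and \eqref{eqtwo} specialized to $P^* = X^*$ and $Q^* = Y^*$, giving $X^* \cdot AY^* \geq P \cdot AY^*$ for every $P \in \mathbb{P}$ and $X^* \cdot BY^* \geq X^* \cdot BQ$ for every $Q \in \mathbb{Q}$. Alongside this, I would restate the best-response definition in inequality form: $X^*$ is a best response to $Y^*$ iff $X^* \cdot AY^* \geq P \cdot AY^*$ for all $P \in \mathbb{P}$, and $Y^*$ is a best response to $X^*$ iff $X^* \cdot BY^* \geq X^* \cdot BQ$ for all $Q \in \mathbb{Q}$. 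Having both formulations side by side makes the equivalence a syntactic matching exercise.

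For the forward direction, assume $(X^*, Y^*)$ is a Nash equilibrium. Then the first NE inequality is literally the best-response condition for $X^*$ against $Y^*$, so $X^* \in \arg\max\{P \cdot AY^* \mid P \in \mathbb{P}\}$; the second NE inequality is likewise the best-response condition for $Y^*$ against $X^*$. For the converse, assume $X^*$ is a best response to $Y^*$ and $Y^*$ is a best response to $X^*$. The two best-response inequalities are exactly \eqref{eqone} and \eqref{eqtwo} with $P^* = X^*$ and $Q^* = Y^*$, establishing that the pair is a Nash equilibrium.

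There is no real obstacle here: the proposition is a definitional restatement and the only thing to be careful about is ensuring that the $\arg\max$ notation in the definition of best response is correctly translated into the inequality form so that the symbols line up with the NE conditions. I would close by remarking that the equivalence explains why computing a Nash equilibrium in a bimatrix game is often framed as a mutual best-response fixed-point problem, a framing that will be convenient for the symmetrization argument that follows.
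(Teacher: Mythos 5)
Your proposal is correct and matches the paper's proof, which simply invokes the definition of a Nash equilibrium; unfolding \eqref{eqone} and \eqref{eqtwo} and observing they are term-for-term the two best-response conditions is exactly the intended argument. Nothing further is required.
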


\begin{proof}
Straightforward from the definition of a Nash equilibrium.
\end{proof}

An important elementary property of best responses is given by the (aforementioned) {\em best response condition}. In statement and proof of this condition we follow \cite{BVS_AGT}:

\begin{proposition}
\label{b_response_condition_A_B}
Given $(A, B)$ let $A$ and $B$ be $m \times n$ matrices, $\mathbb{P} \times \mathbb{Q}$ be the space of mixed strategy profiles of $(A, B)$, $X \in \mathbb{P}$, and $Y \in \mathbb{Q}$. Then $X$ is a best response to $Y$ if and only if
\begin{align*}
\forall i = \{1, \ldots, m \} : \left[ X(i) > 0 \Rightarrow (AY)_i = (AY)_{\max} \equiv \max_{i=1}^m (AY)_i \right].
\end{align*}
\end{proposition}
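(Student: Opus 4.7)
The plan is to prove both directions by exploiting the fact that the row player's expected payoff $X \cdot AY = \sum_{i=1}^m X(i) (AY)_i$ is a convex combination of the coordinates of the vector $AY \in \mathbb{R}^m$, with weights given by the mixed strategy $X \in \mathbb{P}$. Once this is observed, the result reduces to an elementary fact about convex combinations: the maximum of a convex combination of real numbers equals the largest number being combined, and this maximum is attained precisely by convex combinations that place all mass on indices achieving the maximum.

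For the forward direction, I would assume $X$ is a best response to $Y$ and suppose for contradiction that some index $i_0$ has $X(i_0) > 0$ yet $(AY)_{i_0} < (AY)_{\max}$. Letting $j$ be any index with $(AY)_j = (AY)_{\max}$, I would construct a perturbation $X'$ obtained from $X$ by moving the mass $X(i_0)$ from coordinate $i_0$ to coordinate $j$ (so $X'(i_0) = 0$, $X'(j) = X(j) + X(i_0)$, and $X'(k) = X(k)$ otherwise). A direct computation gives $X' \cdot AY - X \cdot AY = X(i_0)\bigl((AY)_j - (AY)_{i_0}\bigr) > 0$, contradicting the assumption that $X$ is a best response.

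For the reverse direction, I would assume the support condition and compute
\begin{align*}
X \cdot AY \;=\; \sum_{i : X(i) > 0} X(i) (AY)_i \;=\; (AY)_{\max} \sum_{i : X(i) > 0} X(i) \;=\; (AY)_{\max},
\end{align*}
using the support condition in the second equality and the fact that $X \in \mathbb{P}$ is a probability vector in the third. For any alternative $X' \in \mathbb{P}$, the convex-combination bound yields $X' \cdot AY = \sum_i X'(i) (AY)_i \leq (AY)_{\max} \sum_i X'(i) = (AY)_{\max} = X \cdot AY$, which establishes that $X$ is a best response.

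I do not expect any genuine obstacle here; the only point requiring care is the bookkeeping in constructing the perturbation $X'$ and verifying that it remains in $\mathbb{P}$ (which is immediate since we only shift mass between two coordinates while preserving nonnegativity and the unit sum). The argument extends verbatim to the column-player side by replacing $A$ with $B^T$ and exchanging the roles of $X$ and $Y$.
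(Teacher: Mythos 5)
Your proof is correct. The core observation in both your argument and the paper's is that $X \cdot AY$ is a convex combination of the entries of $AY$, so it is bounded above by $(AY)_{\max}$, with equality precisely when the support of $X$ is confined to maximizing indices. Where you differ is in presentation of the forward direction: the paper writes the single identity $X \cdot AY = u - \sum_i X(i)\,(u - (AY)_i)$ with $u = (AY)_{\max}$, from which both the inequality $X \cdot AY \leq u$ and the equality characterization drop out at once; you instead argue the forward direction by contradiction, explicitly constructing a perturbed strategy $X'$ that shifts mass from a suboptimal coordinate $i_0$ onto a maximizing coordinate and computing the strict gain $X(i_0)\bigl((AY)_j - (AY)_{i_0}\bigr) > 0$. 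The paper's version is terser and makes the equality-case analysis immediate; your version makes the ``why'' of the forward direction more vivid by exhibiting a profitable deviation, which some readers may find more transparent. Both are elementary and complete, and your bookkeeping on $X'$ remaining a probability vector is fine. One small implicit step shared by both: a best response achieves payoff exactly $(AY)_{\max}$ because a pure strategy on a maximizing index already attains it — you handle this correctly in the reverse direction via the convex-combination bound on arbitrary $X'$.
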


\begin{proof}
Letting $u \doteq (AY)_{\max}$, we have
\begin{align*}
X \cdot AY = \sum_{i=1}^m X(i) (AY)_i = \sum_{i=1}^m X(i) (u - (u - (AY)_i)) = u - \sum_{i=1}^m X(i) (u - (AY)_i).
\end{align*}
Therefore, $X \cdot AY \leq u$ since $X(i) \geq 0$ and, by the definition of $u$, $u - (AY)_i \geq 0$, for all $i = 1, \ldots, m$. Furthermore, $X \cdot AY = u$ if and only if $X(i) > 0$ implies $u = (AY)_i$ as claimed.
\end{proof}

The best response condition implies the following proposition.

\begin{proposition}
\label{Jurg_proof_help}
$(X^*, Y^*) \in NE(A, B)$ if and only if we simultaneously have:\\

(i) for all pure strategies $E_i \in \mathbb{P}$ such that $i \in \mathcal{C}(X^*)$, $E_i$ is a best response to $Y^*$ and

(ii) for all pure strategies $E_j \in \mathbb{Q}$ such that $j \in \mathcal{C}(Y^*)$, $E_j$ is a best response to $X^*$,\\

\noindent
where $\mathbb{P} \times \mathbb{Q}$ is the space of mixed strategies of $(A, B)$.
\end{proposition}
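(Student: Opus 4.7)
The plan is to derive both directions directly from the two propositions that precede the statement, namely the best-response characterization of Nash equilibrium (Proposition~\ref{nash_equil_characterization_b_responses}) and the best-response condition (Proposition~\ref{b_response_condition_A_B}). Neither direction requires more than bookkeeping on supports, so the proof is essentially a translation between the mixed-strategy formulation of best responses and the pure-strategy formulation given by support conditions.

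For the forward direction, I would assume $(X^*, Y^*) \in NE(A, B)$ and invoke Proposition~\ref{nash_equil_characterization_b_responses} to conclude that $X^*$ is a best response to $Y^*$. The best response condition (Proposition~\ref{b_response_condition_A_B}) then yields that for every $i \in \mathcal{C}(X^*)$, the row $i$ of $AY^*$ equals $(AY^*)_{\max}$. But then the pure strategy $E_i$ places all its mass on index $i$, so $E_i \cdot AY^* = (AY^*)_i = (AY^*)_{\max}$, which is exactly the payoff of any best response; hence $E_i$ is itself a best response to $Y^*$. The same argument applied symmetrically to $Y^*$ (using matrix $B$ and columns) yields (ii).

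For the reverse direction, assume (i) and (ii). From (i), each pure $E_i$ with $i \in \mathcal{C}(X^*)$ is a best response to $Y^*$, which means $(AY^*)_i = (AY^*)_{\max}$ for every such $i$. By Proposition~\ref{b_response_condition_A_B} (applied in the ``if'' direction), $X^*$ itself is then a best response to $Y^*$. Symmetrically, (ii) gives that $Y^*$ is a best response to $X^*$ via the analogous condition on $B^T X^*$. Finally, Proposition~\ref{nash_equil_characterization_b_responses} assembles these two mutual best-response properties into $(X^*, Y^*) \in NE(A, B)$.

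There is no real obstacle here; the only subtlety is being careful about the distinction between a pure strategy $E_i$ being a best response (i.e.\ achieving the maximum payoff against the opponent's mixed strategy) and the index $i$ lying in the support of a mixed best response. Both notions are made equivalent by Proposition~\ref{b_response_condition_A_B}, so the argument reduces to applying that equivalence twice, once for each player.
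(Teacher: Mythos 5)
Your proof is correct and follows essentially the same route as the paper: both directions hinge on combining Proposition~\ref{nash_equil_characterization_b_responses} with the best-response condition of Proposition~\ref{b_response_condition_A_B}. The only cosmetic difference is in the reverse direction, where you invoke the sufficiency half of Proposition~\ref{b_response_condition_A_B} directly, whereas the paper unfolds the same support-sum computation inline; these are logically identical.
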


\begin{proof}
Let us assume first that $(X^*, Y^*)$ is a Nash equilibrium of $(A, B)$. Then Proposition \ref{nash_equil_characterization_b_responses} implies that $X^*$ is a best response to $Y^*$ and that $Y^*$ is a best response to $X^*$. But then Proposition \ref{b_response_condition_A_B} implies that, for all $i \in \mathcal{C}(X^*)$, the pure strategy $E_i$ is a best response to $Y^*$ and, similarly, for all $j \in \mathcal{C}(Y^*)$, the pure strategy $E_j$ is a best to response to $X^*$ as desired.

Let us now show the reverse implication. To that end, let us assume that $(X^*, Y^*)$ is an arbitrary pair of mixed strategies of $(A, B)$ such that every pure strategy of the row player supporting $X^*$ is a best response to $Y^*$ and every pure strategy of the column player supporting $Y^*$ is a best response to $X^*$. Then, for all $i \in \mathcal{C}(X^*)$,
\begin{align*}
E_i \cdot A Y^* = (AY^*)_{\max},
\end{align*}
and, therefore,
\begin{align*}
X^* \cdot A Y^* = \sum_{i \in \mathcal{C}(X^*)} X^*(i) E_i \cdot AY^* = (AY^*)_{\max},
\end{align*}
which implies $X^*$ is a best response to $Y^*$. Similarly, given $X^*$, for all $j \in \mathcal{C}(Y^*)$,
\begin{align*}
X^* \cdot B E_j = (B^T X^*)_{\max},
\end{align*}
and, therefore,
\begin{align*}
X^* \cdot B Y^* = \sum_{j \in \mathcal{C}(Y^*)} Y^*(j) X^* \cdot BE_j = (B^T X^*)_{\max},
\end{align*}
which implies $Y^*$ is a best response to $X^*$. Invoking Proposition \ref{nash_equil_characterization_b_responses} completes the proof.
\end{proof}

Proposition \ref{Jurg_proof_help} is our result but it is implicitly used by \cite{Jurg} without proof. In the proof of Theorem \ref{PPAD_theorem}, we explain how to use Proposition \ref{Jurg_proof_help} to understand \citep[Theorem 1]{Jurg}. In proving our main result, namely, Theorem \ref{PPAD_theorem}, we extend Proposition \ref{Jurg_proof_help} as Lemma \ref{approximate_best_responses_lemma} below.

\subsection{Approximate Nash equilibria and best responses}

Let us now introduce some definitions and results related to approximate equilibria, which we already came across in the previous section.

\begin{definition}
\label{epsilon_approx_symm_equil_str}
$(Y^*, Y^*)$ is an $\epsilon$-approximate Nash equilibrium of $(C, C^T)$ if
\begin{align*}
\forall X \in \mathbb{X} : (Y^* - X) \cdot CY^* \geq - \epsilon
\end{align*}
where $\mathbb{X}$ is the probability simplex of $(C, C^T)$ and $\epsilon \geq 0$.
\end{definition}

The previous definition generalizes as:

\begin{definition}
$(P^*, Q^*)$ is an $\epsilon$-approximate Nash equilibrium of $(A, B)$ if
\begin{align*}
\forall P \in \mathbb{X} &: P^* \cdot AQ^* \geq P \cdot AQ^* - \epsilon\\ 
\forall Q \in \mathbb{Y} &: P^* \cdot BQ^* \geq P^* \cdot B Q - \epsilon
\end{align*}
where $\mathbb{P}$ ($\mathbb{Q}$) is the probability simplex of $A$ ($B$) and $\epsilon \geq 0$.\\ 

We denote the set of $\epsilon$-approximate Nash equilibria of $(A, B)$ by $NE(\epsilon, A, B)$.
\end{definition}

\begin{definition}
Given a bimatrix game $(A, B)$ and a strategy $Q \in \mathbb{Q}$ of the column player, we say that the strategy $P^* \in \mathbb{P}$ of the row player is an $\epsilon$-best response against $Q$ if
\begin{align*}
\forall P \in \mathbb{P} : P^* \cdot AQ \geq P \cdot AQ - \epsilon.
\end{align*}
$\epsilon$-best responses for the column player are defined similarly.
\end{definition}

We have the following lemma:

\begin{lemma}
\label{approximate_best_responses_lemma}
Given a bimatrix game $(A, B)$ and $(P^*, Q^*) \in \mathbb{P} \times \mathbb{Q}$, let
\begin{align*}
I(P^*) &= \{ i \in \mathcal{C}(P^*) | E_i \mbox{ is an } \epsilon\mbox{-best response against } Q^* \} \mbox{ and }\\
J(Q^*) &= \{ j \in \mathcal{C}(Q^*) | E_j \mbox{ is an } \epsilon\mbox{-best response against } P^* \}.
\end{align*}
Furthermore, assume 
\begin{align*}
P^*(i) > 0 &\mbox{ if and only if } i \in I(P^*) \mbox{ and }\\
Q^*(j) > 0 &\mbox{ if and only if } j \in J(Q^*).
\end{align*} 
Then $(P^*, Q^*) \in NE(\epsilon, A, B)$.
\end{lemma}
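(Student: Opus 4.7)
The plan is to mirror the proof of Proposition \ref{Jurg_proof_help} almost verbatim, replacing the exact best-response equalities with the $\epsilon$-approximate inequalities. First I would unpack the "iff" hypothesis: since $I(P^*) \subseteq \mathcal{C}(P^*)$ by definition, the assumption that $P^*(i) > 0$ iff $i \in I(P^*)$ is equivalent to $\mathcal{C}(P^*) = I(P^*)$, meaning every pure strategy in the support of $P^*$ is an $\epsilon$-best response to $Q^*$. The analogous statement holds for $Q^*$.

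Next I would derive the row-player inequality. Fix an arbitrary $P \in \mathbb{P}$. By the previous step, for every $i \in \mathcal{C}(P^*)$ we have $E_i \cdot AQ^* \geq P \cdot AQ^* - \epsilon$. Multiplying by $P^*(i) \geq 0$ and summing over $i \in \mathcal{C}(P^*)$ (noting $\sum_{i \in \mathcal{C}(P^*)} P^*(i) = 1$), I obtain
\begin{align*}
P^* \cdot AQ^* = \sum_{i \in \mathcal{C}(P^*)} P^*(i) \, E_i \cdot AQ^* \geq \sum_{i \in \mathcal{C}(P^*)} P^*(i) \, (P \cdot AQ^* - \epsilon) = P \cdot AQ^* - \epsilon.
\end{align*}
Since $P \in \mathbb{P}$ was arbitrary, the first defining inequality of $NE(\epsilon, A, B)$ is satisfied.

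Then I would repeat the identical convex-combination argument for the column player, using the hypothesis that each $j \in \mathcal{C}(Q^*)$ yields $X^* \cdot B E_j \geq X^* \cdot BQ - \epsilon$ for all $Q \in \mathbb{Q}$, and taking the $Q^*(j)$-weighted sum to conclude $P^* \cdot BQ^* \geq P^* \cdot BQ - \epsilon$. Combining both inequalities gives $(P^*, Q^*) \in NE(\epsilon, A, B)$.

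There is essentially no technical obstacle here; the only point requiring care is translating the "iff" assumption into the operational statement "every pure strategy in the support is an $\epsilon$-best response," after which the lemma follows by the standard convex-combination trick used in Proposition \ref{Jurg_proof_help}. In particular, no continuity, compactness, or fixed-point machinery is needed, and the lemma goes through for any $\epsilon \geq 0$, specializing to Proposition \ref{Jurg_proof_help} when $\epsilon = 0$.
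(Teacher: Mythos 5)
Your proposal is correct and follows exactly the same convex-combination argument as the paper's own proof: both translate the support hypothesis into "every supported pure strategy is an $\epsilon$-best response" and then take the $P^*(i)$- (resp. $Q^*(j)$-) weighted sum over the support to obtain the two $\epsilon$-approximate equilibrium inequalities. No substantive difference.
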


\begin{proof}
Under the assumptions of the lemma, 
\begin{align*}
\forall i \in I(P^*) \mbox{ } \forall P \in \mathbb{P} : E_i \cdot AQ^* \geq P \cdot AQ^* - \epsilon
\end{align*}
and
\begin{align*}
\forall j \in J(Q^*) \mbox{ } \forall Q \in \mathbb{Q} : P^* \cdot BE_j \geq P^* \cdot BQ - \epsilon.
\end{align*}
We then have that, for all $P \in \mathbb{P}$,
{\allowdisplaybreaks
\begin{align*}
P^* \cdot AQ^* &= \sum_{i \in \mathcal{C}(P^*)} P^*(i) E_i \cdot AQ^*\\
  &\geq \sum_{\mathcal{C}(P^*)} P^*(i) (P \cdot AQ^* - \epsilon)\\
  &= \left(\sum_{\mathcal{C}(P^*)} P^*(i) \right) \left( P \cdot AQ^* - \epsilon \right)\\
  &= P \cdot AQ^* - \epsilon
\end{align*}
}
and, for all $Q \in \mathbb{Q}$,
{\allowdisplaybreaks
\begin{align*}
P^* \cdot BQ^* &= \sum_{j \in \mathcal{C}(Q^*)} Q^*(j) P^* \cdot BE_j\\
  &\geq \sum_{j \in \mathcal{C}(Q^*)} Q^*(j) (P^* \cdot BQ - \epsilon)\\
  &= \left( \sum_{j \in \mathcal{C}(Q^*)} Q^*(j) \right) \left( P^* \cdot BQ - \epsilon \right)\\
  &= P^* \cdot BQ - \epsilon
\end{align*}
}
implying $(P^*, Q^*)$ is an $\epsilon$-approximate equilibrium of $(A, B)$ as desired. 
\end{proof}

Let us now give an important result on approximate equilibria. To that end, we need a definition:

\begin{definition}
\label{cool}
$(X^*, Y^*)$ is an $\epsilon$-well-supported Nash equilibrium of $(A, B)$ if
\begin{align*}
&E_i \cdot A Y^* > E_k \cdot A Y^* + \epsilon \Rightarrow X^*(k) = 0 \mbox{ and }\\
&X^* \cdot B E_j > X^* \cdot B E_k + \epsilon \Rightarrow Y^*(k) = 0.
\end{align*}
\end{definition}

Definition \ref{cool} is due to \cite{Daskalakis}. We note that an $\epsilon$-well-supported Nash equilibrium of $(A, B)$ is necessarily an $\epsilon$-approximate equilibrium of $(A, B)$ but the converse is not generally true. However, given an approximate equilibrium we can obtain a well-supported equilibrium:

\begin{proposition}
\label{cgood}
Let $(A, B)$ be such that $0 \leq A, B \leq 1$. Given an $\epsilon^2/8$-approximate Nash equilibrium of $(A, B)$, where $0 \leq \epsilon \leq 1$, we can find an $\epsilon$-well-supported Nash equilibrium in polynomial time.
\end{proposition}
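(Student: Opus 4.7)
The plan is to take an $\epsilon^2/8$-approximate Nash equilibrium $(X^*, Y^*)$ and ``clean it up'' by shifting probability mass away from any pure strategy that is significantly worse than a best response, producing a nearby profile that satisfies the condition of Definition~\ref{cool}. The construction is symmetric in the two players and runs in polynomial time (compute payoff vectors, identify the bad sets, rebalance mass).

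For the row player, let $v = \max_k (AY^*)_k$ and define the ``bad'' rows
$$T_R = \{ i \in \mathcal{C}(X^*) : (AY^*)_i < v - \epsilon/2 \}.$$
The approximate equilibrium condition gives $X^* \cdot AY^* \geq v - \epsilon^2/8$, whereas splitting the expectation over $T_R$ and its complement yields
$$X^* \cdot AY^* \leq v - (\epsilon/2) \sum_{i \in T_R} X^*(i).$$
Combining these two inequalities forces $\sum_{i \in T_R} X^*(i) \leq \epsilon/4$. Define $X'$ by removing all probability mass on $T_R$ and piling it onto any fixed pure best response $i^* \in \arg\max_k (AY^*)_k$ (which lies outside $T_R$ since $0 \leq \epsilon \leq 1$). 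An analogous construction using $u = \max_k (X^* \cdot B)_k$ and the column-player bad set $T_C$ produces $Y'$. By Lemma~\ref{approximate_best_responses_lemma} the intuition is right, but well-supportedness is a stricter requirement, so a further perturbation estimate is needed.

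The key observation is that in each of $X'$ and $Y'$ the \emph{total mass moved} is at most $\epsilon/4$. Since $0 \leq A, B \leq 1$, this bounds the perturbation of the payoff vectors entrywise: writing $Y' - Y^* = \Delta^+ - \Delta^-$ with $\Delta^\pm \geq 0$ and $\sum_j \Delta^+_j = \sum_j \Delta^-_j \leq \epsilon/4$, we obtain, for every $i$,
$$|(AY')_i - (AY^*)_i| = \left| \sum_j A_{ij} \Delta^+_j - \sum_j A_{ij} \Delta^-_j \right| \leq \epsilon/4,$$
and symmetrically for the column player's payoffs. Hence for any $i \in \mathcal{C}(X')$ we have $(AY')_i \geq (AY^*)_i - \epsilon/4 \geq v - 3\epsilon/4$, while for any $k$ we have $(AY')_k \leq (AY^*)_k + \epsilon/4 \leq v + \epsilon/4$. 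Subtracting, $(AY')_k - (AY')_i \leq \epsilon$ for every $k$ and every $i \in \mathcal{C}(X')$, which is exactly the well-supported condition of Definition~\ref{cool} for the row player. The column player argument is identical.

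The main obstacle is the interaction between the two cleanup steps: the cleanup of $X^*$ is justified relative to $Y^*$, yet we simultaneously replace $Y^*$ by $Y'$, which shifts the row player's payoff vector and could, a priori, reintroduce a violation of the well-supported condition. The argument goes through because the total-variation bound $\epsilon/4$ on each perturbation is tight enough that the induced payoff shift only erodes the $\epsilon/2$ gap built into the definition of $T_R$ down to $\epsilon$, exactly matching the target tolerance. Tracking this budget carefully (rather than using the crude $\|Y'-Y^*\|_1 \leq \epsilon/2$ bound, which would lose a factor of two) is the delicate point of the proof.
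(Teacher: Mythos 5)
The paper offers no proof of this proposition; it simply attributes the result to Chen, Deng, and Teng (motivated by Daskalakis, Goldberg, and Papadimitriou). Your truncate-and-redistribute argument is a correct, self-contained reconstruction of that standard reduction, with the constants tracked exactly: the $\epsilon/2$ threshold defining the bad set, the markov-style bound $\sum_{i\in T_R}X^*(i)\le\epsilon/4$ from the $\epsilon^2/8$ approximation guarantee, and the sharpened perturbation estimate $\left|(AY')_i-(AY^*)_i\right|\le\epsilon/4$ (exploiting that $A_i\cdot\Delta^+$ and $A_i\cdot\Delta^-$ each lie in $[0,\epsilon/4]$ rather than invoking the crude $\|Y'-Y^*\|_1\le\epsilon/2$ bound, which would yield $3\epsilon/2$ and overshoot) combine exactly to produce the $\epsilon$-well-supported condition.
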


The previous proposition is due to \citep{CDT} motivated by a related result in \citep{Daskalakis}. $\epsilon$-well-supported equilibria admit the following characterization:

\begin{lemma}
\label{my_5_cents}
$(X^*, Y^*)$ is an $\epsilon$-well-supported Nash equilibrium of $(A, B)$ if and only if
\begin{align*}
X^*(i) > 0 &\Rightarrow E_i \cdot AY^* \geq \max_{k = 1}^m E_k \cdot A Y^* - \epsilon \mbox{ and }\\
Y^*(j) > 0 &\Rightarrow X^* \cdot BE_j \geq \max_{\ell = 1}^n X^* \cdot BE_\ell - \epsilon.
\end{align*}
\end{lemma}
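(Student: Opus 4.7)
The plan is to prove Lemma~\ref{my_5_cents} by showing the two characterizations are logically equivalent via contraposition, treating the row and column conditions symmetrically. The argument is purely algebraic/logical; no further machinery is needed beyond Definition~\ref{cool}.

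For the forward direction, I would assume $(X^*,Y^*)$ is $\epsilon$-well-supported and fix $i$ with $X^*(i)>0$. Let $k^* \in \arg\max_{k} E_k \cdot AY^*$. If we had $E_{k^*}\cdot AY^* > E_i \cdot AY^* + \epsilon$, then Definition~\ref{cool} (applied with $k^*$ playing the role of $i$ and $i$ playing the role of $k$) would force $X^*(i)=0$, contradicting the hypothesis. Hence $E_i\cdot AY^* \geq E_{k^*}\cdot AY^* - \epsilon = \max_{k} E_k\cdot AY^* - \epsilon$, which is the desired first implication. The column-player implication follows the same argument applied to $X^* \cdot B E_j$.

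For the reverse direction, I would assume the characterization and show the two implications of Definition~\ref{cool}. Suppose $E_i \cdot AY^* > E_k \cdot AY^* + \epsilon$, and suppose for contradiction that $X^*(k) > 0$. Then the characterization gives $E_k \cdot AY^* \geq \max_{\ell} E_\ell \cdot AY^* - \epsilon \geq E_i \cdot AY^* - \epsilon$, which combined with the hypothesis yields $E_k\cdot AY^* > E_k\cdot AY^*$, a contradiction; hence $X^*(k)=0$. The column-player condition is handled identically with $X^*\cdot B E_j$ in place of $E_i \cdot AY^*$.

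There is no real obstacle here; the content is just unpacking the quantifiers. The only mild care needed is to be explicit that the maximum in the characterization is attained (since we are working in finite dimensions), so picking $k^*$ as an argmax is legitimate, and to keep the row and column cases clearly separated so one does not accidentally mix the roles of $A$ and $B$.
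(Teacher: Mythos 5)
Your argument is correct and is essentially the same as the paper's, which simply observes that the lemma is the contrapositive of Definition~\ref{cool}; you have spelled out the contrapositive in both directions explicitly, but the logical content is identical.
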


\begin{proof}
The statement of the lemma is just the contrapositive of Definition \ref{cool}.
\end{proof}

We also have the following lemma:
\begin{lemma}
\label{my_2_cents}
Let $(A, B)$ be such that $0 \leq A, B \leq 1$. Then $(X^*, Y^*)$ is an $\epsilon$-well-supported equilibrium of $(A, B)$ if and only if it is an $\epsilon/c$-well-supported equilibrium of 
\begin{align*}
\left(\frac{1}{c} (A + c_a \mathbf{1} \mathbf{1}^T), \frac{1}{c} (B + c_b \mathbf{1} \mathbf{1}^T) \right), \mbox{ } c > 0.
\end{align*}
\end{lemma}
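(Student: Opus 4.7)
My plan is to reduce the claim to a direct algebraic computation by using the characterization of $\epsilon$-well-supported equilibria provided by Lemma \ref{my_5_cents}. The key observation is that the transformation $A \mapsto A' \doteq \frac{1}{c}(A + c_a \mathbf{1}\mathbf{1}^T)$ is an affine rescaling that preserves the \emph{differences} between the payoffs of pure strategies (up to the scaling factor $1/c$), and well-supportedness depends only on such differences.

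First I would observe that for any probability vector $Y^*$ (so $\mathbf{1}^T Y^* = 1$) and any pure row strategy $E_i$ (so $E_i \cdot \mathbf{1} = 1$), we have
\begin{align*}
E_i \cdot A' Y^* \;=\; \frac{1}{c}\bigl(E_i \cdot A Y^* + c_a (E_i \cdot \mathbf{1})(\mathbf{1}^T Y^*)\bigr) \;=\; \frac{1}{c}\bigl(E_i \cdot A Y^* + c_a\bigr).
\end{align*}
Subtracting this identity for two indices $i$ and $k$ yields
\begin{align*}
E_i \cdot A' Y^* - E_k \cdot A' Y^* \;=\; \frac{1}{c}\bigl(E_i \cdot A Y^* - E_k \cdot A Y^*\bigr),
\end{align*}
and, in particular, $\max_k E_k \cdot A' Y^* = \frac{1}{c}(\max_k E_k \cdot A Y^* + c_a)$. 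The symmetric computation for the column player, replacing $A$ by $B$ and $c_a$ by $c_b$, gives $X^* \cdot B' E_j - X^* \cdot B' E_\ell = \frac{1}{c}(X^* \cdot B E_j - X^* \cdot B E_\ell)$.

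With these identities in hand, the lemma follows directly from Lemma \ref{my_5_cents}. For the forward direction, suppose $(X^*, Y^*)$ is an $\epsilon$-well-supported equilibrium of $(A,B)$. For any $i$ with $X^*(i) > 0$,
\begin{align*}
E_i \cdot A' Y^* \;=\; \frac{1}{c}(E_i \cdot A Y^* + c_a) \;\geq\; \frac{1}{c}\bigl(\max_k E_k \cdot A Y^* - \epsilon + c_a\bigr) \;=\; \max_k E_k \cdot A' Y^* - \epsilon/c,
\end{align*}
and analogously for the column player, so $(X^*, Y^*)$ is $\epsilon/c$-well-supported in the transformed game. The reverse direction is essentially the same calculation read in reverse: multiplying through by $c$ and cancelling the additive $c_a$ (respectively $c_b$) recovers the $\epsilon$-well-supportedness condition for $(A,B)$.

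I do not anticipate a genuine obstacle here; the only thing to be careful about is keeping track of which quantities are probability vectors (so that $\mathbf{1}^T Y^* = 1$ and similarly for $X^*$), since this is what makes the rank-one perturbation $c_a \mathbf{1}\mathbf{1}^T$ act as a pure additive constant on pure-strategy payoffs. The hypothesis $0 \le A, B \le 1$ is not used in the algebra; it is presumably carried over for consistency with the surrounding discussion where payoffs are normalized so that Proposition \ref{cgood} applies.
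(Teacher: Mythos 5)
Your proposal is correct and takes essentially the same route as the paper: both invoke the characterization in Lemma \ref{my_5_cents} and reduce the claim to the rank-one identity $E_i \cdot \frac{1}{c}(A + c_a \mathbf{1}\mathbf{1}^T) Y^* = \frac{1}{c}(E_i \cdot A Y^* + c_a)$. Your remark that the normalization $0 \le A, B \le 1$ plays no role in the algebra and that the steps are reversible is accurate; the paper's proof is slightly terser, leaving the reverse direction implicit, but the substance is the same.
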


\begin{proof}
By Lemma \ref{my_5_cents}, $(X^*, Y^*)$ is $\epsilon$-well-supported equilibrium of $(A, B)$ if and only if
\begin{align*}
X^*(i) > 0 &\Rightarrow E_i \cdot AY^* \geq \max_{k = 1}^m E_k \cdot A Y^* - \epsilon \mbox{ and }\\
Y^*(j) > 0 &\Rightarrow X^* \cdot BE_j \geq \max_{\ell = 1}^n X^* \cdot BE_\ell - \epsilon.
\end{align*}
Suppose $X^*(i) > 0$. Then
\begin{align*}
E_i \cdot \left(\frac{1}{c} (A + c_a \mathbf{1} \mathbf{1}^T)\right) Y^* &= \frac{1}{c} \left( E_i \cdot A Y^* + c_a \right)\\
  &\geq \frac{1}{c} \left( \max_{k = 1}^m E_k \cdot A Y^* - \epsilon + c_a \right)\\
  &\geq \frac{1}{c} \left( \max_{k = 1}^m E_k \cdot A Y^* + c_a \right) - \epsilon/c\\
  &\geq \max_{k = 1}^m E_k \cdot \left( \frac{1}{c} (A + c_a \mathbf{1} \mathbf{1}^T) \right) Y^* - \epsilon/c.
\end{align*}
Furthermore, if $Y^*(j) > 0$, we similarly get 
\begin{align*}
X^* \cdot \left( \frac{1}{c} (B + c_b \mathbf{1} \mathbf{1}^T) \right) E_j \geq \max_{\ell = 1}^n X^* \cdot \left( \frac{1}{c} (B + c_b \mathbf{1} \mathbf{1}^T) \right) E_\ell - \epsilon/c.
\end{align*}
This completes the proof.
\end{proof}

\cite{CDT} discuss the previous lemma without proof.

\subsection{Our main result on equilibrium approximation}

\begin{theorem}
\label{PPAD_theorem}
If the problem of computing an $\epsilon$-approximate symmetric equilibrium strategy of $(C_0, C_0^T)$, $C_0 > 0$ admits a fully polynomial time approximation scheme, then {\bf PPAD} $\subseteq$ {\bf P}.
\end{theorem}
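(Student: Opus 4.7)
The plan is to reduce the problem of $\epsilon$-approximating a Nash equilibrium in a general bimatrix game $(A,B)$ (which is \textbf{PPAD}-hard by \cite{CDT} under either additive or multiplicative payoff approximation, and for which an FPTAS would imply $\textbf{P}=\textbf{PPAD}$) to the problem of $\epsilon'$-approximating a symmetric equilibrium of a symmetric bimatrix game $(C_0, C_0^T)$ with $C_0 > 0$, where $\epsilon'$ is polynomially related to $\epsilon$ and to the size of the input. Given such a reduction, an FPTAS for the latter problem immediately yields an FPTAS for the former, establishing the theorem.

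First I would invoke Jurg's symmetrization \citep{Jurg} to pass from an arbitrary $m \times n$ bimatrix game $(A,B)$ to a symmetric $(m{+}n)\times(m{+}n)$ game $(C, C^T)$ whose symmetric equilibria $Z^* = (\alpha P^*, \beta Q^*)$ with $\alpha+\beta = 1$ encode Nash equilibria $(P^*, Q^*) \in NE(A,B)$. Concretely, the natural block form $C = \bigl(\begin{smallmatrix} 0 & A \\ B^T & 0 \end{smallmatrix}\bigr)$ (possibly augmented, as Jurg does, with bias terms to force both $\alpha,\beta$ bounded below by a fixed constant, say $1/2$) has the property that the best-response characterization of symmetric equilibria (cf.\ Proposition~\ref{Jurg_proof_help}) decouples into the two best-response conditions of Proposition~\ref{nash_equil_characterization_b_responses} for $(A,B)$. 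Next, to obtain the positivity requirement $C_0 > 0$, I would shift and scale: set $C_0 \doteq \tfrac{1}{c}(C + c_0 \mathbf{1}\mathbf{1}^T)$ for constants $c_0,c$ chosen so that $0 < C_0 \leq 1$ entrywise. By Lemma~\ref{my_2_cents}, an $\epsilon/c$-well-supported symmetric equilibrium of $(C_0, C_0^T)$ is equivalent to an $\epsilon$-well-supported symmetric equilibrium of $(C, C^T)$, so passing between $C$ and $C_0$ costs only a known polynomial factor in $\epsilon$.

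Now suppose an FPTAS for symmetric equilibrium approximation on $(C_0, C_0^T)$ is available. Given target accuracy $\epsilon$ for $(A,B)$, I would run the FPTAS to obtain an $\eta$-approximate symmetric equilibrium strategy of $(C_0, C_0^T)$ for an $\eta$ chosen so that, after applying Proposition~\ref{cgood}, we land on an $\sqrt{8\eta}$-well-supported symmetric equilibrium $Z^* = (\alpha P^*, \beta Q^*)$ of $(C_0, C_0^T)$, and hence (via Lemma~\ref{my_2_cents}) an $\epsilon_0$-well-supported symmetric equilibrium of $(C, C^T)$ for $\epsilon_0 = c\sqrt{8\eta}$. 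Using Lemma~\ref{my_5_cents}'s support characterization together with the block structure of $C$, the condition on row indices $i \in \mathcal{C}(P^*)$ says $E_i\cdot AQ^*$ is within $\epsilon_0/\beta$ of the maximum over $i$, and analogously for column indices $j \in \mathcal{C}(Q^*)$ with $B^T P^*$ (after dividing by $\alpha$). Since both $\alpha,\beta \geq 1/2$ by the augmentation, this exhibits $(P^*, Q^*)$ as a $2\epsilon_0$-well-supported Nash equilibrium of $(A,B)$, which in turn is a $2\epsilon_0$-approximate Nash equilibrium. Choosing $\eta$ polynomial in $\epsilon$ makes the overall procedure run in time polynomial in the input size and $1/\epsilon$.

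The main obstacle will be the transfer of support: if $\alpha$ or $\beta$ are allowed to be negligibly small, then the well-supported condition on the $(m{+}n)$-dimensional symmetric strategy fails to give useful bounds for either half of $(A,B)$, and the reduction collapses. This is precisely why one must augment Jurg's raw construction so that any (approximate) symmetric equilibrium is forced to place nontrivial mass on both the row and column blocks; a standard device is to add ``balancing'' strategies or a negative diagonal bias, and I would verify that such an augmentation preserves both the one-to-one correspondence of equilibria and the polynomial bookkeeping on $\epsilon$. A secondary, more routine obstacle is chaining the three approximation notions (approximate $\to$ well-supported via Proposition~\ref{cgood}, shift-invariance via Lemma~\ref{my_2_cents}, and the support-to-approximate direction via Lemma~\ref{approximate_best_responses_lemma}) so that the compound parameter remains polynomially related to the FPTAS parameter; this is bookkeeping but must be done explicitly to justify the final FPTAS claim.
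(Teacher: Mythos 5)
Your overall strategy—reduce from $\epsilon$-approximate Nash in a general bimatrix game $(A,B)$ to $\eta$-approximate symmetric equilibrium of a positive symmetric game, chaining the approximation notions through Proposition~\ref{cgood}, Lemma~\ref{my_2_cents}, Lemma~\ref{my_5_cents}, and Lemma~\ref{approximate_best_responses_lemma}—matches the paper's reduction and is the right outline. But the proposal has a genuine gap exactly at the step you yourself flag as the ``main obstacle,'' and the specific construction you sketch does not resolve it.

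You take the symmetrization to be the $(m{+}n) \times (m{+}n)$ block matrix $\bigl(\begin{smallmatrix} 0 & A \\ B^T & 0 \end{smallmatrix}\bigr)$, ``possibly augmented \ldots with bias terms.'' The paper's footnote explicitly addresses why this $2n\times 2n$ block matrix does \emph{not} give a symmetric bimatrix game: the natural variational inequality associated with $\bigl(\begin{smallmatrix} 0 & A \\ B^T & 0 \end{smallmatrix}\bigr)$ lives over a product of two simplices, not over one simplex, and a symmetric equilibrium over the single simplex is a different object. Jurg's construction is not a ``biased'' version of that block matrix; it is the Gale--Kuhn--Tucker symmetrization, which adjoins an \emph{extra pure strategy} and $\pm\mathbf{1}$ border vectors to form an $(a{+}b{+}1)\times(a{+}b{+}1)$ matrix
\[
C = \left[ \begin{array}{ccc}
0 & A & -\mathbf{1}_{a} \\
B^T & 0 & \mathbf{1}_{b} \\
\mathbf{1}^T_{a} & -\mathbf{1}^T_{b} & 0 \end{array} \right],
\]
with $A>0$ and $B<0$. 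It is precisely this extra row and column that forces the row block and column block of a symmetric equilibrium to each carry some mass; without it you have no mechanism for that at all.

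Moreover, your proposed bound $\alpha,\beta \geq 1/2$ is not something Jurg's construction delivers, nor is it true in general: even in Jurg's $3$-block form the two block masses can be highly skewed (they depend on the payoff scales of $A$ versus $B$), and for \emph{approximate} equilibria they are not a priori bounded away from zero. The actual paper proof does not obtain a constant lower bound; it instead imposes specific size constraints on $\epsilon$ (namely $\epsilon < \min\{1/3,\,\min_{ij} A_{ij},\,\min_{ij}(-B_{ij})\}$ after normalizing $0<A\leq 1$, $-1\leq B<0$) and then runs a cyclic contradiction argument to establish nonvanishing block masses for the $\epsilon$-well-supported approximate case. That argument is the technical heart of the theorem, and it is exactly what your proposal defers to ``I would verify that such an augmentation preserves \ldots the polynomial bookkeeping on $\epsilon$.'' Without the GKT extra strategy and without that explicit analysis, the reduction you describe would, as you note, ``collapse'' whenever one block mass is small; asserting $\alpha,\beta\geq 1/2$ does not repair this. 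You should replace the $2$-block matrix with the GKT form, drop the unjustified $1/2$ bound, and supply the argument that an $\epsilon$-well-supported symmetric equilibrium of the GKT game has nonempty support in both blocks under suitable constraints on $\epsilon$.
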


\begin{proof}
\cite{Jurg} show that every bimatrix game $(A, B)$ can be ``symmetrized.'' That is, $(A, B)$ can be represented as a symmetric bimatrix game $(C, C^T)$ such that every equilibrium of $(C, C^T)$ yields a pair of equilibria of $(A, B)$ using a simple formula. The symmetrization is named after Gale, Kuhn, and Tucker \citep{GKT} by \cite{Jurg} as it is based on a symmetrization for zero-sum games that they extended to symmetric bimatrix games. The {\em GKT symmetrization method} of a bimatrix game $(A, B)$ gives a symmetric bimatrix game $(C, C^T)$ in which the number of pure strategies is $a + b + 1$, where $a$ is the number of row-player strategies of $(A, B)$ and $b$ is the number of column-player strategies, and where the payoff matrix $C$ is 
\begin{align}
C = \left[ \begin{array}{ccc}
0 & A & -\mathbf{1}_{a} \\
B^T & 0 & \mathbf{1}_{b} \\
\mathbf{1}^T_{a} & - \mathbf{1}^T_{b} & 0 \end{array} \right]\label{GKT_symmetrization}
\end{align}
where $\mathbf{1}_{a}$ is an $a \times 1$ vector of ones and $\mathbf{1}_{b}$ similarly. Their proof requires that $A > 0$ and $B < 0$. Note that Nash equilibria in bimatrix games $(A, B)$ as well as $\epsilon$-approximate Nash equilibria in this class of games are invariant under the addition of an arbitrary constant to every element of $A$ and respectively for $B$. Therefore, the requirement that $A > 0$ and $B < 0$ does not pose any loss of generality: Given any bimatrix game $(\hat{A}, \hat{B})$ we can obtain a bimatrix game $(A, B)$ such that $A > 0$ and $B < 0$ and such that $NE(A, B) = NE(\hat{A}, \hat{B})$ (in polynomial time). Given now $(X^*, Y^*) \in NE^+(C, C^T)$, we obtain a pair of equilibria $(P^*_1, Q^*_1)$ and $(P^*_2, Q^*_2)$ of $(A, B)$ by letting
\begin{align}
P^*_1 = \frac{1}{X^*_{a} \cdot \mathbf{1}_{a}} X^*_{a} \qquad Q^*_1 = \frac{1}{Y^*_{b} \cdot \mathbf{1}_{b}} Y^*_{b}\label{equilibrium_formulas}
\end{align}
where $X^*_{a} := (X^*(1), \ldots, X^*(a))^T$ and $Y^*_{b} := (Y^*(a+1), \ldots, Y^*(a+b))^T$ and
\begin{align*}
P^*_2 = \frac{1}{Y^*_{a} \cdot \mathbf{1}_{a}} Y^*_{a} \qquad Q^*_2 = \frac{1}{X^*_{b} \cdot \mathbf{1}_{b}} X^*_{b}
\end{align*}
where $Y^*_{a} := (Y^*(1), \ldots, Y^*(a))^T$ and $X^*_{b} := (X^*(a+1), \ldots, X^*(a+b))^T$. 

It follows immediately from the previous expressions that if $(X^*, Y^*)$ is a symmetric equilibrium ($Y^* = X^*$) of $(C, C^T)$, then $(P^*_1, Q^*_1) = (P^*_2, Q^*_2)$. Note that the number of pure strategies of $(C, C^T)$ obtained by the GKT symmetrization of $(A, B)$ is polynomial in $a+b$ and converting a symmetric equilibrium of $(C, C^T)$ to an equilibrium of $(A, B)$ requires evaluating \eqref{equilibrium_formulas}. 

\cite{CDT} show in their Theorem 6.2 that if the computation of an $\epsilon$-approximate Nash equilibrium in a bimatrix game $(A, B)$ admits an FPTAS, then {\bf PPAD} $\subseteq$ {\bf P}. One approach to finding an $\epsilon$-approximate equilibrium of $(A, B)$ is to first find an $\bar{\epsilon}$-approximate equilibrium of $(C, C^T)$ and to then use the GKT symmetrization method to obtain the desired $\epsilon$-approximate equilibrium of $(A, B)$. But does this method always work and how are $\bar{\epsilon}$ and $\epsilon$ related? Before answering these questions we sketch the the proof of \cite{Jurg} (which concerns exact equilibria).\\

Letting $X^*$ be a symmetric equilibrium strategy of $(C, C^T)$, as \cite{Jurg} argue in their Theorem 1, straight algebra using \eqref{GKT_symmetrization} gives
\begin{align*}
E_i \cdot C X^* = \begin{cases} E_i  \cdot A X^*_{b} - X^*(a + b + 1), &i=1, \ldots, a\\
X^*_{a} \cdot B E_{i - a} + X^*(a + b + 1), &i=a+1, \ldots, a+b\\
X^{*T}_{a} \mathbf{1}_{a} - X^{*T}_{b} \mathbf{1}_{b}, & i = a + b +1
\end{cases}
\end{align*}
and
\begin{align*}
X^* \cdot C^T E_j = \begin{cases} E_j  \cdot A X^*_{b} - X^*(a + b + 1), &j=1, \ldots, a\\
X^*_{a} \cdot B E_{j - a} + X^*(a + b + 1), &j=a+1, \ldots, a+b\\
X^{*T}_{a} \mathbf{1}_{a} - X^{*T}_{b} \mathbf{1}_{b}, & j = a + b +1.
\end{cases}
\end{align*}
Furthermore, by \citep[Lemma 1]{Jurg}, $X^*_{b} \neq 0$ and $X^*_{a} \neq 0$. To prove that \eqref{equilibrium_formulas} gives an equilibrium $(P^*, Q^*)$ of $(A, B)$, they proceed as follows: Assuming $X^*(i^*) > 0$ for $i^* \in \{1, \ldots, a\}$, strategy $E_{i^*}$ is a pure best response to $X^*$ by the assumption that $X^*$ is a symmetric equilibrium strategy of $(C, C^T)$ and, therefore, 
\begin{align*}
E_{i^*} \cdot CX^* = \max_{i=1}^{a+b+1} E_i \cdot CX^*,
\end{align*}
which implies
\begin{align*}
E_{i^*} \cdot CX^* = \max_{i=1}^{a} E_i \cdot CX^*.
\end{align*}
Thus, invoking the previous expressions and since $i^* \in \{1, \ldots, a \}$,
\begin{align*}
E_{i^*} \cdot AX^*_b + X^*(a + b + 1) = \max_{k=1}^{a} \left\{  E_k \cdot A X^*_{b} \right\} + X^*(a + b + 1)
\end{align*}
or, equivalently,
\begin{align}
E_{i^*} \cdot AX^*_b = \max_{k=1}^{a} \left\{  E_k \cdot A X^*_{b} \right\}.\label{funny_bit_1}
\end{align}
Similarly, $X^*(j^*) > 0$ for $j^* \in \{a+1, \ldots, a+b\}$, implies that
\begin{align}
X^*_{a} \cdot B E_{j^*-a} = \max_{\ell = 1}^{b} \left\{ X^*_{a} \cdot B E_{\ell} \right\}.\label{funny_bit_2}
\end{align}
\eqref{funny_bit_1} and \eqref{funny_bit_2} imply that
\begin{align*}
P^*(i) > 0 &\Rightarrow E_{i} \cdot AQ^* = \max_{k=1}^{a} \left\{  E_k \cdot A Q^* \right\} \mbox{ and }\\
Q^*(j) > 0 &\Rightarrow P^* \cdot B E_{j} = \max_{\ell = 1}^{b} \left\{ P^* \cdot B E_{\ell} \right\}
\end{align*}
where $P^*$ and $Q^*$ are as in \eqref{equilibrium_formulas}. Therefore, the pure strategies supporting $P^*$ are pure best responses against strategy $Q^*$ and that the pure strategies supporting $Q^*$ are pure best responses against strategy $P^*$. Proposition \ref{Jurg_proof_help} implies then that $(P^*, Q^*) \in NE(A, B)$.\\

Suppose now $(Y^*, Y^*)$ is a $\epsilon$-well-supported equilibrium of $(C, C^T)$ where $C$ is as in \eqref{equilibrium_formulas}. For technical reasons, whose necessity become apparent in the rest of the proof, we assume that:
\begin{align*}
0 < A \leq 1; \quad -1 \leq B < 0; \quad \epsilon < \min \left\{1/3, \mbox{ } \min_{ij} A_{ij}, \mbox{ } \min_{ij} (-B_{ij}) \right\}. 
\end{align*}
We will see that these assumptions can be easily satisfied without missing out on our objective.

Since $(Y^*, Y^*)$ is a $\epsilon$-well-supported equilibrium of $(C, C^T)$, by Lemma \ref{my_5_cents}:
\begin{align*}
Y^*_{a}(i) > 0 &\Rightarrow E_i \cdot A Y^*_{b} \geq \max_{k=1}^{a} \left\{  E_k \cdot A Y^*_{b} \right\} - \epsilon \mbox{ and }\\
Y^*_{b}(j) > 0 &\Rightarrow Y^*_{a} \cdot B E_j \geq \max_{\ell = 1}^{b} \left\{ Y^*_{a} \cdot B E_{\ell} \right\} - \epsilon.
\end{align*}
This implies that the pure strategies in $Y^*_a$ receiving positive probability mass are pure $\epsilon$- best responses with respect to $A$ against strategy $Y^*_b$ and that the pure strategies in $Y^*_b$ receiving positive probability mass are $\epsilon$-pure best responses with respect to $B$ against strategy $Y^*_a$. This proves, by Lemma \ref{approximate_best_responses_lemma}, that, assuming there are strategies in both $Y^*_a$ and $Y^*_b$ that receive positive probability mass, $(P^*, Q^*)$, where $P^*$ and $Q^*$ are as in \eqref{equilibrium_formulas}, is an $\epsilon$-approximate equilibrium of $(A, B)$.\\ 

Therefore, the problem of finding an $\epsilon$-approximate equilibrium of $(A, B)$ has been transformed by a polynomial time reduction to the problem of finding an $\epsilon$-well-supported symmetric equilibrium of $(C, C^T)$. Before completing this step of the proof it remains to show that if $(Y^*, Y^*)$ is an $\epsilon$-well-supported symmetric equilibrium of $(C, C^T)$, then $Y^*_{b} \neq 0$ and $Y^*_{a} \neq 0$. (This is the part of the proof where the previous assumptions are needed.) Following the proof of \citep[Lemma 1]{Jurg} (which is about exact Nash equilibria), we show that 
\begin{align*}
Y^*_a \neq 0 \Rightarrow Y^*_b \neq 0 \Rightarrow Y^*(a+b+1) \neq 0 \Rightarrow Y^*_a \neq 0.
\end{align*}  
Let us now proceed with this proof scheme:\\

\noindent
``$Y^*_a \neq 0 \Rightarrow Y^*_b \neq 0$'' : Assume $Y^*_a \neq 0$. Then, by the assumption that $Y^*$ is an $\epsilon$-well-supported symmetric equilibrium strategy of $(C, C^T)$, there is $i_0 \in \{1, \ldots, a\}$ such that
\begin{align*}
E_{i_0} \cdot C Y^* = E_{i_0} \cdot A Y^*_b - Y^*(a+b+1) \geq \begin{cases} \max_{i=1}^a E_i  \cdot A Y^*_{b} - Y^*(a + b + 1) - \epsilon\\
\max_{j = 1}^{b} Y^*_{a} \cdot B E_{j} + Y^*(a + b + 1) - \epsilon\\
Y^{*T}_{a} \mathbf{1}_{a} - Y^{*T}_{b} \mathbf{1}_{b} - \epsilon.
\end{cases}
\end{align*}
Suppose for the sake of contradiction that $Y^*_b = 0$. Then 
\begin{align*}
E_{i_0} \cdot C Y^* = - Y^*(a+b+1) \leq 0,
\end{align*} 
which implies that
\begin{align}
Y^{*T}_{a} \mathbf{1}_{a} - Y^{*T}_{b} \mathbf{1}_{b} \leq \epsilon \Rightarrow Y^{*T}_{a} \mathbf{1}_{a} \leq \epsilon\label{better1}
\end{align} 
and that  
\begin{align}
\max_{j = 1}^{b} Y^*_{a} \cdot B E_{j} + Y^*(a + b + 1) - \epsilon \leq 0\label{better2}
\end{align}
However, since $-1 \leq B \leq 0$, \eqref{better1} implies that
\begin{align}
\forall j \in \{1, \ldots, b\} : Y^*_{a} \cdot B E_{j} \geq - \epsilon \Rightarrow \max_{j = 1}^{b} Y^*_{a} \cdot B E_{j} \geq - \epsilon\label{better3}
\end{align}
and \eqref{better2} together with \eqref{better3} implies that
\begin{align}
Y^*(a+b+1) \leq 2\epsilon.\label{better4}
\end{align}
Consequently, the assumption that $Y^*_b = 0$ together with \eqref{better1} and \eqref{better4} imply that $\mathbf{1}^T Y^* \leq 3\epsilon$, which is a contradiction (as $\epsilon < 1/3$ by assumption). Hence $Y^*_b \neq 0$.\\

\noindent
``$Y^*_b \neq 0 \Rightarrow Y^*(a+b+1) \neq 0$'' : Assume $Y^*_b \neq 0$. Then there is a $j_0 \in \{a+1, \ldots, a+b\}$ such that
\begin{align*}
Y^* \cdot C^T E_{j_0} = Y^*_a B E_{j_0 - a} + Y^*(a+b+1) \geq \begin{cases} \max_{i=1}^{a} E_i  \cdot A Y^*_{b} - Y^*(a + b + 1) - \epsilon\\
\max_{j=1}^b Y^*_{a} \cdot B E_j + Y^*(a + b + 1) - \epsilon\\
Y^{*T}_{a} \mathbf{1}_{a} - Y^{*T}_{b} \mathbf{1}_{b} - \epsilon.
\end{cases}
\end{align*}
Suppose $Y^*(a+b+1) = Y^*_a = 0$. Then 
\begin{align}
\max_{i=1}^{a} E_i  \cdot A Y^*_{b} \leq \epsilon.\label{dirty}
\end{align}
Note that $Y^*_b$ is a probability vector. Therefore, 
\begin{align}
\min_{ij} A_{ij} \leq \max_{i=1}^{a} E_i  \cdot A Y^*_{b}.\label{too_dirty}
\end{align}
\eqref{dirty} and \eqref{too_dirty} along with the assumptions on $\epsilon$ lead to a contradiction.\\

Suppose now that $Y^*(a+b+1) = 0$ and $Y^*_a \neq 0$. Note that in this case
\begin{align*}
\left[ \begin{array}{c}
Y^*_a \\
Y^*_b \\
\end{array} \right]
\end{align*}
is a probability vector. Note further that
\begin{align*}
\max_{i=1}^{a} E_i  \cdot A Y^*_{b} - Y^*_a B E_{j_0 - a} \leq \epsilon,
\end{align*}
which we may equivalently write as
\begin{align}
\left[ \begin{array}{c}
E_{i^*} \\
E_{j_0 - a} \\
\end{array} \right] \cdot
\left[ \begin{array}{cc}
0 & A\\
-B^T & 0
 \end{array} \right]
\left[ \begin{array}{c}
Y^*_a \\
Y^*_b \\
\end{array} \right]
\leq \epsilon\label{d_bit}
\end{align}
where $i^* = \arg\max\{ E_i \cdot A Y_b^* | i \in \{1, \ldots, a \} \}$. However,
\begin{align}
\min \{ \min_{ij} A_{ij}, \min_{ij} (-B_{ij}) \} \leq \left[ \begin{array}{c}
E_{i^*} \\
E_{j_0 - a} \\
\end{array} \right] \cdot
\left[ \begin{array}{cc}
0 & A\\
-B^T & 0
 \end{array} \right]
\left[ \begin{array}{c}
Y^*_a \\
Y^*_b \\
\end{array} \right] \label{very_d_bit}.
\end{align}
\eqref{d_bit} and \eqref{very_d_bit} contradict our assumption on $\epsilon$. Consequently $Y^*(a+b+1) \neq 0$.\\

\noindent
``$Y^*(a+b+1) \neq 0 \Rightarrow Y^*_a \neq 0$'' : Assume $Y^*(a+b+1) \neq 0$. Then
\begin{align*}
E_{a+b+1} CY^* = Y^{*T}_{a} \mathbf{1}_{a} - Y^{*T}_{b} \mathbf{1}_{b} \geq \begin{cases} \max_{i=1}^{a} E_i  \cdot A Y^*_{b} - Y^*(a + b + 1) - \epsilon\\
\max_{j=1}^b Y^*_{a} \cdot B E_j + Y^*(a + b + 1) - \epsilon.
\end{cases}
\end{align*}  
Suppose for the sake of contradiction that $Y^*_a = 0$. Then 
\begin{align*}
0 \geq - Y^{*T}_{b} \mathbf{1}_{b} \geq \max_{j=1}^b Y^*_{a} \cdot B E_j + Y^*(a + b + 1) - \epsilon = Y^*(a + b + 1) - \epsilon.
\end{align*}
Hence 
\begin{align}
Y^*(a + b + 1) \leq \epsilon.\label{muchmuchbetter}
\end{align}
Consequently,
\begin{align*}
- Y^{*T}_{b} \mathbf{1}_{b} \geq \max_{i=1}^{a} E_i  \cdot A Y^*_{b} - Y^*(a + b + 1) - \epsilon \geq \max_{i=1}^{a} E_i  \cdot A Y^*_{b} - 2 \epsilon
\end{align*} 
or, equivalently,
\begin{align}
Y^{*T}_{b} \mathbf{1}_{b} + \max_{i=1}^{a} E_i  \cdot A Y^*_{b} \leq 2 \epsilon.\label{muchmuchbetter2}
\end{align}
Since $0 \leq A \leq 1$, \eqref{muchmuchbetter2} implies that 
\begin{align}
Y^{*T}_{b} \mathbf{1}_{b} \leq 2\epsilon.\label{muchmuchbetter3}
\end{align} 
Consequently, the assumption that $Y^*_a = 0$ together with \eqref{muchmuchbetter} and \eqref{muchmuchbetter3} imply that $\mathbf{1}^T Y^* \leq 3\epsilon$, which contradicts $\epsilon < 1/3$. Hence $Y^*_a \neq 0$.\\

\noindent
Therefore, overall, $Y^*_{a} \neq 0$ and $Y^*_{b} \neq 0$ as claimed.\\

Let us now give the reduction from the problem of computing an equilibrium of $(A, B)$ to the problem of computing a symmetric equilibrium of $(C_0, C_0^T)$ for some $C_0 > 0$. First we add, if necessary, a large enough constant to $A$ and a small enough (negative) constant to $B$ such that the resulting game, say $(\hat{A}, \hat{B})$, satisfies, $\hat{A} > 0$ and $\hat{B} < 0$. As argued earlier this does not affect the $\epsilon$-approximate equilibria one of which we are trying to compute. Then we multiply both $\hat{A}$ and $\hat{B}$ by
\begin{align*}
c' = \left( \max\left\{ \max_{ij} A_{ij}, \max_{ij} B_{ij} \right\} \right)^{-1}
\end{align*}
to obtain a bimatrix game $(\hat{A}', \hat{B}')$ such that $0 < \hat{A}' \leq 1$ and $-1 \leq \hat{B}' < 0$. Finally, we construct $(C, C^T)$ from $(\hat{A}', \hat{B}')$ such that $C$ is as in \eqref{GKT_symmetrization}. Our goal is to find an $c'\epsilon$-well-supported equilibrium of $(C, C^T)$ as we can then invoke \eqref{equilibrium_formulas} (by the previous arguments and Lemma \ref{my_2_cents}) to obtain the desired $\epsilon$-approximate equilibrium of $(A, B)$. 

To that end, we construct $(\hat{C}, \hat{C}^T)$ such that $\hat{C} > 0$, by adding a large enough constant to $C$, which does not affect the $\epsilon$-approximate equilibria of $(C, C^T)$, and then scale $(\hat{C}, \hat{C}^T)$ by 
\begin{align*}
c \doteq \frac{1}{\max_{ij} \hat{C}_{ij}}
\end{align*} 
to obtain $(C_0, C_0^T)$ such that $0 < C_0 \leq 1$. Furthermore, if $(X^*, Y^*)$ is an $cc'\epsilon$-well-supported equilibrium of $(C_0, C_0^T)$, Lemma \ref{my_2_cents} implies it is a $c'\epsilon$-well-supported equilibrium of $(\hat{C}, \hat{C}^T)$ and, therefore, of $(C, C^T)$.  Moreover, Proposition \ref{cgood} implies that from a $(cc'\epsilon)^2/8$-approximate equilibrium of $(C_0, C_0^T)$ we can obtain a $cc'\epsilon$-well-supported equilibrium of $(C_0, C_0^T)$ in {\bf P}-time. 

Overall from the previous argument we obtain that if the problem of computing an approximate symmetric equilibrium of a symmetric bimatrix game whose payoff matrix consists of positive elements is in {\bf P}, then the problem of computing a well-supported equilibrium of $(C, C^T)$ is in {\bf P} and the problem of computing an approximate equilibrium of $(A, B)$ is also in {\bf P}. Thus, by \cite[Theorem 6.2]{CDT}, an FPTAS for symmetric equilibria in symmetric bimatrix games wherein the payoff matrix has positive elements implies {\bf PPAD} is in {\bf P} as claimed.
\end{proof}

\section{Elementary properties of Hedge}
\label{preliminary_properties}

Note that, in general, we may choose $\alpha$ to depend on $X$---we make this explicit in some lemmas for clarity. The next lemma shows Hedge does not escape the simplex. 

\begin{lemma}
\label{MW_invariance_lemma}
We have
\begin{align*}
\forall \alpha(X) \geq 0 &: X \in \mathbb{X}(C) \Rightarrow T(X) \in \mathbb{X}(C)\\
\forall \alpha(X) \geq 0 &: X \in \mathbb{\mathring{X}}(C) \Rightarrow T(X) \in \mathbb{\mathring{X}}(C),
\end{align*}
\end{lemma}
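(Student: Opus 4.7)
The plan is to verify directly the two defining properties of the simplex (non-negativity and normalization) for $T(X)$, and then to observe that strict positivity of all coordinates of $X$ carries over to $T(X)$ because all multiplicative factors are strictly positive.

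First I would check that the map $T$ is well-defined whenever $X \in \mathbb{X}(C)$. The denominator appearing in the definition of each $T_i(X)$ is $D(X) \doteq \sum_{j=1}^n X(j) \exp\{\alpha E_j \cdot CX\}$. Since $X$ is a probability vector, at least one $X(j)$ is strictly positive, and the exponential is always strictly positive, so $D(X) > 0$. Hence each $T_i(X)$ is a well-defined real number. Non-negativity is immediate: $X(i) \geq 0$, $\exp\{\alpha E_i \cdot CX\} > 0$, and $D(X) > 0$, so $T_i(X) \geq 0$ for every $i$.

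Next I would verify that the components of $T(X)$ sum to $1$:
\begin{align*}
\sum_{i=1}^n T_i(X) \;=\; \sum_{i=1}^n \frac{X(i) \exp\{\alpha E_i \cdot CX\}}{D(X)} \;=\; \frac{D(X)}{D(X)} \;=\; 1.
\end{align*}
Combined with non-negativity, this shows $T(X) \in \mathbb{X}(C)$, which proves the first claim.

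For the second claim, suppose $X \in \mathbb{\mathring{X}}(C)$, i.e., $X(i) > 0$ for every $i \in \{1,\ldots,n\}$. Then in the expression for $T_i(X)$ the numerator $X(i) \exp\{\alpha E_i \cdot CX\}$ is a product of two strictly positive numbers, and the denominator $D(X)$ is finite and strictly positive as shown above. Thus $T_i(X) > 0$ for every $i$, which together with the first claim gives $T(X) \in \mathbb{\mathring{X}}(C)$. There is no substantive obstacle here; the only care needed is to note that $D(X) > 0$ so that the map is well-defined, and that $\alpha \geq 0$ is used nowhere in an essential way beyond ensuring the exponentials are real positive numbers (which in fact holds for any $\alpha \in \mathbb{R}$).
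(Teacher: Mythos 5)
Your proof is correct and follows essentially the same route as the paper's: verify the normalization $\sum_i T_i(X) = 1$ directly, and observe that positive coordinates remain positive. The only difference is that you spell out non-negativity and well-definedness of the denominator explicitly, and note (correctly) that the hypothesis $\alpha \geq 0$ is not actually needed for this particular lemma; both are minor elaborations rather than a different argument.
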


\begin{proof}
We have that, given any $X \in \mathbb{X}(C)$,
\begin{align*}
\forall \alpha(X) \geq 0 : \sum_{i = 1}^n T_i(X) = \left(\sum_{j=1}^n X(j) \exp \left\{ \alpha(X) E_j \cdot CX \right\}\right)^{-1} \cdot \sum_{i=1}^n X(i) \exp \left\{ \alpha(X) E_i \cdot CX \right\} = 1,
\end{align*}
implying {\em positive invariance} of $\mathbb{X}(C)$. Furthermore,
\begin{align*}
\forall \alpha(X) \geq 0 : X(i) > 0 \Rightarrow T_i(X) > 0,
\end{align*}
further implying {\em positive invariance} of the relative interior $\mathbb{\mathring{X}}(C)$.
\end{proof}

A Nash equilibrium is a fixed point of the best response correspondence. In evolution under Hedge it is important to consider fixed points of such dynamic. To draw a distinction, we refer to Nash fixed points as ``equilibria'' and to multiplicative weights fixed points as ``fixed points.''

\begin{definition}
We say $X^* \in \mathbb{X}(C)$ is a {\em fixed point} of $C$ if it is a fixed point of \eqref{main_exp}. That is, if $X^* = T(X^*)$. We denote the set of fixed points of $C$ by $FX(C)$. Note that 
\begin{align*}
\forall i \in \mathcal{K}(C) : E_i \in FX(C).
\end{align*}
We denote the non-pure fixed points of $C$ by $FX^+(C)$.
\end{definition}

We have the following characterization of fixed points.

\begin{lemma}
\label{fixed_points_Hedge}
$X \in FX(C)$ if and only if, $X$ is a pure strategy or otherwise
\begin{align*}
\forall i, j \in \mathcal{C}(X) : (CX)_{i} = (CX)_{j}. 
\end{align*}
\end{lemma}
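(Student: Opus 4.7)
The plan is to work directly from the defining equation $T(X)=X$ and exploit the fact that \eqref{main_exp} multiplies $X(i)$ by a strictly positive weight whose normalization is a convex combination of exponentials of $(CX)_j$ with weights $X(j)$.

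First, I would handle the pure-strategy case as a direct verification: if $X = E_k$ for some $k$, then $X(i) = 0$ for $i \neq k$ makes $T_i(X)=0=X(i)$ trivially, while $T_k(X) = 1 \cdot \exp\{\alpha (CX)_k\}/\exp\{\alpha(CX)_k\} = 1 = X(k)$. So pure strategies always lie in $FX(C)$, which takes care of one direction in the pure case.

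Next, assume $X$ is not pure and fix any $\alpha > 0$. For the ``if'' direction, suppose $(CX)_i = (CX)_j$ for all $i,j \in \mathcal{C}(X)$; call this common value $v$. Then the denominator in \eqref{main_exp} is
\begin{align*}
\sum_{j=1}^n X(j) \exp\{\alpha (CX)_j\} = \sum_{j \in \mathcal{C}(X)} X(j) \exp\{\alpha v\} = \exp\{\alpha v\},
\end{align*}
so for $i \in \mathcal{C}(X)$ we get $T_i(X) = X(i) \exp\{\alpha v\}/\exp\{\alpha v\} = X(i)$, and for $i \notin \mathcal{C}(X)$ we get $T_i(X) = 0 = X(i)$. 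Hence $X \in FX(C)$.

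For the ``only if'' direction, suppose $X$ is non-pure and $X \in FX(C)$. Pick any $i \in \mathcal{C}(X)$; then $T_i(X) = X(i) > 0$ forces
\begin{align*}
\exp\{\alpha (CX)_i\} = \sum_{j=1}^n X(j) \exp\{\alpha (CX)_j\} = \sum_{j \in \mathcal{C}(X)} X(j) \exp\{\alpha (CX)_j\}.
\end{align*}
The right-hand side is a convex combination (with strictly positive weights summing to $1$) of the numbers $\{\exp\{\alpha (CX)_j\}\}_{j \in \mathcal{C}(X)}$. Since this convex combination simultaneously equals each of the values $\exp\{\alpha (CX)_i\}$ for every $i \in \mathcal{C}(X)$, the only possibility is that all these exponentials coincide, and strict monotonicity of $\exp$ then yields $(CX)_i = (CX)_j$ for all $i,j \in \mathcal{C}(X)$, as required. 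The only mildly delicate point is this last ``convex combination equals each summand forces equality of all summands'' argument, which is otherwise elementary; no deeper obstacle arises.
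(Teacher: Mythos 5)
Your proof is correct and follows essentially the same route as the paper: verify sufficiency by noting the denominator collapses to $\exp\{\alpha v\}$, and verify necessity by observing that $T_i(X)=X(i)>0$ forces $\exp\{\alpha(CX)_i\}$ to equal the common normalizing sum for every $i$ in the carrier, hence all $(CX)_i$ with $i\in\mathcal{C}(X)$ coincide. The only cosmetic difference is that you phrase the necessity step as ``a convex combination equal to each of its summands forces equality,'' whereas the paper divides two fixed-point equations to get $\exp\{\alpha((CX)_i-(CX)_j)\}=1$; these are the same observation.
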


\begin{proof}
First we show sufficiency, that is, if for all $i, j \in \mathcal{C}(X)$, $(CX)_{i} = (CX)_{j}$,
then $T(X) = X$:
Some of the coordinates of $X$
are zero and some are positive. Clearly, 
the zero coordinates will not become positive after applying the map. 
Now, notice that, for all $i \in \mathcal{C}(X)$, $\exp\{\alpha(X) (CX)_i\} = \sum_{j = 1}^n X(j) \exp\{\alpha(X) (CX)_j\}$. Therefore, $T(X) = X$.

Now we show necessity, that is,
if $X \in FX(C)$, 
then for all $i$ and for all $i, j \in \mathcal{C}(X)$, $(CX)_i = (CX)_j$:
Let $\hat{X}(i) = T_i(x)$.
Because $X$ is a fixed point, $\hat{X}(i) = X(i)$. Therefore,
{\allowdisplaybreaks
\begin{align}
\hat{X}(i) &= X(i)\notag\\
\frac{X(i) \exp{\{\alpha(X) (CX)_i\}}}{\sum_{j} X(j) \exp{\{\alpha(X) (CX)_j\}}} &= X(i)\notag\\
\exp{\{\alpha(X) (CX)_i\}} &= \sum_{j} X(j) \exp{\{\alpha(X) (CX)_j\}},\notag
\end{align}}
which implies 
\[\exp{\{\alpha(X) ((CX)_i - (CX)_j)\}} = 1, X(i) > 0,\]
and, thus,
\[(CX)_i = (CX)_j, X(i) > 0.\]
This completes the proof.
\end{proof}

The previous result says that $X^* \in FX(C)$ if and only if $X^*$ is an equalizer of its carrier game. This is a necessary condition for equilibrium but it is not sufficient:

\begin{lemma}
\label{coffee}
If $X^* \in \mathbb{\mathring{X}}(C)$, $X^* \in FX(C)$ if and only if $X^* \in NE^+(C)$.
\end{lemma}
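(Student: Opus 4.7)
The plan is to chain Lemma~\ref{fixed_points_Hedge} (which characterizes fixed points as equalizers of their carrier game) with the standard best response characterization of Nash equilibria, exploiting the fact that an interior strategy has full carrier $\mathcal{C}(X^*) = \{1,\ldots,n\}$.

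For the forward direction, I would assume $X^* \in \mathbb{\mathring{X}}(C) \cap FX(C)$. Since $X^*$ is not pure (all coordinates are strictly positive and $n \geq 2$ in the nontrivial case; the $n=1$ case is immediate), Lemma~\ref{fixed_points_Hedge} gives $(CX^*)_i = (CX^*)_j$ for all $i,j \in \mathcal{C}(X^*) = \{1,\ldots,n\}$. In particular, there is a common value $v$ such that $E_i \cdot CX^* = v$ for every pure strategy, and so $X \cdot CX^* = v$ for every $X \in \mathbb{X}(C)$. Thus $(X^* - X) \cdot CX^* = 0 \geq 0$ for all $X$, so $X^*$ is an equalizer and in particular belongs to $NE^+(C)$.

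For the reverse direction, assume $X^* \in \mathbb{\mathring{X}}(C) \cap NE^+(C)$. The Nash condition $(X^* - X) \cdot CX^* \geq 0$ for all $X \in \mathbb{X}(C)$, applied with $X = E_j$ for each pure strategy $j$, yields $X^* \cdot CX^* \geq (CX^*)_j$ for all $j$. Since $X^* \cdot CX^* = \sum_i X^*(i) (CX^*)_i$ is a convex combination of the $(CX^*)_i$ with all weights strictly positive (by interiority), equality of this average with each maximum forces $(CX^*)_i = (CX^*)_j$ for all $i,j$. By Lemma~\ref{fixed_points_Hedge}, $X^* \in FX(C)$.

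There is no real obstacle here: the content is entirely packaged in the interaction between Lemma~\ref{fixed_points_Hedge} and the best response condition, and interiority is exactly what is needed to force the equalizer property to hold across the whole index set $\{1,\ldots,n\}$ rather than only on a proper carrier. The only mild care is the trivial $n=1$ case, where every strategy is simultaneously pure and interior and both sides of the equivalence are automatic.
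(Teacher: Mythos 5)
Your proof is correct and follows essentially the same route as the paper: both directions are reduced to the equalizer characterization of fixed points in Lemma~\ref{fixed_points_Hedge}, with interiority forcing the carrier to be all of $\{1,\ldots,n\}$. The only cosmetic difference is that for the reverse direction the paper cites the best response condition (Proposition~\ref{b_response_condition_A_B}) explicitly while you re-derive the equality of payoffs via the averaging argument, which is exactly the content of that proposition; the paper also states only the reverse direction and leaves the forward direction implicit, whereas you spell it out, together with the trivial $n=1$ case.
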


\begin{proof}
The lemma is a straightforward implication of the aforementioned {\em best response condition} and Theorem \ref{fixed_points_Hedge}: Suppose the interior $X^* \in NE^+(C)$ and recall the best response condition: $X$ is a best response to $Y$ if and only if $X(i) > 0 \Rightarrow (CY)_i = (CY)_{\max}$. But a symmetric Nash equilibrium strategy is a best response to itself, therefore, $X^*(i) > 0 \Rightarrow (CX^*)_i = (CX^*)_{\max}$. Since $X^*$ is interior, for all $i \in \mathcal{K}(C)$, $X^*(i) > 0$, therefore, 
\begin{align*}
\forall i, j \in \mathcal{C}(X^*) : (CX^*)_i = (CX^*)_j.
\end{align*}
The characterization of fixed points in Theorem \ref{fixed_points_Hedge} completes the proof.
\end{proof}

As a straightforward implication of the previous lemma we obtain that every non-equilibrium fixed point is a boundary strategy. Overall we obtain:

\begin{corollary}
\label{tea}
$NE^+(C) \subseteq FX(C)$.
\end{corollary}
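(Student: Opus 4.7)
The plan is to reduce the corollary directly to the characterization of fixed points established in Lemma~\ref{fixed_points_Hedge}, extending the argument of Lemma~\ref{coffee} to strategies that are not necessarily interior. The key observation is that Lemma~\ref{fixed_points_Hedge} only requires the equal-payoff condition to hold on the carrier $\mathcal{C}(X^*)$, not on all of $\mathcal{K}(C)$, so interiority in Lemma~\ref{coffee} is not actually needed for this direction of the equivalence.

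First, I would let $X^* \in NE^+(C)$ be arbitrary. If $X^*$ is a pure strategy, then by the explicit remark in the definition of $FX(C)$, $X^* \in FX(C)$ and we are done. Otherwise, $X^*$ has support of size at least two. Since $X^*$ is a symmetric Nash equilibrium strategy, it is a best response to itself, so the best response condition (Proposition~\ref{b_response_condition_A_B}, specialized to the symmetric case as invoked in the proof of Lemma~\ref{coffee}) gives that for every $i \in \mathcal{C}(X^*)$, $(CX^*)_i = (CX^*)_{\max}$. In particular, for any $i, j \in \mathcal{C}(X^*)$, $(CX^*)_i = (CX^*)_j$.

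This is precisely the sufficient condition in Lemma~\ref{fixed_points_Hedge} for $X^*$ to be a fixed point of $T$, so $X^* \in FX(C)$. Since $X^*$ was arbitrary, $NE^+(C) \subseteq FX(C)$. I expect no significant obstacle here: once one notices that the ``interior'' hypothesis in Lemma~\ref{coffee} is only used to identify $\mathcal{C}(X^*)$ with $\mathcal{K}(C)$, and that Lemma~\ref{fixed_points_Hedge} is already phrased in terms of the carrier, the corollary follows in a few lines. The only thing worth being slightly careful about is the degenerate case of a pure-strategy equilibrium, which is handled by the convention recorded in the definition of $FX(C)$.
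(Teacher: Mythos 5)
Your proof is correct and follows essentially the same route as the paper, which invokes the best-response condition (via the argument inside the proof of Lemma~\ref{coffee}) to conclude that every symmetric equilibrium strategy is an equalizer of its carrier game, then applies the characterization of fixed points in Lemma~\ref{fixed_points_Hedge}. You have in fact been more careful than the paper's one-line proof, which cites Lemma~\ref{coffee} even though that lemma's statement assumes interiority: your observation that the interiority hypothesis is only used to identify $\mathcal{C}(X^*)$ with $\mathcal{K}(C)$, and that the argument survives without it, is exactly the implicit justification the paper relies on.
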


\begin{proof}
Every equilibrium strategy is an equalizer of the its carrier by Lemma \ref{coffee}.
\end{proof}

The fundamental difference between fixed points and equilibria is that fixed points survive ``inversion'' of incentives. We call this principle ``the virtue of vice versa:''

\begin{lemma}
$FX(C) = FX(-C)$.
\end{lemma}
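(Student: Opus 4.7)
The plan is to invoke the characterization of fixed points established in Lemma~\ref{fixed_points_Hedge} and observe that the defining condition is symmetric under sign reversal of the payoff operator. Concretely, that lemma states $X \in FX(C)$ if and only if either $X$ is a pure strategy, or
\begin{align*}
\forall i, j \in \mathcal{C}(X) : (CX)_i = (CX)_j.
\end{align*}
Applying the same characterization to the operator $-C$, we get that $X \in FX(-C)$ if and only if either $X$ is pure, or $\forall i, j \in \mathcal{C}(X) : (-CX)_i = (-CX)_j$.

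First I would dispense with the pure-strategy case: every pure strategy $E_i$ lies in $FX(C)$ regardless of $C$ (as already noted in the paragraph defining $FX(C)$), and in particular $E_i \in FX(C) \Leftrightarrow E_i \in FX(-C)$ trivially. Then for a non-pure $X$, the key (and essentially only) step is the elementary observation that multiplication by $-1$ preserves equalities:
\begin{align*}
(CX)_i = (CX)_j \iff -(CX)_i = -(CX)_j \iff (-CX)_i = (-CX)_j.
\end{align*}
Combining with the characterization, $X \in FX(C) \Leftrightarrow X \in FX(-C)$, which gives the set equality $FX(C) = FX(-C)$.

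I do not anticipate any obstacle here: the entire content of the lemma is already encoded in Lemma~\ref{fixed_points_Hedge}, and the argument reduces to a one-line algebraic symmetry. The only point worth flagging explicitly, for emphasis, is the contrast with equilibrium strategies (alluded to in the surrounding text and illustrated by games such as rock-paper-scissors), where inversion of incentives generally changes the set of Nash equilibria; thus the invariance $FX(C) = FX(-C)$ is a genuine qualitative distinction between Hedge fixed points and symmetric Nash equilibria, underpinning the ``virtue of vice versa'' label.
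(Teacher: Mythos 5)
Your proof is correct and follows essentially the same route as the paper's: both rest entirely on the equalizer characterization of fixed points from Lemma~\ref{fixed_points_Hedge} and the trivial invariance of that characterization under negation of the payoff operator. The paper phrases the equalizer condition as $(X^* - X) \cdot C^* X^* = 0$ over the carrier game while you use the raw coordinate form $(CX)_i = (CX)_j$, but these are the same observation.
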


\begin{proof}
The statement clearly holds for all pure fixed points. Furthermore, by Proposition \ref{fixed_points_Hedge}, $X^* \in FX^+(C)$ if and only if
\begin{align*}
\forall X \in \mathbb{X}(C^*, C^{*T}) : (X^* - X) \cdot C^*X^* = 0
\end{align*}
where $C^*$ is $X^*$'s carrier game. This completes the proof.
\end{proof}

Let us finally prove the promised better response property:

\begin{definition}
We say that the map $T : \mathbb{X}(C) \rightarrow \mathbb{X}(C)$ is a {\em better response dynamic} if
\begin{align*}
\forall X \in \mathbb{X} : (T(X) - X) \cdot CX > 0,
\end{align*}
unless $X$ is a fixed point of $T$.
\end{definition}

\begin{lemma}
\label{br_property_lemma}
$X \in \mathbb{\mathring{X}}(C)$ implies $\hat{X} \cdot CX \equiv T(X) \cdot CX > X \cdot CX$ unless $X \in FX(C)$.
\end{lemma}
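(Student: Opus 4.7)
The plan is to view the quantity of interest as the derivative at $\alpha$ of a cumulant-generating function in the learning rate, and then exploit its convexity. Fixing $X \in \mathbb{\mathring{X}}(C)$, let
\begin{align*}
\phi(\alpha) \doteq \ln \sum_{j=1}^n X(j) \exp\{\alpha (CX)_j\}.
\end{align*}
A direct computation (differentiating under the sum) gives
\begin{align*}
\phi'(\alpha) = \frac{\sum_{i=1}^n X(i) (CX)_i \exp\{\alpha (CX)_i\}}{\sum_{j=1}^n X(j) \exp\{\alpha (CX)_j\}} = T(X) \cdot CX,
\end{align*}
where on the right $T$ is the Hedge map at learning rate $\alpha$. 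In particular $\phi'(0) = X \cdot CX$, so proving the lemma reduces to showing $\phi'(\alpha) > \phi'(0)$ for all $\alpha > 0$ whenever $X \not\in FX(C)$.

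The key step is to observe that $\phi$ is convex in $\alpha$, and strictly convex off the fixed-point set. This is a textbook property of log-sum-exp, but one can verify it directly: writing $p_i(\alpha) \doteq X(i) \exp\{\alpha (CX)_i\} / \sum_j X(j) \exp\{\alpha (CX)_j\}$, a second differentiation yields
\begin{align*}
\phi''(\alpha) = \sum_{i=1}^n p_i(\alpha) (CX)_i^2 - \Bigl( \sum_{i=1}^n p_i(\alpha) (CX)_i \Bigr)^2 \geq 0,
\end{align*}
i.e., $\phi''(\alpha)$ is the variance of the random variable taking value $(CX)_i$ with probability $p_i(\alpha)$. Since $X$ is interior, every $p_i(\alpha)$ is strictly positive, so this variance vanishes if and only if $(CX)_i$ is the same for all $i \in \{1,\dots,n\} = \mathcal{C}(X)$. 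By Lemma~\ref{fixed_points_Hedge}, that condition is precisely $X \in FX(C)$. Hence, if $X \not\in FX(C)$, then $\phi''(\alpha) > 0$ for every $\alpha$, so $\phi'$ is strictly increasing on $\mathbb{R}$.

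Combining the two steps, for $X \in \mathbb{\mathring{X}}(C) \setminus FX(C)$ and any $\alpha > 0$ we have $T(X) \cdot CX = \phi'(\alpha) > \phi'(0) = X \cdot CX$, which is what we wanted. I do not foresee a real obstacle here: the only place that needs care is verifying that the variance characterization of $\phi''$ collapses to zero exactly on fixed points, but the interiority hypothesis $X \in \mathbb{\mathring{X}}(C)$ makes the match with Lemma~\ref{fixed_points_Hedge} immediate (it removes the need to worry about strategies $i$ outside $\mathcal{C}(X)$). An essentially identical argument would work for $X$ on the boundary by restricting the sums to $\mathcal{C}(X)$, though the statement only claims the interior case.
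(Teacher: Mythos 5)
Your argument is correct, but it takes a genuinely different (and arguably cleaner) route than the paper's. The paper's proof in Appendix~B applies two separate inequalities: first, strict positivity of the relative entropy $RE(\hat{X}, X) > 0$ gives $\ln \bigl( \sum_j X(j) \exp\{\alpha (CX)_j\} \bigr) < \alpha\, \hat{X}\cdot CX$, and second, Jensen's inequality applied to the concave $\ln$ gives $\alpha\, X \cdot CX < \ln \bigl( \sum_j X(j) \exp\{\alpha (CX)_j\} \bigr)$; chaining them yields the claim. You instead identify the log-partition function $\phi(\alpha) = \ln \sum_j X(j)\exp\{\alpha (CX)_j\}$ as the organizing object, observe that $\phi'(\alpha) = T(X)\cdot CX$ and $\phi'(0) = X\cdot CX$, and read off the conclusion from strict monotonicity of $\phi'$, which follows because $\phi''$ is the variance of the Hedge-tilted distribution and (by interiority plus Lemma~\ref{fixed_points_Hedge}) that variance vanishes only at fixed points. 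Your framing is not only correct but explanatory: the paper's two inequalities are precisely the tangent-line inequalities of the convex function $\phi$ at $\alpha$ and at $0$ (one can check $RE(\hat X, X) = \alpha\phi'(\alpha) - \phi(\alpha)$ and $\phi(0)=0$), so you have unified them into a single convexity statement. Your approach also connects more directly to the paper's own ``convexity lemma'' (Lemma~\ref{convexity_lemma}), which asserts convexity in $\alpha$ of the closely related quantity $RE(Y, T(X))$; the paper, perhaps surprisingly, does not reuse that lemma here and instead gives the ad hoc two-step argument.
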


\begin{proof}
Let $\hat{X} \equiv T(X)$. If $X \in FX(C)$, then $\hat{X} = X$ (by the definition of a fixed point). Assume now $X$ is not a fixed point ($X \not\in FX(C)$), which implies $(CX)_{\max} > (CX)_{\min}$ (assuming $X$ is interior). The relative entropy between $\hat{X}$ and $X$ is positive:
\begin{align*}
RE(\hat{X}, X) = \sum_{i=1}^n \hat{X}(i) \ln \left( \frac{\hat{X}(i)}{X(i)} \right) = \sum_{i=1}^n \hat{X}(i) \ln \left( \frac{\exp\left\{ \alpha E_i \cdot CX \right\}}{\sum_{j=1}^n X(j) \exp \left\{ \alpha E_j \cdot CX \right\}} \right) > 0.
\end{align*}
Therefore,
\begin{align}
\ln \left( \sum_{j=1}^n X(j) \exp \left\{ \alpha E_j \cdot CX \right\} \right) < \sum_{i=1}^n \hat{X}(i) \ln\left( \exp\left\{ \alpha E_i \cdot CX \right\} \right) = \alpha \hat{X} \cdot CX.\label{some}
\end{align}
Furthermore, since $\ln(\cdot)$ is concave, we obtain by Jensen's inequality that
\begin{align}
\alpha X \cdot CX < \ln \left( \sum_{j=1}^n X(j) \exp \left\{ \alpha E_j \cdot CX \right\} \right).\label{awesome}
\end{align}
\eqref{some} and \eqref{awesome} imply $(\hat{X} - X) \cdot CX > 0$ as desired. 
\end{proof}

\section{Convexity lemma and implications}
\label{convexity_lemma_proofs}

\begin{proof}[Proof of Lemma \ref{convexity_lemma}]
We have
{\allowdisplaybreaks
\begin{align*}
\frac{d}{d \alpha}& RE(Y, \hat{X}) = \\
  &= \frac{d}{d \alpha} \left( \sum_{i \in \mathcal{C}(Y)} Y(i) \ln\left( \frac{Y(i)}{\hat{X}(i)} \right) \right)\\
  &= \frac{d}{d \alpha} \left( \sum_{i \in \mathcal{C}(Y)} Y(i) \ln\left( Y(i) 
  \cdot \frac{\sum_{j = 1}^n X(j) \exp\{ \alpha (CX)_j \}}{ X(i) \exp\{ \alpha (CX)_i \}} \right) \right)\\
  &= \frac{d}{d \alpha} \left( \sum_{i \in \mathcal{C}(Y)} Y(i) \ln\left( \frac{\sum_{j = 1}^n X(j) \exp\{ \alpha (CX)_j \}}{X(i)\exp\{ \alpha (CX)_i \}} \right) \right)\\
  &= \sum_{i \in \mathcal{C}(Y)} Y(i) \frac{d}{d \alpha} \left( \ln\left( \frac{\sum_{j = 1}^n X(j) \exp\{ \alpha (CX)_j \}}{X(i)\exp\{ \alpha (CX)_i \}} \right) \right).
\end{align*}}
Furthermore, using $(\cdot)'$ as alternative notation (abbreviation) for $d/d\alpha(\cdot)$,
\begin{align*}
\frac{d}{d \alpha} \left( \ln\left( \frac{\sum_{j = 1}^n X(j) \exp\{ \alpha (CX)_j \}}{X(i)\exp\{ \alpha (CX)_i \}} \right) \right) = \frac{X(i)\exp\{ \alpha (CX)_i \}}{\sum_{j = 1}^n X(j) \exp\{ \alpha (CX)_j \}} \left( \frac{\sum_{j = 1}^n X(j) \exp\{ \alpha (CX)_j \}}{X(i)\exp\{ \alpha (CX)_i \}} \right)'
\end{align*}
and
\begin{align*}
\left( \frac{\sum_{j = 1}^n X(j) \exp\{ \alpha (CX)_j \}}{X(i)\exp\{ \alpha (CX)_i \}} \right)'  &= \frac{\sum_{j = 1}^n X(j) (CX)_j \exp\{ \alpha (CX)_j \} X(i) \exp\{ \alpha (CX)_i \}}{\left( X(i) \exp\{ \alpha (CX)_i \} \right)^2} -\\
  &- \frac{X(i) (CX)_i \sum_{j = 1}^n X(j) \exp\{ \alpha (CX)_j \} \exp\{ \alpha (CX)_i \}}{\left( X(i) \exp\{ \alpha (CX)_i \} \right)^2} =\\
  &= \frac{\sum_{j = 1}^n X(j) (CX)_j \exp\{ \alpha (CX)_j \} \}}{ X(i) \exp\{ \alpha (CX)_i \} } -\\
  &- \frac{(CX)_i \sum_{j = 1}^n X(j) \exp\{ \alpha (CX)_j \} \}}{ X(i) \exp\{ \alpha (CX)_i \} }.
\end{align*}
Therefore,
\begin{align}
\frac{d}{d \alpha} RE(Y, \hat{X}) = \frac{\sum_{j = 1}^n X(j) (CX)_j \exp\{ \alpha (CX)_j \}}{\sum_{j = 1}^n X(j) \exp\{ \alpha (CX)_j \}} - Y \cdot CX.\label{valentine}
\end{align}
Furthermore,
\begin{align*}
\frac{d^2}{d \alpha^2} RE(Y, \hat{X}) &= \frac{\left( \sum_{j = 1}^n X(j) ((CX)_j)^2 \exp\{ \alpha (CX)_j \} \right) \left( \sum_{j = 1}^n X(j) \exp\{ \alpha (CX)_j \} \right)}{\left( \sum_{j = 1}^n X(j) \exp\{ \alpha (CX)_j \} \right)^2} -\\
  &- \frac{\left( \sum_{j = 1}^n X(j) ((CX)_j) \exp\{ \alpha (CX)_j \} \right)^2}{\left( \sum_{j = 1}^n X(j) \exp\{ \alpha (CX)_j \} \right)^2}.
\end{align*}
Jensen's inequality implies that
\begin{align*}
\frac{\sum_{j = 1}^n X(j) ((CX)_j)^2 \exp\{ \alpha (CX)_j \}}{\sum_{j = 1}^n X(j) \exp\{ \alpha (CX)_j \}} \geq \left( \frac{\sum_{j = 1}^n X(j) ((CX)_j) \exp\{ \alpha (CX)_j \}}{\sum_{j = 1}^n X(j) \exp\{ \alpha (CX)_j \}} \right)^2,
\end{align*}
which is equivalent to the numerator of the second derivative being nonnegative as $X$ is a probability vector. Note that the inequality is strict unless 
\begin{align*}
\forall i, j \in \mathcal{C}(X) : (CX)_i = (CX)_j.
\end{align*}
This completes the proof.
\end{proof}

\begin{proof}[Proof of Lemma \ref{diamonds}]
Let $\hat{X} = T(X)$. \eqref{valentine} implies that
\begin{align*}
\left. \frac{d}{d \alpha} RE(Y, \hat{X}) \right|_{\alpha = 0} = (X - Y) \cdot CX < 0.
\end{align*}
By the convexity $RE(Y, \hat{X})$ as a function of $\alpha$ either
\begin{align*}
\forall \alpha > 0 : RE(Y, \hat{X}) < RE(Y, X)
\end{align*}
or
\begin{align*}
\exists \bar{\alpha} > 0 : RE(Y, \hat{X}) = RE(Y, X)
\end{align*} 
implying
\begin{align*}
\forall \alpha \in (0, \bar{\alpha}) : RE(Y, \hat{X}) < RE(Y, X)
\end{align*}
as claimed.
\end{proof}

\begin{proof}[Proof of Lemma \ref{cool_hedge_1}]
Let $\hat{X} = T(X)$. If $X \in NE^+(C, C^T)$, $\forall \alpha \geq 0 : RE(Y, \hat{X}) = RE(Y, X)$ since $\forall \alpha \geq 0 : \hat{X} = X$. Otherwise:
{\allowdisplaybreaks
\begin{align}
RE(Y, \hat{X}) - RE(Y, X) &= \sum_{i \in \mathcal{C}(Y)} Y(i) \ln\left( \frac{\sum_{j = 1}^n X(j) \exp\{ \alpha (CX)_j \}}{\exp\{ \alpha (CX)_i \}} \right)\label{H1}\\
  &= \ln \left( \sum_{j = 1}^n X(j) \exp\{ \alpha (CX)_j \} \right) - \alpha Y \cdot CX\label{H2}\\
  &= \ln \left( \sum_{j = 1}^n X(j) \exp \left\{ \alpha \left( (CX)_j - Y \cdot CX \right) \right\} \right)\label{H3}\\
  &\leq \sum_{j = 1}^n X(j) \exp \left\{ \alpha \left( (CX)_j - Y \cdot CX \right) \right\} - 1.\label{H4}
\end{align}
}
\eqref{H1} follows by plugging in the definition of relative entropy and straight algebra. \eqref{H2} follows by straightforward properties of the natural logarithm. \eqref{H3} follows similarly. \eqref{H4} follows from the standard property of the natural logarithm that $\ln(y) \leq y - 1, y > 0$. If 
\begin{align*}
\forall j \in \{1, \ldots, n\} : (CX)_j = Y \cdot CX,
\end{align*}
then $X \in NE^+(C, C^T)$ and, as noted earlier, in this case
\begin{align*}
\forall \alpha \geq 0 : RE(Y, \hat{X}) = RE(Y, X). 
\end{align*}
Assume, therefore, that
\begin{align*}
Y \cdot CX = \max_{j = 1}^n (CX)_j > \min_{j=1}^n (CX)_j,
\end{align*}
where the inequality on the left follows by the {\em best response condition}. Then 
\begin{align*}
\forall \alpha \geq 0 : \exp \left\{ \alpha \left( (CX)_j - Y \cdot CX \right) \right\} \leq 1
\end{align*}
and there exists $j \in \{1, \ldots, n\}$ such that
\begin{align*}
\forall \alpha > 0 : \exp \left\{ \alpha \left( (CX)_j - Y \cdot CX \right) \right\} < 1.
\end{align*}
Therefore, 
\begin{align*}
\forall \alpha > 0 : \sum_{j = 1}^n X(j) \exp \left\{ \alpha \left( (CX)_j - Y \cdot CX \right) \right\} < \sum_{j=1}^n X(j) = 1.
\end{align*}
This completes the proof.
\end{proof}

\begin{proof}[First proof of Lemma \ref{cool_hedge_2}]
Let $\hat{X} = T(X)$.We have, by Jensen's inequality, that
{\allowdisplaybreaks
\begin{align}
RE(Y, \hat{X}) - RE(Y, X) &= \sum_{i \in \mathcal{C}(Y)} Y(i) \ln \frac{Y(i)}{\hat{X}(i)} - \sum_{i \in \mathcal{C}(Y)}  Y(i) \ln \frac{Y(i)}{X(i)}\notag\\
                       &= - \sum_{i \in \mathcal{C}(Y)} Y(i) \ln \hat{X}(i) + \sum_{i \in \mathcal{C}(Y)} Y(i) \ln X(i)\notag\\
                       &= \sum_{i \in \mathcal{C}(Y)} Y(i)  \ln \frac{X(i)}{\hat{X}(i)}\notag\\
                       &= \sum_{i \in \mathcal{C}(Y)} Y(i)  \ln \left( \frac{\sum_{j=1}^n X(j) \exp\{ \alpha E_j \cdot CX \}}{\exp\{\alpha E_i \cdot CX\}} \right)\notag\\
                       &= \ln \left( \sum_{j=1}^n X(j) \exp\{ \alpha E_j \cdot CX \} \right) - \sum_{i \in \mathcal{C}(Y)} Y(i)  \ln\left(\exp\{ \alpha E_i \cdot CX \} \right)\notag\\
                       &= \ln \left( \sum_{j=1}^n X(j) \exp\{ \alpha E_j \cdot CX \} \right) - \alpha \sum_{i \in \mathcal{C}(Y)} Y(i)  (E_i \cdot CX)\notag\\
                       &\geq \alpha  \sum_{j=1}^n X(j) (E_j \cdot CX) - \alpha \sum_{i \in \mathcal{C}(Y)} Y(i)  (E_i \cdot CX)\notag\\
                       &= \alpha (X \cdot CX - Y \cdot CX).\notag
\end{align}
}
Therefore, 
\begin{align*}
X \cdot CX - Y \cdot CX \geq 0 \Rightarrow \forall \alpha > 0: RE(Y, \hat{X}) - RE(Y, X) \geq 0.
\end{align*}
If $\max_{i \in \mathcal{C}(X)} \{ (CX)_i \} > \min_{i \in \mathcal{C}(X)} \{ (CX)_i \}$, since Jensen's inequality is strict,
\begin{align*}
X \cdot CX - Y \cdot CX \geq 0 \Rightarrow \forall \alpha > 0: RE(Y, \hat{X}) - RE(Y, X) > 0.
\end{align*}
This completes the proof.
\end{proof}

\begin{proof}[Second proof of Lemma \ref{cool_hedge_2}]
By the convexity lemma (Lemma \ref{convexity_lemma}), unless $X \in FX(C)$, $RE(Y, \hat{X})$ is strictly convex. \eqref{valentine} implies that
\begin{align*}
\left. \frac{d}{d \alpha} RE(Y, \hat{X}) \right|_{\alpha = 0} = (X - Y) \cdot CX \geq 0.
\end{align*}
Noting that $\left. RE(Y, \hat{X}) \right|_{\alpha = 0} = RE(Y, X)$ completes the proof.
\end{proof}

\end{document}